\def\llncs{0}
\def\fullpage{1}
\def\anonymous{0}
\def\authnote{0}
\def\extabs{0}
\def\notxfont{0}
\def\submission{0}
\def\cameraready{0}
\def\anonymous{1}
\def\llncs{1}
\def\authnote{0}
\def\submission{1}
\def\llncs{1}
\def\authnote{0}
\def\anonymous{0}
\def\llncs{1}
\def\authnote{0}
\definecolor{darkblue}{rgb}{0,0,0.6}
\definecolor{darkgreen}{rgb}{0,0.5,0}
\definecolor{forestgreen}{rgb}{0,0.6,0}
\definecolor{maroon}{rgb}{0.5,0.1,0.1}
\definecolor{dpurple}{rgb}{0.2,0,0.65}
\DeclareMathAlphabet{\mathpzc}{OT1}{pzc}{m}{it}
\newtheoremstyle{thicktheorem}%
{\topsep}
{\topsep}
{\itshape}{}%
{\bfseries}%
{.}
{ }%
{\thmname{#1}\thmnumber{ #2}%
		\thmnote{ (#3)}%
}
\newtheoremstyle{remark}%name
{\topsep}
{\topsep}
	{}%body font
	{}%indent amount
	{}%theorem head font
	{.}%punctuation after theorem head
	{ }%space after theorem head
	{\textit{\thmname{#1}}\thmnumber{ #2}%theorem head specs
			\thmnote{ (#3)}%
	}
	\theoremstyle{plain}
	\newtheorem{theorem}{Theorem}[section]
	\newtheorem{lemma}[theorem]{Lemma}
	\newtheorem{corollary}[theorem]{Corollary}
        \theoremstyle{definition}
	\newtheorem{definition}[theorem]{Definition}
        \newtheorem{construction}[theorem]{Construction}
	\theoremstyle{remark}
	\newtheorem{claim}[theorem]{Claim}
	\newtheorem{remark}[theorem]{Remark}
 \newtheorem{construction}[theorem]{Construction}
\Crefname{MyClaim}{Claim}{Claims}
	\crefname{theorem}{Theorem}{Theorems}
	\crefname{assumption}{Assumption}{Assumptions}
	\crefname{construction}{Construction}{Constructions}
	\crefname{corollary}{Corollary}{Corollaries}
	\crefname{conjecture}{Conjecture}{Conjectures}
	\crefname{definition}{Definition}{Definitions}
	\crefname{exmaple}{Example}{Examples}
	\crefname{experiment}{Experiment}{Experiments}
	\crefname{counterexample}{Counterexample}{Counterexamples}
	\crefname{lemma}{Lemma}{Lemmata}
	\crefname{observation}{Observation}{Observations}
	\crefname{proposition}{Proposition}{Propositions}
	\crefname{remark}{Remark}{Remarks}
	\crefname{claim}{Claim}{Claims}
	\crefname{fact}{Fact}{Facts}
	\crefname{note}{Note}{Notes}
 \crefname{appendix}{App.}{Appendices}
 \crefname{section}{Sec.}{Sections}
\renewcommand*{\backref}[1]{}
	\renewcommand*{\backref}[1]{(Cited on page~#1.)}
\newcommand{\jeff}[1]{}
\newcommand{\ryo}[1]{}
\newcommand{\takashi}[1]{}
\newcommand{\fuyuki}[1]{}
\newcommand{\jeff}[1]{$\ll$\textsf{\color{violet} Jeff: { #1}}$\gg$}
\newcommand{\takashi}[1]{$\ll$\textsf{\color{orange} Takashi: { #1}}$\gg$}
\newcommand{\ryo}[1]{$\ll$\textsf{\color{darkgreen} Ryo: { #1}}$\gg$}
\newcommand{\fuyuki}[1]{$\ll$\textsf{\color{darkblue} Fuyuki: { #1}}$\gg$}
\newcommand{\hyb}{\mathsf{Hyb}}
\newcommand{\test}{\mathsf{Test}}
\newcommand{\UTE}{\mathsf{UTE}}
\newcommand{\UE}{\mathsf{UE}}
\newcommand{\OneUTE}{\mathsf{1UTE}}
\newcommand{\ketct}{\ket{\ct}}
\newcommand{\OWUTE}{\mathsf{OWUTE}}
\newcommand{\UTPKFE}{\mathsf{UTPKFE}}
\newcommand{\UTSKFE}{\mathsf{UTSKFE}}
\newcommand{\UTSS}{\mathsf{LRSS}}
\newcommand{\OSMAC}{\mathsf{OSM}}
\newcommand{\OSM}{\mathsf{OSM}}
\newcommand{\mvk}{\mathsf{mvk}}
\newcommand{\OW}{\mathsf{OW}}
\newcommand{\NCE}{\mathsf{NCE}}
\newcommand{\Fake}{\algo{Fake}}
\newcommand{\Reveal}{\algo{Reveal}}
\newcommand{\chosen}{\leftarrow}
\newcommand{\la}{\leftarrow}
\newcommand{\ra}{\rightarrow}
\newcommand{\concat}{\|}
\newcommand{\floor}[1]{\left\lfloor{#1}\right\rfloor}
\newcommand{\setbig}[1]{\left\{#1\right\}}
\newcommand{\setbk}[1]{\{#1\}}
\newcommand{\parens}[1]{\left(#1\right)}
\newcommand{\cA}{\mathcal{A}}
\newcommand{\Ahat}{\hat{\A}}
\newcommand{\Bhat}{\hat{\B}}
\newcommand{\cD}{\mathcal{D}}
\newcommand{\cF}{\mathcal{F}}
\newcommand{\cH}{\mathcal{H}}
\newcommand{\cK}{\mathcal{K}}
\newcommand{\cM}{\mathcal{M}}
\newcommand{\cO}{\mathcal{O}}
\newcommand{\cR}{\mathcal{R}}
\newcommand{\cS}{\mathcal{S}}
\newcommand{\cT}{\mathcal{T}}
\newcommand{\cX}{\mathcal{X}}
\newcommand{\cY}{\mathcal{Y}}
\newcommand{\cZ}{\mathcal{Z}}
\def\makeuppercase#1{
\expandafter\newcommand\csname tl#1\endcsname{\widetilde{#1}}
}
\def\makelowercase#1{
\expandafter\newcommand\csname tl#1\endcsname{\widetilde{#1}}
}
\newcommand{\N}{\mathbb{N}}
\newcommand{\eps}{\varepsilon}
\newcommand{\secp}{\lambda}
\newcommand{\lam}{\secp}
\newcommand{\extr}{\algo{ext}}
\newcommand{\A}{\entity{A}}
\newcommand{\B}{\entity{B}}
\newcommand{\C}{\entity{C}}
\newcommand{\AB}{(\A,\B)}
\newcommand{\insf}{\mathsf{in}}
\newcommand{\outsf}{\mathsf{out}}
\newcommand{\hybb}[1]{\mathsf{Hyb}_{#1}^{(b)}}
\newcommand{\Hyb}[2]{\mathsf{Hyb}_{#1}^{#2}}
\newcommand{\diff}[1]{{\color{forestgreen} #1}}
\newcommand*{\sk}{\keys{sk}}
\newcommand*{\pk}{\keys{pk}}
\newcommand*{\msk}{\keys{msk}}
\newcommand{\ct}{\keys{ct}}
\newcommand{\ctstar}{\ct^\ast}
\newcommand*{\pp}{\keys{pp}}
\newcommand*{\dk}{\keys{dk}}
\newcommand{\dkstar}{\dk^\ast}
\newcommand*{\vk}{\keys{vk}}
\newcommand*{\ek}{\keys{ek}}
\newcommand{\mstar}{m^\ast}
\newcommand*{\keys}[1]{\mathsf{#1}}
\newcommand*{\algo}[1]{\ensuremath{\mathsf{#1}}}
\newcommand*{\entity}[1]{\mathcal{#1}}
\newcommand{\compclass}[1]{\textbf{\textrm{#1}}}
\newenvironment{boxfig}[2]{\begin{figure}[#1]\fbox{\begin{minipage}{0.97\linewidth}
                        \vspace{0.2em}
                        \makebox[0.025\linewidth]{}
                        \begin{minipage}{0.95\linewidth}
            {{
                        #2 }}
                        \end{minipage}
                        \vspace{0.2em}
                        \end{minipage}}}{\end{figure}}
\newcommand{\mat}[1]{\boldsymbol{#1}}
\newcommand{\prf}{\algo{F}}
\newcommand{\win}{\mathtt{win}}
\newcommand{\Setup}{\algo{Setup}}
\newcommand{\Gen}{\algo{Gen}}
\newcommand{\KeyGen}{\algo{KeyGen}}
\newcommand{\KG}{\algo{KG}}
\newcommand{\Enc}{\algo{Enc}}
\newcommand{\Dec}{\algo{Dec}}
\newcommand{\Sign}{\algo{Sign}}
\newcommand{\Vrfy}{\algo{Vrfy}}
\newcommand{\st}{\algo{st}}
\newcommand{\TD}{\algo{TD}}
\newcommand\SKE{\algo{SKE}}
\newcommand{\share}{\algo{Share}}
\newcommand{\reconst}{\algo{Reconst}}
\newcommand{\ord}[1]{\ensuremath{{#1}^{\mathrm{th}}}}
\newcommand{\kin}{k_\insf}
\newcommand{\kout}{k_\outsf}
\newcommand{\val}{\mathsf{val}}
\newcommand{\phat}{\hat{P}}
\newcommand{\SSscheme}{\Pi_\mathsf{SS}}
\newcommand{\SSshare}{\mathsf{SS}.\share}
\newcommand{\SSreconst}{\mathsf{SS}.\reconst}
\newcommand{\gstar}{g^\ast}
\newcommand{\istar}{i^\ast}
\newcommand{\SDFE}{\mathsf{SDFE}}
\newcommand{\UTK}{\mathsf{FEUTK}}
\newcommand{\SKFE}{\mathsf{SKFE}}
\newcommand{\PKFE}{\mathsf{PKFE}}
\newcommand{\qstate}[1]{\mathpzc{#1}}
\newcommand{\tensor}{\otimes}
\newcommand*{\qreg}[1]{{\color{gray}{\mathsf{#1}}}}
\newcommand{\negl}{{\mathsf{negl}}}
\newcommand{\eqand}{\quad\text{and}\quad}
\newcommand{\poly}{{\mathrm{poly}}}
\newcommand{\polylog}{{\mathrm{polylog}}}
\newcommand{\zo}[1]{\{0,1\}^{#1}}
\newcommand{\getsr}{%
  \mathrel{\vbox{\offinterlineskip\ialign{%
    \hfil##\hfil\cr
    $\scriptstyle\normalfont\textsc{r}$\cr
    \noalign{\kern0.1ex}
    $\leftarrow$\cr
}}}}
\DeclareMathOperator*{\Exp}{\mathbb{E}}
\newcommand{\xor}{\oplus}
\newcommand{\Hinf}{\mathbf{H}_\infty}
\newcommand{\Ppoly}{\compclass{P}/\compclass{poly}}
\DeclareRobustCommand
\newcommand{\ifelsesubmit}[2]{
\ifnum\submission=1
#1
\else
#2
\fi
}
\newcommand{\ifsubmit}[1]{\ifelsesubmit{#1}{}}
\newcommand{\ifnotsubmit}[1]{\ifelsesubmit{}{#1}}
\newcommand{\ifelsesubmiteq}[1]{\ifelsesubmit{\[ #1 \]}{$#1$}}
\begin{document}

\ifelsesubmit{
\title{Untelegraphable Encryption and its Applications}
\institute{}
\author{}
\date{}
}{
\title{Untelegraphable Encryption and its Applications 
%(Extended Abstract)
}
\author[$\star$]{Jeffrey Champion\thanks{Part of this work was done while visiting NTT Social Informatics Laboratories for an internship.}}
\author[$\dagger$]{Fuyuki Kitagawa}
\author[$\dagger$]{Ryo Nishimaki}
\author[$\dagger$]{Takashi Yamakawa}
\affil[$\star$]{UT Austin}
\affil[ ]{\href{mailto:jchampion@utexas.edu}{\texttt{jchampion@utexas.edu}}}
\affil[$\dagger$]{NTT Social Informatics Laboratories}
\affil[ ]{
\{\href{mailto:fuyuki.kitagawa@ntt.com}{\texttt{fuyuki.kitagawa}},
\href{mailto:ryo.nishimaki@ntt.com}{\texttt{ryo.nishimaki}},
\href{mailto:takashi.yamakawa@ntt.com}{\texttt{takashi.yamakawa}}\}\texttt{@ntt.com}}
\date{}
}

\maketitle

\ifnum\extabs=1

\input{extabs}

\bibliographystyle{alpha}
\bibliography{reference,abbrev3,crypto}

\else
\begin{abstract}
    We initiate the study of untelegraphable encryption (UTE), founded on the no-telegraphing principle, which allows an encryptor to encrypt a message such that a binary string representation of the ciphertext cannot be decrypted by a user with the secret key, a task that is classically impossible.
    This is a natural relaxation of unclonable encryption (UE), inspired by the recent work of Nehoran and Zhandry (ITCS 2024), who showed a computational separation between the no-cloning and no-telegraphing principles. 

    \ifelsesubmit{\quad }{\indent }In this work, we define and construct UTE information-theoretically in the plain model.
    Building off this, we give several applications of UTE and study the interplay of UTE with UE and well-studied tasks in quantum state learning, yielding the following contributions:
    \begin{itemize}
        \item A construction of collusion-resistant UTE from standard secret-key encryption (SKE). We additionally show that hyper-efficient shadow tomography (HEST) is impossible assuming collusion-resistant UTE exists. By considering a relaxation of collusion-resistant UTE, we are able to show the impossibility of HEST assuming only pseudorandom state generators (which may not imply one-way functions). This almost unconditionally answers an open inquiry of Aaronson (STOC 2018).
        
        \item A construction of UTE from a quasi-polynomially secure one-shot message authentication code (OSMAC) in the classical oracle model, such that there is an explicit attack that breaks UE security for an \textit{unbounded} polynomial number of decryptors. 
        
        \item A construction of \textit{everlasting secure} collusion-resistant UTE, where the decryptor adversary can run in unbounded time, in the quantum random oracle model (QROM), and formal evidence that a construction in the plain model is a challenging task. 
        We additionally show that HEST with unbounded post-processing time (which we call \textit{weakly-efficient} shadow tomography) is impossible assuming everlasting secure collusion-resistant UTE exists. 
        
        %\item A construction (and definition) of untelegraphable secret sharing for polynomial-size policies from collusion-resistant UTE and classical secret sharing for polynomial-size policies.
        \item A construction of secret sharing for all polynomial-size policies that is resilient to \textit{joint} and \textit{unbounded} classical leakage from collusion-resistant UTE and classical secret sharing for all policies.

        \item A construction (and definition) of collusion-resistant untelegraphable secret-key functional encryption (UTSKFE) from single-decryptor functional encryption and plain secret-key functional encryption, and a construction of collusion-resistant untelegraphable public-key functional encryption from UTSKFE, plain SKE, and plain public-key functional encryption.
    \end{itemize}
\end{abstract}

% !TEX root = main.tex
\section{Introduction}
Unclonable cryptography uses the no-cloning principle of quantum mechanics~\cite{Par70,WZ82,Die82} to enable cryptographic properties that are impossible to achieve classically. 
Constructing primitives with such properties has been an active area of research for the past four decades, starting with the well-studied notion of quantum money~\cite{Wiesner83}. 

\paragraph{Unclonable encryption.} In recent years, there has been particular interest in unclonable encryption (UE), which was introduced by Broadbent and Lord~\cite{TQC:BL20}.
Unclonable encryption is a one-time secure encryption scheme with the following guarantee: any adversary $\A$ that is given a ciphertext of some message $m$ in the form of a quantum state cannot output two (possibly-entangled) states to adversaries $\B$ and $\C$ such that $\B$ and $\C$ can recover information about $m$ even given the classical secret key $\sk$. 
When $m$ is uniformly chosen from the message space, and $\B$ and $\C$ must recover the entire message, we call this \textit{one-way security}. 
The more natural notion of security, called \textit{indistinguishability security}, is for $\A$ to chose two messages $m_0,m_1$, receive an encryption of $m_b$, and have $\B$ and $\C$ guess $b$. 

One-way security was achieved information theoretically in~\cite{TQC:BL20}, but indistinguishability security has yet to be achieved information-theoretically or from any standard assumption. 
In particular, positive results on unclonable indistinguishability security have come at either the cost of oracles~\cite{TQC:BL20,C:AKLLZ22}, relaxed definitions~\cite{KN23,AKY24,STOC:ColGun24}, or new conjectures~\cite{AB24}. 
Moreover, there are negative results that bring into question the plausibility of proving unclonable indistinguishability security information theoretically or from a standard assumption~\cite{MST21,C:AKLLZ22}.

\paragraph{This work: untelegraphable encryption.} Partly motivated by the challenges of constructing UE, the main focus of this work is a natural relaxation of unclonable encryption, which we call untelegraphable encryption (UTE), a name that comes from its relation to the no-telegraphing principle~\cite{Wer98}. Like unclonable encryption, the basic primitive we consider is a one-time secure encryption scheme. 
The security guarantee is as follows: any adversary $\A$ that is given a ciphertext of some message $m$ in the form of a quantum state cannot produce a classical string $\st$ that would allow an adversary $\B$ to recover information about $m$ given $\st$ and the classical secret key $\sk$. The one-way and indistinguishability security notions are analogous to the ones in UE. 

The core challenges in constructing unclonable encryption from standard assumptions are: (1) there are two (possibly entangled) second stage adversaries, which make most computational reduction and search-to-decision techniques fail (2) the second stage adversaries get the secret key of the encryption scheme, which goes against the foundations of classical cryptography. Untelegraphable encryption by definition removes the first challenge, so there is reason to believe that it is considerably more feasible than unclonable encryption. Furthermore, the study of untelegraphable encryption builds on the rich literature examining the relationship between classical and quantum information.

\paragraph{Quantum and classical information.}
The ``quantum vs classical information'' paradigm broadly spans the field of quantum computing.
In quantum complexity theory, the fundamental question of whether quantum states are stronger than classical strings as proofs to an efficient verifier (also known as the QMA vs QCMA problem~\cite{AN02}) is a very well-studied problem, with numerous works~\cite{AK07,FK18,NN23,NZ24,LLPY24} attempting to get closer to an answer via oracle separations. A similar question has also been studied in terms of efficient algorithms interacting with classical and quantum advice~\cite{AK07,LLPY24}. 
In quantum information, two of the fundamental no-go theorems are the no-cloning and no-telegraphing principles. Previously believed to be equivalent, the work of Nehoran and Zhandry~\cite{NZ24} shows that when it comes to computationally bounded algorithms, there is indeed a difference between telegraphing and cloning. We believe that we can enrich our understanding of unclonability by studying untelegraphability, even if the focus is on their differences. 

\paragraph{Learning quantum states.} Furthermore, the fundamental task of describing an unknown quantum state, called quantum state tomography~\cite{VR89}, aims to classically learn an unknown quantum state given many copies of the state. Due to the destructive nature of measurement, full quantum state tomography provably requires exponentially many copies of the state. This barrier lead Aaronson~\cite{Shadow2} to introduce the notion of \textit{shadow tomography}. 
Shadow tomography focuses on learning how a particular family of two-outcome measurements acts on an unknown state. Such a task can be done more efficiently, and may still extract the majority of ``useful'' information in the state. 

In the same work, Aaronson suggested the possibility of so-called \textit{hyper-efficient} shadow tomography (HEST), hoping to achieve a truly computationally feasible variant of shadow tomography.
In the subsequent work of Huang, Kueng, and Preskill~\cite{NatPhys:HKP20}, the notion of \textit{classical shadows} was introduced, which aims for efficiency in the deconstruction of the quantum state into a classical string, such that the classical output can be post-processed (inefficiently) to learn interesting properties of a state. 
As we will show, HEST and classical shadows are strongly connected to untelegraphable encryption. 

\paragraph{Guarantees and practicality.} Finally, we further motivate untelegraphable encryption by remarking on the importance of having provable guarantees from well-founded assumptions in cryptography. Such guarantees can give us confidence in constructions that would otherwise derive their security entirely through cryptanalysis. This is the motivation for the notion of non-adaptive security in cryptography, which in general involves an adversary committing to how they will attack a protocol's security. 
Even though non-adaptive security fails to model the real world, it allows us to gain confidence in constructions, and can later lead to provable adaptive security.

We believe that in a similar way, showing a provable guarantee like untelegraphable security for a scheme can be an avenue to gain confidence in the scheme's unclonable security, assuming no explicit cloning attack is known. Furthermore, classical interaction with quantum computers is already occurring, so preventing the classical exfiltration of important quantumly-stored information is a very useful guarantee with respect to near-term quantum computer capabilities and distribution. 

\subsection{Our Results}
In this section, we give a summary of our contributions:
\begin{enumerate}
    \item We formally define untelegraphable variants of several cryptographic primitives: untelegraphable encryption and variants on the basic security such as collusion-resistant and everlasting security (\cref{sec:crypto-prelims}), and untelegraphable functional encryption (UTFE) (\cref{sec:utfe-defs}). 
    
    \item We construct untelegraphable encryption (UTE) with indistinguishability security information theoretically (\cref{sec:ind-ute}), collusion-resistant UTE from standard secret-key encryption (SKE) (\cref{sec:cr-ute}), and show that hyper-efficient shadow tomography (HEST) for general mixed quantum states is impossible assuming collusion-resistant UTE (\cref{sec:hest}). By instantiating SKE with succinct query-bounded SKE from a pseudorandom state (PRS) generator (shown in \cref{sec:qske}), we are able to show HEST is impossible assuming just PRS generators.
    
    \item We give a separation between indistinguishability secure UTE and one-way secure UE from a quasi-polynomially secure one-shot message authentication code (OSMAC) in the classical oracle model (\cref{sec:clonable-ute}). The OSMAC can be instantiated with the learning with errors (LWE) assumption.  
    
    \item We construct everlasting secure collusion-resistant UTE in the quantum random oracle model (QROM) (\cref{sec:everlasting-ute}) and  show that weakly-efficient shadow tomography (WEST) for general mixed quantum states is impossible assuming everlasting secure collusion-resistant UTE (\cref{sec:west}). We also show that public-key everlasting one-way secure UTE is separated from all falsifiable assumptions assuming the sub-exponential hardness of one-way functions (\cref{sec:everlasting-impossible}).
    
    \item We construct unbounded joint-leakage-resilient secret sharing (UJLRSS) for polynomial-size policies from collusion-resistant UTE and classical secret sharing for polynomial-size policies (\cref{sec:utss}), both of which can be instantiated from standard SKE.

    \item We construct collusion-resistant untelegraphable secret-key functional encryption (UTSKFE) from single-decryptor functional encryption and plain secret-key functional encryption (SDFE) (\cref{sec:utskfe}). We also construct collusion-resistant untelegraphable public-key functional encryption from UTSKFE, plain SKE, and plain public-key functional encryption (\cref{sec:utpkfe}). The SDFE scheme can be instantiated from sub-exponentially secure indistinguishability obfuscation (iO) and one-way functions, along with LWE.
\end{enumerate}

\noindent We now discuss how some of our results compare with related works.

\paragraph{Impossibility of hyper-efficient shadow tomography.}
The initial impossibility results shown for HEST by Aaronson~\cite{Shadow2} and Kretschmer~\cite{Kre21} use oracles, so they are incomparable to our result.
In the work of \c{C}akan and Goyal~\cite{CG24}, impossibility of HEST is shown assuming sub-exponential post-quantum iO and LWE.
We improve the assumption needed to just PRS generators~\cite{C:JiLiuSon18}, which may be weaker than one-way functions~\cite{Kre21}. 
Moreover, the WEST impossibility result is unique to our work.
Interestingly, the impossibility of WEST somewhat complements a positive quantum state learning result in the work of Huang, Kueng, and Preskill~\cite{NatPhys:HKP20}, which essentially says that WEST is possible when the family of measurements are all projections to pure quantum states, while we show WEST is impossible in an instance where the measurements are not projections to pure states.

\paragraph{Separating cloning and telegraphing.}
The work of Nehoran and Zhandry~\cite{NZ24} showed the first computational separation of cloning and telegraphing. They show that there exists a quantum oracle and set of states such that the states can be efficiently cloned with respect to the oracle but not efficiently telegraphed. Moreover, their set of untelegraphable states is hard to deconstruct (make classical) even in unbounded time in a way that reconstruction (recovering the state) is efficient. Our separation between UE and UTE roughly shows the existence of a classical oracle and family of states where it is easy to clone many bits of information about the states, but it is hard to efficiently telegraph even a single bit. The results and assumptions needed for each result are incomparable.
In the opposite direction, \c{C}akan, Goyal, Liu-Zhang, and Ribeiro~\cite{CGLR24} show that unclonability likely does not imply ``interactive'' untelegraphability, where the classical string can come from multiple rounds of classical interaction.

\paragraph{Unbounded leakage-resilient secret sharing.}
The work of~\cite{CGLR24} introduces the notion of unbounded leakage-resilient secret sharing (ULRSS). In this variant of secret sharing,  security holds even with respect to unbounded \textit{local} classical leakage on each share, meaning the shares must be quantum. Our construction in this work extends security to the setting of \textit{joint} leakage, where the leakage function can depend on unqualified sets of shares that are disjoint from one another. The distinction between these definitions is strict, as the construction in~\cite{CGLR24} is insecure under our definition. However, for Boolean formula policies, the construction in~\cite{CGLR24} is statistically secure, while our construction relies on PRS generators.

% which can be thought of as UTSS where all collusion groups have just a single party (or in other words, the leakage is locally generated). The main secret sharing result in~\cite{CGLR24} is a threshold secret sharing scheme that is statistically secure against unbounded classical leakage and bounded quantum leakage via instantiating a more general compiler. Their construction is insecure under our definition and the compiler is unlikely to work for all polynomial-size policies, since it requires a statistically secure classical secret sharing scheme for the same policy the compiler outputs. 
% However, in the settings where the compiler works it achieves statistical security and can handle bounded quantum leakage in the threshold setting.
% We compare the definitions more thoroughly in~\cref{sec:utss-def}.
The work of Ananth, Goyal, Liu, and Liu~\cite{AGLL24} defines and constructs the notion of unclonable secret sharing (USS). 
%The definition for untelegraphable secret sharing (UTSS) that we introduce in this work is different, primarily in the way it models collusion between parties. 
Intuitively, USS allows for collusion among many or even all parties, where collusion between two parties means they are entangled and cannot communicate otherwise. 
In our definition of UJLRSS, we consider collusion between groups of parties that cannot reconstruct the secret, but allow for full sharing of information between parties in a group. 
Furthermore, we construct a secret sharing scheme for all polynomial-size policies, whereas~\cite{AGLL24} only consider $n$-out-of-$n$ secret sharing. 
We compare the definitions more in~\cref{sec:utss-def}.

\paragraph{Untelegraphable keys.} 
\c{C}akan, Goyal, Liu-Zhang, and Ribeiro~\cite{CGLR24} also show how to construct quantum keys resilient to unbounded classical (interactively generated) leakage. This can be thought of as cryptography with untelegraphable keys in our terminology. Specifically, they show how to protect keys for public-key encryption, digital signatures, and pseudorandom functions. Additionally, the work of~\cite{NZ24} shows how to construct an encryption scheme with keys that can be cloned but not exfiltrated.

\paragraph{Independent work.} Independent of our work, \c{C}akan and Goyal~\cite{CG24pre} study LOCC leakage-resilient encryption, signatures, and NIZKs. In our terminology, they construct an extension of untelegraphable encryption where the first and second stage adversaries can interact when generating the classical state (before the second stage adversary gets the decryption key). They construct SKE and PKE with this security property, improving our results in~\cref{sec:ind-ute,sec:cr-ute}. However, apart from constructing this extension of UTE, the work of~\cite{CG24pre} studies entirely different facets of untelegraphability from our work. 

\subsection{Technical Overview}\label{sec:tech-overview}
In this section, we provide a high-level overview of our main constructions.

\paragraph{UTE with indistinguishability security.} 
We begin with our construction of one-time untelegraphable encryption with statistical indistinguishability security. 
The construction uses a one-way secure UTE scheme (which we have from one-way secure UE) and a universal hash family as its main components, and can be described as follows:
\begin{itemize}
    \item The secret key is a secret key $\sk_\OW$ for the one-way secure UTE, a random hash function $h$ from the hash family, and a uniformly random string $r$. 
    
    \item A ciphertext for a message $m$ is an encryption $\ket{\ct_\OW}$ of a random message $x$ for the one-way scheme, along with the string $r'=r\xor h(x)\xor m$, where $\xor$ is the bitwise XOR operation. 
\end{itemize}
To decrypt, simply decrypt the random message $x$ and compute $m=r'\xor r\xor h(x)$.
The key observation for proving security is that when the first stage adversary $\A$ outputs its two messages $(m_0,m_1)$, the string $r'$ in the ciphertext can be sampled uniformly and given to $\A$ along with the honestly generated $\ket{\ct_\OW}$. 
When $\A$ outputs classical string $\st$ for the second stage adversary $\B$, the challenger can then sample the hash function $h$ and set $r=r'\xor h(x)\xor m_b$.
Such a trick can be thought of like a one-time version of ``non-committing encryption'', a notion that will be discussed more later in this overview.

Thus, in the second stage, we can now treat the components $(\sk_\OW,\st,r')$ as ``leakage'' on the randomly chosen $x$. 
In particular, by the one-way security of the underlying UTE scheme, we can argue that $x$ has high entropy from the perspective of $\B$, and furthermore, the hash $h$ is independent of $x$ and its leakage. 
This allows us to treat $h(x)$ as a uniform string by a variant of the well-known leftover hash lemma. We refer to~\cref{sec:ind-ute} for the full details. 
It is unknown whether such a scheme satisfies UE security, but the main issue in proving security stems from the fact that one of the second stage adversaries $\B$ and $\C$ can obtain the honest ciphertext in its entirety. This makes it impossible to reason about the traditional notion of entropy from the perspectives of $\B$ and $\C$.

\paragraph{Collusion-resistant UTE.} 
Given UTE with indistinguishability security, we can now consider constructing collusion-resistant UTE. By collusion-resistant security, we mean that the first stage adversary $\A$ can now make numerous message queries of the form $(m_0,m_1)$ and receive encryptions of $m_b$ in return. The term collusion-resistant comes from the fact that in the UE setting, the goal of an adversary given $q$ encryptions would be to transmit the bit $b$ to $q+1$ second stage adversaries. 

Similarly to the one-time case, our construction of collusion-resistant UTE makes use of a hybrid encryption technique using non-committing encryption (NCE)~\cite{EC:JarLys00,TCC:CanHalKat05}, as has been used previously to construct advanced encryption schemes with certified deletion~\cite{AC:HMNY21}. 
Intuitively, NCE is an encryption scheme with an algorithm for making ``fake'' ciphertexts that contain no message information, and a corresponding algorithm that ``reveals'' the true message $m$ via outputting a decryption key $\dk$. The security guarantee says that a fake ciphertext and decryption key look indistinguishable from a real ciphertext and decryption key on some challenge message, given access to an encryption oracle. Due to the nature of its security guarantee, NCE requires encryption and decryption keys to be separately defined.
We make use of such a scheme along with a one-time UTE scheme to construct collusion-resistant UTE (with separate encryption and decryption keys) as follows:
\begin{itemize}
    \item The encryption key $\ek$ and decryption key $\dk$ are directly sampled from the NCE scheme.
    \item A ciphertext for a message $m$ consists of a one-time UTE encryption of the message $\ket{\ct_\OneUTE}$ along with a ciphertext $\ct_\NCE$ of the secret key $\sk_\OneUTE$ for the one-time UTE scheme.
\end{itemize}
To decrypt given $\dk$, simply recover $\sk_\OneUTE$ and decrypt the one-time UTE ciphertext. The intuition for security is that for each query $(m_0,m_1)$ made by the first stage adversary $\A$, we can switch the ciphertext $\ct_\NCE$ that encrypts $\sk_\OneUTE$ to a fake one, and have the decryption key $\dk$ be produced by the revealing algorithm for the message $\sk_\OneUTE$. We can then appeal to one-time UTE security to switch the message $m_b$ to 0, say. We refer to~\cref{sec:cr-ute} for the full details.
Note that such a construction does not yield collusion-resistant security in the UE case, since by the formulation of UE, there has not been a way to appeal to its security multiple times in a single proof. This prevents one from using a query-by-query security proof.  

\paragraph{Impossibility of hyper-efficient shadow tomography.} 
We now illustrate the connection between hyper-efficient shadow tomography (HEST) and collusion-resistant UTE. To start, we formalize the notion of hyper-efficient shadow tomography. 
Let $E$ be a quantum circuit that takes as input some index $i\in[M]$ and an $n$-qubit quantum state $\rho$, and outputs a single classical bit.
A shadow tomography procedure is an algorithm $\cT$ that takes $E$ and $k$ copies of $\rho$ and outputs (a classical description of) a  quantum circuit $C$ such that $C(i)$ estimates the probability that $E(i,\rho)=1$ up to some error $\eps$ with failure probability $\delta$. The algorithm $\cT$ is \textit{hyper-efficient} if its runtime and the required number of copies $k$ are both $\poly(n, \log M, 1/\eps,\log 1/\delta)$.

The key observation is to map the state $\rho$ to a UTE ciphertext, map $M$ to be the enumeration of UTE decryption keys $\dk$, and map the circuit $E$ to be the UTE decryption algorithm (assuming 1-bit messages). Since UTE algorithms must be efficient, the parameters $n$ and $\log M$ are polynomial in the security parameter. This implies that running the procedure $\cT(E,\ket{\ct}^{\otimes k})$, will efficiently generate a classical description of a circuit $C$, such that $C(\dk)$ estimates $\Dec(\dk,\ketct)$ up to error $\eps$. By setting $\eps<1/2$ and $\delta$ to be negligible, this immediately gives an attack on collusion-resistant UTE with \textit{overwhelming advantage}, so long as the UTE scheme is correct. Since the ciphertext and key sizes in the above scheme are fixed for any polynomial number of queries, we can rule out the existence of all HEST procedures assuming NCE exists. NCE can be constructed from CPA-secure SKE~\cite{C:KNTY19} and thus one-way functions (OWFs), so this immediately gives the impossibility of HEST from OWFs. We provide the full details in~\cref{sec:hest}.

\paragraph{Weakening the assumption to PRS generators.} 
A notable feature of the above impossibility result is that it hinges on the the ciphertext and key sizes of the underlying UTE scheme being fixed for any polynomial number of queries.
Additionally, the copy threshold $k$ is a fixed value, so it is fine for the UTE scheme to have its setup depend on $k$, so long as the aforementioned succinctness properties hold. Thus, the result would still hold if we were to construct ``bounded-query'' NCE with succinct ciphertexts and decryption keys, where the initial key generation can depend on the a query bound $q$ along with the security parameter $\lam$, analogous to notions like bounded-collusion functional encryption. Thus, in hopes of strengthening our separation even further, we aim to construct succinct bounded-query NCE from pseudorandom state (PRS) generators, which may not imply OWFs. 

We do this by constructing succinct bounded-query CPA secure SKE from a pseudorandom function-like state (PRFS) generator with a polynomial-sized domain, which is implied by a PRS generator. Recall that a PRFS generator is a QPT algorithm $G$ such that for a uniform random classical key $k$, tuples $(x,G(k,x))$ are indistinguishable from tuples $(x,\ket{\psi})$, where each $\ket{\psi}$ is an independent Haar-random state and inputs $x$ must be chosen non-adaptively. 
Reminiscent of bounded-collusion attribute-based encryption constructions (see~\cite{ISVWY17}), our construction is as follows:
\begin{itemize}
    \item The secret key consists of $\lam$ random keys for the PRFS generator $(k_1,\dots,k_\lam)$. The dependence on $q$ comes from the fact that our PRFS generator $G$ will have domain $\zo{d}$ where $d$ depends on $q$.
    \item To encrypt a message $m\in\zo{}$, sample shares $s_1,\cdots,s_\lam$ uniformly such that $s_1\xor\cdots\xor s_\lam=m$ and sample $r_1,\dots,r_\lam$ uniformly from $\zo{d-1}$. The ciphertext consists of pairs $(r_i, G(k_i,(r_i,s_i)))$ for all $i\in[\lam]$.
\end{itemize}
Decryption involves testing the PRFS generator outputs: for each $i$, a decryptor checks whether the output is $G(k_i,(r_i,0))$. This can be done for PRSF generators with negligible error~\cite{C:AnaQiaYue22}. 

To invoke the PRFS generator security, it is crucial that the queried inputs $x$ are distinct from each other. Since for each $i$ the values $r_i$ in the $q$ ciphertexts queried are uniformly random, we can ensure they do not collide by setting the input size $d$ to be large enough relative to $q$. Namely, by setting $d$ such that the $r_i$ are sampled from a set of size $q^2$, we can ensure that for a given $i$ the probability of the $q$ strings $r_i$ colliding is some constant less than 1. Therefore, the chance of a collision happening for all $i\in[\lam]$ is exponentially small in $\lam$, since the $r_i,r_j$ are sampled independently for $i\neq j$. 
With this in mind, we can switch the output of $G$ to Haar-random states on the $q$ inputs $(r_{\istar},s_{\istar})$, where $\istar$ is the first index such that the $q$ strings $r_{\istar}$ do not collide, and the $q$ bits $s_{\istar}$ are sampled randomly (which matches their marginal distribution). Since the PRFS outputs on index $\istar$ are switched for Haar-random states, they no longer depend on the share $s_{\istar}$, meaning the remaining shares $s_i$ for $i\neq\istar$ are uniform strings independent of the message, yielding security.

Since the size of $\zo{d-1}$ needs to be at least $q^2$ for security, setting $d=2\log q+1$ suffices. Additionally, the output and key length of $G$ need not depend on $q$, which implies ciphertexts and keys are succinct. By inspecting the transformations from CPA-secure SKE to NCE and NCE to collusion-resistant UTE, it is easy to see that succinctness is retained. Due to the $\poly(\lam,\log q)$-size ciphertexts and keys, for every HEST scheme needing $k$ copies, we can find $q=\poly(\lam)$ such that $q>k$, which means HEST contradicts $q$-bounded collusion resistance of UTE. Since $q$-bounded collusion resistance can be achieved for any $q=\poly(\lam)$, the impossibility follows. Note that $k$ circularly depends on $q$ here, so if parameters are not succinct the number of copies $k$ may always be larger than $q$, breaking the impossibility argument. 
We provide the full details in~\cref{sec:qske}.

\paragraph{Separating UE and UTE.} 
Our next result is a construction of a clonable UTE scheme: that is, a secure UTE scheme such that the message information in the ciphertext can be cloned and thus violate UE security. Note that if the number of second stage adversaries is bounded, the question is trivial, since one could just define encryption to encrypt the message many times with a collusion-resistant UTE scheme. Thus, we aim to break UE security for an unbounded polynomial number of second stage adversaries. 

To do this, we will want to make use of the fact that we can repeatedly rerun a given second stage adversary $\B$ for UTE on a state $\st$. Specifically, imagine we have a classical oracle $\cO$ (think of it as an ideal obfuscation) that checks some condition and outputs the message $m$ if the condition holds.
From the perspective of attacking UE, we would like the property that $\A$ can easily sample many inputs that pass the check in the first stage if its output is allowed to be quantum.  
From a UTE security perspective, if $\B$ wins the game, then it must query $\cO$ at an input which passes the check, and we want the property that we can sample prohibitively many inputs that pass the check by running $\B$ repeatedly, enough to break security of some primitive. 

We instantiate this framework using a one-shot message authentication code (OSMAC). An OSMAC is a relaxation of one-shot signatures~\cite{AGKZ20}, which support public verification, and two-tier one-shot signatures~\cite{MPY24}, which support partial public verification. 
An OSMAC consists of a setup algorithm that outputs the public parameters $\pp$ and master verification key $\mvk$, a key generation algorithm that uses $\pp$ to sample a verification key $\vk$ along with a quantum signing key $\ket{\sk}$, a signing algorithm that uses a signing key $\ket{\sk}$ to sign a message $m$ to output a MAC $\sigma$, and finally a verification algorithm that checks $\sigma$ with respect to $\mvk,\pp,\vk,m$. 
Security says that an adversary cannot output valid MACs on two different messages with respect to the same $\vk$ given $\pp$. Our approach will be to use an ``OSMAC chain'' to realize the desired functionality. We illustrate our approach for a chain of length 2, since it is sufficient to elucidate the core idea. Our clonable UTE scheme has separate encryption and decryption keys, and is defined as follows:
\begin{itemize}
    \item The encryption key $\ek$ is a tuple $(\mvk,\pp)$ from the OSMAC scheme, along with a random string $s$, while the decryption key $\dk$ is just $s$.
    
    \item A ciphertext for a message $m$ is a signing key $\ket{\sk_\eps}$, the public parameters $\pp$, and a classical oracle $\cO$ that has $\ek$, $m$, and $\vk_\eps$ hardwired inside it, where $\vk_\eps$ is the verification key generated with $\ket{\sk_\eps}$. Given a strings $s'$ and $r=(r_1,r_2)\in\zo{2}$ along with tuples 
    \[(\vk_0,\vk_1,\sigma_\eps),
    (\vk_{r_1},\vk_{r_1\|0},\vk_{r_1\|1},\sigma_{r_1}),
    (\vk_r,\sigma_r),\] 
    the function $\cO$ checks that $s'=s$, the MAC $\sigma_v$ is valid on message $\vk_{v\|0}\|\vk_{v\|1}$ with respect to $\vk_v$ for $v\in\setbk{\eps,r_1}$, and finally that $\sigma_r$ is a valid MAC on $s'$ with respect to $\vk_r$.
\end{itemize}
To decrypt given $(\ket{\sk_\eps},\pp,\cO)$ and $s$, compute the MAC chain on $00$: sample pairs $(\ket{\sk_v},\vk_v)$ for $v\in\setbk{0,1,00,01}$, compute the MACs $\sigma_\eps$ on $\vk_{0}\|\vk_{1}$ using $\ket{\sk_\eps}$, $\sigma_0$ on $\vk_{00}\|\vk_{01}$ using $\ket{\sk_0}$, and $\sigma_{00}$ on $s$ using $\ket{\sk_{00}}$. Submit the tuples along with $s$ and $00$ to get $m$.
The UE attack for 4 second stage adversaries is simply to compute the MAC tree without the leaves: 
\[(\vk_0,\vk_1,\sigma_\eps),
    (\vk_{0},\vk_{00},\vk_{01},\sigma_{0}),
    (\vk_{1},\vk_{10},\vk_{11},\sigma_{1}).\] 
Additionally send one of $(\vk_{r},\ket{\sk_{r}})$ for $r\in\zo{2}$ to each second stage adversary. When a second stage adversary is given $s$, it can sign $s$ with $\ket{\sk_{r}}$ to complete the MAC chain and recover $m$. Thus, this attack breaks one-way UE security given a single ciphertext in the first stage and 4 second stage adversaries. By increasing the length of $r$ to $\omega(\log \lam)$, the attack works for an unbounded polynomial number of second stage adversaries. 

To see UTE security, first note that the oracle $\cO$ is useless to the first stage adversary $\A$ since $s$ is uniform and independent of $\A$'s view. Furthermore, for the second stage adversary $\B$ to win the security game, it must query the oracle $\cO$ with a valid MAC chain with noticeable probability. Therefore, we can query $\B$ with randomly sampled strings $s$ and measure a random query of $\B$, check that the value of $s$ matches, and add that query to a list $L$. By doing this enough times, we can argue that with high probability there are at least 5 valid MAC chains contained in $L$, which means by the pigeonhole principle, there must exist two chains in $L$ such that the values of $r$ are equal and the values of $s$ are not (which follows by setting $s$ to have sufficient length). 
By picking two elements of $L$ at random, we can find these two MAC chains with roughly $1/\abs{L}^2$ probability. These MAC chains must contradict OSMAC security, because either we have two signatures that verify with $\vk_r$ or the two $\vk_r$ values are different, which means the messages signed with respect to $\vk_{r_1}$ are different. Then, we either have two signatures that verify with $\vk_{r_1}$ or the two $\vk_{r_1}$ values are different, implying there are definitely two signatures that verify with $\vk_\eps$, since $\vk_\eps$ is always fixed. If the length of $r$ is set to $\log^2 \lam$ for the attack, this strategy works by assuming quasi-polynomial security of OSMAC. 

\paragraph{Everlasting secure UTE.}
Numerous works in cryptography have studied the natural notion of ``everlasting security''~\cite{MU07}, which roughly says that after some notable event occurs in a security game, a security property is upheld even given unbounded time after the event. 
Given the two-stage nature of UTE (and UE), it is natural to consider whether everlasting untelegraphability (or unclonability) is possible. Concretely, this means that in the UTE security game, the second stage adversary can run in \textit{unbounded} time. We show that the techniques used for certified everlasting hiding commitments in~\cite{C:HMNY22} adapted to the UTE setting indeed yields everlasting secure UTE.
Given a one-time secure UTE scheme, CPA-secure SKE, and a hash function $H$ modeled as a random oracle, the construction is as follows:
\begin{itemize}
    \item The secret key is the key $k$ for the SKE scheme along with the hash function $H$.
    \item A ciphertext for a message $m$ consists of a one-time UTE encryption $\ket{\ct_\OneUTE}$ of $m$, an encryption $\ct_\SKE$ of a random string $r$, and a string $h=H(r)\xor\sk_\OneUTE$, that masks the fresh one-time UTE key $\sk_\OneUTE$.
\end{itemize}
To decrypt, simply recover $r$ to get $H(r)$ and decrypt the one-time UTE ciphertext.
The basic intuition for security is that by SKE security, the oracle given to $\A$ in the first stage is useless, and thus $\sk_\OneUTE$ is statistically hidden by $H(r)$ until $\B$ learns it in the second stage. Security then follows by security of the one-time UTE scheme, since it is statistically secure. Extending to the collusion-resistant setting follows readily by a hybrid argument. We give the full details in \cref{sec:everlasting-ute}.
This technique can be thought of as leveraging the random oracle to get a stronger variant of NCE. A notable aspect of this construction is that in the single-ciphertext setting, it can be adapted to get everlasting indistinguishability secure public-key UE by using public-key encryption instead of SKE, since indistinguishability secure UE exists in the QROM~\cite{C:AKLLZ22}.   

\paragraph{Impossibility of weakly-efficient shadow tomography.} 
We define weakly-efficient shadow tomography in a similar way to HEST. The main difference is that we explicitly split the procedure into two stages: the first efficient stage outputs any classical string $\st$, and the second stage can use unbounded time to estimate $E(i,\rho)$ up to error $\eps$ given $\st$ and $i$. 
Thus, the attack on collusion-resistant everlasting secure UTE follows the same outline as the HEST attack on collusion-resistant UTE. The main notable difference is that our instantiation of collusion-resistant everlasting secure UTE is secure in the QROM. Thus, we need to slightly adapt the attack to make oracle access to the hash $H$ instead of having $E$ generically compute the decryption circuit, which would need to include a concrete description of $H$ (in the case where the tomography algorithm $\cT$ needs the circuit described in terms of atomic gates only). This can be done by modifying $E$ to be the decryption circuit starting from after $H(r)$ is recovered (we can put $\ct_\SKE$ and $h$ as part of $\A$'s output in the UTE game). We refer to~\cref{sec:west} for the details.

\paragraph{Impossiblity of public-key everlasting secure UTE.}
A natural question is whether a construction of everlasting secure UTE (or UE) is possible in the plain model, as our positive result heavily relies on the QROM. However, we observe that this is not likely to be the case, at least using standard proof strategies which make black-box use of the adversary. 

To start, notice that any efficient reduction $\cR$ that makes black-box use of a UTE adversary $(\A,\B)$ in the everlasting one-way security game to break some candidate falsifiable assumption $A$ cannot query adversary $\B$, since $\B$ is potentially unbounded. Furthermore, there is no requirement on the structure of the string $\st$ output by $\A$, other than it being classical.
Now, suppose that given a ciphertext $\ketct$ of a random message $m$, adversary $\A$ could recover $m$. While $\A$ could just output $m$ in the clear and win the game, what if it were to output some encoding $\Enc(m)$ of the message? If different messages under $\Enc$ are statistically far apart, the unbounded adversary $\B$ could still recover $m$ and win the game. Moreover, if encodings of different messages are computationally close, the reduction $\cR$ would not be able to tell the difference between the encodings $\Enc(m)$ and $\Enc(0)$, meaning $\cR$ cannot even determine whether $(\A,\B)$ wins!

Thus, if we consider an inefficient adversary $\A$ that decrypts the ciphertext $\ketct$ it is given to get a message $m$ and outputs $\Enc(m)$, the reduction $\cR^\A$ that has oracle access to $\A$ must break the candidate assumption $A$, since adversary $\B$ is unbounded and can distinguish encodings of different messages. Since the ciphertext is quantum and there are no guarantees on its structure, we consider public-key UTE (PKUTE), which allows adversary $\A$ to recover $m$ by finding a corresponding secret key to the public key it is given (e.g. by rejection sampling key pairs). Now, given that $\cR^\A$ must break the candidate assumption $A$, we indistinguishably switch oracle access to $\A$ with oracle access to an efficient simulator $\cS$, that will essentially ignore its input and output $\Enc(0)$. Since encodings of different messages are computationally close, $\cR^\cS$ must break the assumption $A$. But $\cR^\cS$ is efficient, so $A$ must be false. This means there is either no black-box reduction to a falsifiable assumption $A$ that shows everlasting one-way PKUTE security, or $A$ is false. We supply the formal details in~\cref{sec:everlasting-impossible}.

\paragraph{Unbounded joint-leakage-resilient secret sharing.}
In~\cref{sec:utss}, we show that collusion-resistant UTE and classical secret sharing for all policies can be used to construct unbounded joint-leakage-resilient secret sharing (UJLRSS) for all polynomial-size monotone Boolean circuits. In this overview, we will show how to construct UJLRSS for all polynomial-size monotone Boolean formulas, since it is simpler and still captures the basic intuition.

We start by defining UJLRSS more formally. The syntax of UJLRSS is the same as regular $n$-party secret sharing, except that the shares of a classical message $m$ output by the sharing algorithm are quantum states, and thus the input to the reconstruction algorithm contains quantum states. Security has two stages:
\begin{itemize}
    \item In the first stage, an adversary $\A_0$ outputs messages $m_0,m_1$ and a ``collusion partition'' of disjoint sets $V^\ast,V_1,\dots,V_\ell$ such that their union is the set of all indices $[n]$, and each $V$ does not satisfy the policy. 
    The challenger then shares $m_b$, and for each $i\in[\ell]$ gives the shares $\ket{s_j}$ for $j\in V_i$ to adversary $\A_i$. Each adversary $\A_i$ outputs a state $\st_i$.
    \item In the second stage, adversary $\B$ is given $\st_i$ for each $i\in[\ell]$, along with shares $\ket{s_j}$ for $j\in V^\ast$, and tries to guess the bit $b$.
\end{itemize}
This definition captures the scenario where an adversarial group that does not satisfy the policy is able to exfiltrate the remaining shares classically. In such a case, our security notion guarantees that the adversarial group learns nothing about the secret. Alternatively, this definition can be thought of as ensuring that each individual share, along with each possible set of unqualified shares, is untelegraphable.

We now present a construction of UJLRSS for a monotone Boolean formula $P$ from collusion-resistant UTE and a classical secret sharing scheme for $P$: 
\begin{itemize}
    \item To share a message $m$ with respect to a policy $P$, sample classical shares $(s'_1,\dots,s'_n)$ for $m$ and $(\sk_1,\dots,\sk_n)$ for a fresh UTE secret key $\sk_\UTE$. A share $\ket{s_i}$ for party $i\in[n]$ is $\sk_i$ along with an encryption $\ket{\ct_i}$ of $s'_i$. 
    \item To reconstruct given a satisfying group of parties, reconstruct the key $\sk_\UTE$, decrypt the ciphertexts to get shares of $m$, and reconstruct $m$ from those shares. 
\end{itemize}
The intuition for security is that for each set $V_i$, we can appeal to collusion-resistant UTE security to switch the ciphertexts to encryptions of 0 instead of shares of $m$. 
To simulate the secret key shares $\sk_j$ for $\A_i$, sample them uniformly, since classical secret sharing for formulas is information-theoretically simulation secure. After $\sk_\UTE$ is revealed in the UTE security game, derive the remaining shares and run $\A_j$ for $j\neq i$ to get the remaining states, which can then be given to the second stage UJLRSS adversary to complete the reduction. After switching ciphertexts for all $V_i$, security follows by information theoretic security of the classical scheme, since the only classical shares left are the ones in $V^\ast$. Since classical secret sharing for all monotone Boolean circuits is computationally secure, we can no longer easily simulate the key shares for an unsatisfying group, so the construction and proof strategy must be changed to adapt to this. We provide the full details in~\cref{sec:utss}.

\paragraph{Untelegraphable functional encryption.}
Finally, we give a sketch of our untelegraphable secret-key and public-key functional encryption constructions. To start, we formalize the security definition of untelegraphable functional encryption. The first stage adversary $\A$ can query function keys and still must choose messages $m_0,m_1$ such that $f(m_0)=f(m_1)$ for all functions $f$ queried, much like regular functional encryption, except $\A$ now outputs a state $\st$. The second stage adversary $\B$ can query function keys as well, but with no constraints on the function $f$, even functions such that $f(m_0)\neq f(m_1)$. One can also consider the stronger definition where $\B$ is given the master secret key in the second stage, but in this work we only consider the unrestricted key query setting.

We start by constructing untelegraphable secret-key functional encryption (UTSKFE) from challenge-only single-decryptor functional encryption (SDFE)\footnote{We technically need to modify the definition to the untelegraphable setting such that it is stated quite differently than SDFE in~\cite{AC:KitNis22}, but for ease of exposition, we call it SDFE here.}~\cite{AC:KitNis22} and plain secret-key functional encryption (SKFE). 
This construction essentially switches the roles of the SDFE function keys and ciphertext. We describe it as follows:
\begin{itemize}
    \item The master key is an SDFE key pair $(\pk_\SDFE,\msk_\SDFE)$ along with a plain SKFE master key $\msk_\SKFE$.
    
    \item A function key for a function $f$ is an SDFE encryption $\ct_f$ of a freshly sampled function key $\sk_f$ from the SKFE scheme.
    
    \item A ciphertext for a message $m$ is a function key $\ket{\sk_G}$ for the SDFE scheme, where $G$ has a SKFE ciphertext $\ct_\SKFE$ of $m$ hardwired, and on input an SKFE function key $\sk_f$, decrypts $\ct_\SKFE$ with $\sk_f$. 
\end{itemize}
To decrypt, simply run the decryption algorithm of the SDFE scheme. Intuitively, security follows by appealing to SDFE security to switch each second stage key query to encrypt 0 instead of $\sk_f$. Since the first stage is essentially the plain SKFE security game, security now follows by SKFE security. Since the SDFE scheme is challenge-only secure, this yields single-ciphertext secure untelegraphable SKFE. See~\cref{sec:utskfe} for further details.

For our final construction, we show how to build untelegraphable public-key functional encryption from single-ciphertext secure UTSKFE, plain public-key functional encryption (PKFE), plain SKE, and a secure PRF. This construction follows the hybrid encryption blueprint from~\cite{C:ABSV15}. We sketch the construction as follows:
\begin{itemize}
    \item The public key is just the public-key $\pk_\PKFE$ for the PKFE scheme.
    The master key is the master key $\msk_\PKFE$ for the PKFE scheme along with a secret key $\sk_\SKE$ for the SKE scheme.
    
    \item A function key for a function $f$ is a PKFE function key $\sk_G$, where $G[\ct_\SKE,f]$ is a function with $f$ and a ``dummy ciphertext'' $\ct_\SKE$ of 0 under $\sk_\SKE$ hard-wired. The function $G$ takes as input an indicator bit $b$ along with three keys $\msk_\SKFE$, $k$, and $\sk_\SKE$, where $k$ is a PRF key and $\msk_\SKFE$ is a master key for the UTSKFE scheme. 
    If $b=0$, $G$ outputs $\sk_f$ using $\msk_\SKFE$ along with randomness from the PRF on $f$, and if $b=1$, $G$ outputs the decryption of $\ct_\SKE$ using $\sk_\SKE$.
    
    \item A ciphertext for a message $m$ is a ciphertext $\ket{\ct_\SKFE}$ of $m$ for freshly sampled $\msk_\SKFE$ along with a PKFE ciphertext $\ct_\PKFE$ of the message $(0,\msk_\SKFE,k,\bot)$, where $k$ is a fresh PRF key. 
\end{itemize}
To decrypt, run PKFE decryption to get a key $\sk_f$ for the UTSKFE scheme and use $\sk_f$ to decrypt $\ket{\ct_\SKFE}$ for the message. Security follows the approach from~\cite{C:ABSV15}. First, we switch the dummy ciphertext $\ct_\SKE$ to encrypt $\sk_f$ of the UTSKFE scheme with randomness derived from the PRF on $f$. Then we switch the PKFE message from $(0,\msk_\SKFE,k,\bot)$ to $(1,\bot,\bot,\sk_\SKE)$, which crucially relies on the fact that 
\[G[\ct_\SKE,f](0,\msk_\SKFE,k,\bot)=G[\ct_\SKE,f](1,\bot,\bot,\sk_\SKE)\] 
for all $f$, even when $f(m_0)\neq f(m_1)$, a property that was not utilized in~\cite{C:ABSV15}. To complete the proof, we switch the key generation randomness for generating $\sk_f$ from the UTSKFE scheme from the PRF on $f$ to a random string, and finally appeal to UTSKFE security. See~\cref{sec:utpkfe} for details.

% !TEX root = main.tex
\section{Preliminaries}
Let $\lam$ denote the security parameter unless otherwise specified. We write $\poly(\cdot)$ to denote an arbitrary polynomial and $\negl(\cdot)$ to denote an arbitrary negligible function. We say an event happens with overwhelming probability if the probability is at least $1-\negl(\lam)$. 

For strings $x$ and $y$, we use $x \concat y$ and $(x,y)$ to denote the concatenation of $x$ and $y$.
Let $[\ell]$ denote the set of integers $\{1, \cdots, \ell \}$.
In this paper, for a finite set $X$ and a distribution $D$, $x \getsr X$ denotes selecting an element from $X$ uniformly at random, $x \chosen D$ denotes sampling an element $x$ according to $D$. Let $y \gets \A(x)$ denote assigning to $y$ the output of a probabilistic or quantum algorithm $\A$. When we explicitly show that $\A$ uses randomness $r$, we write $y \gets \A(x;r)$.
We say an algorithm runs in quantum polynomial time (QPT) if it runs in quantum polynomial time in the size of its input, and we say an algorithm runs in probabalistic polynomial time (PPT) if it runs in classical polynomial time in the size of its input.

\subsection{Quantum Information}\label{sec:quantum-info}
Let $\cH$ be a finite-dimensional complex Hilbert space. A (pure) quantum state is a vector $\ket{\psi}\in \cH$.
Let $\cS(\cH)$ be the space of Hermitian operators on $\cH$. A density matrix is a Hermitian operator $\rho \in \cS(\cH)$ with $\Trace(\qstate{X})=1$, which is a probabilistic mixture of pure states.
A quantum state over $\mathcal{H}=\mathbb{C}^2$ is called qubit, which can be represented by the linear combination of the standard basis $\setbk{\ket{0},\ket{1}}$. More generally, a quantum system over $(\mathbb{C}^2)^{\tensor n}$ is called an $n$-qubit quantum system for $n \in \mathbb{N} \setminus \setbk{0}$.

We write $\cH_{\qreg{R}}$ to denote that the Hilbert space $\cH$ is tied to the register $\qreg{R}$.
Also, we sometimes write $\qstate{X}_{\qreg{R}}$ to emphasize that the operator $\qstate{X}$ acts on the register $\qreg{R}$.
When we apply $\qstate{X}_{\qreg{R}_1}$ to registers $\qreg{R}_1$ and $\qreg{R}_2$, $\qstate{X}_{\qreg{R}_1}$ is identified with $\qstate{X}_{\qreg{R}_1} \tensor \mat{I}_{\qreg{R}_2}$.

A unitary operation is represented by a complex matrix $\mat{U}$ such that $\mat{U}\mat{U}^\dagger = \mat{I}$. The operation $\mat{U}$ transforms $\ket{\psi}\in\cH$ and $\rho\in\cS(\cH)$ into $\mat{U}\ket{\psi}$ and $\mat{U}\rho\mat{U}^\dagger$, respectively.
A projector $\mat{P}$ is a Hermitian operator ($\mat{P}^\dagger =\mat{P}$) such that $\mat{P}^2 = \mat{P}$.

For a quantum state $\rho$ over two registers $\qreg{R}_1$ and $\qreg{R}_2$, we denote the state in $\qreg{R}_1$ as $\rho[\qreg{R}_1]$, where $\rho[\qreg{R}_1]= \Trace_2[\rho]$ is a partial trace of $\rho$ (trace out $\qreg{R}_2$).
For quantum states $\rho_1$ and $\rho_2$, $\TD(\rho_1,\rho_2)$ denotes the trace distance between $\rho_1$ and $\rho_2$.

\subsection{Quantum Accessible Oracle and Useful Lemma}\label{sec:qrom}
Given a function $F: X\ra Y$, a quantum-accessible oracle $O$ of $F$ is modeled by a unitary transformation $\mat{U}_F$ operating on two registers $\qreg{in}\tensor\qreg{out}$, in which $\ket{x}\ket{y}$ is mapped to $\ket{x}\ket{y\oplus F(x)}$, where $\oplus$ denotes XOR group operation on $Y$.
We write $\A^{\ket{O}}$ to denote that the algorithm $\A$'s oracle $O$ is a quantum-accessible oracle.

\paragraph{Simulation of quantum random oracles.}
In this paper, following many previous works in the QROM, we give quantum-accessible random oracles to reduction algorithms if needed. This is just a convention. We can efficiently simulate quantum-accessible random oracles perfectly by using $2q$-wise independent hash function~\cite{C:Zhandry12}, where $q$ is the number of queries to the quantum-accessible random oracles by an adversary.

\paragraph{One-Way to Hiding (O2H) Lemma.} We now recall the one-way to hiding lemma~\cite{C:AmbHamUnr19}.

\begin{lemma}[O2H Lemma~\cite{C:AmbHamUnr19}]\label{lem:O2H}
Let $G,H:X\ra Y$ be any functions, $z$ be a random value, and $S\subseteq X$ be a random set such that $G(x)=H(x)$ holds for every $x\notin S$.
The tuple $(G,H,S,z)$ may have arbitrary joint distribution.
Furthermore, let $\A$ be a quantum oracle algorithm that makes at most $q$ quantum queries.
Let $\B$ be an algorithm that on input $z$, choose $i\gets[q]$, runs $\A^H(z)$, measures $\A$'s $i$-th query, and outputs the measurement outcome.
Then, we have
\begin{align*}
\abs{\Pr[\A^{H}(z)=1]-\Pr[\A^{G}(z)=1]} \leq 2q\cdot\sqrt{\Pr[\B^{H}(z)\in S]}
\enspace.
\end{align*}
\end{lemma}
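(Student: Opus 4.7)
The plan is a hybrid argument over the $q$ queries, combined with a single-query distance bound and Cauchy--Schwarz. Let $\ket{\psi_i^H}$ denote the joint state of $\A$'s registers (input/query register $\qreg{in}$, output register $\qreg{out}$, and workspace) immediately before the $i$-th query when $\A$ is run on $z$ with oracle $H$, and define $\ket{\psi_i^G}$ analogously for $G$. Let $\mathbf{P}_S$ be the projector on $\qreg{in}$ onto the subspace spanned by $\{\ket{x} : x \in S\}$. The key observation is that, since $G(x)=H(x)$ for all $x \notin S$, the oracle unitaries satisfy $(\mat{U}_H - \mat{U}_G)(\mat{I}-\mathbf{P}_S)=0$, so for any state $\ket{\varphi}$ we have $\|(\mat{U}_H - \mat{U}_G)\ket{\varphi}\| \le 2\|\mathbf{P}_S \ket{\varphi}\|$.

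First I would show, via a standard telescoping hybrid over the query index, that
\[
\bigl\| \,\ket{\psi_{q+1}^H} - \ket{\psi_{q+1}^G}\, \bigr\| \;\le\; \sum_{i=1}^{q} \bigl\|(\mat{U}_H - \mat{U}_G)\ket{\psi_i^H}\bigr\| \;\le\; 2 \sum_{i=1}^{q} \bigl\|\mathbf{P}_S \ket{\psi_i^H}\bigr\|,
\]
using the fact that the intermediate unitaries of $\A$ (applied between queries) are isometries and hence preserve Euclidean distance. Since $|\Pr[\A^H(z)=1] - \Pr[\A^G(z)=1]|$ is upper bounded by the trace distance between the final states of the two executions, and this trace distance is in turn upper bounded by $\bigl\|\ket{\psi_{q+1}^H} - \ket{\psi_{q+1}^G}\bigr\|$, it suffices to bound the right-hand side above.

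Next I would apply Cauchy--Schwarz to convert the sum of norms into the square root of a sum of squared norms:
\[
\sum_{i=1}^{q} \bigl\|\mathbf{P}_S \ket{\psi_i^H}\bigr\| \;\le\; \sqrt{q} \cdot \sqrt{\sum_{i=1}^{q} \bigl\|\mathbf{P}_S \ket{\psi_i^H}\bigr\|^2}.
\]
Finally I would identify the quantity inside the square root with the success probability of $\B$. By definition, $\B^H(z)$ picks $i \gets [q]$ uniformly, runs $\A^H(z)$ up through the $i$-th query, and measures the $\qreg{in}$ register in the computational basis. The probability that the outcome lies in $S$, conditioned on the choice of $i$, is exactly $\|\mathbf{P}_S \ket{\psi_i^H}\|^2$, so $\Pr[\B^H(z) \in S] = \frac{1}{q}\sum_{i=1}^{q}\|\mathbf{P}_S \ket{\psi_i^H}\|^2$. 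Substituting gives $\sum_i \|\mathbf{P}_S\ket{\psi_i^H}\| \le q\sqrt{\Pr[\B^H(z) \in S]}$, and multiplying by the factor of $2$ yields the claimed bound $2q\sqrt{\Pr[\B^H(z)\in S]}$.

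The only subtle step is the single-query distance bound and its interaction with the hybrid: one must be careful that after the $i$-th query we continue with the \emph{same} unitary evolution of $\A$ in both branches (so the intermediate unitaries cancel under the triangle inequality), and that the measurement performed by $\B$ is equivalent to projecting onto $\mathbf{P}_S$ rather than performing a full basis measurement and then checking membership in $S$ --- these give the same probability because $\mathbf{P}_S$ is diagonal in the computational basis. Everything else is routine: a triangle inequality, Cauchy--Schwarz, and the standard bound between trace distance of final states and Euclidean distance of purifications.
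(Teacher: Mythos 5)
The paper does not prove this lemma at all: it is imported verbatim from Ambainis--Hamburg--Unruh~\cite{C:AmbHamUnr19}, so there is no in-paper argument to compare against. Your proposal is essentially the standard proof of the original one-way-to-hiding bound, and it is correct: the single-query bound $\|(\mat{U}_H-\mat{U}_G)\ket{\varphi}\|\le 2\|\mathbf{P}_S\ket{\varphi}\|$ follows from $(\mat{U}_H-\mat{U}_G)(\mat{I}-\mathbf{P}_S)=0$ and the triangle inequality; your hybrid convention correctly makes the state before the $i$-th query equal to $\ket{\psi_i^H}$ in both neighboring hybrids (the common prefix must run with $H$, which is consistent with $\B$ being defined relative to $H$); Cauchy--Schwarz and the identification $\Pr[\B^H(z)\in S]=\frac{1}{q}\sum_i\|\mathbf{P}_S\ket{\psi_i^H}\|^2$ give exactly the constant $2q$; and the chain $\abs{\Pr[\A^H(z)=1]-\Pr[\A^G(z)=1]}\le\TD\le\|\ket{\psi_{q+1}^H}-\ket{\psi_{q+1}^G}\|$ is valid for pure states since $\sqrt{1-\abs{\braket{\psi}{\phi}}^2}\le\|\ket{\psi}-\ket{\phi}\|$. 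Two small points you should make explicit to match the lemma as stated: first, the argument as written is for a fixed tuple $(G,H,S,z)$, whereas the lemma allows an arbitrary joint distribution; you recover the stated bound by taking expectations and applying Jensen's inequality to the concave square root, i.e.\ $\Exp[2q\sqrt{p_{G,H,S,z}}]\le 2q\sqrt{\Exp[p_{G,H,S,z}]}$. Second, $\A$ is a general quantum oracle algorithm and need not be unitary between queries; one either purifies $\A$ (deferring measurements) or replaces the Euclidean-distance bookkeeping by trace-distance contractivity under the intermediate channels, both of which are routine. With those remarks added, your proof is a complete, self-contained derivation of the cited bound.
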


\subsection{Cryptographic Primitives}\label{sec:crypto-prelims}
In this section, we introduce the core primitive of this work, which we call untelegraphable encryption. In doing so, we recall the notion of unclonable encryption, which will be needed at points throughout this work as well. 
\ifelsesubmit{Additionally, we provide the definition of CPA-secure secret-key encryption (SKE) for reference in~\cref{sec:additional-prelim}.}
{Additionally, we provide the definition of CPA-secure secret-key encryption (SKE) for reference.}
\paragraph{Unclonable encryption.} 
We now recall the definition of unclonable secret key encryption (USKE)~\cite{TQC:BL20,EPRINT:AnaKal21}, which is often referred to as just unclonable encryption (UE).
\begin{definition}[Unclonable Secret Key Encryption]\label{def:UE}
    An unclonable secret key encryption scheme with message space $\cM=\setbk{\cM_\lam}_{\lam\in\N}$ is a tuple of QPT algorithms $\Pi_\UE=(\Gen,\Enc,\Dec)$ with the following syntax:
    \begin{itemize}
        \item $\Gen(1^\lam)\to\sk:$ On input the security parameter $\lam$, the key generation algorithm outputs a (classical) secret key $\sk$.
        \item $\Enc(\sk,m)\to\ket{\ct}:$ On input the secret key $\sk$ and a message $m\in\cM_\lam$, the encryption algorithm outputs a quantum ciphertext $\ket{\ct}$.
        \item $\Dec(\sk,\ket{\ct})\to m:$ On input the secret key $\sk$ and a quantum ciphertext $\ket{\ct}$, the decryption algorithm outputs a message $m\in\cM_\lam$.
    \end{itemize}
    We require that $\Pi_\UE$ satisfy the following correctness property, along with at least one of the following security properties:
    \begin{itemize}
        \item \textbf{Correctness:} For all $\lam\in\N$ and any message $m\in\cM_\lam$, we have
        \[
        \Pr\left[m'=m:
        \begin{array}{c}
        \sk\gets\Gen(1^\lam)\\
        \ketct\gets\Enc(\sk,m)\\
        m'\gets\Dec(\sk,\ketct)
        \end{array}\right]\geq 1-\negl(\lam)
        \]
        \item \textbf{One-way Security:} For a security parameter $\lam$ and a two-stage adversary $(\A,\B,\C)$, we define the one-way security game as follows:
        \begin{enumerate}
            \item At the beginning of the game, the challenger samples $\sk\gets\Gen(1^\lam)$, $m\getsr\cM_\lam$, and $\ketct\gets\Enc(\sk,m)$. It gives $\ketct$ to $\A$.
            \item Adversary $\A$ outputs a quantum state $\rho_{BC}$ in register $B$ and $C$, and sends the corresponding registers to $\B$ and $\C$. 
            \item The challenger gives $\sk$ to $\B$ and $\C$. Adversaries $\B$ and $\C$ output strings $m_\B$ and $m_\C$ respectively. The output of the experiment is a bit $b$ which is 1 when $m=m_\B=m_\C$ and 0 otherwise.
        \end{enumerate}
        We say a USKE scheme satisfies (computational) one-way security if for all (efficient) adversaries $(\A,\B,\C)$ there exists a negligible function $\negl(\cdot)$ such that for all $\lam\in\N$, $\Pr[b=1]=\negl(\lam)$ in the one-way security game.
    
        \item \textbf{Indistinguishability Security:} For a security parameter $\lam$ and a two-stage adversary $(\A,\B,\C)$, we define the indistinguishability security game as follows:
        \begin{enumerate}
            \item Adversary $\A$ outputs two messages $m_0,m_1\in\cM_\lam$.
            \item The challenger samples $\sk\gets\Gen(1^\lam)$, $b\getsr\zo{}$, and $\ketct\gets\Enc(\sk,m_b)$. It gives $\ketct$ to $\A$.
            \item Adversary $\A$ outputs a quantum state $\rho_{BC}$ in register $B$ and $C$, and sends the corresponding registers to $\B$ and $\C$. 
            \item The challenger gives $\sk$ to $\B$ and $\C$. Adversaries $\B$ and $\C$ output bits $b_\B$ and $b_\C$ respectively. The output of the experiment is a bit $b$ which is 1 when $b=b_\B=b_\C$ and 0 otherwise.
        \end{enumerate}
        We say a USKE scheme satisfies statistical (computational) indistinguishability security if for all (efficient) adversaries $(\A,\B,\C)$ there exists a negligible function $\negl(\cdot)$ such that for all $\lam\in\N$, $\Pr[b=1]\leq 1/2+\negl(\lam)$ in the above indistinguishability security game. 
    \end{itemize}
\end{definition}

\begin{definition}[UE with $k$ second stage adversaries]\label{def:k-UE}
    We say $\Pi_\UE$ satisfies \textit{$k$-adversary security} if it is secure for first stage adversaries $\A$ in~\cref{def:UE} that output a quantum state $\rho$ to $k$ registers $B_1,\dots,B_k$ that are given to corresponding second stage adversaries $(\B_1,\dots,\B_k)$. All adversaries $B_1,\dots,B_k$ must guess $m$ in the one-way security game or the challenge bit $b$ in the indistinguishability security game. 
\end{definition}

\begin{theorem}[{Statistical One-Way UE~\cite{TQC:BL20,C:AKLLZ22}}]\label{thm:OWUE}
    There exists statistically one-way secure USKE where the maximum advantage of the adversary is $2^{-\Omega(\lam)}$.
\end{theorem}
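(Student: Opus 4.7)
The plan is to instantiate the Wiesner/BB84-style construction of Broadbent--Lord and analyze its one-way security via the monogamy-of-entanglement game of Tomamichel, Fehr, Kaniewski, and Wehner (TFKW), together with the refinement of Ananth, Kaleoglu, Li, Liu, and Zhandry that extends the argument to polynomial-length messages with exponentially small advantage.

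First I would describe the construction. Set $n = c \cdot \lam$ for a sufficiently large constant $c$. The key generation algorithm samples a random basis string $\theta \getsr \zo{n}$ and a universal hash function $h : \zo{n} \to \cM_\lam$; the secret key is $\sk = (\theta, h)$. To encrypt $m \in \cM_\lam$, sample $x \getsr \zo{n}$, prepare the conjugate-coded state $\ket{x}_\theta$ (apply a Hadamard to qubit $i$ whenever $\theta_i = 1$), and append the classical string $c = h(x) \oplus m$. Decryption measures each qubit of the quantum part in basis $\theta_i$ to recover $x$ and outputs $c \oplus h(x)$. Correctness is immediate, since measuring a conjugate-coded state in the encoding basis is deterministic.

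For one-way security, I would reduce to the TFKW monogamy-of-entanglement bound: in the game where BB84 states of a random $x$ in a random basis $\theta$ are split between two non-communicating parties and $\theta$ is later revealed, the probability that both parties simultaneously guess $x$ is at most $(1/2 + 1/(2\sqrt{2}))^n = 2^{-\Omega(n)}$. To bridge from ``both guess $x$'' to ``both guess $m$'', I would invoke a quantum min-entropy / leftover hashing argument: the TFKW bound implies that, after $\theta$ is revealed, the quantum conditional min-entropy of $x$ against the joint view of $\B$ and $\C$ is $\Omega(n)$, so a universal hash extracts a string close to uniform even against bipartite quantum side information, and hence $c = h(x) \oplus m$ information-theoretically hides $m$ from the joint $(\B, \C)$ view. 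Setting $n = \Theta(\lam)$ yields winning probability $2^{-\Omega(\lam)}$ in the one-way game.

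The main obstacle I anticipate is translating the TFKW simultaneous-guessing bound into a usable quantum min-entropy statement for the bipartite $\B\C$ system, since the marginal min-entropy seen by either party individually is far too low and the two views may be entangled. The standard fix, which AKLLZ carry out in detail, is to upgrade TFKW to a substring-guessing / parallel-repetition variant and then apply a quantum leftover hash lemma against a bipartite quantum side-information register; this is precisely where the factor $n = \Theta(\lam)$ is spent to absorb both the extraction loss and the gap between ``guess $x$'' and ``guess $h(x) \oplus m$''.
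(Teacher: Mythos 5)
There is a genuine gap in your bridging step. The TFKW monogamy bound controls the probability that two \emph{separated} parties $\B$ and $\C$ simultaneously guess $x$; it does not give a lower bound on the conditional min-entropy of $x$ given the \emph{joint} $\B\otimes\C$ registers (plus $\sk$). Indeed, that joint min-entropy can be essentially zero: the first-stage adversary can place the entire ciphertext state in register $B$, and anyone holding both registers and $\theta$ recovers $x$ exactly. So there is no "quantum leftover hash lemma against a bipartite side-information register" to invoke here, and the claim that $h(x)$ is close to uniform against the joint view, hence that $c=h(x)\oplus m$ hides $m$, is exactly the kind of statement that is not known to hold -- it would amount to information-theoretic \emph{indistinguishability}-secure unclonable encryption in the plain model, which this paper explicitly lists as open. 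Your attribution to AKLLZ is also off: their hash-then-mask argument is carried out with a \emph{random oracle} (QROM), where security is argued by extracting the query $x$ from a distinguisher, not by a plain-model LHL against bipartite side information. A further structural problem with your scheme: since the classical component $c$ can be forwarded to both $\B$ and $\C$, winning the one-way game only requires both of them to output $h(x)$, not $x$, so even a correct monogamy argument for guessing $x$ would not directly apply; you would need a simultaneous-guessing bound for the compressed value $h(x)$, which you have not supplied.

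The cited result needs none of this machinery, because the target is only one-way security for a \emph{uniformly random} message. The construction of Broadbent--Lord is the conjugate-coding one-time pad: $\sk=(r,\theta)\getsr\zo{n}\times\zo{n}$ and $\Enc(\sk,m)=\ket{m\oplus r}_\theta$ with $n=\Theta(\lam)$. Both second-stage adversaries outputting $m$ given $(r,\theta)$ is literally the TFKW monogamy-of-entanglement game on $x=m\oplus r$, so the winning probability is at most $\bigl(\tfrac{1}{2}+\tfrac{1}{2\sqrt{2}}\bigr)^{n}=2^{-\Omega(\lam)}$, which is the statement of the theorem. The hash/extraction layer you add is precisely the step this paper performs later (\cref{cons:OT-UTE}) -- but only in the \emph{untelegraphable} setting, where there is a single second-stage adversary and the classical leftover hash lemma with conditional min-entropy applies; it is not available for two entangled receivers.
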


\paragraph{Untelegraphable encryption.} We now define the core primitive considered in this work. Untelegraphable encryption has the same syntax and correctness requirements as unclonable encryption above. The main security difference is that the output of the first stage adversary must be purely classical. In particular, this means there need not be two second stage adversaries as in unclonable encryption, since the classical string can be copied. 

\begin{definition}[Untelegraphable Secret Key Encryption]\label{def:UTE}
    An untelegraphable secret key encryption (UTSKE) scheme with message space $\cM=\setbk{\cM_\lam}_{\lam\in\N}$ is a tuple of QPT algorithms $\Pi_\UTE=(\Gen,\Enc,\Dec)$ with the same syntax and correctness requirement as \cref{def:UE}.
    We require that $\Pi_\UTE$ satisfy at least one of the following security properties:
    \begin{itemize}
        \item \textbf{One-way Security:} For a security parameter $\lam$ and a two-stage adversary $(\A,\B)$, we define the one-way security game as follows:
        \begin{enumerate}
            \item At the beginning of the game, the challenger samples $\sk\gets\Gen(1^\lam)$, $m\getsr\cM_\lam$, and $\ketct\gets\Enc(\sk,m)$. It gives $\ketct$ to $\A$.
            \item Adversary $\A$ outputs a classical string $\st$ which is given to adversary $\B$. 
            \item The challenger gives $\sk$ to $\B$. Adversary $\B$ outputs a string $m_\B$. The output of the experiment is a bit $b$ which is 1 when $m=m_\B$ and 0 otherwise.
        \end{enumerate}
        We say a UTSKE scheme satisfies (computational) one-way security if for all (efficient) adversaries $(\A,\B)$ there exists a negligible function $\negl(\cdot)$ such that for all $\lam\in\N$, $\Pr[b=1]=\negl(\lam)$ in the one-way security game.
    
        \item \textbf{Indistinguishability Security:} For a security parameter $\lam$, a bit $b\in\zo{}$, and a two-stage adversary $(\A,\B)$, we define the indistinguishability security game as follows:
        \begin{enumerate}
            \item Adversary $\A$ outputs two messages $m_0,m_1\in\cM_\lam$.
            \item The challenger samples $\sk\gets\Gen(1^\lam)$ and $\ketct\gets\Enc(\sk,m_b)$. It gives $\ketct$ to $\A$.
            \item Adversary $\A$ outputs a classical string $\st$ which is given to adversary $\B$.  
            \item The challenger gives $\sk$ to $\B$. Adversary $\B$ outputs a bit $b_\B$, which is the output of the experiment.
        \end{enumerate}
        We say $\Pi_\UTE$ satisfies statistical (computational) indistinguishability security if for all (efficient) adversaries $(\A,\B)$ there exists a negligible function $\negl(\cdot)$ such that for all $\lam\in\N$, 
        \[\abs{\Pr[b_\B=1|b=0]-\Pr[b_\B=1|b=1]}= \negl(\lam)\] in the above indistinguishability security game.
    \end{itemize}
\end{definition}

\begin{definition}[Collusion-Resistant Security]\label{def:cr-sec}
    We say that an untelegraphable (unclonable) SKE scheme $\Pi_\UTE$ ($\Pi_\UE$) satisfies \textit{collusion-resistant} security if the first-stage adversary $\A$ in \cref{def:UTE} (\cref{def:UE}) can make many message queries $\setbk{m^{(i)}_0,m^{(i)}_1}_{i\in[Q]}$ adaptively and get encryptions of $\setbk{m^{(i)}_b}_{i\in[Q]}$ in return. 
    We say the scheme is \textit{$t$-copy secure} if the first-stage adversary specifies a single pair of messages $(m_0,m_1)$ and gets $t$ encryptions of $m_b$ in return. 
\end{definition}

\begin{definition}[Everlasting Security]\label{def:everlasting}
    We say that an untelegraphable (unclonable) SKE scheme $\Pi_\UTE$ ($\Pi_\UE$) satisfies \textit{everlasting} security if the first stage adversary $\A$ in \cref{def:UTE} (\cref{def:UE}) is efficient while the second stage adversary is allowed to run in unbounded time. 
\end{definition}

\begin{corollary}[Statistical One-Way UTSKE]\label{cor:ow-utske}
    There exists statistically one-way secure UTSKE where the maximum advantage of the adversary is $2^{O(\lam)}$.
\end{corollary}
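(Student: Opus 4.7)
The plan is to obtain the corollary directly from \cref{thm:OWUE} by showing that any USKE scheme satisfying one-way security in the sense of \cref{def:UE} is automatically a UTSKE scheme satisfying one-way security in the sense of \cref{def:UTE}, with only a quadratic loss in the adversarial advantage. Since the UE scheme guaranteed by \cref{thm:OWUE} has advantage $2^{-\Omega(\lam)}$, the quadratic loss will still give $2^{-\Omega(\lam)}$.

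The reduction is as follows. Given any UTSKE adversary $(\A,\B)$ against a scheme $\Pi$, we build a USKE adversary $(\A',\B',\C')$ against the same scheme $\Pi$. On input $\ketct$, adversary $\A'$ runs $\A(\ketct)$ to obtain a purely classical string $\st$, which it simply copies across the two registers $B$ and $C$ before splitting. Then $\B'$ and $\C'$ each receive $\sk$ together with their own copy of $\st$, and each independently runs a fresh copy of $\B(\sk,\st)$, outputting $m_{\B'}$ and $m_{\C'}$ respectively.

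Let $p(\sk,\st,m)$ denote the probability that $\B(\sk,\st)$ outputs $m$, taken over $\B$'s internal randomness. Since $\B'$ and $\C'$ run independent copies of $\B$ on identical classical inputs, their outputs are conditionally independent given $(\sk,\st)$, so
\[
\Pr[m=m_{\B'}=m_{\C'}] \;=\; \E_{\sk,\st,m}\!\left[p(\sk,\st,m)^2\right] \;\geq\; \Bigl(\E_{\sk,\st,m}[p(\sk,\st,m)]\Bigr)^{\!2} \;=\; p_{\UTE}^2,
\]
by Jensen's inequality, where $p_{\UTE}$ is the one-way advantage of $(\A,\B)$ against the UTSKE game. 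Hence $p_{\UTE}\leq\sqrt{p_{\UE}}$, where $p_{\UE}$ is the advantage of $(\A',\B',\C')$.

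Instantiating $\Pi$ with the scheme from \cref{thm:OWUE}, which has $p_{\UE}\leq 2^{-\Omega(\lam)}$, yields $p_{\UTE}\leq 2^{-\Omega(\lam)}$ for every (even unbounded) UTSKE adversary, completing the proof. The only subtlety is the Jensen step ensuring the loss is just a square root rather than something worse; this works because the two second-stage UE adversaries receive identical classical advice and so can be coupled to run independent copies of the single UTE second-stage adversary.
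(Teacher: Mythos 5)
Your proposal is correct and follows essentially the same route as the paper: the paper's one-line proof also reduces a telegraphing adversary $(\A,\B)$ to a cloning adversary by handing the classical string $\st$ to both second-stage adversaries, who each run $\B$. Your additional Jensen-inequality step, showing the loss is only quadratic (so $2^{-\Omega(\lam)}$ security is preserved), just makes explicit what the paper asserts as immediate, and is a valid way to close that gap.
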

\begin{proof}
    Follows immediately by~\cref{thm:OWUE}. In particular, any telegraphing adversary defines a cloning adversary which simply gives $\st$ to both $\B$ and $\C$.
\end{proof}

\newcommand{\skedef}{
\paragraph{CPA-secure encryption.} We finally recall the basic notion of CPA-secure secret-key encryption, which will be used at various points throughout this work. Classical CPA-secure encryption can be constructed from any one-way function.

\begin{definition}[Secret-Key Encryption]\label{def:ske}
    A secret key encryption scheme with message space $\cM=\setbk{\cM_\lam}_{\lam\in\N}$ is a tuple of PPT algorithms $\Pi_\SKE=(\Gen,\Enc,\Dec)$ with the following syntax:
    \begin{itemize}
        \item $\Gen(1^\lam)\to\sk:$ On input the security parameter $\lam$, the key generation algorithm outputs a secret key $\sk$.
        \item $\Enc(\sk,m)\to\ct:$ On input the secret key $\sk$ and a message $m\in\cM_\lam$, the encryption algorithm outputs a ciphertext $\ct$.
        \item $\Dec(\sk,\ct)\to m:$ On input the secret key $\sk$ and a ciphertext $\ct$, the decryption algorithm outputs a message $m\in\cM_\lam$.
    \end{itemize}
    We require that $\Pi_\SKE$ satisfy the following correctness property, along with at least one of the following security properties:
    \begin{itemize}
        \item \textbf{Correctness:} For all $\lam\in\N$ and any message $m\in\cM_\lam$, we have
        \[
        \Pr\left[m'=m:
        \begin{array}{c}
        \sk\gets\Gen(1^\lam)\\
        \ct\gets\Enc(\sk,m)\\
        m'\gets\Dec(\sk,\ct)
        \end{array}\right]\geq 1-\negl(\lam)
        \]
        \item \textbf{One-way CPA Security:} For a security parameter $\lam$ and an adversary $\A$, we define the one-way security game as follows:
        \begin{enumerate}
            \item At the beginning of the game, the challenger samples $\sk\gets\Gen(1^\lam)$, $m^\ast\getsr\cM_\lam$, and $\ct\gets\Enc(\sk,m^\ast)$. It gives $\ct$ to $\A$.
            \item Adversary $\A$ has access to an encryption oracle that on input $m\in\cM_\lam$ outputs $\Enc(\sk,m)$. 
            \item At the end of the game, adversary $\A$ outputs $m'\in\cM_\lam$. The output of the experiment is a bit $b$ which is 1 when $m'=m^\ast$ and 0 otherwise.
        \end{enumerate}
        We say an SKE scheme satisfies one-way CPA security if for all efficient adversaries $\A$ there exists a negligible function $\negl(\cdot)$ such that for all $\lam\in\N$, $\Pr[b=1]=\negl(\lam)$ in the one-way security game.
    
        \item \textbf{CPA Security:} For a security parameter $\lam$, a bit $b\in\zo{}$, and an adversary $\A$, we define the indistinguishability security game as follows:
        \begin{enumerate}
            \item At the beginning of the game, the challenger samples $\sk\gets\Gen(1^\lam)$.
            \item Adversary $\A$ can make encryption queries on pairs of messages $m_0,m_1\in\cM_\lam$. The challenger replies with $\ct\gets\Enc(\sk,m_b)$ 
            \item At the end of the game, adversary $\A$ outputs $b'\in\zo{}$, which is the output of the experiment.
        \end{enumerate}
        We say an SKE scheme satisfies CPA-security if for all efficient adversaries $\A$ there exists a negligible function $\negl(\cdot)$ such that for all $\lam\in\N$, \[\abs{\Pr[b'=1|b=0]-\Pr[b'=1|b=1]}= \negl(\lam)\] in the above indistinguishability security game. We say $\Pi_\SKE$ is \textit{query-bounded} if the adversary can make at most $q$ total queries, where $q$ is additionally given to $\Gen$.  
    \end{itemize}
\end{definition}

\begin{remark}[Quantum Ciphertexts]
    We can easily generalize \cref{def:ske} to the setting where ciphertexts are quantum, which we abbreviate by QSKE. 
\end{remark}

}
\ifnotsubmit{\skedef}
% !TEX root = main.tex

\section{Untelegraphable Encryption and the Impossibility of Hyper-Efficient Shadow Tomography}\label{sec:ute-basic}
In this section, we construct untelegraphable SKE with indistiguishability security from one-way secure untelegraphable SKE. 
Next, we show how to upgrade the UTSKE to satisfy collusion-resistant security by additionally assuming the existence of non-committing encryption for receiver (NCER), which can be constructed from CPA-secure encryption. 
We then recall the notion of hyper-efficient shadow tomography (HEST), and use its existence to attack any collusion-resistant UTSKE scheme.
This implies that HEST cannot exist for general states under the assumption that CPA-secure encryption exists.
Finally, we show pseudorandom state (PRS) generators are sufficient for the impossibility of HEST, by constructing bounded-query CPA-secure encryption with succinct quantum ciphertexts and classical keys from PRS generators, where succinctness is with respect to the bound parameter. \ifsubmit{We will state the main theorems and corollaries in this section, but defer many of the proofs to~\cref{sec:omitted}.}

\subsection{Building Blocks}
In this section, we describe preliminaries and notions that are needed for our constructions. 

\paragraph{Min-entropy.} We recall some basic definitions on min-entropy. Our definitions
are adapted from those in~\cite{DRS04}.
For a (discrete) random variable $X$, we write
$\Hinf(X) = -\log(\max_x \Pr[X = x])$ to denote its min-entropy. For two (possibly correlated)
discrete random variables $X$ and $Y$, we define the average min-entropy of $X$ given $Y$ to be
$\Hinf(X \mid Y) = -\log(\mathbb{E}_{y \gets Y} \max_x \Pr[X = x \mid Y = y])$. The optimal probability
of an unbounded adversary guessing $X$ given the correlated value $Y$ is
$2^{-\Hinf(X \mid Y)}$.

\paragraph{Leftover hash lemma.} Our construction will also rely on the generalized
leftover hash lemma (LHL) from~\cite{BDKPPSY11}:
\begin{theorem}[{LHL with Conditional Min-Entropy~\cite[Theorem 3.2, adapted]{BDKPPSY11}}]
  \label{thm:cond-LHL}
    Let $(X, Z)$ be random variables sampled from some
    joint distribution $\cD$ over $\cX\times\cZ$.
    Let $\cH=\{h \colon \cX\to\zo{v}\}$ be a family of universal hash functions, and
    let  $L = \Hinf(X\mid Z)-v$ be the entropy loss.
    Let $\cA(r, h, z)$ be a (possibly probabilistic) distinguisher where
    \[ \Pr[\cA(r, h, z)=1 : r \getsr \zo{v}, h \getsr \cH, (x, z) \gets \cD] \leq \eps. \]
    Then, the distinguishing advantage of $\cA$ on the following distributions is at most $\sqrt{\eps/2^L}$:
    \begin{equation*}
      \left\{ 
        (h(x), h, z) : 
        \begin{array}{c} (x, z) \gets \cD \\ h \getsr \cH \end{array}
      \right\} \text{ and }
      \left\{ 
        (r, h, z) : 
        \begin{array}{c} (x, z) \gets \cD \\ r \getsr \zo{v}, h \getsr \cH \end{array}
      \right\}
    \end{equation*}
\end{theorem}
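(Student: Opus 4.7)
The plan is to prove this refined LHL by a two-level Cauchy--Schwarz argument that leverages the hypothesis $\eps$ on the distinguisher's acceptance probability to get a factor of $\sqrt{\eps}$ in the final bound, rather than the plain $2^{-L/2}$ one would obtain from the standard LHL-via-statistical-distance approach. Let $D(r,h,z) = \Pr[\cA(r,h,z)=1]$ with $\cA$'s internal randomness averaged out, and write
\begin{equation*}
\Delta \;=\; \Exp_{(x,z),h}\!\bigl[D(h(x),h,z)\bigr] \;-\; \Exp_{(x,z),h,r}\!\bigl[D(r,h,z)\bigr]
\end{equation*}
for the quantity I want to bound. Fixing $(h,z)$ and letting $p_{h,z}(r) = \Pr[h(X)=r \mid Z=z]$, the inner $(h,z)$-contribution to $\Delta$ is exactly $\sum_r D(r,h,z)\bigl(p_{h,z}(r)-2^{-v}\bigr)$.

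First I would apply Cauchy--Schwarz to this inner sum to decouple the acceptance profile from the $L_2$ bias of the hashed distribution. The crucial observation is that because $D(r,h,z)\in[0,1]$, one has $\sum_r D(r,h,z)^2 \leq \sum_r D(r,h,z) = 2^v\,g(h,z)$, where $g(h,z)=2^{-v}\sum_r D(r,h,z)$ is the average acceptance of $\cA$ at $(h,z)$ on a uniform $r$. A second Cauchy--Schwarz, now in the expectation over $(h,z)$, then factors
\begin{equation*}
\Delta \;\leq\; \sqrt{2^v\cdot \Exp_{h,z}[g(h,z)]}\,\cdot\,\sqrt{\Exp_{h,z}\!\bigl[\textstyle\sum_r (p_{h,z}(r)-2^{-v})^2\bigr]},
\end{equation*}
and the first factor is at most $\sqrt{2^v\eps}$ directly by hypothesis. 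This is precisely where the $\sqrt{\eps}$ improvement enters.

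For the second factor, I would bound the expected $L_2$ bias by the conditional collision probability using universality of $\cH$: expanding the square gives $\Exp_h[\sum_r p_{h,z}(r)^2] - 2^{-v}$, and universality together with independence of $h$ from $(X,Z)$ yields $\Exp_h[\sum_r p_{h,z}(r)^2] \leq \Pr[X_1=X_2\mid Z=z] + 2^{-v}$ for independent copies $X_1,X_2$ of $X$ given $Z=z$, cancelling the $2^{-v}$. Averaging over $z$ and using that $\Pr[X_1=X_2\mid Z=z] \leq \max_x \Pr[X=x\mid Z=z]$, the definition of conditional min-entropy gives $\Exp_z[\Pr[X_1=X_2\mid Z=z]] \leq 2^{-\Hinf(X\mid Z)}$. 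Combining, $\Delta \leq \sqrt{2^v\eps\cdot 2^{-\Hinf(X\mid Z)}} = \sqrt{\eps/2^L}$, and the opposite-direction bound follows by applying the same argument to $1-D$. The main subtlety I anticipate is keeping the conditioning on $Z=z$ consistent throughout---since $h$ is drawn independently of $(X,Z)$, universality applies conditionally on any fixed $z$---and being careful that the passage from average collision probability to $2^{-\Hinf(X\mid Z)}$ uses the \emph{average} min-entropy definition rather than a worst-case variant; everything else is mechanical.
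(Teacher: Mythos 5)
The paper does not prove this theorem itself; it imports it verbatim from \cite{BDKPPSY11} (Theorem 3.2) and only derives \cref{cor:cond-LHL} by setting $\eps=1$. Your argument is correct and is essentially the standard proof of that cited theorem: the double Cauchy--Schwarz, the bound $\sum_r D(r,h,z)^2\le \sum_r D(r,h,z)$ which injects the $\sqrt{\eps}$, the universality-based bound $\Exp_h\bigl[\sum_r (p_{h,z}(r)-2^{-v})^2\bigr]\le \Pr[X_1=X_2\mid Z=z]$, and the passage to $2^{-\Hinf(X\mid Z)}$ via the average (not worst-case) conditional min-entropy, which matches the definition used in this paper. One small correction: your closing remark that the opposite-direction bound ``follows by applying the same argument to $1-D$'' does not work as stated, since $1-D$ accepts a uniform $r$ with probability about $1-\eps$ and so fails the $\eps$-acceptance hypothesis; but the remark is also unnecessary, because the inner Cauchy--Schwarz already bounds $\bigl|\sum_r D(r,h,z)(p_{h,z}(r)-2^{-v})\bigr|$, so the chain of inequalities controls the absolute difference and hence the advantage in both directions.
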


\begin{corollary}[{LHL with Conditional Min-Entropy}]\label{cor:cond-LHL}
    Let $(X,Z)$ be random variables sampled from some
    joint distribution $\cD$ over $\cX\times\cZ$.
    Let $\cH=\{h \colon \cX\to\zo{v}\}$ be a family of universal hash functions.
    Let $L = \Hinf(X\mid Z)-v$ be the entropy loss. Then the statistical distance between
    the following distributions is at most $2^{-L / 2}$:
    \begin{equation*}
      \left\{ 
        (h(x), h, z) : 
        \begin{array}{c} (x, z) \gets \cD \\ h \getsr \cH \end{array}
      \right\} \text{ and }
      \left\{ 
        (r, h, z) : 
        \begin{array}{c} (x, z) \gets \cD \\ r \getsr \zo{v}, h \getsr \cH \end{array}
      \right\}
    \end{equation*}
\end{corollary}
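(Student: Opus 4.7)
The plan is to derive the corollary directly from the cited theorem by taking an optimal (possibly inefficient) statistical-distance distinguisher and bounding its advantage via the theorem. Recall that the statistical distance between two distributions equals the supremum, over all binary-valued (possibly unbounded) functions $\cA$, of the distinguishing advantage $|\Pr[\cA = 1 \text{ on dist.\ 1}] - \Pr[\cA = 1 \text{ on dist.\ 2}]|$. So it suffices to show that for every such $\cA$, this advantage is at most $2^{-L/2}$.

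First I would fix an arbitrary distinguisher $\cA$ and set
\[
\eps \;=\; \Pr[\cA(r, h, z) = 1 : r \getsr \zo{v},\ h \getsr \cH,\ (x,z) \gets \cD],
\]
which is precisely the quantity appearing in the hypothesis of the theorem. By the theorem, the distinguishing advantage of $\cA$ between the two distributions in the corollary is at most $\sqrt{\eps / 2^L}$. The key step is then to eliminate the dependence on the unknown $\eps$, which I would do by a simple case split: if $\eps \leq 1/2$, then $\sqrt{\eps/2^L} \leq \sqrt{1/(2 \cdot 2^L)} = 2^{-(L+1)/2} \leq 2^{-L/2}$. If $\eps > 1/2$, I would instead apply the theorem to the complement distinguisher $\cA' = 1 - \cA$, whose uniform-distribution acceptance probability is $\eps' = 1 - \eps < 1/2$ and whose distinguishing advantage equals that of $\cA$ in absolute value; the previous bound then gives advantage at most $2^{-(L+1)/2} \leq 2^{-L/2}$.

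Taking the supremum over all distinguishers $\cA$ converts this uniform bound on advantage into the desired bound on statistical distance, completing the proof.

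I do not expect a genuine obstacle here: the only subtlety is that the theorem bound $\sqrt{\eps/2^L}$ is parameterized by the adversary's uniform-side acceptance probability rather than by a universal constant, and the complement-distinguisher trick handles this cleanly. A minor bookkeeping point is that the statement of the theorem only mentions possibly \emph{probabilistic} distinguishers, while the statistical distance is attained by a deterministic one; this is fine because deterministic distinguishers are a special case.
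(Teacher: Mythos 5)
Your proposal is correct and takes essentially the same route as the paper: reduce to \cref{thm:cond-LHL} and use the fact that statistical distance is the supremum of distinguishing advantage over all (deterministic) tests. The paper's proof is a one-liner that simply sets $\eps=1$, which trivially upper-bounds every distinguisher's acceptance probability and immediately gives the bound $2^{-L/2}$, so your case split and complement-distinguisher trick are not needed—though they do yield the marginally sharper bound $2^{-(L+1)/2}$.
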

\begin{proof}
  Follows by setting $\eps=1$ in \cref{thm:cond-LHL} (which captures {\em all} distinguishers).
\end{proof}

\paragraph{Pseudorandom function-like state generators.} We recall the notion of a pseudorandom function-like state (PRFS) generator~\cite{C:AnaQiaYue22}. 
Intuitively, this is the quantum analog of a pseudorandom function (PRF). 
For polynomial sized domains, PRFS generators can be constructed from pseudorandom state (PRS) generators~\cite{C:JiLiuSon18}, which may not imply one-way functions~\cite{Kre21}.
We use the definition of a PRFS generator since it will be more convenient.

\begin{definition}[{PRFS Generator~\cite{C:AnaQiaYue22}}]\label{def:prfs}
    Let $\lam$ be a security parameter. 
    We say that a QPT algorithm $G$ is a selectively secure pseudorandom function-like state generator if for all polynomials $s=s(\lam),t=t(\lam)$, QPT adversaries $\A$ and a family of distinct indices $(\setbk{x_1,\dots,x_s}\subseteq\zo{d(\lam)})_\lam$, there exists a negligible function $\negl(\lam)$ such that for all $\lam\in\N$,
    \begin{align*}
        \bigg\lvert
        &\Pr_{k\getsr\zo{\lam}}[\A(\setbk{x_i,G_\lam(k,x_i)^{\otimes t}}_{i\in[s]})=1]\\
        &-\Pr_{\ket{\psi_1},\dots,\ket{\psi_s}\gets \mu_n}[
        \A(\setbk{x_i,\ket{\psi_i}^{\otimes t}}_{i\in[s]})=1
        ]
        \bigg\rvert=\negl(\lam),
    \end{align*}
    where $\mu_n$ is the Haar measure on $n$-qubit states. 
    We note that there is an efficient algorithm $\test(k,x,G(k,x'))$ that determines if $x'=x$ with some $\negl(\lam)$ error so long as $n=\omega(\log\lam)$~\cite{C:AnaQiaYue22}. 
\end{definition}

\paragraph{Non-committing encryption.} We also recall the notion of non-committing encryption (NCE). Specifically, we define secret-key non-committing encryption for receiver (SK-NCER), which is the secret-key variant of NCER \cite{EC:JarLys00,TCC:CanHalKat05}. 
Roughly speaking, SK-NCER is SKE which allows one to generate a ``fake'' ciphertext so that it can be later revealed to any message along with a ``fake'' decryption key.
The formal definition is given below.

\begin{definition}[Secret-Key Non-Committing Encryption for Receiver]\label{def:nce}
Let $\lam$ be a security parameter. 
A secret key non-committing encryption for receiver (SK-NCER) scheme with message space $\cM=\setbk{\cM_\lam}_{\lam\in\N}$ is a tuple of efficient algorithms $\Pi_\NCE=(\KG, \Enc, \Dec, \Fake, \Reveal)$ with the following syntax: 
\begin{itemize}
\item $\KG(1^\lam)\to(\ek,\dk)$: On input the security parameter $1^\lambda$, the key generation algorithm outputs an encryption key $\ek$ and decryption key $\dk$.\footnote{One may wonder why encryption and decryption keys are separately defined though we consider a secret-key primitive. The reason is that the security of SK-NCER involves an adversary that obtains a decryption key, and we cannot include all secret information in a decryption key.}

\item $\Enc(\ek,m)\to\ct$: On input an encryption key $\ek$ and message $m \in \cM_\lam$, the encryption algorithm outputs a ciphertext $\ct$.

\item $\Dec(\dk,\ct)\to m$: On input a decryption key $\dk$ and ciphertext $\ct$, the decryption algorithm outputs a message $m \in \cM_\lam$.

\item $\Fake(\ek)\to(\ct,\st)$: On input an encryption key $\ek$, the ciphertext faking algorithm outputs a fake ciphertext $\ct$ and state $\st$.

\item $\Reveal(\st,\mstar)$: On input a state $\st$ and message $\mstar\in\cM_\lam$, the reveal algorithm outputs a fake decryption key $\dkstar$.
\end{itemize}
We require that $\Pi_\NCE$ satisfy the following properties:
\begin{itemize}
    \item \textbf{Correctness:} For all $\lam\in\N$ and $m\in\cM_\lam$,
    \[
        \Pr[\Dec(\dk, \Enc(\ek, m)) = m : (\ek,\dk)\gets\KG(1^\lam)]\geq 1-\negl(\lam).    
    \]
    \item \textbf{Non-committing Security:} For a security parameter $\lam$, a bit $b\in\zo{}$, and an adversary $\A$, we define the non-committing security game as follows:
    \begin{enumerate}
        \item At the beginning of the game, the challenger samples $(\ek,\dk)\la \KG(1^{\secp})$.
        \item Adversary $\A$ makes arbitrarily many encryption queries and a single challenge query in any order:
        \begin{itemize}
            \item \textbf{Encryption Query:} When $\A$ makes an encryption query $m\in\cM_\lam$, the challenger computes $\ct \la \Enc(\ek, m)$ and returns $\ct$ to $\A$.
            \item \textbf{Challenge Query:} When $\A$ makes a challenge query $\mstar\in\cM_\lam$, the challenger returns a tuple $(\ctstar,\dkstar)$. 
            If $b=0$, the challenger computes $\ctstar\gets\Enc(\ek,\mstar)$ and sets $\dkstar=\dk$.
            If $b=1$, the challenger computes $(\ctstar,\st)\gets\Fake(\ek)$ and $\dkstar\gets\Reveal(\st,\mstar)$.
        \end{itemize}
        \item Adversary $\A$ outputs a bit $b'\in\zo{}$, which is the output of the experiment.
    \end{enumerate}
    We say $\Pi_\NCE$ satisfies non-committing security if for all efficient adversaries $\A$ there exists a negligible function $\negl(\cdot)$ such that for all $\lam\in\N$, \[\abs{\Pr[b'=1|b=0]-\Pr[b'=1|b=1]}= \negl(\lam)\] in the above security game. We say $\Pi_\NCE$ is query-bounded if the adversary can make at most $q$ total queries, where $q$ is additionally given to $\KG$. 
\end{itemize}
\end{definition}

\begin{theorem}[{NCE from CPA-secure SKE~\cite{C:KNTY19}}]\label{thm:cpa-to-nce}
    If CPA-secure secret-key encryption exists, then there exists secret-key NCE for receiver. 
\end{theorem}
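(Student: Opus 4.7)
The plan is to build an SK-NCER scheme $\Pi_\NCE$ from any CPA-secure SKE scheme $\Pi_\SKE=(\Gen,\Enc,\Dec)$ via a ``two-key'' design that admits equivocation on the decryption key. For single-bit messages, $\KG(1^\lam)$ samples two independent SKE keys $k^{(0)},k^{(1)}\gets\Pi_\SKE.\Gen(1^\lam)$ and a uniform selector $b\in\zo{}$, setting $\ek=(k^{(0)},k^{(1)})$ and $\dk=(b,k^{(b)})$. A real ciphertext is $\ct=(\Enc(k^{(0)},m),\Enc(k^{(1)},m))$, and decryption applies $\Dec(k^{(b)},c_b)$. For messages in $\zo{\ell}$, I would repeat the construction bitwise with independent keys per coordinate.

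For equivocation, $\Fake(\ek)$ samples a fresh uniform $b^*\in\zo{}$ and outputs $\ct^*=(\Enc(k^{(0)},b^*),\Enc(k^{(1)},1\oplus b^*))$ together with state $\st=(k^{(0)},k^{(1)},b^*)$. Then $\Reveal(\st,m^*)$ outputs $\dk^*=(m^*\oplus b^*,k^{(m^*\oplus b^*)})$, which correctly decrypts $\ct^*$ to $m^*$ since the slot $m^*\oplus b^*$ was populated with an encryption of $m^*$ by construction. Because $b^*$ is uniform and independent of $m^*$, the first component $m^*\oplus b^*$ of $\dk^*$ is uniform, matching the marginal distribution of the selector $b$ in the real world.

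For non-committing security I would reduce directly to CPA security of $\Pi_\SKE$. The key observation is that the only difference between the real and fake distributions of the challenge is which plaintext is encrypted under the single \emph{unrevealed} key, namely $m^*$ in the real world versus $1\oplus m^*$ in the fake world. Given a CPA-SKE challenger on a key $k$, the reduction samples $b\gets\zo{}$, generates $k^{(b)}$ itself, and implicitly sets $k^{(1-b)}=k$. Encryption queries on $m$ are answered by encrypting under $k^{(b)}$ directly and querying the CPA oracle on $(m,m)$ for the other slot; the challenge on $m^*$ is answered by computing $\Enc(k^{(b)},m^*)$ directly, querying the CPA oracle on $(m^*,1\oplus m^*)$ for the $(1-b)$-th slot, and outputting $\dk^*=(b,k^{(b)})$. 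A CPA-oracle response encrypting $m^*$ induces the real distribution exactly; a response encrypting $1\oplus m^*$ induces the fake distribution exactly, under the renaming $b^*=b\oplus m^*$. The multi-bit case follows by a standard hybrid over bit positions.

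The main delicate point I expect is verifying that the joint distribution of $(\ct^*,\dk^*)$ produced by $(\Fake,\Reveal)$ matches $(\ct,\dk)$ in the real world \emph{except} for the plaintext swap on the unrevealed slot; the fresh randomness $b^*$ introduced by $\Fake$ is precisely what decouples the selector from $m^*$ so that this identity holds, and a naive deterministic choice of fake selector would immediately give the distinguisher a trivial advantage from reading off the first bit of $\dk^*$. A secondary bookkeeping obstacle is propagating query bounds: each NCE encryption query triggers at most two SKE encryption queries, so a $2q$-query CPA-secure SKE yields a $q$-query SK-NCER, which will be relevant when this construction is later composed to obtain succinct bounded-query UTE via $\Pi_\NCE$.
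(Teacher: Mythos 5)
The paper gives no proof of this theorem---it is imported from~\cite{C:KNTY19}---but your two-key-per-bit construction with a uniformly chosen revealed key, equivocation by encrypting complementary bits in the two slots, and a left-right CPA reduction for the single unrevealed slot is essentially the construction the paper relies on (its ciphertext and decryption-key sizes $2\ell\abs{\ct}$ and $\ell\abs{\sk}$ are exactly what \cref{cor:bounded-cpa-to-nce} reads off from that construction), and your distributional matching via the renaming $b^*=b\oplus m^*$ and the query-bound bookkeeping are correct. Nothing further is needed.
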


\begin{corollary}[Bounded Query NCE from Bounded Query CPA-secure SKE]
\label{cor:bounded-cpa-to-nce}
    If bounded query CPA-secure secret-key encryption exists, then there exists bounded query secret-key NCE for receiver, where the ciphertext and decryption key sizes are $2\ell\abs{\ct}$ and $\ell\abs{\sk}$, where $\abs{\ct},\abs{\sk}$ are the sizes of the SKE ciphertext and key, and $\ell$ is the message length. 
\end{corollary}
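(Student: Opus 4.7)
The plan is to inspect the construction underlying~\cref{thm:cpa-to-nce} and verify that it preserves the query bound while yielding the stated sizes when instantiated over a bounded-query SKE. Recall that the NCER scheme of~\cite{C:KNTY19} proceeds bit-by-bit: for each of the $\ell$ message bits, it employs two independent CPA-secure SKE bit-instances. A ciphertext contains one bit-SKE ciphertext per instance, and the decryption key stores the secret key of exactly one of the two instances per position (together with an indicator bit specifying which one). Encryption produces a real bit-ciphertext under both keys, while $\Fake$ produces encryptions whose eventual interpretations are controlled by which key is later revealed. In total, ciphertexts have size $2\ell|\ct|$ and decryption keys have size $\ell|\sk|$ up to negligible label overhead, matching the claim.

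The next step is to instantiate every bit-SKE with $q$-bounded CPA-secure SKE, passing $q$ through the NCER $\KG$ to each of the $2\ell$ bit-SKE $\KG$ invocations. In the non-committing experiment the adversary makes at most $q$ queries (encryption queries plus the single challenge query), and each NCER query triggers exactly one call to $\Enc$ on each of the $2\ell$ bit-SKE instances. Consequently every bit-SKE key is invoked on at most $q$ messages across the whole experiment, so the $q$-bounded SKE precisely suffices. The non-committing security reduction is unchanged from the unbounded-query case: a standard $\ell$-step hybrid walks across bit positions and, at each step, invokes CPA security of the hidden bit-SKE key (the one whose secret is not placed in $\dk^\ast$) to switch that slot from a real encryption of $m^\ast_i$ to an equivocated one consistent with $\Fake$. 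Since the switched key is independent of all revealed keys and indicator bits, the hop goes through exactly as in~\cite{C:KNTY19}.

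The main point requiring care --- and the only genuine obstacle --- is the accounting: one must propagate $q$ cleanly through $\KG$ to every bit-instance, and confirm that neither $\Fake$ nor $\Reveal$ introduces additional SKE calls beyond those already counted. Because the KNTY19 construction is black-box in the underlying SKE and the $2\ell$ bit-instances are independent, this bookkeeping is routine and no new cryptographic ideas are needed. Once verified, we obtain bounded-query SK-NCER whose ciphertext size is $2\ell|\ct|$ and whose decryption key size is $\ell|\sk|$, as claimed.
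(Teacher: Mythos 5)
Your proposal matches the paper's argument, which likewise proves the corollary simply by inspecting the bit-by-bit SK-NCER construction of~\cite{C:KNTY19} (two SKE instances per message bit, one key revealed per position) and noting that instantiating each instance with $q$-bounded CPA-secure SKE preserves the query bound and gives ciphertexts of size $2\ell\abs{\ct}$ and decryption keys of size $\ell\abs{\sk}$. The extra bookkeeping you spell out (propagating $q$ through $\KG$ and checking that $\Fake$/$\Reveal$ add no SKE calls) is exactly the "inspection" the paper leaves implicit, so no further changes are needed.
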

\begin{proof}
    Follows by \cref{thm:cpa-to-nce} and inspection of the construction in~\cite{C:KNTY19}.
\end{proof}

\paragraph{Generalizing to quantum ciphertexts.} We can generalize \cref{def:nce} to support quantum ciphertexts with a small change in the syntax. This will allow us to construct query-bounded NCE from pseudorandom state generators in~\cref{sec:qske}. 

\paragraph{Hyper-efficient shadow tomography.} We finally recall the notion of hyper-efficient shadow tomography (HEST), introduced in~\cite{Shadow2}. Intuitively, shadow tomography is a quantum state learning task where the objective is to learn a how a specific quantum circuit acts on a state given many copies of the state. 

\begin{definition}[{Hyper-Efficient Shadow Tomography~\cite{Shadow2}}]
\label{def:hest} 
Let $E$ denote a uniform quantum circuit family with classical binary output that takes as input $i\in[M]$ and an $n$-qubit quantum state $\rho$. Then, a shadow tomography procedure $\cT$ takes as input $E$ and $k$ copies of $\rho$, and outputs a quantum circuit $C$ such that $\Pr[\abs{C(i)-\Pr[E(i,\rho)=1]}\le\eps]\ge 1-\delta$. The procedure is said to be \textit{hyper-efficient} if the number of copies $k$ and the runtime are both $\poly(n, \log M, 1/\eps,\log 1/\delta)$.
\end{definition}

\subsection{Untelegraphable Encryption with Indistinguishability Security}
\label{sec:ind-ute}
In this section, we construct a one-time secure untelegraphable SKE scheme with indistinguishability security from one-way secure UTSKE and a universal hash function.

\begin{construction}[One-Time UTE]\label{cons:OT-UTE}
    Let $\lam$ be a security parameter. Our construction relies on the following additional primitives:
    \begin{itemize}
        \item Let $\Pi_\OWUTE=(\OWUTE.\Gen,\OWUTE.\Enc,\OWUTE.\Dec)$ be a one-way secure UTSKE scheme with message space $\cM=\setbk{\cM_\lam}_{\lam\in\N}$.
        \item Let $\cH$ be a universal hashing family of size at most $2^{\poly(\lam)}$ where each function $h:\cM_\lam\to\zo{}$ has domain $\cM_{\lam}$ and range $\zo{}$.
    \end{itemize}
    We now construct our UTSKE scheme $\Pi_\UTE=(\Gen,\Enc,\Dec)$ as follows:
    \begin{itemize}
        \item $\Gen(1^\lam):$ On input the security parameter $\lam$, the key generation algorithm samples $\sk_\OW\gets\OWUTE.\Gen(1^\lam),h\getsr\cH,r\getsr\zo{}$ and outputs $\sk=(\sk_\OW,h,r)$.
        \item $\Enc(\sk,m):$ On input the secret key $\sk=(\sk_\OW,h,r)$ and a message $m\in\zo{}$, the encryption algorithm samples $x\getsr\cM_\lam$, computes $\ket{\ct_\OW}\gets\OWUTE.\Enc(\sk_\OW,x)$, and outputs the quantum ciphertext \ifelsesubmiteq{\ketct=(\ket{\ct_\OW},r\xor h(x)\xor m).}
        \item $\Dec(\sk,\ket{\ct}):$ On input the secret key $\sk=(\sk_\OW,h,r)$ and a quantum ciphertext $\ket{\ct}=(\ket{\ct_\OW},r')$, the decryption algorithm outputs \ifelsesubmiteq{m=r'\xor r\xor h(\OWUTE.\Dec(\sk_\OW,\ket{\ct_\OW})).}
    \end{itemize}
\end{construction}

\begin{theorem}[Correctness]\label{thm:ot-ute-correct}
    Suppose $\Pi_\OWUTE$ is correct. Then, \cref{cons:OT-UTE} is correct. 
\end{theorem}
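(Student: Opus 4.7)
The plan is to unfold the decryption computation and reduce directly to correctness of $\Pi_\OWUTE$. Let $\sk = (\sk_\OW, h, r)$ be the key output by $\Gen(1^\lam)$, let $m \in \zo{}$ be any message, and let $\ketct = (\ket{\ct_\OW}, r')$ be produced by $\Enc(\sk,m)$ using fresh randomness $x \getsr \cM_\lam$, so that $r' = r \oplus h(x) \oplus m$ and $\ket{\ct_\OW} \gets \OWUTE.\Enc(\sk_\OW, x)$.

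First, I would apply $\Dec$ to obtain
\[
\Dec(\sk, \ketct) \;=\; r' \oplus r \oplus h\bigl(\OWUTE.\Dec(\sk_\OW, \ket{\ct_\OW})\bigr).
\]
Next, I would invoke correctness of $\Pi_\OWUTE$ (from \cref{def:UE}, which $\Pi_\UTE$ inherits by \cref{def:UTE}): with probability at least $1 - \negl(\lam)$ over the choice of $\sk_\OW$ and the internal randomness of $\OWUTE.\Enc$ and $\OWUTE.\Dec$, we have $\OWUTE.\Dec(\sk_\OW, \ket{\ct_\OW}) = x$. Conditioning on this event, the right-hand side becomes $r' \oplus r \oplus h(x) = (r \oplus h(x) \oplus m) \oplus r \oplus h(x) = m$ by elementary properties of $\oplus$.

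Finally, a union bound over the (single) invocation of $\OWUTE.\Dec$ gives the overall error $\negl(\lam)$, yielding $\Pr[\Dec(\sk,\Enc(\sk,m)) = m] \ge 1 - \negl(\lam)$ as required. No step is really an obstacle here; the only subtlety is that the randomness $x$ and hash function $h$ are sampled independently of the underlying $\OWUTE$ correctness guarantee, so correctness of $\Pi_\OWUTE$ on a uniformly random message $x$ transfers directly without any quantification issues.
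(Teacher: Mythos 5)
Your proof is correct and follows essentially the same route as the paper: unfold $\Dec$, invoke correctness of $\Pi_\OWUTE$ to recover $x$ with overwhelming probability, and cancel the XOR masks to obtain $m$. Nothing further is needed.
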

\newcommand{\otutecorrectproof}{
    Fix any $\lam\in\N$ and $m\in\zo{}$. Let $\sk=(\sk_\OW,h,r)\gets\Gen(1^\lam)$ and $\ketct=(\ket{\ct_\OW},r\xor h(x)\xor m)\gets\Enc(\sk,m)$. We consider the output of the decryption algorithm $\Dec(\sk,\ketct)$. 
    By correctness of $\Pi_\OWUTE$, we have $x\gets\OWUTE.\Dec(\sk_\OW,\ket{\ct_\OW})$ with overwhelming probability. 
    It is then immediate that $r\xor h(x)\xor m \xor r \xor h(x)=m$, so $m$ is the output of $\Dec(\sk,\ketct)$ with overwhelming probability. }
\begin{proof}
    \ifelsesubmit{We give the proof in~\cref{sec:ot-ute-correct}.}{\otutecorrectproof}
\end{proof}

\begin{theorem}[Indistinguishability Security]\label{thm:ot-ute-secure}
    Suppose $\Pi_\OWUTE$ is statistically one-way secure, where the maximum advantage of the adversary is $2^{-\Omega(\lam)}$. Then, \cref{cons:OT-UTE} satisfies statistical indistinguishability security.
\end{theorem}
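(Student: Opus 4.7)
The plan is to prove indistinguishability via a short hybrid argument that gradually removes the dependence on the challenge bit $b$. Let $\Hyb{0}{(b)}$ denote the real indistinguishability security game with challenge bit $b$. I would first define $\Hyb{1}{(b)}$ by rewriting the way the ciphertext's classical component is sampled: the challenger draws $x \getsr \cM_\lam$ and $r' \getsr \zo{}$ uniformly and sends $(\ket{\ct_\OW}, r')$ to $\A$; only when $\A$ outputs $\st$ and the key $\sk$ must be handed to $\B$ does the challenger set $r := r' \xor h(x) \xor m_b$ and deliver $\sk=(\sk_\OW, h, r)$. Because in $\Hyb{0}{(b)}$ the bit $r$ is uniform, the ciphertext component $r \xor h(x) \xor m_b$ is uniform and independent of $(\sk_\OW, h, x)$, so the joint distribution over $\A$'s view, $\sk$, and $\st$ is identical in the two hybrids, giving $\Hyb{0}{(b)} \equiv \Hyb{1}{(b)}$.

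Next I would move to $\Hyb{2}{(b)}$ by sampling a fresh independent $u \getsr \zo{}$ and setting $r := r' \xor u \xor m_b$ in place of $r := r' \xor h(x) \xor m_b$. Statistical closeness of $\Hyb{1}{(b)}$ and $\Hyb{2}{(b)}$ reduces to showing that, conditioned on $Z=(\sk_\OW, \st, r')$, the value $h(x)$ is statistically close to a uniform bit. For this I plan to apply the leftover hash lemma (\cref{cor:cond-LHL}) with the universal family $\cH$ of output length $v=1$, which requires a lower bound on the conditional min-entropy $\Hinf(x \mid \sk_\OW, \st, r')$. To establish $\Hinf(x \mid \sk_\OW, \st, r') \ge \Omega(\lam)$, I would reduce to the statistical one-way security of $\Pi_\OWUTE$ from \cref{cor:ow-utske}: any (possibly unbounded) predictor $\B^\ast$ for $x$ on input $(\sk_\OW, \st, r')$ yields an untelegraphing attacker $(\A^\ast,\B^\ast)$ against $\Pi_\OWUTE$ where $\A^\ast$ samples $r'$ itself (independently of everything) and runs $\A$ on $(\ket{\ct_\OW},r')$, forwarding $(\st,r')$ to the second stage. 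Since the optimal unbounded guessing probability is $2^{-\Hinf(x \mid Z)}$ and is bounded by $2^{-\Omega(\lam)}$ by the one-way advantage bound, the entropy loss is $L=\Omega(\lam)-1=\Omega(\lam)$, and \cref{cor:cond-LHL} gives $\Hyb{1}{(b)} \sind \Hyb{2}{(b)}$ with distance $2^{-\Omega(\lam)}$.

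Finally, in $\Hyb{2}{(b)}$ the bit $b$ enters only through the term $u \xor m_b$ inside $r$; since $u$ is a uniform bit independent of the rest of the view, $u \xor m_b$ is uniform regardless of $b$, so $\Hyb{2}{(0)}$ and $\Hyb{2}{(1)}$ are identically distributed. Chaining the hybrids gives total statistical distance $O(2^{-\Omega(\lam)}) = \negl(\lam)$ between the $b=0$ and $b=1$ experiments, yielding statistical indistinguishability security.

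The main obstacle I expect is the min-entropy step: one must verify that enlarging the conditioning from $(\sk_\OW,\st)$ (as in the bare one-way game) to $(\sk_\OW,\st,r')$ does not destroy the min-entropy, which works here precisely because $r'$ is generated independently of $x$ by $\A^\ast$, and that the guessing-advantage bound from $\Pi_\OWUTE$ transfers to a conditional min-entropy bound, which crucially relies on the statistical (unbounded-adversary) formulation of one-way security from \cref{cor:ow-utske} rather than only a computational bound.
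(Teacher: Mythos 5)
Your proposal is correct and follows essentially the same route as the paper's proof: rewrite $r'$ as uniform and defer setting $r:=r'\xor h(x)\xor m_b$, invoke \cref{cor:cond-LHL} with the conditional min-entropy of $x$ given $(\sk_\OW,\st,r')$ bounded via the statistical one-way security of $\Pi_\OWUTE$ (with $r'$ sampled independently by the reduction), and then observe the final hybrid is independent of $b$. The only cosmetic difference is that you set $r=r'\xor u\xor m_b$ in the last hybrid while the paper sets $r=r'\xor u$, which are identically distributed since $u$ is uniform.
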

\newcommand{\otutesecureproof}{
    We prove the theorem by showing encryptions of 0 and 1 are statistically indistinguishable. 
    Suppose there exists a two-stage adversary $(\A,\B)$ that has non-negligible advantage $\delta$ in the indistinguishability security game. 
    We define a sequence of hybrid experiments:
    \begin{itemize}
        \item $\hybb{0}$: This is the real indistinguishability game with bit $b\in\zo{}$. In particular:
        \begin{itemize}
            \item At the beginning of the game, the challenger samples \ifelsesubmiteq{\sk_\OW\gets\OWUTE.\Gen(1^\lam),h\getsr\cH,r\getsr\zo{}} and sets $\sk=(\sk_\OW,h,r)$. The challenger then samples $x\getsr\cM_\lam$, computes $\ket{\ct_\OW}\gets\OWUTE.\Enc(\sk_\OW,x)$, and gives the quantum ciphertext $\ketct=(\ket{\ct_\OW},r'=r\xor h(x)\xor b)$ to $\A$.
            \item Adversary $\A$ then outputs a classical string $\st$, which is given to $\B$ along with $\sk$ from the challenger. Adversary $\B$ outputs a bit $b_\B$, which is the output of the experiment. 
        \end{itemize}
        \item $\hybb{1}$: Same as $\hybb{0}$ except the challenger samples the $r'$ component of the challenge ciphertext as $\diff{r'\getsr\zo{}}$, and sets the $r$ component of the secret key as $\diff{r=r'\xor h(x)\xor b}$.
        \item $\hybb{2}$: Same as $\hybb{1}$ except the challenger samples $\diff{u\getsr\zo{}}$ and sets $\diff{r=r'\xor u}$.
    \end{itemize}
    We write $\hybb{i}(\A,\B)$ to denote the output distribution of
    an execution of $\hybb{i}$ with adversary $(\A,\B)$. We now argue that each adjacent pair of distributions are indistinguishable.
    \begin{lemma}\label{lem:OT-UTE0-1}
        The experiments $\hybb{0}(\A,\B)$ and $\hybb{1}(\A,\B)$ are identically distributed.
    \end{lemma}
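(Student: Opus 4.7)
The plan is to argue that the only change between $\hybb{0}$ and $\hybb{1}$ is a cosmetic reordering of how the pair $(r, r')$ is sampled, and that the induced joint distribution on $(r, r')$ is identical in both hybrids. Concretely, in $\hybb{0}$ the challenger samples $r \getsr \zo{}$ and then defines $r' = r \oplus h(x) \oplus b$, whereas in $\hybb{1}$ the challenger samples $r' \getsr \zo{}$ and defines $r = r' \oplus h(x) \oplus b$. In either case, conditioned on any fixed choice of $x$, $h$, and $b$, the map $t \mapsto t \oplus h(x) \oplus b$ is a bijection on $\zo{}$, so sampling one coordinate uniformly and defining the other via this bijection yields the uniform distribution over the set $\{(r, r') : r' = r \oplus h(x) \oplus b\}$ in both experiments.

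The remaining steps are to observe that the rest of the experiment is syntactically unchanged: $\sk_\OW$, $h$, $x$, and $\ket{\ct_\OW}$ are sampled identically, the challenge ciphertext given to $\A$ is $(\ket{\ct_\OW}, r')$ in both hybrids, and the secret key handed to $\B$ is $(\sk_\OW, h, r)$ in both hybrids, where $r$ and $r'$ satisfy the same algebraic relation $r' = r \oplus h(x) \oplus b$. Hence the joint distribution of the view of $(\A, \B)$, and therefore the output bit $b_\B$, is identical in the two hybrids.

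There is no real obstacle here; the claim is a standard sampling-equivalence step that sets up the subsequent hybrid $\hybb{2}$, in which $r$ is decoupled from $b$ by substituting a fresh uniform bit $u$ for $h(x) \oplus b$. The only thing to be careful about is that the rewriting is performed before $x$ or $h$ is fixed, so that the bijection argument applies pointwise in $(x, h, b)$ and then integrates to equality of the full joint distributions.
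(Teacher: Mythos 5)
Your proposal is correct and follows essentially the same argument as the paper: the marginal of $r'$ is uniform in both hybrids, the relation $r' = r \oplus h(x) \oplus b$ is preserved by the re-sampling, and the rest of the experiment is unchanged, so the joint view of $(\A,\B)$ is identical. Your bijection phrasing, applied pointwise in $(x,h,b)$, is just a slightly more explicit rendering of the paper's one-line observation.
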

    \begin{proof}
        Since $r\getsr\zo{}$ in $\hybb{0}(\A,\B)$, the marginal distribution of $r'$ is uniformly random. Moreover, by setting $r=r'\xor h(x)\xor b$, the relation $r'=r\xor h(x)\xor b$ still holds in $\hybb{1}$. Thus, the view of $\A$ and $\B$ is identical in the two experiments.  
    \end{proof}
    \begin{lemma}\label{lem:OT-UTE1-2}
        Suppose $\Pi_\OWUTE$ is statistically one-way secure, where the maximum advantage of the adversary is $2^{-\Omega(\lam)}$. Then, there exists a negligible function $\negl(\cdot)$ such that for all $\lam\in\N$, we have
        \[
        \abs{\Pr[\hybb{1}(\A,\B)=1]-\Pr[\hybb{2}(\A,\B)=1]}=\negl(\lam).
        \]
    \end{lemma}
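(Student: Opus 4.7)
The plan is a leftover-hash-lemma (LHL) argument that shows $h(x)$ is statistically close to a uniform bit from the point of view of the entire experiment, which would make the two hybrids indistinguishable since they differ only in whether $r$ is computed as $r' \xor h(x) \xor b$ or as $r' \xor u$ for a fresh uniform $u$.

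First I would reduce the lemma to showing that the joint distributions
\begin{align*}
(h(x),\, h,\, \sk_\OW,\, \st,\, r',\, b) \quad \text{and} \quad (u,\, h,\, \sk_\OW,\, \st,\, r',\, b)
\end{align*}
are statistically close. Indeed, in both $\hybb{1}$ and $\hybb{2}$ the view of $(\A,\B)$ is a deterministic (quantum) post-processing of a tuple of the form $(\ket{\ct_\OW}, h, \sk_\OW, \st, r', b, r)$, and the only discrepancy between the hybrids is the value of $r$, which is $r' \xor h(x) \xor b$ in $\hybb{1}$ and $r' \xor u$ in $\hybb{2}$.

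Next I would lower-bound the conditional min-entropy $\Hinf(x \mid \sk_\OW, \st, r', b)$. Since $b$ is fixed and $r'$ is sampled independently of $x$, conditioning on these two quantities does not decrease min-entropy, so it suffices to bound $\Hinf(x \mid \sk_\OW, \st)$ in the modified experiment where $\A$ samples $r'$ as part of its internal randomness. For this, I would define the one-way-game adversary $\A'$ that on input $\ket{\ct_\OW}$ samples $r' \getsr \zo{}$ itself, runs $\A(\ket{\ct_\OW}, r')$ to obtain $\st$, and outputs the classical string $(\st, r')$; then any unbounded guesser $G$ for $x$ given $(\sk_\OW, \st, r')$ corresponds directly to a second-stage adversary in the one-way security game against $\Pi_\OWUTE$ paired with $\A'$. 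Statistical one-way security of $\Pi_\OWUTE$ with advantage $2^{-\Omega(\lam)}$ therefore forces $\Hinf(x \mid \sk_\OW, \st, r', b) \ge \Omega(\lam)$.

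Finally I would invoke \cref{cor:cond-LHL} with $v = 1$, entropy loss $L = \Omega(\lam) - 1 = \Omega(\lam)$, and side information $Z = (\sk_\OW, \st, r', b)$, obtaining statistical distance at most $2^{-L/2} = 2^{-\Omega(\lam)}$ between the two displayed distributions, which is negligible in $\lam$. The main obstacle I expect is the min-entropy step: one must carefully justify that the additional conditioning on $r'$ (on which $\st$ may depend) does not cost any entropy, and that the one-way game of $\Pi_\OWUTE$ really captures the ``guess $x$ given $(\sk_\OW, \st, r')$'' task. Both issues are resolved by absorbing $r'$ into $\A$'s internal coins, which is legitimate precisely because $r'$ is sampled independently of $x$ and of $\ket{\ct_\OW}$.
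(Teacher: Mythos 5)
Your proposal is correct and follows essentially the same route as the paper: define the joint distribution of $x$ together with the classical side information $(\sk_\OW,\st,r')$, use statistical one-way security of $\Pi_\OWUTE$ to lower-bound $\Hinf(x\mid \sk_\OW,\st,r')$ by $\Omega(\lam)$, and apply \cref{cor:cond-LHL} (the paper phrases this as a distinguisher contradicting the corollary, you as a direct statistical-distance bound plus data processing, which is the same argument). The only cosmetic caveat is that $\ket{\ct_\OW}$ should not be listed as part of the data the second stage can post-process—only $\B$'s classical input $(\st,\sk_\OW,h,r)$ matters—but since your displayed distributions correctly omit it, the argument is unaffected.
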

    \begin{proof}
        Suppose $\AB$ distinguishes $\hybb{1}$ and $\hybb{2}$ with non-negligible probability $\delta'$. 
        We use $\AB$ to contradict \cref{cor:cond-LHL}.  
        Define the joint distribution $\cD_\A$ over $\cX\times\cZ$ as follows:
        \begin{itemize}
            \item $\cX$: Sample $x\getsr\cM_\lam$. Output $x$.
            \item $\cZ$: Sample $\sk_\OW\gets\OWUTE.\Gen(1^\lam),r'\getsr\zo{}$, and compute $\ket{\ct_\OW}\gets\OWUTE.\Enc(\sk_\OW,x)$. Run $\A$ on input $(\ket{\ct_\OW},r')$ to get $\st$. Output $(\sk_\OW,\st,r')$. 
        \end{itemize}
        Since $\Pi_\OWUTE$ is statistically one-way secure, where the maximum advantage of the adversary is $2^{-\Omega(\lam)}$, and $r'$ is sampled independently of $x$, it must be that $\Hinf(\cX\mid\cZ)\geq \Omega(\lam)$ holds. 
        We now construct a distinguisher $\A'$ that contradicts \cref{cor:cond-LHL}:
        \begin{enumerate}
            \item Algorithm $\A'$ gets an LHL challenge $(u,h,z)$ with joint distribution $\cD_\A$ as described above.  
            \item Algorithm $\A'$ parses $z=(\sk_\OW,\st,r')$, sets $r=r'\xor u\xor b$, sets $\sk=(\sk_\OW,h,r)$, runs adversary $\B$ on input $(\st,\sk)$, and outputs whatever $\B$ outputs.
        \end{enumerate}
        If $u\gets h(x)$, then $\hybb{1}\AB$ is perfectly simulated by definition of $\cD_\A$. 
        If $u\getsr\zo{}$, then $\hybb{2}\AB$ is perfectly simulated, since the bit $b$ is fixed. Thus, $\A'$ is able to distinguish the distributions in \cref{cor:cond-LHL} with advantage $\delta'=\omega(2^{-\Omega(\lam)})=\omega(2^{-L/2})$, a contradiction. 
    \end{proof}
    \begin{lemma}\label{lem:OT-UTE-final}
        The experiments $\Hyb{2}{(0)}(\A,\B)$ and $\Hyb{2}{(1)}(\A,\B)$ are identically distributed.
    \end{lemma}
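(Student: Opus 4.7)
The plan is to observe that by construction, $\hybb{2}$ no longer depends on the challenge bit $b$ at all, so the two distributions $\Hyb{2}{(0)}(\A,\B)$ and $\Hyb{2}{(1)}(\A,\B)$ must coincide on the nose. The only place $b$ entered the adversary's view in $\hybb{0}$ and $\hybb{1}$ was through the relation $r' = r \xor h(x) \xor b$ (respectively $r = r' \xor h(x) \xor b$), and $\hybb{2}$ explicitly replaces this coupling by the assignment $r = r' \xor u$ for a fresh uniform $u$, which severs the dependence.

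Concretely, I would simply trace the random choices in $\hybb{2}$. The challenger samples $\sk_\OW \gets \OWUTE.\Gen(1^\lam)$, $h \getsr \cH$, $x \getsr \cM_\lam$, $r' \getsr \zo{}$, and $u \getsr \zo{}$; all of these are chosen independently of $b$. It then computes $\ket{\ct_\OW} \gets \OWUTE.\Enc(\sk_\OW, x)$ and sets $r \coloneqq r' \xor u$. The first-stage adversary $\A$ is handed $(\ket{\ct_\OW}, r')$, whose joint distribution is patently independent of $b$; the classical string $\st$ that $\A$ outputs therefore has a distribution independent of $b$. The second-stage adversary $\B$ then receives $\st$ together with $\sk = (\sk_\OW, h, r)$, and each of these components is independent of $b$ as well — in particular $r$ is uniform because $u$ is uniform and independent of $r'$.

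Hence the joint distribution of $(\ket{\ct_\OW}, r', \sk_\OW, h, r, \st)$ — the entire transcript seen by $(\A,\B)$ — is identical whether $b=0$ or $b=1$, so $b_\B$ is identically distributed in the two hybrids. There is no real obstacle here: the step is a purely syntactic check that $b$ has been eliminated from the challenger's computation, and it serves to combine with \cref{lem:OT-UTE0-1,lem:OT-UTE1-2} in a standard triangle-inequality argument to upper bound the advantage of $(\A,\B)$ in the original indistinguishability experiment by $\negl(\lam)$, completing the proof of \cref{thm:ot-ute-secure}.
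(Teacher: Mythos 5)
Your proposal is correct and matches the paper's own argument: the paper likewise proves \cref{lem:OT-UTE-final} by observing that in $\hyb_2$ the challenger's behavior (now setting $r=r'\xor u$ for fresh uniform $u$) is entirely independent of the challenge bit $b$, so the adversary's view is identical in the two experiments. Your explicit tracing of the random choices is just a more detailed rendering of the same one-line observation.
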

    \begin{proof}
        By construction, the challenger's behavior in $\hyb_{2}$ is {\em independent} of the challenge bit $b \in \zo{}$, so the adversary's view in the two distributions is identical.
    \end{proof}
    \noindent Combining \cref{lem:OT-UTE0-1,lem:OT-UTE1-2,lem:OT-UTE-final}, statistical indistinguishability security follows by a standard hybrid argument. 
    }
\begin{proof}
    \ifelsesubmit{We give the proof in~\cref{sec:ot-ute-secure}.}{\otutesecureproof}
\end{proof}

\begin{remark}[Longer Messages]
    Since the only limitation to message length is the entropy loss $L$ in \cref{cor:cond-LHL}, by setting the security parameter for the one-way scheme to be $\lam^2$, we can achieve indistinguishability security for $\lam$-bit messages, which extends to $\poly(\lam)$-bit messages (assuming one-way functions) via key encapsulation techniques. 
\end{remark}

\subsection{Collusion-Resistant Untelegraphable Encryption}
\label{sec:cr-ute}
In this section, we give a construction of collusion-resistant untelegraphable encryption from one-time UTE with indistinguishability security and SK-NCER. 
Note that due to the definition of SK-NCER (\cref{def:nce}), we inherit the need for separate encryption and decryption keys, where only the decryption key gets revealed to the second stage adversary. 
\begin{construction}\label{cons:cr-ute}
    Let $\lam$ be a security parameter and $\cM=\setbk{\cM_\lam}_{\lam\in\N}$ be a message space. Our construction relies on the following additional primitives:
    \begin{itemize}
        \item Let $\Pi_\OneUTE=(\OneUTE.\Gen,\OneUTE.\Enc,\OneUTE.\Dec)$ be a one-time secure UTSKE scheme with message space $\cM$ and key space $\cK=\setbk{\cK_\lam}_{\lam\in\N}$.
        \item Let $\Pi_\NCE=(\NCE.\KG, \NCE.\Enc, \NCE.\Dec, \NCE.\Fake, \NCE.\Reveal)$ be an SK-NCER scheme with message space $\cK$. 
    \end{itemize}
    We construct our collusion-resistant UTE scheme $\Pi_\UTE=(\Gen,\Enc,\Dec)$ as follows:
    \begin{itemize}
        \item $\Gen(1^\lam):$ On input the security parameter $\lam$, the key generation algorithm outputs a pair of keys $(\ek,\dk)\gets\NCE.\KG(1^\lam)$.
        \item $\Enc(\ek,m):$ On input the encryption key $\ek$ and a message $m\in\cM_\lam$, the encryption algorithm samples $\sk_\OneUTE\gets\OneUTE.\Gen(1^\lam)$, computes $\ket{\ct_\OneUTE}\gets\OneUTE.\Enc(\sk_\OneUTE,m),\ct_\NCE\gets\NCE.\Enc(\ek,\sk_\OneUTE)$ and outputs \ifelsesubmiteq{\ketct=(\ket{\ct_\OneUTE},\ct_\NCE).}
        \item $\Dec(\dk,\ket{\ct}):$ On input the decryption key $\dk$ and a quantum ciphertext $\ket{\ct}=(\ket{\ct_1},\ct_2)$, the decryption algorithm outputs $\OneUTE.\Dec(\NCE.\Dec(\dk,\ct_2),\ket{\ct_1})$.
    \end{itemize}
\end{construction}
\begin{theorem}[Correctness]\label{thm:cr-ute-correct}
    Suppose $\Pi_\OneUTE$ and $\Pi_\NCE$ are correct. Then, \cref{cons:cr-ute} is correct. 
\end{theorem}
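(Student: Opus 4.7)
The plan is to prove correctness by directly tracing through the decryption of an honestly generated ciphertext and invoking the correctness properties of the two building blocks in sequence. Fix any $\lam \in \N$ and any message $m \in \cM_\lam$. Let $(\ek,\dk) \gets \Gen(1^\lam) = \NCE.\KG(1^\lam)$, and consider an encryption $\ketct = (\ket{\ct_\OneUTE}, \ct_\NCE) \gets \Enc(\ek,m)$, where $\sk_\OneUTE \gets \OneUTE.\Gen(1^\lam)$, $\ket{\ct_\OneUTE} \gets \OneUTE.\Enc(\sk_\OneUTE,m)$, and $\ct_\NCE \gets \NCE.\Enc(\ek,\sk_\OneUTE)$.

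By definition, $\Dec(\dk,\ketct)$ first computes $\sk' \gets \NCE.\Dec(\dk,\ct_\NCE)$ and then outputs $\OneUTE.\Dec(\sk',\ket{\ct_\OneUTE})$. Define the events $E_1: \sk' = \sk_\OneUTE$ and $E_2: \OneUTE.\Dec(\sk_\OneUTE,\ket{\ct_\OneUTE}) = m$. By correctness of $\Pi_\NCE$ applied to the message $\sk_\OneUTE$ (which lies in the NCE message space $\cK$ by construction), $\Pr[\lnot E_1] \leq \negl(\lam)$. By correctness of $\Pi_\OneUTE$ applied to the message $m$, $\Pr[\lnot E_2] \leq \negl(\lam)$. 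Under $E_1 \wedge E_2$, the output of $\Dec(\dk,\ketct)$ equals $m$, so by a union bound the decryption succeeds with probability at least $1 - \negl(\lam)$.

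There is essentially no obstacle here: the construction is a straightforward hybrid-style composition where the NCE acts as an outer layer that transports the one-time UTE key, and the one-time UTE is the inner layer that encrypts the actual message. The only subtlety worth noting in the write-up is that $\ket{\ct_\OneUTE}$ is a quantum state, so one should make clear that the NCE decryption acts only on the classical register $\ct_\NCE$ and leaves $\ket{\ct_\OneUTE}$ untouched; then the one-time UTE decryption acts on $\ket{\ct_\OneUTE}$ with the recovered classical key $\sk'$. Since the events $E_1$ and $E_2$ concern disjoint (classical) registers and independently sampled randomness, the union bound is clean and no measurement-disturbance issues arise.
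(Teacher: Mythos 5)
Your proposal is correct and follows essentially the same route as the paper's proof: trace decryption of an honest ciphertext, invoke $\Pi_\NCE$ correctness to recover $\sk_\OneUTE$ from $\ct_\NCE$, then $\Pi_\OneUTE$ correctness to recover $m$, and combine via a union bound. The added remark about the quantum register being untouched by the classical NCE decryption is a harmless clarification, not a deviation.
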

\newcommand{\crutecorrectproof}{
    Fix any $\lam\in\N$ and $m\in\cM_\lam$. Let $(\ek,\dk)\gets\Gen(1^\lam),\ketct=(\ket{\ct_\OneUTE},\ct_\NCE)\gets\Enc(\ek,m)$. We consider the output of $\Dec(\dk,\ketct)$. By correctness of $\Pi_\NCE$ and definition of $\Enc$, we have $\sk_\OneUTE\gets\NCE.\Dec(\dk,\ct_\NCE)$ with overwhelming probability. Similarly, by $\Pi_\OneUTE$ correctness and definition of $\Enc$, we have $m\gets\OneUTE.\Dec(\sk_\OneUTE,\ket{\ct_\OneUTE})$ with overwhelming probability, which completes the proof.}
\begin{proof}
    \ifelsesubmit{We give the proof in~\cref{sec:cr-ute-correct}.}{\crutecorrectproof}
\end{proof}

\begin{theorem}[Collusion-Resistant Security]\label{thm:cr-ute-secure}
    Suppose $\Pi_\OneUTE$ satisfies indistinguishability security (\cref{def:UTE}) and $\Pi_\NCE$ satisfies non-committing security for QPT adversaries. Then, \cref{cons:cr-ute} satisfies collusion-resistant security. 
\end{theorem}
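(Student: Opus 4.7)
The plan is to use a hybrid argument over the $Q$ queries made by the collusion-resistant adversary $\A_{CR}$. For $j \in \setbk{0,1,\dots,Q}$, let $\hybi{j}$ denote the experiment in which the first $j$ queries return a ciphertext where the $\Pi_\OneUTE$ component encrypts $m^{(i)}_1$ and the remaining $Q-j$ queries return a ciphertext where the $\Pi_\OneUTE$ component encrypts $m^{(i)}_0$; the NCE component always honestly encrypts the fresh $\sk_\OneUTE^{(i)}$. Then $\hybi{0}$ is the collusion-resistant game with $b=0$ and $\hybi{Q}$ is the game with $b=1$, so it suffices to show $\hybi{j-1} \cind \hybi{j}$ for every $j \in [Q]$.

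To move from $\hybi{j-1}$ to $\hybi{j}$, I would introduce two intermediate hybrids $\hybij{j-1}{a}$ and $\hybij{j-1}{b}$. In $\hybij{j-1}{a}$, the $j$-th NCE ciphertext is replaced by a fake one: the challenger runs $(\ct_\NCE^{(j)},\st_{\NCE}) \gets \NCE.\Fake(\ek)$ up front, and when $\dk$ is eventually handed to $\B_{CR}$, it is replaced by $\dk^\ast \gets \NCE.\Reveal(\st_{\NCE},\sk_\OneUTE^{(j)})$. Indistinguishability from $\hybi{j-1}$ follows from NCE non-committing security, with the reduction treating all other queries $i \neq j$ as encryption oracle queries on the NCE side and using the single challenge query to replace the $j$-th pair $(\ct_\NCE^{(j)},\dk)$. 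Then $\hybij{j-1}{b}$ switches the $j$-th $\Pi_\OneUTE$ ciphertext from encrypting $m^{(j)}_0$ to encrypting $m^{(j)}_1$; indistinguishability follows from one-time UTE indistinguishability security. Finally, reversing the first step (undoing the NCE fake on the $j$-th ciphertext) brings us to $\hybi{j}$, again by NCE non-committing security.

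The crux is the reduction in the middle step to one-time UTE security, so let me describe it in slightly more detail. The reduction $\cR$ samples $(\ek,\dk)$ honestly, guesses or iterates over the target index $j$, and when $\A_{CR}$ issues its $j$-th query $(m^{(j)}_0,m^{(j)}_1)$, forwards the pair to its own UTE challenger to obtain $\ket{\ct_\OneUTE^{(j)}}$. For the NCE part of that query, $\cR$ computes $(\ct_\NCE^{(j)},\st_{\NCE}) \gets \NCE.\Fake(\ek)$, which crucially does not require knowing $\sk_\OneUTE^{(j)}$. All other queries $i \neq j$ are answered by $\cR$ itself by sampling $\sk_\OneUTE^{(i)}$ and running $\Pi_\OneUTE$ and $\NCE.\Enc$ in the clear (choosing the appropriate message $m^{(i)}_0$ or $m^{(i)}_1$ depending on whether $i<j$ or $i>j$). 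When $\A_{CR}$ outputs a classical string $\st_{CR}$, $\cR$ hands $\st_{CR}$ together with $\st_{\NCE},\ek,\dk$, and all self-sampled $\sk_\OneUTE^{(i)}$ to its UTE second stage. The UTE challenger then reveals $\sk_\OneUTE^{(j)}$; the reduction computes $\dk^\ast \gets \NCE.\Reveal(\st_{\NCE},\sk_\OneUTE^{(j)})$, runs $\B_{CR}$ on $(\st_{CR},\dk^\ast)$, and outputs its bit. This perfectly simulates the two intermediate hybrids depending on the UTE challenge bit.

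The main obstacle I anticipate is precisely the one just sketched: verifying that $\cR$ can simulate $\A_{CR}$'s view without knowing $\sk_\OneUTE^{(j)}$ at the time the $j$-th ciphertext must be produced, yet still produce a consistent $\dk$ for $\B_{CR}$. This hinges on two structural properties which are the whole reason NCER is the right primitive: $\NCE.\Fake$ is message-independent (so the reduction can stall on $\sk_\OneUTE^{(j)}$), and $\NCE.\Reveal$ can be invoked after the UTE challenger releases $\sk_\OneUTE^{(j)}$ so that $\dk^\ast$ decrypts the fake $\ct_\NCE^{(j)}$ to exactly the right one-time key. A minor but worth-noting point is that the NCE reduction is applied $2Q$ times (twice per query, once in each direction) and the UTE reduction $Q$ times, so the final advantage is bounded by $2Q \cdot \advt{\NCE}{nc} + Q \cdot \advt{\OneUTE}{ind}$, which is negligible since $Q = \poly(\lam)$.
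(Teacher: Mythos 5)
Your proposal is correct and follows essentially the same route as the paper's proof: a query-by-query hybrid with the same three intermediate steps per query (fake the $j$-th NCE ciphertext via $\Fake$/$\Reveal$, switch the $j$-th one-time UTE ciphertext, then un-fake), with the identical reduction structure exploiting that $\NCE.\Fake$ is message-independent and $\Reveal$ can be run after the UTE challenger releases $\sk^{(j)}_\OneUTE$. The loss you state ($2Q$ invocations of NCE security and $Q$ of one-time UTE security) matches the paper's hybrid count.
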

\newcommand{\crutesecureproof}{
    Suppose there exists a QPT two-stage adversary $(\A,\B)$ that has non-negligible advantage $\delta$ in the collusion-resistant security game and makes at most $Q$ message queries. 
    We define a sequence of hybrid experiments:
    \begin{itemize}
        \item $\hyb_0$: This is the collusion-resistant game with bit $b=0$. 
        In particular:
        \begin{itemize}
            \item At the beginning of the game, the challenger samples $(\ek,\dk)\gets\NCE.\KG(1^\lam)$. 
            
            \item  When adversary $\A$ makes its $\ord{i}$ query $(m^{(i)}_0,m^{(i)}_1)$, the challenger samples $\sk^{(i)}_\OneUTE\gets\OneUTE.\Gen(1^\lam)$ and computes 
            \[
            \ket{\ct^{(i)}_\OneUTE}\gets \OneUTE.\Enc(\sk^{(i)}_\OneUTE,m^{(i)}_0)
            \eqand \ct^{(i)}_\NCE\gets\NCE.\Enc(\ek,\sk^{(i)}_\OneUTE).
            \] 
            The challenger gives the ciphertext $\ket{\ct^{(i)}}=\parens{\ket{\ct^{(i)}_\OneUTE},\ct^{(i)}_\NCE}$ to $\A$.
            
            \item Adversary $\A$ then outputs a classical string $\st$, which is given to $\B$ along with $\dk$ from the challenger. Adversary $\B$ outputs a bit $b_\B$, which is the output of the experiment. 
        \end{itemize}
        \item $\hyb_i$: Same as $\hyb_{i-1}$ except the challenger now computes  \ifelsesubmiteq{\diff{\ket{\ct^{(i)}_\OneUTE}\gets \OneUTE.\Enc(\sk^{(i)}_\OneUTE,m^{(i)}_1)}.}
    \end{itemize}
    Note that $\hyb_Q$ is the collusion-resistant game with bit $b=1$. 
    We write $\hyb_{i}(\A,\B)$ to denote the output distribution of
    an execution of $\hyb_{i}$ with adversary $(\A,\B)$.
    We now show that for each $i\in[Q]$, experiments $\hyb_{i-1}(\A,\B)$ and $\hyb_i(\A,\B)$ are indistinguishable. 
    \begin{lemma}\label{lem:cr-ute}
        Suppose $\Pi_\OneUTE$ satisfies indistinguishability security and $\Pi_\NCE$ satisfies non-committing security for QPT adversaries. Then, for all $i\in[Q]$, there exists a negligible function $\negl(\cdot)$ such that for all $\lam\in\N$, we have
        \[
        \abs{\Pr[\hyb_{i-1}(\A,\B)=1]-\Pr[\hyb_{i}(\A,\B)=1]}=\negl(\lam).
        \]
    \end{lemma}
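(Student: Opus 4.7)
The plan is to interpolate between $\hyb_{i-1}$ and $\hyb_i$ via two intermediate hybrids $\hybij{i}{(1)}$ and $\hybij{i}{(2)}$, using NCE non-committing security to ``open'' the $i$-th NCE ciphertext so that one-time UTE indistinguishability security can be invoked on the $i$-th UTE instance, and then closing things back up via NCE security again.

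First, I would define $\hybij{i}{(1)}$ to be identical to $\hyb_{i-1}$ except that the challenger generates $(\ct^{(i)}_\NCE,\st_\NCE)\gets\NCE.\Fake(\ek)$ and, in the second stage, hands $\B$ the decryption key $\dk^\ast\gets\NCE.\Reveal(\st_\NCE,\sk^{(i)}_\OneUTE)$ in place of $\dk$. To show $\hyb_{i-1}(\A,\B)\cind\hybij{i}{(1)}(\A,\B)$, I build a QPT reduction to the non-committing security of $\Pi_\NCE$ that samples all one-time UTE keys and quantum ciphertexts itself, uses its NCE encryption oracle to produce $\ct^{(j)}_\NCE$ for every $j\neq i$, submits $\sk^{(i)}_\OneUTE$ as its single challenge query to receive $(\ct^{(i)}_\NCE,\dk^\ast)$, and hands $\dk^\ast$ to $\B$ in the second stage. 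When the NCE bit is $0$ this exactly simulates $\hyb_{i-1}$; when it is $1$ it exactly simulates $\hybij{i}{(1)}$.

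Next, let $\hybij{i}{(2)}$ be identical to $\hybij{i}{(1)}$ except that $\ket{\ct^{(i)}_\OneUTE}\gets\OneUTE.\Enc(\sk^{(i)}_\OneUTE,m^{(i)}_1)$. The key observation is that in $\hybij{i}{(1)}$ the key $\sk^{(i)}_\OneUTE$ is used only to produce $\ket{\ct^{(i)}_\OneUTE}$ in the first stage and to produce $\dk^\ast$ via $\NCE.\Reveal$, which is only needed once $\B$ is invoked in the second stage---exactly matching the UTE game structure. I would reduce $\hybij{i}{(1)}\cind\hybij{i}{(2)}$ to one-time UTE indistinguishability security: the reduction submits $(m^{(i)}_0,m^{(i)}_1)$ to the UTE challenger to obtain $\ket{\ct^{(i)}_\OneUTE}$, simulates everything else (including sampling $(\ct^{(i)}_\NCE,\st_\NCE)\gets\NCE.\Fake(\ek)$ and generating all $j\neq i$ components honestly) itself, and forwards $\A$'s classical output together with $\st_\NCE$, $\ek$, and any remaining auxiliary data as the classical telegraphed string across the two stages of the UTE game. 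In its second stage it receives $\sk^{(i)}_\OneUTE$, computes $\dk^\ast\gets\NCE.\Reveal(\st_\NCE,\sk^{(i)}_\OneUTE)$, runs $\B$ on the resulting pair, and outputs $\B$'s bit.

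Finally, $\hybij{i}{(2)}\cind\hyb_i$ follows by a symmetric reduction to NCE non-committing security that mirrors the first transition, so the lemma follows by the triangle inequality. The step I expect to be the main subtle point is the middle one: the reduction must defer the $\NCE.\Reveal$ computation to its own second stage, which requires bundling $\st_\NCE$ into the classical string passed from the first- to second-stage of the UTE telegraphing adversary. Since $\Pi_\NCE$ is a classical scheme, $\st_\NCE$ is classical, so the reduction is a legitimate one-time UTE telegraphing adversary, and summing over $i\in[Q]$ via a standard hybrid argument then gives collusion-resistant security.
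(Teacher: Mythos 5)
Your proposal is correct and matches the paper's proof essentially step for step: the same two intermediate hybrids (fake-then-reveal the $i$-th NCE ciphertext, then switch the $i$-th one-time UTE plaintext), the same two NCE reductions bracketing a one-time UTE reduction, and the same key observation that $\st_\NCE$ is classical and can be bundled into the telegraphed string so that $\NCE.\Reveal$ is deferred to the second stage.
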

    \begin{proof}
        We define the following intermediate hybrid experiments:
        \begin{itemize}
            \item $\hyb'_{i-1}$: Same as $\hyb_{i-1}$ except the challenger now computes \ifelsesubmiteq{\diff{(\ct^{(i)}_\NCE,\st_\NCE)\gets\NCE.\Fake(\ek)}} and gives $\diff{\dkstar\gets\NCE.\Reveal(\st_\NCE,\sk^{(i)}_\OneUTE)}$ to $\B$.    
            \item $\hyb''_{i-1}$: Same as $\hyb'_{i-1}$ except the challenger now computes \ifelsesubmiteq{\diff{\ket{\ct^{(i)}_\OneUTE}\gets \OneUTE.\Enc(\sk^{(i)}_\OneUTE,m^{(i)}_1)}.} 
        \end{itemize}
        We show that each adjacent pair of experiments from the list $(\hyb_{i-1},\hyb'_{i-1},\hyb''_{i-1},\hyb_{i})$ are indistinguishable.
        \begin{claim}\label{claim:cr-ute-nce}
            Suppose $\Pi_\NCE$ satisfies non-committing security for QPT adversaries. Then, for all $i\in[Q]$, there exists a negligible function $\negl(\cdot)$ such that for all $\lam\in\N$, we have
            \[
            \abs{\Pr[\hyb_{i-1}(\A,\B)=1]-\Pr[\hyb'_{i-1}(\A,\B)=1]}=\negl(\lam).
            \]
        \end{claim}
        \begin{proof}
            Suppose $(\A,\B)$ distinguish the experiments with non-negligible probability $\delta'$. 
            We use $(\A,\B)$ to construct an adversary $\A'$ that breaks non-committing security as follows:
            \begin{enumerate}
                \item For each $j\in[Q]$, adversary $\A'$ samples $\sk^{(j)}_\OneUTE\gets\OneUTE.\Gen(1^\lam)$.
                \item When adversary $\A$ makes the $\ord{j}$ query $(m^{(j)}_0,m^{(j)}_1)$, adversary $\A'$ does the following:
                \begin{itemize}
                    \item For all $j\in[Q]$, compute \[\ket{\ct^{(j)}_\OneUTE}\gets \OneUTE.\Enc(\sk^{(j)}_\OneUTE,m^{(j)}_{b_j}),\] where $b_j=0$ for $j\geq i$ and $b_j=1$ for $j<i$. 
                    \item For $j\neq i$, compute $\ct^{(j)}_\NCE$ by making an encryption query on message $\sk^{(j)}_\OneUTE$.
                    \item For $j=i$, make a challenge query on $\sk^{(i)}_\OneUTE$ to get $(\ct^{(i)}_\NCE,\dkstar)$. 
                \end{itemize}
                Adversary $\A'$ replies with 
                $\ket{\ct^{(j)}}=\parens{\ket{\ct^{(j)}_\OneUTE},\ct^{(j)}_\NCE}$.
                
                \item When adversary $\A$ outputs $\st$, adversary $\A'$ gives $(\st,\dkstar)$ to $\B$, and outputs whatever $\B$ outputs.
            \end{enumerate}
            Clearly $\A'$ is QPT since $(\A,\B)$ is. 
            If the ciphertext and decryption key are computed normally, $\A'$ perfectly simulates $\hyb_{i-1}(\A,\B)$. 
            If the ciphertext and decryption key are faked, $\A'$ perfectly simulates $\hyb'_{i-1}(\A,\B)$.
            Thus, $\A'$ has advantage $\delta'$ in the non-committing security game, a contradiction. 
        \end{proof}
        \begin{claim}\label{claim:cr-ute-1ute}
            Suppose $\Pi_\OneUTE$ satisfies indistinguishability security. Then, for all $i\in[Q]$, there exists a negligible function $\negl(\cdot)$ such that for all $\lam\in\N$, we have
            \[
            \abs{\Pr[\hyb'_{i-1}(\A,\B)=1]-\Pr[\hyb''_{i-1}(\A,\B)=1]}=\negl(\lam).
            \]
        \end{claim}
        \begin{proof}
            Suppose $(\A,\B)$ distinguish the experiments with non-negligible probability $\delta'$. 
            We use $(\A,\B)$ to construct an adversary $(\A',\B')$ that breaks indistinguishability security as follows:
            \begin{enumerate}
                \item Adversary $\A'$ samples $(\ek,\dk)\gets\NCE.\KG(1^\lam)$.
                \item When adversary $\A$ makes the $\ord{j}$ query $(m^{(j)}_0,m^{(j)}_1)$, adversary $\A'$ does the following:
                \begin{itemize}
                    \item If $j\neq i$, sample $\sk^{(j)}_\OneUTE\gets\OneUTE.\Gen(1^\lam)$ and compute 
                    \[\ket{\ct^{(j)}_\OneUTE}\gets \OneUTE.\Enc(\sk^{(j)}_\OneUTE,m^{(j)}_{b_j})
                    \eqand \ct^{(j)}_\NCE\gets\NCE.\Enc(\ek,\sk^{(j)}_\OneUTE),\]
                    where $b_j=0$ for $j> i$ and $b_j=1$ for $j<i$.
                \item If $j=i$, submit challenge messages $m^{(i)}_0,m^{(i)}_1$ to the UTE challenger to get $\ket{\ct^{(i)}_\OneUTE}$ and compute $(\ct^{(i)}_\NCE,\st_\NCE)\gets\NCE.\Fake(\ek)$.
                \end{itemize}
                Adversary $\A'$ replies with $\ket{\ct^{(j)}}=\parens{\ket{\ct^{(j)}_\OneUTE},\ct^{(j)}_\NCE}$.
                
                \item When adversary $\A$ outputs $\st$, adversary $\A'$ submits $(\st,\st_\NCE)$ to its challenger to be given to $\B'$. 
                \item On input $(\st,\st_\NCE)$ and $\sk^{(i)}_\OneUTE$, adversary $\B'$ computes $\dkstar\gets\NCE.\Reveal(\st_\NCE,\sk^{(i)}_\OneUTE)$, gives $(\st,\dkstar)$ to $\B$, and outputs whatever $\B$ outputs.
            \end{enumerate}
            Clearly $(\A',\B')$ is QPT since $(\A,\B)$ is. 
            If the UTE challenger encrypts $m^{(i)}_0$, adversary $(\A',\B')$ perfectly simulates $\hyb'_{i-1}(\A,\B)$. 
            If the UTE challenger encrypts $m^{(i)}_1$, adversary $(\A',\B')$ perfectly simulates $\hyb''_{i-1}(\A,\B)$.
            Thus, $(\A',\B')$ has advantage $\delta'$ in the indistinguishability security game, a contradiction. 
        \end{proof}
        \begin{claim}\label{claim:cr-ute-nce2}
            Suppose $\Pi_\NCE$ satisfies non-committing security for QPT adversaries. Then, for all $i\in[Q]$, there exists a negligible function $\negl(\cdot)$ such that for all $\lam\in\N$, we have
            \[
            \abs{\Pr[\hyb''_{i-1}(\A,\B)=1]-\Pr[\hyb_{i}(\A,\B)=1]}=\negl(\lam).
            \]
        \end{claim}
        \begin{proof}
            Analogous to the proof of \cref{claim:cr-ute-nce}.
        \end{proof}
        \noindent Combining \cref{claim:cr-ute-nce,claim:cr-ute-1ute,claim:cr-ute-nce2} proves the lemma by a hybrid argument. 
    \end{proof}
    \noindent The theorem follows by a hybrid argument via \cref{lem:cr-ute}.
    }
\begin{proof}
    \ifelsesubmit{We give the proof in~\cref{sec:cr-ute-secure}.}{\crutesecureproof}
\end{proof}

\begin{remark}[NCE with Quantum Ciphertexts]
    The proof of security is analogous even if the SK-NCER scheme has quantum ciphertexts, so long as the state produced by $\Fake$ is classical. 
\end{remark}

\begin{remark}[Second-Stage Encryption Queries]\label{rem:second-stage-query}
    With separate encryption and decryption keys, an adversary obtaining the decryption key in the second-stage cannot run encryption. Thus, it is natural to consider a notion where the second-stage adversary has access to an encryption oracle. 
    The above proof can be easily generalized to handle this, since SK-NCER allows the adversary to make encryption queries. 
\end{remark}

\begin{remark}[Generalizing to PKE and ABE]
    The above result generalizes to the setting of public-key encryption (PKE) and attribute-based encryption (ABE) by replacing SK-NCER with non-committing PKE and non-committing ABE, which can be constructed from PKE and indistinguishability obfuscation, respectively (see \cite{AC:HMNY21}). 
\end{remark}

\subsection{Impossibility of Hyper-Efficient Shadow Tomography}
\label{sec:hest}
In this section, we show an efficient attack on any collusion-resistant UTE that succeeds with overwhelming probability from hyper-efficient shadow tomography (HEST). In particular, this implies the impossibility of HEST for general mixed states assuming only pseudorandom states. 

\begin{theorem}[HEST Breaks $t$-Copy Secure UTE]\label{thm:hest-attack}
    Suppose there exists an algorithm $\cT$ that satisfies \cref{def:hest} for all mixed quantum states with number of copies $k=k(n,\log M,1/\eps,\log(1/\delta))$ for polynomial $k$. 
    Then, there does not exist $k$-copy secure untelegraphable encryption that also satisfies correctness. 
\end{theorem}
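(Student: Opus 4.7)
The plan is to use $\cT$ to build a telegraphing adversary $(\A, \B)$ against the indistinguishability security of any correct $k$-copy secure UTE scheme. Fix such a candidate $\Pi_\UTE = (\Gen, \Enc, \Dec)$; without loss of generality, assume $\cM_\lam = \zo{}$ (otherwise, identify any two distinct messages with $0$ and $1$). Let $\ell = \ell(\lam) = \poly(\lam)$ be the decryption key length and $n = n(\lam) = \poly(\lam)$ the ciphertext qubit count, both polynomial because $\Gen$ and $\Enc$ are QPT. Define the uniform quantum circuit family $E$ that on input $(i, \rho)$ with $i \in [M]$, $M = 2^\ell$, parses $i$ as a key $\dk_i$ and outputs $\Dec(\dk_i, \rho) \in \zo{}$. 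Fix $\eps = 1/4$ and $\delta = 2^{-\lam}$, so that $k = k(n, \log M, 4, \lam)$ remains $\poly(\lam)$.

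The first-stage adversary $\A$ submits $(m_0, m_1) = (0, 1)$. On receiving $\ket{\ct}^{\otimes k}$ with $\ket{\ct} \gets \Enc(\sk, m_b)$, it runs $\cT(E, \ket{\ct}^{\otimes k})$ to obtain a classical description of a quantum circuit $C$, and outputs $\st \defeq C$. The second-stage adversary $\B$, on input $(C, \sk)$, computes the index $i^\ast$ of $\sk$ in the key enumeration, evaluates $C(i^\ast)$ by simulating the quantum circuit on the classical input, and outputs $1$ if the result is $\ge 1/2$ and $0$ otherwise.

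For the analysis, correctness of $\Pi_\UTE$ gives $\Pr[\Dec(\sk, \ket{\ct}) = m_b] \ge 1 - \negl(\lam)$, hence $\abs{\Pr[E(i^\ast, \ket{\ct}) = 1] - m_b} \le \negl(\lam)$. By the HEST guarantee, with probability at least $1 - 2^{-\lam}$ we have $\abs{C(i^\ast) - \Pr[E(i^\ast, \ket{\ct}) = 1]} \le 1/4$. A union bound and the triangle inequality give $\abs{C(i^\ast) - m_b} < 1/2$ with overwhelming probability, so thresholding at $1/2$ recovers $b$. Both $\A$ and $\B$ are QPT since $\cT$ is efficient and $C$ has polynomial size, yielding a contradiction to $k$-copy indistinguishability security.

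The main subtlety to verify is that all four arguments to $k(\cdot)$ remain polynomial in $\lam$: efficiency of the UTE scheme forces $n, \log M = \poly(\lam)$, while choosing $\eps = 1/4$ constant and $\delta = 2^{-\lam}$ gives $1/\eps = 4$ and $\log(1/\delta) = \lam$, so $k = \poly(\lam)$ and the attack runs in polynomial time with a polynomial number of ciphertext copies. No other obstacle arises, since the HEST guarantee is applied to the mixed state $\rho = \Exp_{\sk, \ket{\ct}}[\ketbra{\ct}{\ct}]$ induced by the UTE encryption process, and the assumption covers all mixed states.
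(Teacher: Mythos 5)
Your proposal is correct and follows essentially the same route as the paper's proof: define $E$ as the decryption circuit with the key as index, fix a constant $\eps<1/2$ and $\delta=2^{-\lam}$ so that $k=\poly(\lam)$, have $\A$ output the classical circuit description produced by $\cT$, and have $\B$ threshold $C(\dk)$ at $1/2$, with correctness of the scheme guaranteeing the gap. The only nitpick is your closing remark: in the $k$-copy game the adversary holds $k$ copies of the \emph{per-key} mixed state $\Enc(\sk,m_b)$ (the correct view is to condition on $\sk$ and apply the HEST guarantee to that state), not $k$ copies of the key-averaged state $\Exp_{\sk}[\,\cdot\,]$; this does not affect the argument since the guarantee holds for all mixed states.
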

\newcommand{\hestattackproof}{
    Let $\Pi_\UTE=(\Gen,\Enc,\Dec)$ with $\poly(\lam)$ be a candidate correct UTE scheme with message space $\zo{}$ (without loss of generality) and decryption key space $\cK=\setbk{\cK_\lam}_{\lam\in\N}$. 
    Let $n=n(\lam)$ be the maximum size of a ciphertext, $M=\abs{\cK_\lam}$, $\eps<1/2$ be a constant, and $\delta=2^{-\lam}$.
    We define the quantum circuit $E_\lam(i,\rho)$ to output $\Dec(\sk,\ketct)\in\zo{}$ where the input is parsed as $i=\dk$ and $\rho=\ketct$.
    We now construct an adversary $(\A,\B)$ that breaks $k$-copy security of $\Pi_\UTE$:
    \begin{enumerate}
        \item On input $\setbk{\ket{\ct_i}}_{i\in[k]}$, adversary $\A$ runs $\cT(E_\lam,\setbk{\ket{\ct_i}}_{i\in[k]})$ to get a classical description of a quantum circuit $C$, which $\A$ outputs as its classical state $\st$.
        \item On input $\dk$ and $\st=C$, adversary $\B$ outputs 1 if $C(\dk)>1/2$ and 0 otherwise.
    \end{enumerate}
    By definition of UTE, $n=\poly(\lam)$ and $\log M=\poly(\lam)$. By the choice of $\delta$, $\log(1/\delta)=\lam$ and $1/\eps=3$. Thus, $\cT$ runs in QPT, which implies $C$ is polynomial size, and thus $(\A,\B)$ runs in QPT. Since $\Pr[\abs{C(i)-\Pr[E(i,\rho)=1]}\le\eps]\ge 1-\delta$, by the setting of $\eps,\delta$, we have \[\Pr[\abs{C(i)-\Pr[E(i,\rho)=1]}<1/2]\ge 1-2^{-\lam}.\]
    By correctness of UTE, we have $\Pr[E(i,\rho)=1]=\negl(\lam)$ when $\setbk{\ket{\ct_i}}_{i\in[k]}$ are encryptions of 0 and $\Pr[E(i,\rho)=1]\ge1-\negl(\lam)$ when $\setbk{\ket{\ct_i}}_{i\in[k]}$ are encryptions of 1. This implies $C(\dk)>1/2$ for encryptions of 1 and $C(\dk)<1/2$ for encryptions of 0 with probability $1-2^{-\lam}$. Thus, $(\A,\B)$ wins the $k$-copy security game with overwhelming probability. }
\begin{proof}
    \ifelsesubmit{We give the proof in~\cref{sec:hest-attack}.}{\hestattackproof}
\end{proof}

\begin{corollary}[Impossibility of HEST from SK-NCER]
\label{cor:nce-hest}
    Assuming the existence of SK-NCER (\cref{def:nce}), hyper-efficient shadow tomography for all mixed quantum states does not exist. 
\end{corollary}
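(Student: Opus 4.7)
The plan is to assemble the pieces already established in this section into a direct contrapositive argument. Assuming SK-NCER exists, I first invoke Corollary~\ref{cor:ow-utske} to get a statistical one-way secure UTSKE scheme unconditionally. Feeding this into Construction~\ref{cons:OT-UTE} and applying Theorem~\ref{thm:ot-ute-secure} produces a one-time UTSKE with (statistical) indistinguishability security. Plugging the latter together with the assumed SK-NCER into Construction~\ref{cons:cr-ute} and applying Theorem~\ref{thm:cr-ute-secure} yields a collusion-resistant UTE scheme $\Pi_\UTE$ secure against QPT adversaries.

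Next I would observe that collusion-resistant security subsumes $k$-copy security (\cref{def:cr-sec}) for every polynomial $k = k(\lam)$: a $k$-copy adversary is merely a collusion-resistant adversary that issues $k$ identical message-pair queries. The key point to record explicitly is that in Construction~\ref{cons:cr-ute} the ciphertext size $n(\lam)$ and the decryption-key space size $M(\lam)=|\cK_\lam|$ are functions of $\lam$ and the underlying NCE/one-time UTE parameters only, and do not grow with the number of queries.

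Now I would suppose for contradiction that a HEST procedure $\cT$ (as in \cref{def:hest}) exists for all mixed states, with copy count $k = k(n,\log M, 1/\eps,\log(1/\delta))$ for some polynomial $k$. Instantiating with the parameters of $\Pi_\UTE$ exactly as in the proof of \cref{thm:hest-attack} (setting $\eps<1/2$ constant and $\delta = 2^{-\lam}$, with $n=n(\lam)$ and $\log M = \poly(\lam)$), the value $k = k(\lam)$ is a fixed polynomial in the security parameter alone. Therefore $\Pi_\UTE$ is in particular $k$-copy secure for this $k$, contradicting \cref{thm:hest-attack}.

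The main obstacle is really just the succinctness check: we must close the circularity between $k$ (how many copies HEST consumes) and the scheme's parameters, and this only works because the sizes in Construction~\ref{cons:cr-ute} are independent of the query bound. This is immediate by inspection in the SK-NCER-based construction, but it is precisely the subtle point that motivates the more careful parameter analysis later in \cref{sec:qske} when weakening the assumption to PRS generators.
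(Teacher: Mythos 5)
Your proposal is correct and takes essentially the same route as the paper: instantiate the collusion-resistant UTE of \cref{cons:cr-ute} from the assumed SK-NCER together with the information-theoretic one-time scheme (\cref{cor:ow-utske}, \cref{cons:OT-UTE}, \cref{thm:ot-ute-secure}), observe that collusion resistance (with correctness) implies $k$-copy security for any polynomial $k$, and contradict \cref{thm:hest-attack}. The succinctness check you highlight is automatic in this unbounded setting, since the scheme's key generation and ciphertexts never depend on a query bound; it only becomes the delicate point in the bounded-query, PRS-based argument of \cref{sec:qske}.
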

\begin{proof}
    Follows immediately by \cref{thm:cr-ute-correct,thm:cr-ute-secure,thm:hest-attack} and the fact that collusion-resistant security implies $t$-copy security (see \cref{def:cr-sec}).
\end{proof}

\begin{corollary}[Impossiblity of HEST from OWFs]\label{cor:owf-hest}
    Assuming the existence of one-way functions, hyper-efficient shadow tomography for all mixed quantum states does not exist.
\end{corollary}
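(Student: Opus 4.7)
The plan is to chain together the existing building blocks, reducing hyper-efficient shadow tomography to the existence of SK-NCER, which in turn follows from one-way functions via CPA-secure secret-key encryption. The core impossibility result is already captured in \cref{cor:nce-hest}, so the remaining work is just to exhibit an SK-NCER scheme whose existence is implied by OWFs.

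First, I would invoke the standard classical cryptography result that CPA-secure secret-key encryption (\cref{def:ske}) can be constructed from any one-way function. Concretely, OWFs imply pseudorandom generators (Håstad--Impagliazzo--Levin--Luby), which imply pseudorandom functions (Goldreich--Goldwasser--Micali), and a PRF immediately gives a CPA-secure SKE scheme by encrypting $m$ as $(r, \prf_k(r) \xor m)$ for random $r$. Second, I would apply \cref{thm:cpa-to-nce}, which states that CPA-secure SKE implies SK-NCER, to obtain an SK-NCER scheme from the CPA-secure SKE above.

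Having established that OWFs imply SK-NCER, the corollary is then an immediate consequence of \cref{cor:nce-hest}, which rules out hyper-efficient shadow tomography for general mixed states under the assumption that SK-NCER exists. The contrapositive reads: if a HEST procedure existed, it would contradict $k$-copy security of the collusion-resistant UTE scheme built via \cref{cons:cr-ute} from SK-NCER and the statistically one-time indistinguishability-secure UTE of \cref{cons:OT-UTE} (which itself uses only the information-theoretic one-way UTE from \cref{cor:ow-utske} and a universal hash family, and therefore requires no additional assumption).

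There is no substantive obstacle here, as each link of the chain has already been established in the excerpt or is folklore classical cryptography; the only care needed is to verify that the underlying one-time UTE from \cref{cons:OT-UTE} and the SK-NCER scheme combine compatibly in \cref{cons:cr-ute} without introducing any additional assumption beyond OWFs, which follows by inspection since $\Pi_\OWUTE$ and the universal hash family used in \cref{cons:OT-UTE} are both unconditional.
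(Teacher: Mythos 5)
Your proposal is correct and matches the paper's proof, which likewise concludes immediately from \cref{thm:cpa-to-nce} and \cref{cor:nce-hest}, with the one-way-functions-to-CPA-secure-SKE step treated as standard. You simply spell out that classical step (PRG/PRF construction) more explicitly than the paper does.
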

\begin{proof}
    Follows immediately by \cref{thm:cpa-to-nce,cor:nce-hest}. 
\end{proof}

\subsection{Impossibility of HEST from Pseudorandom State Generators}
\label{sec:qske}
In this section, we show how to weaken the assumption needed to assert the impossibility of hyper-efficient shadow tomography to just pseudorandom state generators, which are believed to be possibly weaker than one-way functions~\cite{Kre21}. 
Specifically, we show that the attack in \cref{thm:hest-attack} will still work with similar parameters on a bounded collusion-resistant UTE scheme, so long as the sizes of the ciphertext and decryption key is \textit{polylogarithmic} in the collusion bound $q$. 
To achieve a bounded collusion-resistant UTE scheme with the required succinctness property, we first construct query-bounded CPA-secure SKE with quantum ciphertexts (QSKE) from a PRFS generator with a polynomial sized domain (which is implied by a PRS generator). 
Then, as a corollary, we get query-bounded SK-NCER with quantum ciphertexts and bounded collusion-resistant UTE from PRS generators, where the succinctness properties will be retained. 

\paragraph{Query-Bounded CPA-Secure QSKE.} We start by formally constructing a query-bounded CPA-secure secret-key encryption with quantum ciphertexts from pseudo-random function-like state (PRFS) generators (\cref{def:prfs}). We show correctness, security, and analyze the succinctness of the ciphertext and decryption key. 

\begin{construction}[Query-Bounded CPA-Secure QSKE]\label{cons:qske}
    Let $\lam$ be a security parameter and $q$ be a query bound. 
    Let $G=G_{\lam,q}$ be a PRFS generator with input length $d=d(\lam,q)$ and output length $n=n(\lam)$. 
    We construct our QSKE scheme $\Pi_{\SKE}=(\Gen,\Enc,\Dec)$ as follows:
    \begin{itemize}
        \item $\Gen(1^\lam,1^q)$: On input the security parameter $\lam$, the key generation algorithm samples $k_1,\dots,k_\lam\getsr\zo{\lam}$ and outputs $\sk=(k_1,\dots,k_\lam)$.
        
        \item $\Enc(\sk,m)$: On input the secret key $\sk=(k_1,\dots,k_\lam)$ and a message $m\in\zo{}$, the encryption algorithm samples $s_1,\cdots,s_\lam$ uniformly from $\zo{}$ such that $s_1\xor\cdots\xor s_\lam=m$ and samples $r_1,\dots,r_\lam\getsr\zo{d-1}$. Compute $\rho_i\gets G(k_i,(r_i,s_i))$ for $i\in[\lam]$ and output $\ketct=(\setbk{r_i,\rho_i}_{i\in[\lam]})$. 
        
        \item $\Dec(\sk, \ketct)$: On input the secret key $\sk=(k_1,\dots,k_\lam)$ and a quantum ciphertext $\ketct=(\setbk{r_i,\rho_i}_{i\in[\lam]})$, the decryption algorithm runs $\test(k_i,(r_i,0),\rho_i)$ to get $s_i$ for $i\in[\lam]$. It outputs $m=s_1\xor\cdots\xor s_\lam$.
    \end{itemize}
\end{construction}

\begin{theorem}[Correctness]\label{thm:qske-correct}
    If $n=\omega(\log\lam)$, then \cref{cons:qske} is correct. 
\end{theorem}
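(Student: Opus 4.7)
The plan is to reduce correctness of the full encryption scheme to correctness of the PRFS test algorithm applied coordinate-by-coordinate. Concretely, I would fix any $\lam \in \N$, $q = \poly(\lam)$, and message $m \in \zo{}$, and trace a single honest execution: after sampling $\sk = (k_1,\ldots,k_\lam) \gets \Gen(1^\lam,1^q)$, shares $(s_1,\ldots,s_\lam) \in \zo{\lam}$ with $s_1 \xor \cdots \xor s_\lam = m$, and strings $r_1,\ldots,r_\lam \getsr \zo{d-1}$, the ciphertext consists of the pairs $(r_i, \rho_i)$ with $\rho_i = G(k_i,(r_i,s_i))$, while decryption applies $\test(k_i,(r_i,0),\rho_i)$ to each coordinate.

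The central observation is the test-correctness property recorded in~\cref{def:prfs}: for any fixed $i$ and any fixed values of $r_i, s_i$, the algorithm $\test(k_i,(r_i,0),G(k_i,(r_i,s_i)))$ correctly decides whether $(r_i,s_i) = (r_i,0)$ with error at most $\negl(\lam)$, so long as the PRFS output length satisfies $n = \omega(\log\lam)$; this is precisely the hypothesis of the theorem. In particular, on this ``good'' event, the $i$-th invocation of $\test$ recovers the indicator of $s_i = 0$, from which $s_i$ itself is determined.

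Next I would apply a union bound over the $\lam$ independent invocations of $\test$: the probability that every share is recovered correctly is at least $1 - \lam \cdot \negl(\lam)$, which is still $1 - \negl(\lam)$. On the intersection of these events, the decryptor computes $s_1 \xor \cdots \xor s_\lam$, which by the encryption-time constraint equals $m$. Hence $\Dec(\sk,\Enc(\sk,m)) = m$ with overwhelming probability, establishing correctness.

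I do not anticipate any significant obstacle here, since the argument is essentially a black-box use of the test-correctness guarantee combined with a union bound; the only subtlety is confirming that the hypothesis $n = \omega(\log\lam)$ is exactly what is required to invoke that guarantee, and that $\lam$ applications of a negligible-error test remain negligibly bad in aggregate.
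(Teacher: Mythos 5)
Your proposal is correct and follows essentially the same route as the paper's proof: invoke the $\test$ guarantee from \cref{def:prfs} (which is exactly where the hypothesis $n=\omega(\log\lam)$ enters) to recover each $s_i$ with negligible error, take a union bound over the $\lam$ coordinates, and conclude via the XOR constraint $s_1\xor\cdots\xor s_\lam=m$. No gaps to flag.
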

\newcommand{\qskecorrectproof}{
    Fix $\lam\in\N$ and $m\in\zo{}$. 
    Let $\sk=(k_1,\dots,k_\lam)\gets\Gen(1^\lam,1^q)$ and \[\ketct=(\setbk{r_i,\rho_i}_{i\in[\lam]})\gets\Enc(\sk,m).\] 
    We consider the output of $\Dec(\sk, \ketct)$. 
    By definition of $\test$ (see \cref{def:prfs}), the output of $\test(k_i,(r_i,0),\rho_i)$ determines if $s_i=0$ up to negligible error, meaning the derived $s_i$ is correct with overwhelming probability. 
    By the distribution of $s_1,\cdots,s_\lam$, we have $m=s_1\xor\cdots\xor s_\lam$ if the $s_i$ are correct.  
    By a union bound, this holds with overwhelming probability, as desired. }
\begin{proof}
    \ifelsesubmit{We give the proof in~\cref{sec:qske-correct}.}{\qskecorrectproof}
\end{proof}

\begin{theorem}[Query-Bounded CPA Security]\label{thm:qske-secure}
    Let $\lam$ be a security parameter and $q=q(\lam)$ be any polynomial. 
    Suppose $d\geq 2\log q + 1$ and $G$ is a selectively secure PRFS generator.
    Then, \cref{cons:qske} is CPA-secure for up to $q$ queries. 
\end{theorem}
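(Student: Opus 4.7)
The plan is to prove CPA security via a hybrid argument that invokes the selective PRFS security of $G$ to replace generator outputs with Haar-random states. The core idea, as sketched in the technical overview, is that the uniform sampling of the $r_i$ strings ensures that at some index $i^\star$ the $q$ queried $r$-values are all distinct, so that the inputs to $G(k_{i^\star}, \cdot)$ are distinct and selective PRFS security applies cleanly.

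First, for each $i \in [\lam]$, let $E_i$ be the event that across all $q$ encryption queries, the values $r_i^{(1)}, \ldots, r_i^{(q)}$ are pairwise distinct. Since the $r_i^{(j)}$ are sampled uniformly and independently from $\zo{d-1}$ and $2^{d-1} \geq q^2$ by the assumption $d \geq 2\log q + 1$, a birthday bound gives $\Pr[\lnot E_i] \leq \binom{q}{2}/2^{d-1} \leq 1/2$. The $\lam$ events $E_i$ depend on disjoint blocks of randomness, hence are independent, so $\Pr[\bigcap_{i \in [\lam]} \lnot E_i] \leq 2^{-\lam}$, meaning some $E_i$ holds with overwhelming probability.

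Next, define hybrids $\hybb{0}, \hybb{1}, \ldots, \hybb{\lam}$ parameterized by the challenge bit $b$, where $\hybb{0}$ is the real CPA game with bit $b$ and $\hybb{i}$ modifies $\hybb{i-1}$ so that, conditioned on $E_i$ holding, all $q$ PRFS outputs at index $i$ are replaced with fresh Haar-random $n$-qubit states. I claim $\hybb{i-1} \cind \hybb{i}$ via a reduction to selective PRFS security. The reduction first samples $r_i^{(1)}, \ldots, r_i^{(q)} \getsr \zo{d-1}$; if $\lnot E_i$ occurs, both hybrids are identical and the reduction simulates either one honestly. Otherwise, it samples $s_i^{(1)}, \ldots, s_i^{(q)} \getsr \zo{}$ uniformly and non-adaptively commits the distinct inputs $x_j := (r_i^{(j)}, s_i^{(j)})$ to the PRFS challenger, receiving outputs $\{\rho_j\}_{j \in [q]}$ that are either $G(k_i, x_j)$ or Haar-random. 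It then samples all other PRFS keys $k_{i'}$ itself, pre-sampling Haar states at indices $i' < i$ on which $E_{i'}$ holds. Upon each adversary query $(m_0^{(j)}, m_1^{(j)})$, it samples the shares $s_{i'}^{(j)}$ for $i' \neq i$ uniformly subject to $\bigoplus_{i' \neq i} s_{i'}^{(j)} = m_b^{(j)} \oplus s_i^{(j)}$, which reproduces the correct joint distribution of shares conditioned on the pre-sampled $s_i^{(j)}$, and assembles the ciphertext using $\rho_j$ at index $i$. Since selective PRFS security holds for every fixed family of distinct inputs, it is preserved under randomization of those inputs, yielding $\hybb{i-1} \cind \hybb{i}$.

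Finally, in $\hybb{\lam}$, condition on the overwhelming-probability event that some $E_{i^\star}$ holds. The $q$ outputs at index $i^\star$ are then Haar-random and information-theoretically independent of $s_{i^\star}^{(j)}$; re-parameterizing the sharing so that $s_{i^\star}^{(j)}$ is drawn uniformly first and the remaining shares are sampled subject to $\bigoplus_{i'} s_{i'}^{(j)} = m_b^{(j)}$, marginalization over the uniform and hidden $s_{i^\star}^{(j)}$ shows that the adversary's view of all shares is uniform and independent of $m_b^{(j)}$. Therefore $\hybb{\lam}$ with $b = 0$ and with $b = 1$ are statistically close, and chaining all $\lam + 1$ indistinguishabilities yields CPA security. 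The main obstacle is reconciling the adaptive nature of CPA queries with the non-adaptive input requirement of selective PRFS security; this is resolved by the re-parameterization trick that pre-samples $s_i^{(j)}$ at the challenged index and adaptively fixes up the remaining shares after each query.
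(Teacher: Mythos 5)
Your proposal is correct and follows essentially the same route as the paper: rule out the all-indices-collide event via a birthday bound, reduce to selective PRFS security at a collision-free index by pre-sampling the $s$-shares there and adaptively fixing up the remaining shares, and finish with the observation that once those outputs are Haar-random the remaining shares one-time-pad the message. The only deviation is that you switch \emph{every} collision-free index across $\lam$ hybrids instead of only the first such index $i^\star$ in a single step; this is fine but forces your reduction to simulate already-switched Haar-random states at earlier indices, which is not exactly efficient and needs the extra (easy) remark that single copies of Haar-random states are perfectly simulated by maximally mixed states -- a subtlety the paper's single-replacement hybrid avoids altogether.
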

\newcommand{\qskesecureproof}{
    We prove the theorem by showing encryptions of 0 and 1 are computationally indistinguishable. 
    Suppose there exists a QPT adversary $\A$ that has non-negligble advantage $\delta$ in the query-bounded CPA security game. We define a sequence of hybrid experiments:
    \begin{itemize}
        \item $\hybb{0}$: This is the CPA security game with challenge bit $b$. In particular:
        \begin{itemize}
            \item At the beginning of the game, the challenger samples $k_1,\dots,k_\lam\getsr\zo{\lam}$.
            \item When adversary $\A$ makes the $\ord{j}$ encryption query $m^{(j)}_0,m^{(j)}_1\in\zo{}$ for some $j\in[q]$, the challenger samples $s^{(j)}_{1},\dots,s^{(j)}_{\lam}$ uniformly from $\zo{}$ such that $s^{(j)}_1\xor\cdots\xor s^{(j)}_\lam=m^{(j)}_b$ and samples $r^{(j)}_1,\dots,r^{(j)}_\lam\getsr\zo{d-1}$. 
            The challenger computes $\rho^{(j)}_i\gets G(k_i,(r^{(j)}_i,s^{(j)}_i))$ for $i\in[\lam]$ and gives $\ket{\ct^{(j)}}=(\setbk{r^{(j)}_i,\rho^{(j)}_i}_{i\in[\lam]})$ to $\A$. 
            \item After $q$ queries, adversary $\A$ outputs a bit $b'\in\zo{}$, which is the output of the experiment.
        \end{itemize}
        
        \item $\hybb{1}$: Same as $\hybb{0}$ except the challenger now samples $r^{(j)}_i\getsr\zo{d-1}$ for $i\in[\lam],j\in[q]$ at the start of the game and \diff{outputs $\bot$ if for all $i\in[\lam]$, there exists $j,j'\in[q]$ such that $j\neq j'$ and $r^{(j)}_i=r^{(j')}_i$}. 
        
        \item $\hybb{2}$: Same as $\hybb{1}$ except the challenger samples $\diff{\rho^{(j)}_{\istar}\gets\mu_n}$ for all $j\in[q]$, where $\istar\in[\lam]$ is the first index such that $r^{(1)}_{\istar},\dots, r^{(q)}_{\istar}$ are distinct. 
    \end{itemize}
    We write $\hybb{i}(\A)$ to denote the output distribution of
    an execution of $\hybb{i}$ with adversary $\A$. We now argue that each adjacent pair of distributions are indistinguishable.
    \begin{lemma}\label{lem:qske0-1}
        Suppose $d\geq 2\log q + 1$. Then, there exists a negligible function $\negl(\cdot)$ such that for all $\lam\in\N$, we have
        \[
        \abs{\Pr[\hybb{0}(\A)=1]-\Pr[\hybb{1}(\A)=1]}=\negl(\lam).
        \]
    \end{lemma}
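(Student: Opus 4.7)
The only difference between $\hybb{0}$ and $\hybb{1}$ is that $\hybb{1}$ aborts (outputs $\bot$) whenever the ``bad event'' $E$ occurs, where $E$ is defined as: for every row $i\in[\lam]$, there exist distinct $j,j'\in[q]$ with $r^{(j)}_i=r^{(j')}_i$. Since the ciphertexts given to $\A$ in both hybrids are generated identically up to this abort, we have $\abs{\Pr[\hybb{0}(\A)=1]-\Pr[\hybb{1}(\A)=1]}\le \Pr[E]$, so it suffices to show $\Pr[E]\le\negl(\lam)$.

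The plan is to bound $\Pr[E]$ by combining a birthday-style union bound within each row with independence across rows. For each fixed $i\in[\lam]$, let $E_i$ be the event that there exist distinct $j,j'\in[q]$ with $r^{(j)}_i=r^{(j')}_i$. Since each $r^{(j)}_i$ is drawn uniformly and independently from $\zo{d-1}$, a union bound over the $\binom{q}{2}$ pairs gives
\[
\Pr[E_i]\;\le\;\binom{q}{2}\cdot\frac{1}{2^{d-1}}\;\le\;\frac{q^2}{2^{d}}.
\]
Using the hypothesis $d\ge 2\log q + 1$, we have $2^{d}\ge 2q^2$, so $\Pr[E_i]\le 1/2$.

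Now I use the fact that the random strings $r^{(j)}_i$ are sampled independently across different values of $i$. Hence the events $E_1,\dots,E_\lam$ are mutually independent, and
\[
\Pr[E]\;=\;\Pr\!\left[\bigcap_{i\in[\lam]} E_i\right]\;=\;\prod_{i\in[\lam]}\Pr[E_i]\;\le\;2^{-\lam},
\]
which is negligible in $\lam$. Combining this with the observation above yields the desired statistical (and hence computational) indistinguishability. The only subtle point is verifying independence of the $E_i$, which holds because the sampling of the $r^{(j)}_i$ is fully independent across both $i$ and $j$; no heavy machinery is required beyond this and the basic birthday bound.
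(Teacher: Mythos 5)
Your proof is correct and follows essentially the same route as the paper: bound the abort probability by showing each per-row collision event $E_i$ has probability bounded by a constant less than $1$, then use independence across the $\lam$ rows to get an exponentially small bound. The only difference is cosmetic — you use the pairwise birthday union bound to get $\Pr[E_i]\le 1/2$ and hence $2^{-\lam}$, whereas the paper bounds $\Pr[E_i]\le 1-1/e$ and concludes $(1-1/e)^\lam$; both are negligible.
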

    \begin{proof}
        First, note that sampling $r^{(j)}_i$ at the start is purely syntactic, since $r^{(j)}_i$ is independently sampled from each $m^{(j)}_b$.
        We show the lemma by showing that the probability that the challenger outputs $\bot$ at the start of the game is negligible, as this is the only other difference between the experiments. 
        Since $d\geq 2\log q + 1$, the set $\zo{d-1}$ has size $2^{d-1}\geq 2^{2\log q}=q^2$. This implies that for each index $i\in[\lam]$, the probability that there exists $j,j'\in[q]$ such that $j\neq j'$ and $r^{(j)}_i=r^{(j')}_i$ is at most 
        \[
        1-\parens{1-\frac{q-1}{q^2}}^q \leq 1-1/e.
        \]
        Since $r^{(1)}_{i},\dots, r^{(q)}_{i}$ are sampled independently for each $i\in[\lam]$, the probability of a collision for all $i\in[\lam]$ is at most $(1-1/e)^\lam$, which is negligible.
    \end{proof}

    \begin{lemma}\label{lem:qske1-2}
        Suppose $G$ is a selectively secure PRFS generator. Then, there exists a negligible function $\negl(\cdot)$ such that for all $\lam\in\N$, we have
        \[
        \abs{\Pr[\hybb{1}(\A)=1]-\Pr[\hybb{2}(\A)=1]}=\negl(\lam).
        \]
    \end{lemma}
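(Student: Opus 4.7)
The plan is to prove indistinguishability by a hybrid argument over $\lam$ steps, one for each candidate value of $\istar$, each step reducing to a single invocation of selective PRFS security for one key $k_i$.

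First I would define, for $i \in [\lam+1]$, an intermediate hybrid $\hyb'_i$ identical to $\hybb{1}$ except that whenever the first distinct-$r$ index $\istar$ satisfies $\istar < i$, the challenger samples $\rho^{(j)}_{\istar} \gets \mu_n$ for all $j\in[q]$ instead of evaluating $G(k_{\istar},(r^{(j)}_{\istar},s^{(j)}_{\istar}))$. By inspection $\hyb'_1=\hybb{1}$, and $\hyb'_{\lam+1}=\hybb{2}$ since $\hybb{1}$ already outputs $\bot$ when no such $\istar\le\lam$ exists. It then suffices to bound $\abs{\Pr[\hyb'_i=1]-\Pr[\hyb'_{i+1}=1]}$ by a negligible quantity for each $i \in [\lam]$.

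For the step from $\hyb'_i$ to $\hyb'_{i+1}$, I would observe that the two experiments agree except on the event $\istar=i$, where $\hyb'_i$ uses $G(k_i,(r^{(j)}_i,s^{(j)}_i))$ and $\hyb'_{i+1}$ uses Haar-random $\rho^{(j)}_i$. The reduction $\cR$ to selective PRFS security samples all $r^{(j)}_{i'}\getsr\zo{d-1}$ at the outset, determines $\istar$, and aborts (outputting a uniform guess) if $\istar\neq i$. When $\istar = i$, $\cR$ additionally samples each $s^{(j)}_i\getsr\zo{}$ at the start---exploiting the marginal uniformity of each share in the XOR sharing---and submits the pairs $\setbk{(r^{(j)}_i,s^{(j)}_i)}_{j\in[q]}$, which are distinct because their first components are, to the PRFS challenger to receive back states $\setbk{\sigma^{(j)}}_{j\in[q]}$. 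When the adversary later issues a query $(m^{(j)}_0,m^{(j)}_1)$, the reduction samples the remaining shares $\setbk{s^{(j)}_{i'}}_{i'\ne i}$ uniformly subject to $\bigoplus_{i'} s^{(j)}_{i'} = m^{(j)}_b$ and computes the other components of the ciphertext honestly using freshly drawn keys $k_{i'}$ for $i'\neq i$, placing $\sigma^{(j)}$ in the $i$-th slot. Sampling $s^{(j)}_i$ first and the remaining shares afterwards is a faithful re-expression of the joint XOR distribution, so the simulation is perfect conditional on $\istar = i$, and real versus Haar-random PRFS outputs correspond exactly to $\hyb'_i$ versus $\hyb'_{i+1}$ restricted to that event.

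The main obstacle will be that $s^{(j)}_i$ superficially depends on the adaptively chosen $m^{(j)}_b$, which appears to preclude the non-adaptive commitment of inputs demanded by selective PRFS security; the resolution is precisely the marginal-uniformity trick above, which fixes all PRFS inputs before the adversary begins querying. With this in place, summing over the $\lam$ hybrid steps yields the claimed negligible bound.
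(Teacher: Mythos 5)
Your proof is correct, and it rests on exactly the same core trick as the paper's: pre-sampling the shares $s^{(j)}_{\istar}$ uniformly (their true marginal) so that all PRFS inputs $(r^{(j)}_{\istar},s^{(j)}_{\istar})$ can be committed before the adversary's adaptive queries, with distinctness guaranteed by the non-colliding $r$'s surviving from $\hybb{1}$. The only structural difference is your extra hybrid layer over the candidate value $i$ of $\istar$ together with an abort-and-guess when $\istar\neq i$: the paper instead gives a single reduction which, having sampled all the $r^{(j)}_{i'}$ itself, already knows $\istar$ and simply lets the PRFS challenger's key play the role of $k_{\istar}$ (sampling the other $\lam-1$ keys itself), so no guessing is needed. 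Your route is perfectly valid and yields the same negligible bound after summing over the $\lam$ steps; the paper's is marginally tighter and shorter because it avoids the conditioning on $\istar=i$, but nothing substantive separates the two arguments.
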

    \begin{proof}
        Suppose $\A$ distinguishes the experiments with non-negligible probability $\delta'$. We use $\A$ to construct an adversary $\B$ that breaks PRFS security as follows:
        \begin{enumerate}
            \item Adversary $\B$ starts by sampling $r^{(j)}_i\getsr\zo{d-1}$ for $i\in[\lam],j\in[q]$ and checking the abort condition. 
            If it does not abort, it finds the first index $\istar\in[\lam]$ such that $r^{(1)}_{\istar},\dots, r^{(q)}_{\istar}$ are distinct.

            \item Adversary $\B$ samples $s^{(j)}_{\istar}\getsr\zo{}$ for $j\in[q]$ and submits the queries $\setbk{(r^{(j)}_{\istar},s^{(j)}_{\istar})}_{j\in[q]}$ to the PRFS challenger to get back $\rho^{(j)}_{\istar}$ for $j\in[q]$. 

            \item Adversary $\B$ samples $k_i\getsr\zo{\lam}$ for $i\neq\istar$. When advervsary $\A$ makes its $\ord{j}$ encryption query $m^{(j)}_0,m^{(j)}_1\in\zo{}$, the $\B$ samples $s^{(j)}_{1},\dots,s^{(j)}_{\istar-1},s^{(j)}_{\istar+1},\dots,s^{(j)}_{\lam}$ uniformly from $\zo{}$ such that \[s^{(j)}_1\xor\cdots s^{(j)}_{\istar-1}\xor s^{(j)}_{\istar+1}\cdots\xor s^{(j)}_\lam=m^{(j)}_b\xor s^{(j)}_{\istar}.\]
            Adversary $\B$ computes $\rho^{(j)}_i\gets G(k_i,(r^{(j)}_i,s^{(j)}_i))$ for $i\in[\lam]\setminus\setbk{\istar}$ and gives $\ket{\ct^{(j)}}=(\setbk{r^{(j)}_i,\rho^{(j)}_i}_{i\in[\lam]})$ to $\A$.

            \item At the end of the game, adversary $\B$ outputs whatever $\A$ outputs. 
        \end{enumerate}
        Clearly adversary $\B$ is QPT if $\A$ is. 
        If $\rho^{(j)}_{\istar}$ are Haar-random states for $j\in[q]$, adversary $\B$ simulates $\hybb{2}(\A)$. 
        If $\rho^{(j)}_{\istar}$ are PRFS generator outputs for $j\in[q]$, adversary $\B$ simulates $\hybb{1}(\A)$.
        Thus, adversary $\B$ has advantage $\delta'$ in the PRFS security game, a contradiction. 
    \end{proof}

    \begin{lemma}\label{lem:qske-final}
        The experiments $\Hyb{2}{(0)}(\A)$ and $\Hyb{2}{(1)}(\A)$ are identically distributed.
    \end{lemma}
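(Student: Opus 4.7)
The plan is to argue that in $\hybb{2}$ the joint distribution of everything the adversary sees is independent of the challenge bit $b$, so the two experiments are literally equal as distributions, not merely indistinguishable. Since there is no reduction to any computational assumption, the proof is purely information-theoretic.

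First, I would isolate the one component of the ciphertext that could leak $s^{(j)}_{\istar}$. By construction of $\hybb{2}$, the state $\rho^{(j)}_{\istar}$ is drawn from the Haar measure $\mu_n$ independently of $s^{(j)}_{\istar}$; the remaining states $\rho^{(j)}_i = G(k_i,(r^{(j)}_i,s^{(j)}_i))$ for $i \neq \istar$ depend only on $\{s^{(j)}_i\}_{i \neq \istar}$. Hence the ciphertext $\ket{\ct^{(j)}}$ is, as a random variable, a function of $\{r^{(j)}_i\}_{i\in[\lam]}$, $\{k_i\}_{i\neq\istar}$, $\{s^{(j)}_i\}_{i\neq\istar}$, and fresh Haar randomness — none of which involves $s^{(j)}_{\istar}$. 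Note also that the index $\istar$ itself is determined solely by $\{r^{(j)}_i\}_{i,j}$, which are sampled independently of $b$.

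Next, I would show that the marginal distribution of $\{s^{(j)}_i\}_{i\neq\istar}$ is uniform over $\zo{\lam-1}$ and independent of $m^{(j)}_b$. Indeed, the shares are sampled uniformly subject to $s^{(j)}_1 \xor \cdots \xor s^{(j)}_\lam = m^{(j)}_b$, and this constraint can be enforced by first sampling $\{s^{(j)}_i\}_{i\neq\istar}$ uniformly and independently, then setting $s^{(j)}_{\istar} = m^{(j)}_b \xor \bigoplus_{i\neq\istar} s^{(j)}_i$. Since $s^{(j)}_{\istar}$ is never used to generate any component of the ciphertext in $\hybb{2}$, the dependence on $m^{(j)}_b$ is entirely absorbed into this unused variable.

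Combining the two observations, each ciphertext $\ket{\ct^{(j)}}$ has a distribution that does not depend on $b$, and the same holds jointly across all $q$ queries since the randomness used for different queries is independent. Therefore $\hybb{(0)}_2(\A)$ and $\hybb{(1)}_2(\A)$ are identically distributed. The only mild subtlety is keeping track of the fact that $\istar$ is a random variable rather than a fixed index, but since it is a deterministic function of the $b$-independent randomness $\{r^{(j)}_i\}_{i,j}$, conditioning on its value does not introduce any dependence on $b$; beyond this, there is no real obstacle.
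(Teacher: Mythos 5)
Your proposal is correct and matches the paper's argument: in $\hybb{2}$ the Haar-random $\rho^{(j)}_{\istar}$ no longer depends on $s^{(j)}_{\istar}$, so the remaining shares $\{s^{(j)}_i\}_{i\neq\istar}$ (and hence the whole view, including the $b$-independent $r$'s and the index $\istar$) are uniform and independent of the challenge bit. You simply spell out in more detail the resampling step (sample $\{s^{(j)}_i\}_{i\neq\istar}$ first, absorb $m^{(j)}_b$ into the unused $s^{(j)}_{\istar}$) that the paper states in one sentence.
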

    \begin{proof}
        Since for all $j\in[q]$ the state $\rho^{(j)}_{\istar}$ no longer depends on $s^{(j)}_{\istar}$, each $s^{(j)}_{i}$ for $i\neq \istar$ is uniform and independent of $b$, meaning the adversary's view in the two experiments is identical. 
    \end{proof}
    \noindent Combining \cref{lem:qske0-1,lem:qske1-2,lem:qske-final}, the theorem follows by a hybrid argument. 
    }
\begin{proof}
    \ifelsesubmit{We give the proof in~\cref{sec:qske-secure}.}{\qskesecureproof}
\end{proof}

\begin{theorem}[Succinct Ciphertexts]\label{thm:qske-succinct}
    Let $\lam\in\N$ be a security parameter and $q\in\N$ be a query bound.
    Then, the size of a ciphertext in \cref{cons:qske} is $\poly(\lam,\log q)$ and the size of the secret key is $\poly(\lam)$. 
\end{theorem}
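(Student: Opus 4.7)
The proof is essentially by inspection of \cref{cons:qske}, so the plan is simply to tally up the sizes of the components of the secret key and ciphertext and verify each one has the claimed asymptotic bound.

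For the secret key, recall that $\Gen(1^\lam, 1^q)$ samples $\sk = (k_1, \dots, k_\lam)$ where each $k_i \getsr \zo{\lam}$ is a PRFS generator key. Thus the total length of $\sk$ is $\lam \cdot \lam = \lam^2$, which is $\poly(\lam)$ and in particular does not depend on $q$. This handles the secret key bound.

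For the ciphertext, recall that $\Enc(\sk, m)$ outputs $\ketct = (\setbk{r_i, \rho_i}_{i \in [\lam]})$, where each $r_i \in \zo{d-1}$ with $d = d(\lam, q)$, and each $\rho_i$ is the output of $G(k_i, (r_i, s_i))$, which is an $n(\lam)$-qubit state. By the setting required for \cref{thm:qske-secure}, we may take $d = 2\log q + 1$, so each $r_i$ has length $2\log q$. Since $G$ is a PRFS generator with output length $n = \poly(\lam)$ (independent of $q$), the total size of the ciphertext is $\lam \cdot (2\log q + n(\lam)) = \poly(\lam, \log q)$, as claimed.

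There is no serious obstacle here: the succinctness is baked into the construction by the choice of a PRFS generator with polynomial-sized domain (allowing $d = O(\log q)$) and output length depending only on $\lam$. The only thing to be careful about is confirming that the PRFS generator instantiated from a PRS generator~\cite{C:JiLiuSon18,C:AnaQiaYue22} can indeed be taken with key length $\lam$ and output length $n(\lam)$ that are both independent of the domain-size parameter $d$; this is standard, as a polynomial-domain PRFS generator on domain $\zo{d}$ can be built generically from a PRS generator by using the input $x$ only to index into independent PRS outputs, with no blow-up in key or output size as a function of $d$.
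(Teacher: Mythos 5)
Your proof is correct and follows essentially the same route as the paper: set $d = 2\log q + 1$ (from the security requirement) and the PRFS output length $n$ to depend only on $\lam$, then tally $\lam$ pairs $(r_i,\rho_i)$ to get $\lam(2\log q + n(\lam)) = \poly(\lam,\log q)$ ciphertext size and $\lam^2$ key size, noting as the paper does that the PRFS key/output lengths are independent of the domain-size parameter. The only cosmetic difference is that the paper explicitly fixes $n=\lam$ (recalling the correctness constraint $n=\omega(\log\lam)$), whereas you leave $n=\poly(\lam)$; this changes nothing.
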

\begin{proof}
    For \cref{thm:qske-correct}, the parameter requirement is $n=\omega(\log\lam)$ and for \cref{thm:qske-secure} the parameter requirement is $d\geq 2\log q + 1$.
    Thus, setting $n=\lam$ and $d=2\log q + 1$ suffices. 
    A ciphertext consists of $\lam$ pairs of $r\in\zo{d-1}$ and $\rho$ which is the output of the PRFS. Thus, the total number of qubits in a ciphertext is $\lam(2\log q+\lam)=2\lam\log q+\lam^2$, as desired.
    Moreover, the key size of a PRFS generator can be made independent of the domain size, so the size of a secret key is $\poly(\lam)$. 
\end{proof}

\begin{corollary}[Succinct Bounded-Query SK-NCER]\label{cor:bounded-sk-ncer}
    Let $\lam\in\N$ be a security parameter and $q(\lam)$ be any polynomial query bound.
    If PRS generators exist, then there is a $q$-bounded-query SK-NCER scheme with $\poly(\lam,\log q)$-sized quantum ciphertexts and $\poly(\lam)$-sized decryption keys. 
\end{corollary}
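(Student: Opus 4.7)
The plan is to obtain the NCER scheme by chaining two results already available in the paper: first build a succinct query-bounded CPA-secure QSKE from a PRS generator, and then lift it to SK-NCER via the bit-by-bit transformation of \cite{C:KNTY19}. First I would apply \cref{thm:qske-secure} with input length $d = 2\log q + 1$ and output length $n = \lam$, which gives a $q$-bounded-query CPA-secure QSKE scheme from any selectively secure PRFS generator with polynomial-sized domain. Since such a PRFS generator is implied by a PRS generator (as discussed around \cref{def:prfs}), this yields a QSKE scheme from PRS generators, and by \cref{thm:qske-succinct} each ciphertext has $\poly(\lam,\log q)$ qubits while each secret key has $\poly(\lam)$ bits.

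Next I would invoke \cref{cor:bounded-cpa-to-nce} on this QSKE scheme to obtain a $q$-bounded-query SK-NCER scheme. The KNTY19 blueprint encrypts each bit of a message by sampling two fresh SKE keys per bit and honestly encrypting under one of them (with the other acting as a fake), so the transformation accesses the underlying SKE only through its $\Gen$, $\Enc$, and $\Dec$ interfaces and never exploits the classical structure of ciphertexts. Consequently the construction and its security reduction extend to the setting of quantum ciphertexts (with the syntactic generalization mentioned in the paragraph on ``Generalizing to quantum ciphertexts''), yielding an NCER ciphertext consisting of $2\ell$ independent QSKE ciphertexts and a decryption key consisting of $\ell$ QSKE keys, where $\ell$ is the NCER message length.

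Plugging in the bounds from \cref{thm:qske-succinct}, for the parameter regime of interest (namely $\ell = \poly(\lam)$, which suffices for all downstream applications such as encrypting a one-time UTE secret key in \cref{cons:cr-ute}) the NCER ciphertext size is $2\ell \cdot \poly(\lam,\log q) = \poly(\lam,\log q)$ and the decryption key size is $\ell \cdot \poly(\lam) = \poly(\lam)$, matching the claim. The only nontrivial step is confirming that the KNTY19 security proof is unaffected by the move to quantum ciphertexts and QPT adversaries; this should follow because the hybrids only swap encrypted messages and invoke CPA security of the underlying SKE as a black box, and \cref{thm:qske-secure} already provides CPA security against QPT adversaries. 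I expect this verification to be the main (though essentially mechanical) obstacle, since it requires formally instantiating the KNTY19 reduction through the quantum-ciphertext generalization of \cref{def:nce}.
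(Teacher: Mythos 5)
Your proposal is correct and follows essentially the same route as the paper, which proves this corollary by combining \cref{thm:qske-correct,thm:qske-secure,thm:qske-succinct} (the succinct query-bounded QSKE from PRFS, hence PRS) with \cref{cor:bounded-cpa-to-nce} (the KNTY19 transformation, with the quantum-ciphertext generalization of \cref{def:nce}). Your size accounting ($2\ell\cdot\poly(\lam,\log q)$ ciphertexts, $\ell\cdot\poly(\lam)$ decryption keys with $\ell=\poly(\lam)$) matches the paper's.
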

\begin{proof}
    Follows by \cref{thm:qske-correct,thm:qske-secure,thm:qske-succinct,cor:bounded-cpa-to-nce}.
\end{proof}

\begin{corollary}[Succinct Bounded Collusion-Resistant UTE]\label{cor:bounded-cr-ute}
    Let $\lam\in\N$ be a security parameter and $q(\lam)$ be any polynomial query bound.
    If PRS generators exist, then there is a $q$-bounded collusion-resistant UTE scheme with $\poly(\lam,\log q)$-sized ciphertexts and $\poly(\lam)$-sized decryption keys.  
\end{corollary}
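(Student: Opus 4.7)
The plan is to chain together the results established earlier in the section, carefully tracking parameter sizes through each transformation. Starting from a PRS generator, I would first invoke the result of Ananth--Qian--Yuen (cited in \cref{def:prfs}) to obtain a selectively secure PRFS generator $G$ with polynomial input length $d$ and output length $n$. Plugging this $G$ into \cref{cons:qske} with parameters $n = \lam$ and $d = 2\log q + 1$ yields, by \cref{thm:qske-correct,thm:qske-secure,thm:qske-succinct}, a $q$-bounded-query CPA-secure QSKE scheme whose ciphertexts have size $\poly(\lam, \log q)$ and whose secret keys have size $\poly(\lam)$.

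Next, I would apply \cref{cor:bounded-sk-ncer} to lift this QSKE to a $q$-bounded-query SK-NCER scheme with quantum ciphertexts. By the construction inherited from \cite{C:KNTY19} (as noted in \cref{cor:bounded-cpa-to-nce} and the subsequent remark that it generalizes to quantum ciphertexts), the NCE ciphertext size is $2\ell \cdot |\ct_{\SKE}|$ and the decryption key size is $\ell \cdot |\sk_{\SKE}|$, where $\ell$ is the length of the NCE message space. Crucially, the $\Fake$ algorithm in this construction outputs only classical state, which matches the requirement in the remark following \cref{thm:cr-ute-secure}.

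Finally, I would instantiate \cref{cons:cr-ute} with the information-theoretic one-time UTE scheme from \cref{thm:ot-ute-secure} (whose secret key has size $\poly(\lam)$, independent of $q$) together with the $q$-bounded SK-NCER above, setting the NCE message space to the one-time UTE key space. Then $\ell = \poly(\lam)$, so the NCE ciphertexts remain $\poly(\lam) \cdot \poly(\lam, \log q) = \poly(\lam, \log q)$ in size and the NCE decryption keys remain $\poly(\lam)$. A resulting UTE ciphertext is the concatenation of a $\poly(\lam)$-sized one-time UTE ciphertext with the NCE ciphertext, for a total of $\poly(\lam, \log q)$ qubits, and the UTE decryption key is just the NCE decryption key, of size $\poly(\lam)$.

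For security, I would verify that the hybrid argument in the proof of \cref{thm:cr-ute-secure} invokes NCE security once per adversarial query (via \cref{claim:cr-ute-nce,claim:cr-ute-nce2}) and invokes one-time UTE security once per query (via \cref{claim:cr-ute-1ute}), so $q$ message queries use NCE security at most $2q$ times. Provided the underlying SK-NCER supports this query budget—which we can guarantee by running the $q' = 2q$-bounded NCE construction—the entire reduction goes through and yields $q$-bounded collusion-resistance. The main subtlety to get right is this bookkeeping of the NCE query budget versus the UTE collusion bound, along with confirming that the succinctness analysis in \cref{thm:qske-succinct} and \cref{cor:bounded-cpa-to-nce} composes cleanly; once this is handled the remaining ingredients are immediate from the already-established theorems.
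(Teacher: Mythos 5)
Your proposal is correct and follows essentially the same route as the paper: instantiate \cref{cons:cr-ute} with the statistically secure one-time UTE and the $q$-bounded SK-NCER of \cref{cor:bounded-sk-ncer} (obtained from the PRFS-based QSKE of \cref{cons:qske}), and observe that the ciphertext is a $\poly(\lam)$-sized one-time UTE ciphertext plus a $\poly(\lam,\log q)$-sized NCE ciphertext while the decryption key is the $\poly(\lam)$-sized NCE key. One small note: your $q'=2q$ adjustment is unnecessary (though harmless), since each hybrid transition is a separate reduction in which the NCE adversary makes at most $Q-1$ encryption queries plus one challenge query, so a $q$-bounded SK-NCER already suffices; the number of times NCE security is invoked across hybrids does not consume the NCE query budget.
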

\begin{proof}
    Follows by considering the analog of \cref{cons:cr-ute} and the proofs of \cref{thm:cr-ute-correct,thm:cr-ute-secure} in the bounded collusion-resistant setting, along with \cref{cor:bounded-sk-ncer}. 
    The size of a ciphertext is the sum of the sizes of a one-time UTE ciphertext and the SK-NCER ciphertext. 
    The size of the one-time UTE ciphertext is $\poly(\lam)$ (since it does not have a bound parameter), and by \cref{cor:bounded-sk-ncer} the size of the SK-NCER ciphertext is $\poly(\lam,\log q)$.
    The size of a decryption key is the same as the underlying SK-NCER scheme, which is $\poly(\lam)$. 
\end{proof}

\paragraph{Impossibility of HEST.} We now show an attack on a succinct bounded $q$-copy secure UTE from hyper-efficient shadow tomography with number of copies $k$. 

\begin{theorem}[HEST Breaks Bounded $q$-Copy Secure UTE]\label{thm:bounded-hest-attack}
    Let $\lam$ be a security parameter.
    Suppose there exists an algorithm $\cT$ that satisfies \cref{def:hest} for all mixed quantum states with number of copies that is upperbounded by $k=k(n,\log M,1/\eps,\log(1/\delta))$ for polynomial $k$. 
    Then, there exists $q=\poly(\lam)$ such that bounded $q$-copy secure untelegraphable encryption with succinct ciphertexts and decryption keys does not exist.
\end{theorem}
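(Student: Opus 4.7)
The plan is to adapt the attack from \cref{thm:hest-attack} to the bounded collusion setting. The essential new ingredient is quantitative: succinctness forces the HEST copy complexity $k$ to depend only polylogarithmically on the bound $q$, which breaks the apparent circularity and lets us pick a polynomial $q$ that strictly dominates $k$.

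Concretely, I would let $\Pi_\UTE=(\Gen,\Enc,\Dec)$ be any candidate bounded $q$-copy secure UTE scheme with succinct ciphertexts and decryption keys, meaning that its ciphertext size is some $n=n(\lam,\log q)=\poly(\lam,\log q)$ and its decryption key space has size $M=M(\lam,\log q)=2^{\poly(\lam,\log q)}$. Without loss of generality, take the message space to be $\zo{}$. Set $\eps<1/2$ constant and $\delta=2^{-\lam}$, and define the uniform quantum circuit family $E_{\lam,q}(i,\rho)$ exactly as in \cref{thm:hest-attack}, parsing $i=\dk$ and $\rho=\ketct$ and outputting $\Dec(\dk,\ketct)$. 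By succinctness, $n$ and $\log M$ are both $\poly(\lam,\log q)$, so the HEST copy complexity satisfies
\[
k \;=\; k(n,\log M,1/\eps,\log(1/\delta)) \;\le\; c_0\cdot(\lam+\log q)^{c_1}
\]
for some constants $c_0,c_1$ depending only on the scheme and on $\cT$.

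The next step is to pick $q$ to break the circularity. Since the right-hand side above grows only polylogarithmically in $q$, setting $q=\lam^{c_2}$ for a sufficiently large constant $c_2>c_1$ ensures $k\le c_0(\lam+c_2\log\lam)^{c_1}=O(\lam^{c_1})<q$ for all large enough $\lam$. This fixed polynomial $q=q(\lam)$ is a valid collusion bound for which the scheme is supposed to be $q$-copy secure.

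With $q$ fixed, I would construct the adversary $(\A,\B)$ verbatim as in \cref{thm:hest-attack}: $\A$ receives the $k$ copies $\{\ket{\ct_i}\}_{i\in[k]}$ it is entitled to (since $k\le q$), runs $\cT(E_{\lam,q},\{\ket{\ct_i}\}_{i\in[k]})$ to obtain a classical description of a circuit $C$, and outputs $\st=C$; then $\B$, on input $(\dk,C)$, outputs $1$ iff $C(\dk)>1/2$. By HEST's guarantee, $|C(\dk)-\Pr[E_{\lam,q}(\dk,\ketct)=1]|<1/2$ except with probability $\delta=2^{-\lam}$, while correctness of $\Pi_\UTE$ makes $\Pr[E_{\lam,q}(\dk,\ketct)=1]$ negligibly close to the challenge bit. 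Hence $(\A,\B)$ wins the $q$-copy security game with overwhelming advantage, contradicting $q$-copy security. The main obstacle and the only nontrivial point is the parametric argument in the second paragraph: without succinctness of both ciphertexts and decryption keys, $k$ could in principle grow polynomially (not just polylogarithmically) in $q$, and no polynomial choice of $q$ would dominate it, so succinctness (as provided by \cref{cor:bounded-cr-ute}) is essential.
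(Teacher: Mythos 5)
Your proposal is correct and follows essentially the same route as the paper: run the attack of \cref{thm:hest-attack} verbatim and use succinctness to argue that $k=\poly(\lam,\log q)$, so some polynomial $q$ dominates $k$ and the circularity disappears. The only (immaterial) difference is how the dominating $q$ is chosen — you take $q=\lam^{c_2}$ with $c_2$ exceeding the exponent bounding $k$, while the paper sets $q=\lam k_0^2$ where $k_0$ is the value of $k$ at $q=\lam$; both yield the same conclusion.
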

\begin{proof}
    The attack is analogous to the one in~\cref{thm:hest-attack}. 
    Thus, we just show that that parameters can be satisfied. Recall that $n$ corresponds to the number of qubits in a ciphertext. 
    If the ciphertexts or keys are not succinct, there is a circularity issue: namely, $k$ grows with $\poly(n,\log M)=\poly(q)$, so $q$ may always be less than $k$. 
    However, when ciphertexts and keys are succinct, we can set the parameters such that this circularity is not an issue. 
    Specifically, fixing any $\lam\in\N$ and letting $k_0=\poly(\lam,\log q)$ be the value of $k$ when $q=\lam$, we set $q=\lam k_0^2$, which will correspondingly bound $k$ by $k_0 \cdot \polylog(k_0)$. 
    This means the UTE scheme can support $k$-copy security when $q\geq\lam k_0^2$, which we know is $\poly(\lam)$ by definition of $k_0$. 
\end{proof}

\begin{corollary}[Impossiblity of HEST from PRS Generators]\label{cor:prsg-hest}
    Assuming the existence of pseudorandom state generators, hyper-efficient shadow tomography for all mixed quantum states does not exist.
\end{corollary}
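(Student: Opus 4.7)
The plan is simply to chain together the two immediately preceding results: \Cref{cor:bounded-cr-ute} gives, from PRS generators, a $q$-bounded collusion-resistant UTE scheme with $\poly(\lam,\log q)$-sized ciphertexts and $\poly(\lam)$-sized decryption keys for every polynomial $q=q(\lam)$; and \Cref{thm:bounded-hest-attack} says that a HEST procedure would break some such succinct $q$-bounded $q$-copy secure UTE for a suitable polynomial $q$. Since $q$-bounded collusion-resistant security implies $q$-copy security (as noted in \Cref{def:cr-sec}), combining these two yields the desired contradiction.

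Concretely, I would argue by contraposition. Suppose a hyper-efficient shadow tomography procedure $\cT$ exists in the sense of \Cref{def:hest}, with copy complexity upper bounded by some polynomial $k(n,\log M,1/\eps,\log(1/\delta))$. Apply \Cref{thm:bounded-hest-attack} to obtain a concrete polynomial $q=q(\lam)$ (chosen so that $q$ exceeds the number of copies $k$ needed by $\cT$ when it is run against a UTE scheme whose parameters depend on $q$) such that no $q$-bounded $q$-copy secure UTE scheme with succinct ciphertexts and keys can exist. Now invoke \Cref{cor:bounded-cr-ute} with this same polynomial $q$: from the assumed PRS generators we obtain a $q$-bounded collusion-resistant UTE scheme whose ciphertexts are of size $\poly(\lam,\log q)$ and whose decryption keys are of size $\poly(\lam)$, which in particular is $q$-copy secure. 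This directly contradicts the conclusion extracted from \Cref{thm:bounded-hest-attack}.

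There is no real obstacle here beyond verifying that the polynomial $q$ produced in \Cref{thm:bounded-hest-attack} can indeed be matched by the scheme from \Cref{cor:bounded-cr-ute}; this is already ensured because the latter is stated for every polynomial $q(\lam)$ while the former only needs one specific such polynomial. Hence the impossibility follows immediately, and no additional calculation is required.
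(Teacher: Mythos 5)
Your proposal is correct and matches the paper's own argument, which likewise derives the corollary directly by combining \cref{thm:bounded-hest-attack} with \cref{cor:bounded-cr-ute}. The extra care you take in checking that the polynomial $q$ from the attack can be matched by the PRS-based scheme (and that collusion-resistance implies copy security) is exactly the implicit content of the paper's one-line proof.
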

\begin{proof}
    The corollary follows by \cref{thm:bounded-hest-attack,cor:bounded-cr-ute}.
\end{proof}

\ifnotsubmit{
% !TEX root = main.tex
\section{Separating Untelegraphable Encryption and Unclonable Encryption}
In this section, we construct an untelegraphable encryption scheme that satisfies UTE security, but simultaneously has an adversary that wins the UE security game with non-negligible advantage for an \textit{unbounded} polynomial number of second-stage adversaries. 
The construction is secure under the LWE assumption in the classical oracle model. 
In the classical oracle model, we assume algorithms can generate oracles to classical functions efficiently so long as the function has a polynomial-size description. 
This can be thought of as an extension of ideal obfuscation.   
Note that for any fixed bound $n$ on the number of second-stage adversaries, one can construct $n$-clonable UTE trivially from collusion-resistant UTE, by defining encryption to generate $n$ fresh UTE ciphertexts.  
\subsection{Building Blocks}
In this section, we describe notions that are needed for our construction of unbounded clonable UTE.
\paragraph{One-shot MAC.} We introduce the notion of a one-shot message authentication code (OSMAC). 
An OSMAC functions similar to a standard MAC, except users now generate a one-time quantum signing key, along with a classical verification key. The security notion states that it is computationally hard for an adversary to generate two signatures of two different messages under any single verification key that is generated by the adversary. 
One-shot MACs are a relaxation of one-shot signatures~\cite{AGKZ20}, which are publicly verifiable, and two-tier one-shot signatures~\cite{MPY24}, which have partial public verification. In \cite{MPY24}, two-tier one-shot signatures are constructed from the standard learning with errors (LWE) assumption~\cite{Reg05}. 
\begin{definition}[One-shot MAC]
    A one-shot message authentication code (OSMAC) scheme with message space $\cM=\setbk{\cM_\lam}_{\lam\in\N}$ is a tuple of QPT algorithms $\Pi_{\OSMAC}=(\Setup,\KeyGen,\Sign,\Vrfy)$ with the following syntax:
    \begin{itemize}
        \item $\Setup(1^\lam)\to(\pp,\mvk)$: On input the security parameter $\lam$, the setup algorithm outputs the public parameters $\pp$ and a master verification key $\mvk$.
        \item $\KeyGen(\pp)\to(\ket{\sk},\vk)$: On input the public parameters $\pp$, the key generation algorithm outputs a quantum signing key $\ket{\sk}$ along with a verification key $\vk$.
        \item $\Sign(\ket{\sk},m)\to\sigma$: On input a quantum signing key $\ket{\sk}$ and a message $m\in\cM_\lam$, the signing algorithm outputs a signature $\sigma$.
        \item $\Vrfy(\pp,\mvk,\vk,m,\sigma)\to b$: On input the public parameters $\pp$, the master verification key $\mvk$, a verification key $\vk$, a message $m\in\cM_\lam$, and a signature $\sigma$, the verification algorithm outputs a bit $b\in\zo{}$. 
    \end{itemize}
    We require that $\Pi_{\OSMAC}$ satisfy the following properties:
    \begin{itemize}
        \item \textbf{Correctness:} For all $\lam\in\N$ and $m\in\cM_\lam$, 
        \[
        \Pr\left[\Vrfy(\pp,\mvk,\vk,m,\sigma)=1:
        \begin{array}{c}
        (\pp,\mvk)\gets\Setup(1^\lam)\\
        (\ket{\sk},\vk)\gets\KeyGen(\pp)\\
        \sigma\gets\Sign(\ket{\sk},m)
        \end{array}\right]\geq 1-\negl(\lam).
        \]
        \item \textbf{Two-Signature Security:} For all QPT adversaries $\A$, there exists a negligible function $\negl(\cdot)$ such that for all $\lam\in\N$, 
        \[
        \Pr\left[
        \begin{array}{c}
        m_0\neq m_1 \ \land
        \Vrfy(\pp,\mvk,\vk,m_0,\sigma_0)=1\\
        \land \ \Vrfy(\pp,\mvk,\vk,m_1,\sigma_1)=1 
        \end{array}
        :
        \begin{array}{c}
        (\pp,\mvk)\gets\Setup(1^\lam)\\
        (\vk,m_0,\sigma_0,m_1,\sigma_1)\gets\A(\pp)
        \end{array}\right]\leq\negl(\lam).
        \]
    \end{itemize}
\end{definition}

\begin{theorem}[{Two-tier one-shot signature~\cite[Theorem 3.2]{MPY24}}] \label{thm:two-tier-OSS}
    Assuming the quantum hardness of LWE, there exists a two-tier one-shot signature scheme with message space $\zo{}$. 
\end{theorem}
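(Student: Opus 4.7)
The plan is to build a two-tier one-shot signature from noisy trapdoor claw-free functions (NTCFs), which are known under the quantum hardness of LWE. NTCFs supply exactly the structure we need: a family of function pairs $(f_0,f_1)$ sharing a common range, such that for a typical image $y$ there is a unique claw $(x_0,x_1)$ with $f_0(x_0)=f_1(x_1)=y$; with a trapdoor $\ntcftd$ one recovers both preimages, and without it the adaptive hardcore bit property holds.

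The construction I would try is as follows. $\Setup(1^\lam)$ samples an NTCF key $\fk$ with trapdoor $\ntcftd$ and outputs $\pp=\fk$ and $\mvk=\ntcftd$. $\KeyGen(\pp)$ prepares (an approximation of) the uniform superposition $\sum_{b,x} \sqrt{f_{b,\fk}(x)(y)}\ket{b,x}\ket{y}$, measures the image register to obtain $\vk=y$, and leaves the signing key $\ket{\sk}\approx \tfrac{1}{\sqrt 2}(\ket{0,x_0}+\ket{1,x_1})$. To sign $m=0$, measure $\ket{\sk}$ in the computational basis and output $(b,x_b)$; to sign $m=1$, measure in the Hadamard basis to obtain a string $d$ satisfying the adaptive hardcore bit relation with $x_0\oplus x_1$. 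Verification uses $\mvk=\ntcftd$ to recover the full claw $(x_0,x_1)$ from $y=\vk$ and then checks the condition appropriate to the claimed message. Correctness is immediate from NTCF correctness, and the fact that we need $\ntcftd$ to verify is exactly why this is two-tier rather than fully public.

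For two-signature security, suppose an adversary $\A$ on input $\pp$ outputs $\vk=y$ together with signatures on both $m=0$ and $m=1$ that verify under $\mvk$. Then with noticeable probability $\A$ has produced both (i) one of the two preimages $x_b$ of $y$ and (ii) a Hadamard-basis witness $d$ tied to $x_0\oplus x_1$. This is precisely the output profile that contradicts the adaptive hardcore bit property of the NTCF, so we reduce directly to that property, which in turn reduces to LWE. The reduction simply runs $\Setup$ using the NTCF challenger's key and forwards $\pp$ to $\A$.

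The main obstacle I foresee is technical rather than conceptual: LWE-based NTCFs only give \emph{approximate} superpositions over preimages, so the signing key has some noise and the measurement outcomes require rounding. I would handle this by importing the now-standard machinery used in prior NTCF-based quantum signature/verification constructions to argue that the honest signing distributions are statistically close to the idealized ones, so that the adaptive hardcore bit reduction still applies with only a negligible loss. Stripping away the extra cryptographic machinery needed for full public verifiability (which is what makes genuine one-shot signatures hard) should leave a clean LWE-only instantiation, matching the cited statement of~\cite{MPY24}.
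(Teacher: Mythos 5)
This theorem is not proved in the paper at all: it is imported verbatim from~\cite{MPY24}, so the only meaningful comparison is against the known LWE-based construction you are implicitly reconstructing. Your sketch does capture the right skeleton (claw state from an NTCF, ``sign $0$'' $=$ reveal a preimage $(b,x_b)$, ``sign $1$'' $=$ reveal a Hadamard-basis outcome, verification via the trapdoor), and the noise/approximation issue you flag is indeed handled by standard machinery.

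However, there is a genuine gap in the security argument. The adaptive hardcore bit property only says that no QPT adversary can output $(b,x_b,d,c)$ with $d$ in the good set and $c=d\cdot(J(x_0)\oplus J(x_1))$ with probability noticeably above $1/2$; it does not rule out success with constant probability. In your scheme an adversary can honestly measure the claw state to get a valid $0$-signature $(b,x_b)$ and then output a uniformly random $(d,c)$ as its $1$-signature: since the good set has density $1-\negl$, this passes verification with probability essentially $1/2$, so two-signature security (which demands negligible forging probability) is simply false for the single-instance scheme. Relatedly, your description of the $m=1$ signature as ``a string $d$ satisfying the adaptive hardcore bit relation'' omits the predicted parity bit $c$ (equivalently, the Hadamard outcome on the $b$ register); without it the trapdoor verifier has nothing to check. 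The missing ingredient is amplification: one must run $\lambda$ independent NTCF instances in parallel (so $\vk$ is a tuple of images, a $0$-signature is a tuple of preimages and a $1$-signature a tuple of equation/parity pairs) and invoke a soundness-amplification theorem for this ``1-of-2 puzzle''-style primitive (as in Radian--Sattath and its successors) to drive the forging probability from $1/2$ down to negligible. With that repetition-and-amplification step added, the reduction to the adaptive hardcore bit (and hence to LWE) goes through and matches the cited result of~\cite{MPY24}.
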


\begin{theorem}[One-shot MAC]
    Assuming the quantum hardness of LWE, there exists a one-shot MAC scheme with message space $\zo{\poly(\lam)}$ for any fixed polynomial $\poly(\cdot)$. 
\end{theorem}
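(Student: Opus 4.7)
The plan is to lift the single-bit scheme guaranteed by \cref{thm:two-tier-OSS} to messages of length $L=\poly(\lam)$ by straightforward $L$-wise parallel composition, one independent single-bit instance per message coordinate. Concretely, $\Setup(1^\lam)$ would run $L$ fresh copies of the underlying two-tier one-shot signature setup to obtain $\setbk{(\pp_i,\mvk_i)}_{i\in[L]}$, set $\pp=(\pp_1,\dots,\pp_L)$ and $\mvk=(\mvk_1,\dots,\mvk_L)$; $\KeyGen(\pp)$ would generate $L$ independent signing/verification pairs $(\ket{\sk_i},\vk_i)$ (where $\ket{\sk}$ is their tensor product); $\Sign(\ket{\sk},m)$ would sign each bit $m_i$ using $\ket{\sk_i}$ to produce $\sigma_i$, outputting $\sigma=(\sigma_1,\dots,\sigma_L)$; and $\Vrfy$ would accept iff all $L$ component verifications accept. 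Correctness is immediate by a union bound over the $L$ parallel instances, since each component is correct with overwhelming probability.

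For two-signature security, suppose a QPT adversary $\A$ wins with non-negligible probability $\delta$ by outputting $(\vk^\ast,m^0,\sigma^0,m^1,\sigma^1)$ with $m^0\neq m^1$ and both signatures verifying. Since $m^0\neq m^1$, there is at least one index where the bits differ. The reduction $\B$ would guess such an index $i^\ast\getsr[L]$, receive an external single-bit challenge $\pp_{i^\ast}$ from the two-tier one-shot signature challenger, generate the remaining $L-1$ instances internally, form $\pp$ by inserting $\pp_{i^\ast}$ in the $i^\ast$-th slot, and run $\A$ on $\pp$. When $\A$ wins and $m^0_{i^\ast}\neq m^1_{i^\ast}$, adversary $\B$ outputs $(\vk^\ast_{i^\ast},m^0_{i^\ast},\sigma^0_{i^\ast},m^1_{i^\ast},\sigma^1_{i^\ast})$, which constitutes two valid signatures on distinct single-bit messages under the same verification key $\vk^\ast_{i^\ast}$. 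Since the view of $\A$ is identical in the simulation and the guess $i^\ast$ is information-theoretically independent of $\A$'s output distribution, the success probability of $\B$ is at least $\delta/L$, which is non-negligible, contradicting \cref{thm:two-tier-OSS}.

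The main (mild) obstacle is simply checking that the syntactic distinction between a one-shot MAC (where $\mvk$ is retained by the verifier and is not in $\pp$) matches the partial public verification provided by the two-tier one-shot signature: the underlying primitive already separates $\pp$ from $\mvk$ in exactly the same way, so this embedding is faithful and the reduction need not leak $\mvk_{i^\ast}$ to $\A$. Beyond this, the only care needed is to ensure that the polynomial loss of $1/L$ in the reduction is tolerable, which it is since $L=\poly(\lam)$ is fixed in advance, so non-negligible advantage remains non-negligible after the guess.
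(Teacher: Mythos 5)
Your proposal is correct and follows essentially the same route as the paper: the paper likewise takes the single-bit scheme from \cref{thm:two-tier-OSS} and extends to $\poly(\lam)$-bit messages by running independent single-bit instances in parallel, one per message bit. Your fleshed-out reduction (guessing a differing bit index, embedding the external challenge there, losing only a $1/L$ factor, and noting that $\mvk_{i^\ast}$ is never needed to simulate the adversary's view) is exactly the standard argument the paper leaves implicit.
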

\begin{proof}
    A scheme with message space $\zo{}$ follows immediately from \cref{thm:two-tier-OSS}, and extending to multi-bit messages follows by running many single-bit OSMAC schemes in parallel, where each scheme corresponds to a message bit. 
\end{proof}

\subsection{Constructing Unbounded Clonable Untelegraphable Encryption}
\label{sec:clonable-ute}
In this section, we describe our construction of an unbounded clonable UTE scheme assuming the existence of an OSMAC in the classical oracle model. In the classical oracle model, we will assume algorithms can efficiently generate functions as black-boxes, meaning an adversary will only have oracle access to said function in the security proof. Our untelegraphable encryption construction is again in the setting where there are separate encryption and decryption keys, where only the decryption key is revealed to the second stage adversary.

\begin{construction}[Unbounded Clonable UTE]\label{cons:clonable-ute}
    Let $\lam$ be a security parameter, $n=n(\lam)$ be length parameter, and $\cM=\setbk{\cM_\lam}_{\lam\in\N}$ be a message space.
    Let $\Pi_\OSMAC=(\OSMAC.\Setup,\OSMAC.\KeyGen,\OSMAC.\Sign,\OSMAC.\Vrfy)$ be an OSMAC scheme with message space $\zo{\ell(\lam)}$. We construct our unbounded clonable UTE scheme $\Pi_\UTE=(\Gen,\Enc,\Dec)$ as follows:
    \begin{itemize}
        \item $\Gen(1^\lam)$: On input the security parameter $\lam$, the key generation algorithm samples $s\getsr\zo{n+\lam}$ and $(\pp,\mvk)\gets\OSM.\Setup(1^\lam)$. It outputs $(\ek=(\pp,\mvk,s),\dk=s)$.
        \item $\Enc(\ek,m)$: On input the encryption key $\ek=(\pp,\mvk,s)$ and a message $m\in\cM_\lam$, the encryption algorithm samples $(\ket{\sk_\eps},\vk_\eps)\gets\OSM.\KeyGen(\pp)$ and generates the following classical oracle $\cO_{\ek,m,\vk_\eps}$:
        
        \begin{minipage}{\linewidth}
        \begin{framed}
        \textbf{Hard-wired:} master verification key $\mvk$, public parameters $\pp$, string $s\in\zo{n+\lam}$, message $m\in\cM_\lam$, verification key $\vk_\eps$ \\
        \textbf{Input:} strings $s'\in\zo{n+\lam},r\in\zo{n}$, set of signature tuples $\setbk{\st_v}_{v\in V_r}$ where the set $V_r=\setbk{\eps,r_1,r[1,2],\dots,r[1,n-1],r}$ contains all prefixes of $r$ including the empty string $\eps$
        \vspace{0.2cm}

        \begin{enumerate}
            \item Parse $\st_\eps=(\vk_0,\vk_1,\sigma_\eps)$, $\st_v=(\vk_v,\vk_{v\|0},\vk_{v\|1},\sigma_v)$ for $v\in V_r\setminus\setbk{\eps,r}$, $\st_r=(\vk_r,\sigma_r)$.
            \item Output $m$ if $s=s'$, $\OSM.\Vrfy(\pp,\mvk,\vk_v,\vk_{v\|0}\|\vk_{v\|1},\sigma_v)=1$ for all $v\in V_r\setminus\setbk{r}$, and $\OSM.\Vrfy(\pp,\mvk,\vk_r,s',\sigma_r)=1$. Otherwise, output $\bot$.
        \end{enumerate}
        \end{framed}
        \captionof{figure}{Classical oracle $\cO_{\ek,m,\vk_\eps}$}
        \label{fig:mac-chain-oracle}
        \end{minipage}
        
        The encryption algorithm outputs the ciphertext $\ketct=(\ket{\sk_\eps},\cO_{\ek,m,\vk_\eps},\pp)$. 
        \item $\Dec(\dk,\ketct)$: On input the decryption key $\dk$ and a quantum ciphertext $\ketct=(\ket{\sk_\eps},\cO,\pp)$, the decryption algorithm does the following:
        \begin{enumerate}
            \item For all $v\in V_{0^n}\setminus\setbk{\eps}$ sample
            \[
            (\ket{\sk_v},\vk_v)\gets\OSM.\KeyGen(\pp)\text{ and }(\ket{\sk_{v\oplus1}},\vk_{v\oplus1})\gets\OSM.\KeyGen(\pp),
            \] 
            where $v\oplus1$ is $v$ with its last bit flipped for a variable length bitstring $v$.
            \item Compute the signatures $\sigma_v\gets\OSM.\Sign(\ket{\sk_v},\vk_{v\|0}\|\vk_{v\|1})$ for all $v\in V_{0^n}\setminus\setbk{0^n}$ and $\sigma_{0^n}\gets\OSM.\Sign(\ket{\sk_{0^n}},\dk)$.
            \item Set $\st_\eps=(\vk_0,\vk_1,\sigma_\eps)$, $\st_v=(\vk_v,\vk_{v\|0},\vk_{v\|1},\sigma_v)$ for $v\in V_{0^n}\setminus\setbk{\eps,0^n}$, $\st_{0^n}=(\vk_{0^n},\sigma_{0^n})$, and output $\cO(\dk,0^n,\setbk{\st_v}_{v\in V_{0^n}})$.
        \end{enumerate}
    \end{itemize}
\end{construction}

\begin{theorem}[Correctness]\label{thm:clonable-ute-correct}
    Suppose $\Pi_\OSM$ is correct. Then, \cref{cons:clonable-ute} is correct. 
\end{theorem}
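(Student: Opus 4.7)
The plan is to verify that when $\Dec$ is run on an honestly generated ciphertext with the matching decryption key, every condition checked by the oracle $\cO_{\ek,m,\vk_\eps}$ is met with overwhelming probability, so that the oracle returns $m$. Fix any $\lam \in \N$ and $m \in \cM_\lam$, and consider $(\ek,\dk) \gets \Gen(1^\lam)$ with $\ek=(\pp,\mvk,s)$, $\dk=s$, and $\ketct=(\ket{\sk_\eps},\cO_{\ek,m,\vk_\eps},\pp) \gets \Enc(\ek,m)$.

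First, I would observe that the oracle $\cO_{\ek,m,\vk_\eps}$ outputs $m$ precisely when (i) $s'=s$, (ii) each internal-node signature $\sigma_v$ verifies against $\vk_v$ on message $\vk_{v\|0}\|\vk_{v\|1}$ for every $v \in V_{0^n}\setminus\setbk{0^n}$, and (iii) the leaf signature $\sigma_{0^n}$ verifies against $\vk_{0^n}$ on message $s'$. The call made by $\Dec$ is $\cO(\dk,0^n,\setbk{\st_v}_{v\in V_{0^n}})$, so $s'=\dk=s$ holds by construction, dispatching (i) deterministically.

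For (ii) and (iii), the decryption algorithm samples fresh OSMAC key pairs $(\ket{\sk_v},\vk_v)$ for all $v$ at each level of the length-$n$ path $0^n$ and signs the appropriate message with each corresponding signing key: for every internal node $v$ the signed message is $\vk_{v\|0}\|\vk_{v\|1}$, and for the leaf the signed message is $\dk = s$. These are exactly the messages that the oracle re-verifies. By correctness of $\Pi_\OSMAC$, each honestly generated signature verifies with probability at least $1-\negl(\lam)$. Since there are $n+1 = \poly(\lam)$ such signatures along the chain (one per node in $V_{0^n}$), I would conclude by a union bound that all verifications succeed simultaneously with probability at least $1 - (n+1)\cdot\negl(\lam) = 1 - \negl(\lam)$.

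No step here is a serious obstacle: the only subtlety is that the signing key $\ket{\sk_\eps}$ used by $\Dec$ must be the same one embedded in the ciphertext (so that its verification key matches the $\vk_\eps$ hard-wired into the oracle), which is immediate from the definitions of $\Enc$ and $\Dec$. Combining the three items yields $\Pr[\Dec(\dk,\ketct) = m] \geq 1 - \negl(\lam)$, completing the argument.
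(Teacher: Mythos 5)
Your proposal is correct and follows essentially the same route as the paper's proof: observe that $\Dec$ honestly signs exactly the MAC chain for $r=0^n$ that the oracle verifies (with $s'=\dk=s$ holding by construction), and conclude via correctness of $\Pi_\OSM$ together with a union bound over the polynomially many signatures. The paper states this more tersely ("by inspection, $\Dec$ honestly signs the same MAC chain checked by $\cO_{\ek,m,\vk_\eps}$"), but your itemized verification of the oracle's conditions is the same argument spelled out in more detail.
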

\begin{proof}
    Fix any $\lam\in\N$ and $m\in\cM_\lam$. Let $(\ek=(\pp,\mvk,s),\dk=s)\gets\Gen(1^\lam)$ and $\ketct=(\ket{\sk_\eps},\cO_{\ek,m,\vk_\eps},\pp)\gets\Enc(\ek,m)$. We consider the output of $\Dec(\dk,\ketct)$. By inspection, $\Dec$ honestly signs the same MAC chain checked by $\cO_{\ek,m,\vk_\eps}$ (\cref{fig:mac-chain-oracle}) for $r=0^n$. Thus, by correctness of $\Pi_\OSM$, we have $m\gets\cO_{\ek,m,\vk_\eps}(\dk,0^n,\setbk{\st_v}_{v\in V_{0^n}})$ with overwhelming probability, as desired. 
\end{proof}

\begin{theorem}[Indistinguishability Security]\label{thm:clonable-ute-secure}
    Let $\lam\in\N$ be a security parameter and suppose $n=n(\lam)=\Omega(\log\lam)$, $n=o(\lam)$, and $\Pi_\OSM$ satisfies two-signature security for adversaries that run in time $2^n\cdot\poly(\lam)$ and have advantage at most $2^{-n^2}$. Then, \cref{cons:clonable-ute} satisfies one-time indistinguishability security in the classical oracle model. 
\end{theorem}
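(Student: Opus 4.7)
The plan is to reduce to the two-signature security of $\Pi_\OSM$, exploiting two key structural features of the construction: (i) the string $s$, which is a component of $\dk$, is uniform over $\zo{n+\lam}$ and information-theoretically hidden from $\A$ in stage 1, so the oracle $\cO$ acts as a reject-all oracle from $\A$'s perspective; and (ii) the output $\st$ is classical and cannot transport quantum signing keys, so any MAC chain that $\B$ later submits to $\cO$ must be assembled from signatures $\A$ produced in stage 1. Since the challenge message $m_b$ appears only in $\cO$'s accept branch, bounding the total probability that any accepting query occurs across the game bounds $(\A,\B)$'s distinguishing advantage.

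First I would handle stage 1: the probability that any of $\A$'s polynomially many oracle queries includes $s'=s$ is at most $\poly(\lam)/2^{n+\lam} = \negl(\lam)$, so conditional on this event not occurring, $\cO$ is functionally a reject oracle for $\A$ and $\st$ is independent of the challenge bit $b$. All of the assumed advantage $\delta$ must therefore come from $\B$ making an accepting query. The reduction $\A'$ to OSMAC receives $\pp$, samples $(\ket{\sk_\eps},\vk_\eps)\gets\OSM.\KeyGen(\pp)$ and uniform $s$, and simulates the UTE game by answering every $\cO$ query with $\bot$, which is correct conditioned on the previously bounded event.

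Following the outline in the technical overview, $\A'$ then re-runs $\B$ on the fixed $\st$ with $T = 2^n + 1$ freshly sampled values of $s$, measuring a uniformly chosen query of $\B$ in each run and retaining the measured chain if its $s'$ component matches the current $s$. By an averaging argument the retained list $L$ has size $\Omega(T\delta/q_2)$ with noticeable probability (where $q_2$ is $\B$'s query bound), so $|L| > 2^n$ with high probability and the pigeonhole principle on the $2^n$ possible leaf labels $r$ forces two chains in $L$ to share the same $r$. Because $n = o(\lam)$, the corresponding two runs used distinct $s$ values with overwhelming probability, so these two chains carry different leaf-level messages. Tracing upward from the leaves to the fixed root $\vk_\eps$, the first node $\vk_v$ at which the two chains agree in verification key must then carry two valid signatures on distinct messages, which $\A'$ outputs as its two-signature forgery.

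The principal obstacle is controlling the loss in this reduction: constructing $L$ takes time $T\cdot\poly(\lam) = 2^n\poly(\lam)$, and the random pair-selection step loses an additional $1/|L|^2 \le 2^{-2n}$ factor, so the OSMAC advantage obtained is quasi-polynomially small for $n = \omega(\log\lam)$. Standard polynomial OSMAC security is therefore insufficient; the quasi-polynomial hypothesis in the theorem statement (advantage $2^{-n^2}$ against $2^n\poly(\lam)$-time adversaries, with a choice such as $n=\log^2\lam$) is precisely what is needed to derive a contradiction. A secondary subtlety is that $\A'$ does not know $\mvk$ and thus cannot faithfully implement $\cO$'s verification check, but the simulation error is bounded by exactly the accepting-query probability we are already reducing to, so it is absorbed into the same analysis.
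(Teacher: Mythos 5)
Your proposal follows essentially the same route as the paper's proof: use the uniformity of $s$ to replace $\A$'s oracle by the all-$\bot$ oracle, repeatedly re-run and measure a random query of $\B$ on fresh values of $s$ to harvest valid MAC chains, collect more than $2^n$ of them so that pigeonhole on $r$ (together with distinct $s$ values) forces two valid signatures on different messages under some common verification key along the path to the fixed $\vk_\eps$, and absorb the $2^n\poly(\lam)$ runtime and $2^{-O(n)}$ advantage loss into the quasi-polynomial OSMAC assumption. Two technical points need tightening to match the paper's argument: since the classical oracle is quantum-accessible, both the ``$\A$ never queries $s'=s$'' step and the ``measuring a random query of $\B$ yields a valid chain with probability $\approx\delta/q$'' step must be run through the O2H lemma (after an averaging step fixing a good tuple $(\pp,\mvk,\vk_\eps,\st)$), which costs a quadratic loss $\delta_\extr\approx(\delta/4q)^2$ rather than $\delta/q$; and your iteration count is internally inconsistent --- with $T=2^n+1$ runs and per-run success $\delta_\extr<1$ you do not get $\abs{L}>2^n$, so you need $T=(2^n+1)\cdot\poly(\lam)/\delta_\extr$ runs as in the paper, which still keeps the reduction within $2^n\cdot\poly(\lam)$ time. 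Neither issue changes the structure or conclusion of your argument.
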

\begin{proof}
    Suppose there exists a QPT adversary $(\A,\B)$ such that $\A$ and $\B$ make exactly $q$ queries (without loss of generality) and wins the UTE indistinguishability security game with non-negligible advantage $\delta>0$. 
    Let $\cO_\bot$ be an oracle which always outputs $\bot$.
    Now, consider an adversary $\Ahat$ which is given an oracle $\cO$ that is either $\cO_\bot$ or $\cO_{\ek,m_b,\vk_\eps}$ (\cref{fig:mac-chain-oracle}) for challenge bit $b\in\zo{}$, along with an input $(\mvk,s,m_b,\vk_\eps)$.
    Adversary $\Ahat$ runs the UTE security game except that it uses $\cO$ to simulate $\A$'s oracle queries and the secret information $(\mvk,s,m_b,\vk_\eps)$ to simulate $\B$'s oracle queries. 
    Note that $\Ahat^{\cO_{\ek,m_b,\vk_\eps}}(\mvk,s,m_b,\vk_\eps)$ perfectly simulates the UTE game with challenge bit $b$. 
    Setting $\B_{\mathsf{O2H}}$ to the an algorithm that measures a uniformly random query of $\Ahat$, by \cref{lem:O2H}, we have 
    \begin{align*}
    &\abs{\Pr[\Ahat^{\cO_{\ek,m_b,\vk_\eps}}(\mvk,s,m_b,\vk_\eps)=1]-\Pr[\Ahat^{\cO_\bot}(\mvk,s,m_b,\vk_\eps)=1]} \\
    &\leq 2q\cdot\sqrt{\Pr[\B_{\mathsf{O2H}}^{\cO_\bot}(\mvk,s,m_b,\vk_\eps)\in S]}=\nu(\lam),
    \end{align*}
    where $S$ is the set of valid tuples $(s,r,\setbk{\st_v}_{v\in V_r})$ that makes $\cO_{\ek,m_b,\vk_\eps}$ output $m_b$. 
    Since $s$ is uniform and independent of $\A$'s view, it must be that $\nu$ is negligible, since $q=\poly(\lam)$. By definition of $\Ahat$, this implies 
    \[
    \Pr_{b,\pp,\mvk,s,\vk_\eps,\ket{\sk_\eps}}
    \left[
    \B^{\cO_{\ek,m_b,\vk_\eps}}(\st,s)=b:\st\gets\A^{\cO_\bot}(\pp,\ket{\sk_\eps})
    \right]
    \ge \frac{1}{2}+\delta-\nu(\lam),
    \]
    where $b\getsr\zo{},s\getsr\zo{n+\lam},(\pp,\mvk)\gets\OSM.\Setup(1^\lam),(\ket{\sk_\eps},\vk_\eps)\gets\OSM.\KeyGen(\pp)$, as in the UTE security game with a random challenge bit $b$. We can rewrite the above expression as  
    \begin{align}\label{eq:avging}
        \Exp_{\pp,\mvk,\vk_\eps,\st}\left[
        \Pr_{b,s}
        \left[
        \B^{\cO_{\ek,m_b,\vk_\epsilon}}(\st,s)=b
        \right]
        \right] \ge \frac{1}{2}+\delta-\nu(\lam),
    \end{align}
    where $\st\gets\A^{\cO_\bot}(\pp,\ket{\sk_\eps})$.
    Let a tuple $(\pp,\mvk,\vk_\eps,\st)$ be ``good'' if we have 
    \begin{align*}
    \Pr_{b,s}
    \left[
    \B^{\cO_{\ek,m_b,\vk_\eps}}(\st,s)=b
    \right]
    \ge \frac{1}{2}+\frac{\delta}{2}-\nu(\lam).
    \end{align*}
    By \cref{eq:avging}, a tuple $(\pp,\mvk,\vk_\eps,\st)$ is good with probability at least $\delta/2$. Furthermore, for any $(\pp,\mvk,\vk_\eps,\st)$, we have 
    \begin{align}\label{eq:adv-oraclebot}
    \Pr_{b,s}
    \left[
    \B^{\cO_{\bot}}(\st,s)=b
    \right]
    = \frac{1}{2}.
    \end{align}
    Fix a good tuple $(\pp,\mvk,\vk_\eps,\st)$, let $S$ be the same set as above, and let $\B_\extr$ be an algorithm that measures a uniform query of $\B$. Then, by \cref{lem:O2H}, the definition of a good tuple, and \cref{eq:adv-oraclebot}, we have
    \begin{align}\label{eq:adv-extractor}
    \Pr_{b,s}\left[
    \B_\extr^{\cO_\bot}(\st,s,b)\in S
    \right]
    \ge (\delta/4q -\nu(\lam)/2q)^2=\delta_\extr(\lam).
    \end{align}
    We now construct an OSMAC adversary $\A_\OSM$ as follows:
    \begin{enumerate}
        \item On input $\pp$, adversary $\A_\OSM$ samples $(\ket{\sk_\eps},\vk_\eps)\gets\OSM.\KeyGen(\pp)$ and runs $\st\gets\A^{\cO_\bot}(\pp,\ket{\sk_\eps})$. 
        
        \item Adversary $\A_\OSM$ initializes a list $L$, and for each $i\in[(2^n+1)\cdot\lam/\delta_\extr(\lam)]$, it does the following:
        \begin{enumerate}
            \item Sample $s_i\getsr\zo{n+\lam}$ and $b_i\getsr\zo{}$.
            \item Run $(s',r_i,(\st_v)_{v\in V_{r_i}})\gets\B_\extr^{\cO_\bot}(\st,s_i,b_i)$, and add $(s',r_i,(\st_v)_{v\in V_{r_i}})$ to $L$ if $s'=s_i$.
        \end{enumerate}
        
        \item Adversary $\A_\OSM$ randomly chooses two elements $(s_j,r_j,(\st_v)_{v\in V_{r_j}})$ and $(s_k,r_k,(\st_v)_{v\in V_{r_k}})$ of $L$.
        If $s_j= s_k$ or $r_j\ne r_k$, $\A_\OSM$ outputs $\bot$. 
        Otherwise, $\A_\OSM$ outputs the first collision $(\vk_v,m,\sigma_v,m',\sigma_v')$ it finds among the two tuples. 
    \end{enumerate}
    Adversary $\A_\OSM$ runs in $2^n\cdot\poly(\lam)$ time for infinitely many $\lam\in\N$, since $\delta$ is non-negligible and $(\A,\B)$ runs in QPT.
    Furthermore, assuming $(\pp,\mvk,\vk_\eps,\st)$ is a good tuple, \cref{eq:adv-extractor} and the number of iterations ensure that $L$ contains $2^n+1$ valid MAC chains $(s_i,r_i,(\st_v)_{v\in V_{r_i}})$ with $1-2^{-O(\lam)}$ probability. 
    By the pigeonhole principle, there exists $(s_j,r_j,(\st_v)_{v\in V_{r_j}})$ and $(s_k,r_k,(\st_v)_{v\in V_{r_k}})$ among those $2^n+1$ valid MAC chains such that $r_j=r_k$, and with overwhelming probability we have $s_j\ne s_k$ for all items in $L$ since $s_i\getsr\zo{n+\lam}$ in each iteration. 
    Thus, such a pair of MAC chains must have a collision where different signatures are given on the same verification key. Specifically, if $\vk_{r_j}$ is not the verification key for the collision, then $\vk_{r_j}\neq\vk_{r_k}$, implying the chain must differ for some other $v\in V_{r_j}\setminus\setbk{r_j}$, since $\vk_\eps$ is the same for both. 
    The advantage of $\A_\OSM$ is then at least
    \[\frac{\delta}{2\abs{L}^2}\geq
    \frac{1}{\poly(\lam)\cdot2^{2n}}>\frac{1}{2^{n^2}},\]
    for infinitely many $\lam\in\N$, as desired. 
\end{proof}

\begin{theorem}[Attack on One-Way Unclonable Security]\label{thm:clonable-ute-attack}
    Suppose $\Pi_\OSM$ is correct and $n=\omega(\log\lam)$. Then, \cref{cons:clonable-ute} is not $k$-adversary one-way unclonable secure (\cref{def:k-UE}) for any $k=\poly(\lam)$ in the classical oracle model. 
\end{theorem}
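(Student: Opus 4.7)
The plan is to extend the toy two-level MAC-tree attack sketched in the technical overview to an arbitrary polynomial $k=k(\lam)$. Because $n=\omega(\log\lam)$, the number of depth-$n$ leaves satisfies $2^n=\lam^{\omega(1)}$, so for every $k=\poly(\lam)$ and all sufficiently large $\lam$ there are at least $k$ distinct strings $r_1,\ldots,r_k\in\zo{n}$ to hand out, one per second-stage adversary. The first-stage adversary $\A$ will materialise only the ``span tree'' $T$ of root-to-leaf paths covering $r_1,\ldots,r_k$ (together with the siblings of every path node needed to populate $\st_v$), whose size is at most $O(kn)=\poly(\lam)$, so the whole attack remains QPT.

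Concretely, on input $(\ket{\sk_\eps},\cO,\pp)$ from the encryption of a random message $m$, $\A$ picks distinct $r_1,\ldots,r_k\in\zo{n}$, identifies the corresponding span tree $T$ rooted at $\eps$, and for every internal node $v\in T\setminus\setbk{\eps}$ samples $(\ket{\sk_v},\vk_v)\gets\OSM.\KeyGen(\pp)$. For every internal node $v\in T$ (using the given $\ket{\sk_\eps}$ at the root) $\A$ computes $\sigma_v\gets\OSM.\Sign(\ket{\sk_v},\vk_{v\|0}\|\vk_{v\|1})$. Then $\A$ sends to each $\B_i$, along with a copy of the classical oracle $\cO$ (which is shareable in the classical oracle model): the leaf signing key $\ket{\sk_{r_i}}$, the leaf verification key $\vk_{r_i}$, and all signature packets $\st_v$ for $v\in V_{r_i}\setminus\setbk{r_i}$, formatted exactly as expected by \cref{fig:mac-chain-oracle}. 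In the second stage, $\B_i$ receives $s=\dk$, computes $\sigma_{r_i}\gets\OSM.\Sign(\ket{\sk_{r_i}},s)$, sets $\st_{r_i}=(\vk_{r_i},\sigma_{r_i})$, and queries $\cO(s,r_i,\setbk{\st_v}_{v\in V_{r_i}})$. By correctness of $\Pi_\OSM$ every verification inside $\cO$ succeeds with overwhelming probability, so $\cO$ returns $m$, and $\B_i$ outputs $m$.

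A union bound over the $O(kn)=\poly(\lam)$ OSMAC correctness events and the $k$ adversaries gives joint success probability $1-\negl(\lam)$, contradicting $k$-adversary one-way UE security (\cref{def:k-UE}). There is no significant obstacle to overcome: as this is an attack proof, one only needs to verify (i) efficiency of $\A$ and of each $\B_i$, which follows from $|T|=O(kn)=\poly(\lam)$ and the fact that each $\B_i$ performs a single signing and a single oracle call, and (ii) that $k$ distinct depth-$n$ leaves exist, which is exactly the point at which the hypothesis $n=\omega(\log\lam)$ is invoked, since it guarantees $2^n\geq k$ for any polynomial $k=k(\lam)$ and all sufficiently large $\lam$.
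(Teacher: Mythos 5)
Your proposal is correct and follows essentially the same route as the paper's proof: build a single MAC tree covering $k$ distinct depth-$n$ leaves (the paper uses the padded integers $r\in[k]$, you use an arbitrary distinct set and make the shared span tree explicit), hand each second-stage adversary one leaf signing key plus the signature packets along its root-to-leaf path, and let it complete the chain by signing $s=\dk$; correctness of $\Pi_\OSM$ plus a union bound gives overwhelming success, and $n=\omega(\log\lam)$ is invoked exactly where the paper needs it, to fit any polynomial $k$ among the $2^n$ leaves while keeping the attack QPT.
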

\begin{proof}
    We give a two stage adversary $(\A,\B)$ to break one-way unclonable security for any $k=\poly(\lam)$ number of second stage adversaries (all second stage adversaries run the same algorithm $\B$):
    \begin{enumerate}
        \item At the beginning of the game, $\A$ gets a single ciphertext $\ketct=(\ket{\sk_\eps},\cO_{\ek,m,\vk_\eps},\pp)$ from the challenger and decides on any $k=\poly(\lam)$ number of second stage adversaries. 
        \item For all $r\in[k]$ (where $r\in\zo{n}$ is appropriately padded), $v\in V_{r}\setminus\setbk{\eps}$ adversary $\A$ samples
            \[
            (\ket{\sk_v},\vk_v)\gets\OSM.\KeyGen(\pp)\text{ and }(\ket{\sk_{v\oplus1}},\vk_{v\oplus1})\gets\OSM.\KeyGen(\pp).
            \] 
        \item For all $r\in[k]$, $v\in V_{r}\setminus\setbk{r}$ adversary $\A$ computes $\sigma_v\gets\OSM.\Sign(\ket{\sk_v},\vk_{v\|0}\|\vk_{v\|1})$. 
        \item Adversary $\A$ sets $\st_\eps=(\vk_0,\vk_1,\sigma_\eps)$ and for all $r\in[k]$ sets $\st_v=(\vk_v,\vk_{v\|0},\vk_{v\|1},\sigma_v)$ for $v\in V_{r}\setminus\setbk{\eps,r}$. For the $\ord{r}$ second stage adversary, $\A$ outputs \ifelsesubmiteq{(\cO_{\ek,m,\vk_\eps},r,\setbk{\st_v}_{v\in V_r\setminus\setbk{r}},\ket{\sk_r},\vk_r)} in its register. 
        \item On input $(r,\setbk{\st_v}_{v\in V_r\setminus\setbk{r}},\ket{\sk_r},\vk_r)$ and $\dk=s$, adversary $\B$ computes $\sigma_r\gets\OSM.\Sign(\ket{\sk_r},s)$, sets $\st_r=(\vk_r,\sigma_r)$, and outputs $\cO(s,r,\setbk{\st_v}_{v\in V_r})$.
    \end{enumerate}
    Since $k$ is polynomial, $(\A,\B)$ is efficient. Since the MAC chain is generated honestly for each $r\in[k]$ and $\B$ is given the honest decryption key $s$, the checks in $\cO_{\ek,m,\vk_\eps}$ pass with overwhelming probability by correctness of $\Pi_\OSM$. Thus, all $k$ second stage adversaries recover $m$ with overwhelming probability. 
\end{proof}

\paragraph{Parameter setting.} Setting $n=\log^2\lam$, satisfies the constraint in \cref{thm:clonable-ute-attack} and correspondingly assumes quasi-polynomial security of $\Pi_\OSM$ for \cref{thm:clonable-ute-secure}. This yields the following corollary:

\begin{corollary}[Clonable UTE]
    Assuming the quasi-polynomial quantum hardness of LWE in the classical oracle model, there exists an indistinguishability secure UTE scheme that is not one-way unclonable secure for an unbounded polynomial number of second stage adversaries. 
\end{corollary}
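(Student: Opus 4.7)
The plan is to combine Theorems \ref{thm:clonable-ute-secure} and \ref{thm:clonable-ute-attack} by choosing the length parameter $n=n(\lam)$ carefully so that both sets of hypotheses are simultaneously satisfied under a single, standard quasi-polynomial hardness assumption. Concretely, I would instantiate Construction \ref{cons:clonable-ute} with $n(\lam)=\log^2 \lam$, using the OSMAC scheme obtained from the two-tier one-shot signature of Theorem \ref{thm:two-tier-OSS} as the underlying $\Pi_\OSM$.

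First I would verify the parameter constraints. For the attack (Theorem \ref{thm:clonable-ute-attack}), one only needs $n=\omega(\log\lam)$, which clearly holds since $\log^2 \lam = \omega(\log\lam)$; this gives that for any polynomial $k=\poly(\lam)$, there are efficient $(\A,\B)$ breaking $k$-adversary one-way unclonable security with overwhelming probability in the classical oracle model (the attack is efficient because the MAC chain and the associated second-stage messages have total size $k\cdot 2^n\cdot \poly(\lam)$ only when $k\leq 2^n$, but here we fix $k=\poly(\lam)$ while $2^n = 2^{\log^2 \lam}$ is super-polynomial, so efficiency is preserved). For the indistinguishability security proof (Theorem \ref{thm:clonable-ute-secure}), one needs (i) $n=\Omega(\log \lam)$, (ii) $n=o(\lam)$, and (iii) that $\Pi_\OSM$ is secure against adversaries running in time $2^n\cdot \poly(\lam)$ with advantage at most $2^{-n^2}$. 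With $n=\log^2\lam$, conditions (i) and (ii) are immediate, and (iii) translates to requiring OSMAC security against adversaries of runtime $2^{\log^2\lam}\cdot\poly(\lam)$ with advantage at most $2^{-\log^4 \lam}$, both of which are quasi-polynomial quantities.

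Next I would argue that the quasi-polynomial quantum hardness of LWE supplies such an OSMAC. Inspecting the construction from Theorem \ref{thm:two-tier-OSS} and the single-bit to multi-bit extension (running $\ell(\lam)=\poly(\lam)$ copies in parallel), the security reduction is polynomial-time and loses at most a polynomial factor in advantage; so if LWE is hard against $2^{\polylog(\lam)}$-time adversaries with advantage $2^{-\polylog(\lam)}$, the resulting OSMAC satisfies the bounds required in (iii). Concretely, quasi-polynomial LWE hardness yields OSMAC security with subexponentially-in-$n$ advantage, which strictly subsumes the $2^{-n^2}$ bound at $n=\log^2 \lam$.

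Combining these facts, Theorem \ref{thm:clonable-ute-secure} gives that Construction \ref{cons:clonable-ute} at the chosen parameter setting satisfies one-time indistinguishability security (in the classical oracle model), while Theorem \ref{thm:clonable-ute-attack} exhibits an explicit efficient attack breaking one-way unclonable security for any polynomially bounded number of second-stage adversaries. The only delicate point — and the place I would spend the most care — is matching the advantage/runtime budget in (iii) with what the LWE-based OSMAC actually provides under a clean ``quasi-polynomial LWE'' assumption; otherwise the proof is a direct invocation of the two theorems once $n=\log^2 \lam$ is fixed.
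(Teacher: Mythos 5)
Your proposal is correct and matches the paper's own argument: the paper likewise fixes $n=\log^2\lam$, notes this satisfies the $n=\omega(\log\lam)$ requirement of \cref{thm:clonable-ute-attack} and the $n=\Omega(\log\lam)$, $n=o(\lam)$ requirements of \cref{thm:clonable-ute-secure}, and observes that the $2^n\cdot\poly(\lam)$-time, $2^{-n^2}$-advantage OSMAC security needed there amounts to quasi-polynomial security, supplied by quasi-polynomially hard LWE via \cref{thm:two-tier-OSS}. Your extra care about the reduction's polynomial loss and the attack's efficiency is consistent with (and slightly more explicit than) the paper's one-line parameter-setting remark.
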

% !TEX root = main.tex
\section{Untelegraphable Encryption with Everlasting Security}
In this section, we explore the notion of everlasting security (\cref{def:everlasting}) for untelegraphable encryption schemes. In particular, we construct collusion-resistant untelegraphable encryption with everlasting security in the quantum random oracle model (QROM). We also define the notion of weakly-efficient shadow tomography (WEST) and show that such a notion leads to an attack on the everlasting security of any correct collusion-resistant UTE scheme. Finally, we show that working in the QROM or assuming some other non-falsifiable assumption may be inherent when trying to construct UTE with everlasting security, by giving a black-box separation of public-key everlasting one-way secure UTE and any falsifiable assumption, assuming sub-exponential one-way functions exist.   

\subsection{Collusion-Resistant Untelegraphable Encryption with Everlasting Security}
\label{sec:everlasting-ute}
In this section, we describe our construction of collusion-resistant untelegraphable encryption with everlasting security in the quantum random oracle model (QROM), which parallels the construction of certified everlasting hiding commitments in~\cite{C:HMNY22}. A modification of the construction can also be used to build single-ciphertext public key unclonable encryption with everlasting security. 

\begin{construction}[Everlasting Secure UTE]\label{cons:everlasting-ute}
    Let $\lam$ be a security parameter. Our construction relies on the following additional primitives:
    \begin{itemize}
        \item Let $\Pi_\SKE=(\SKE.\Gen,\SKE.\Enc,\SKE.\Dec)$ be an SKE scheme with message space $\zo{\lam}$.
        \item Let $\Pi_\OneUTE=(\OneUTE.\Gen,\OneUTE.\Enc,\OneUTE.\Dec)$ be a secret-key untelegraphable encryption scheme with message space $\cM=\setbk{\cM_\lam}_{\lam\in\N}$ and key space $\zo{\ell(\lam)}$. 
        \item Let $\cH=\setbk{H_\lam:\zo{\lam}\to\zo{\ell}}$ be a hash function family, which will be modeled as a random oracle in the security analysis. 
    \end{itemize}
    We now construct our everlasting UE scheme $\Pi_\UE=(\Gen,\Enc,\Dec)$ as follows:
    \begin{itemize}
        \item $\Gen(1^\lam):$ On input the security parameter $\lam$, the key generation algorithm samples $k\gets\SKE.\Gen(1^\lam)$, $H=H_\lam$ from $\cH$, and outputs $(k,H)$.
        \item $\Enc(\sk,m):$ On input the secret key $\sk=(k,H)$ and a message $m\in\cM_\lam$, the encryption algorithm samples $\sk_\OneUTE\gets\OneUTE.\Gen(1^\lam),r\getsr\zo{\lam}$, computes $\ket{\ct_\OneUTE}\gets\OneUTE.\Enc(\sk_\OneUTE,m),\ct_\SKE\gets\SKE.\Enc(k,r)$ and outputs 
        \[
        \ketct=(\ket{\ct_\OneUTE},\ct_\SKE,H(r)\xor\sk_\OneUTE).
        \]
        \item $\Dec(\sk,\ket{\ct}):$ On input the secret key $\sk=(k,H)$ and a quantum ciphertext $\ket{\ct}=(\ket{\ct_1},\ct_2,\ct_3)$, the decryption algorithm outputs \ifelsesubmiteq{\OneUTE.\Dec(H(\SKE.\Dec(k,\ct_2))\xor\ct_3,\ket{\ct_1}).}
    \end{itemize}
\end{construction}
\begin{theorem}[Correctness]\label{thm:everlasting-ute-correct}
    Suppose $\Pi_\OneUTE$ and $\Pi_\SKE$ are correct. 
    Then, \cref{cons:everlasting-ute} is correct. 
\end{theorem}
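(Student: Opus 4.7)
The plan is to unwind the decryption algorithm on an honestly generated ciphertext and show that each nested subroutine recovers its intended input with overwhelming probability, then conclude by a union bound. Fix $\lam \in \N$ and $m \in \cM_\lam$, and let $\sk = (k, H) \gets \Gen(1^\lam)$ together with $\ketct = (\ket{\ct_\OneUTE}, \ct_\SKE, h) \gets \Enc(\sk, m)$, where $h = H(r) \xor \sk_\OneUTE$ and $\sk_\OneUTE, r, \ket{\ct_\OneUTE}, \ct_\SKE$ are sampled as in the construction.

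First, I would appeal to correctness of $\Pi_\SKE$ applied to the pair $(k, r)$: with overwhelming probability we have $\SKE.\Dec(k, \ct_\SKE) = r$, so that $H(\SKE.\Dec(k, \ct_\SKE)) \xor h = H(r) \xor H(r) \xor \sk_\OneUTE = \sk_\OneUTE$. Conditioned on this event, the argument to $\OneUTE.\Dec$ is exactly the honest one-time UTE secret key $\sk_\OneUTE$ together with the honestly generated ciphertext $\ket{\ct_\OneUTE}$ of $m$, so correctness of $\Pi_\OneUTE$ gives that $\OneUTE.\Dec(\sk_\OneUTE, \ket{\ct_\OneUTE}) = m$ with overwhelming probability.

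Finally, a union bound over the two negligible error events yields that $\Dec(\sk, \ketct) = m$ with overwhelming probability, which is the desired correctness statement. The proof is essentially syntactic; no step poses a genuine obstacle since the hash $H$ is deterministic and the XOR mask cancels exactly, so the only probabilistic slack comes from the assumed correctness of the two building blocks.
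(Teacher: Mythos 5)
Your proof is correct and follows essentially the same approach as the paper's: invoke correctness of $\Pi_\SKE$ to recover $r$, observe that the XOR mask with $H(r)$ cancels to yield $\sk_\OneUTE$, and then invoke correctness of $\Pi_\OneUTE$ to recover $m$. Your explicit union bound over the two error events is a minor addition that the paper leaves implicit.
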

\begin{proof}
    Fix any $\lam\in\N$ and $m\in\cM_\lam$. Let $\sk=(k,H)\gets\Gen(1^\lam)$ and 
    \[\ketct=(\ket{\ct_\OneUTE},\ct_\SKE,H(r)\xor\sk_\OneUTE)\gets\Enc(\sk,m).\] 
    We consider the output of $\Dec(\sk,\ketct)$. 
    By $\Pi_\SKE$ correctness, $r\gets\SKE.\Dec(k,\ct_\SKE)$ with overwhelming probability. 
    By $\Pi_\OneUTE$ correctness, $m\gets\OneUTE.\Dec(H(r)\xor H(r)\xor\sk_\OneUTE,\ket{\ct_\OneUTE})$ with overwhelming probability, as desired.
\end{proof}

\begin{theorem}[Everlasting Collusion-Resistant Security]
\label{thm:everlasting-ute-secure}
    Suppose $\Pi_\OneUTE$ satisfies statistical indistinguishability security (\cref{def:UTE}) and $\Pi_\SKE$ satisfies CPA security. 
    Then, \cref{cons:everlasting-ute} satisfies everlasting collusion-resistant security in the quantumly-accessible random oracle model. 
\end{theorem}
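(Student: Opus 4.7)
The plan is to prove security via a hybrid argument over the $Q$ encryption queries of $\A$. Define $\hyb_j$ as the game in which the challenger encrypts $m_1^{(k)}$ for $k<j$ and $m_0^{(k)}$ for $k\ge j$, so that $\hyb_1$ and $\hyb_{Q+1}$ correspond to the challenge bits $b=0$ and $b=1$ respectively. It suffices to show $\hyb_j$ and $\hyb_{j+1}$ are indistinguishable for each $j$. Fix such a $j$ and let $i=j$ denote the active query; a preliminary step removes the negligible event that the $r_1,\dots,r_Q$ sampled by the challenger are not pairwise distinct, so that the programming steps below are well-defined.

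The transition from $\hyb_j$ to $\hyb_{j+1}$ will go through two intermediate games. In $\hyb_j'$, the challenger samples $z_i\getsr\zo{\ell}$ uniformly, sets $\ct_3^{(i)}=z_i$ directly, and lazily programs $H(r_i):=z_i\xor\sk^{(i)}_\OneUTE$ when $H$ is eventually queried at $r_i$. Conditioned on $\A$ never querying $H$ at $r_i$ during stage $1$, the hybrids $\hyb_j$ and $\hyb_j'$ are identically distributed, since $\ct_3^{(i)}$ and $H(r_i)$ are jointly uniform subject to the relation $\ct_3^{(i)} = H(r_i) \xor \sk^{(i)}_\OneUTE$ in both games. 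To bound the probability of the bad event, I will invoke the O2H lemma (\cref{lem:O2H}): if $\A$ queries $H$ at $r_i$ with non-negligible probability, then measuring a uniformly random query of $\A$ yields $r_i$ with non-negligible probability. Such an extractor contradicts CPA security of $\Pi_\SKE$, because swapping $\ct_\SKE^{(i)}$ for an encryption of a fixed string would render $r_i$ information-theoretically independent of $\A$'s view.

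Next, $\hyb_j''$ is identical to $\hyb_j'$ except that the challenger encrypts $m_1^{(i)}$ instead of $m_0^{(i)}$ in the one-time UTE component. The plan is to reduce indistinguishability of $\hyb_j'$ and $\hyb_j''$ to the statistical indistinguishability security of $\Pi_\OneUTE$. The reduction $\cR$ submits $(m_0^{(i)},m_1^{(i)})$ to its one-time UTE challenger, embeds the returned ciphertext as $\ket{\ct^{(i)}_\OneUTE}$, generates every other component of the experiment honestly using freshly sampled keys, and answers $\A$'s oracle queries via a lazily sampled random oracle. After $\A$ outputs $\st$, $\cR$ receives $\sk^{(i)}_\OneUTE$ from its challenger, programs $H(r_i):=z_i\xor\sk^{(i)}_\OneUTE$, and runs $\B$ on $(\st,(k,H))$, outputting $\B$'s guess. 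Because $\Pi_\OneUTE$ is statistically secure, this reduction remains valid even though $\B$ is unbounded. The transition from $\hyb_j''$ to $\hyb_{j+1}$ is symmetric to the first step.

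The hard part will be bridging the computational security of $\Pi_\SKE$ (which only holds against the QPT $\A$) with the statistical security of $\Pi_\OneUTE$ (which accommodates an unbounded $\B$) through the programmable random oracle. This bridging is precisely what enables everlasting security: $\Pi_\SKE$ hides $r_i$ from $\A$ during stage $1$, so $H(r_i)$ statistically masks $\sk^{(i)}_\OneUTE$ in $\A$'s classical state $\st$, after which statistical one-time UTE security takes over in stage $2$ regardless of $\B$'s running time.
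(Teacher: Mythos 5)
Your proposal is correct and takes essentially the same route as the paper's proof: a query-by-query hybrid in which the oracle value at $r_i$ is decoupled from $\A$'s view via an O2H-based reprogramming step (with the extractor run only over the efficient first-stage adversary's queries and reduced to CPA security of $\Pi_\SKE$), after which the challenge message is switched by a reduction to the statistical one-time UTE security, programming $H(r_i):=h_i\xor\sk^{(i)}_\OneUTE$ only once $\sk^{(i)}_\OneUTE$ is revealed to the unbounded second stage, and then unwinding symmetrically. The minor deviations (merging the paper's two intermediate hybrids into one, using indistinguishability-CPA directly instead of one-way CPA, and explicitly excluding collisions among $r_1,\dots,r_Q$) do not change the substance of the argument.
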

\begin{proof}
    Suppose there exists a QPT first stage adversary $\A$ and an unbounded time second stage adversary $\B$ that has non-negligible advantage $\delta$ in the collusion-resistant security game and makes at most $Q$ message queries. 
    We define a sequence of hybrid experiments:
    \begin{itemize}
        \item $\hyb_0$: This is the collusion-resistant game with bit $b=0$. 
        In particular:
        \begin{itemize}
            \item At the beginning of the game, the challenger samples $k\gets\SKE.\Gen(1^\lam)$, $H=H_\lam$ from $\cH$, and sets $\sk=(k,H)$. 
            \item When adversary $\A$ makes its $\ord{i}$ query $(m^{(i)}_0,m^{(i)}_1)$, the challenger samples $\sk^{(i)}_\OneUTE\gets\OneUTE.\Gen(1^\lam)$, $r_i\getsr\zo{\lam}$, sets $h_i=H(r_i)\oplus\sk^{(i)}_\OneUTE$, and computes 
            \[
            \ket{\ct^{(i)}_\OneUTE}\gets \OneUTE.\Enc(\sk^{(i)}_\OneUTE,m^{(i)}_0)
            \eqand \ct^{(i)}_\SKE\gets\SKE.\Enc(k,r_i).
            \] 
            The challenger gives the ciphertext $\ket{\ct^{(i)}}=
            \parens{\ket{\ct^{(i)}_\OneUTE},\ct^{(i)}_\SKE,h_i}$ to $\A$.
            \item Adversary $\A$ then outputs a classical string $\st$, which is given to $\B$ along with $\sk$ from the challenger. 
            Adversary $\B$ outputs a bit $b_\B$, which is the output of the experiment. 
        \end{itemize}
        \item $\hyb_i$: Same as $\hyb_{i-1}$ except the challenger now computes  \ifelsesubmiteq{\diff{\ket{\ct^{(i)}_\OneUTE}\gets \OneUTE.\Enc(\sk^{(i)}_\OneUTE,m^{(i)}_1)}.}
    \end{itemize}
    Note that $\hyb_Q$ is the collusion-resistant game with bit $b=1$. 
    We write $\hyb_{i}(\A,\B)$ to denote the output distribution of
    an execution of $\hyb_{i}$ with adversary $(\A,\B)$.
    We now show that for each $i\in[Q]$, experiments $\hyb_{i-1}(\A,\B)$ and $\hyb_i(\A,\B)$ are indistinguishable.
    
    \begin{lemma}\label{lem:cr-everlasting}
        Suppose $\Pi_\OneUTE$ satisfies statistical indistinguishability security and $\Pi_\SKE$ satisfies one-way CPA security for QPT adversaries. 
        Then, for all $i\in[Q]$, there exists a negligible function $\negl(\cdot)$ such that for all $\lam\in\N$, we have
        \[
        \abs{\Pr[\hyb_{i-1}(\A,\B)=1]-\Pr[\hyb_{i}(\A,\B)=1]}=\negl(\lam).
        \]
    \end{lemma}
    \begin{proof}
        We define the following intermediate hybrid experiments:
        \begin{itemize}
            \item $\hyb'_{i-1}$: Same as $\hyb_{i-1}$ except \diff{the oracle given to $\A$ is replaced with $H_{r_i\to H'}$}, which is $H$ reprogrammed according to $H'$ on an input $r$ where $H'$ is another independent uniformly random function. In particular, $H_{r_i\to H'}(x)=H(x)$ when $x\neq r_i$ and $H_{r_i\to H'}(x)=H'(x)$ when $x=r_i$.
            \item $\hyb''_{i-1}$: Same as $\hyb'_{i-1}$ except for the following points:
            \begin{itemize}
                \item The challenger samples $\diff{h_i\getsr\zo{\ell}}$.
                \item The oracle given to $\A$ \diff{is replaced with $H'$}, where $H'$ is a uniformly random function that is independent of $H$.
                \item The oracle given to $\B$ \diff{is replaced with $H'_{r_i\to t}$}, which is $H'$ reprogrammed to $t=h_i\xor\sk^{(i)}_\OneUTE$ on input $r_i$. 
            \end{itemize}
            \item $\hyb'''_{i-1}$: Same as $\hyb''_{i-1}$ except the challenger now computes  \ifelsesubmiteq{\diff{\ket{\ct^{(i)}_\OneUTE}\gets \OneUTE.\Enc(\sk^{(i)}_\OneUTE,m^{(i)}_1)}.}
        \end{itemize}
        
        \begin{claim}\label{claim:eute-reprogram1}
            Suppose $\Pi_\SKE$ satisfies one-way CPA security for QPT adversaries. 
            Then, for all $i\in[Q]$, there exists a negligible function $\negl(\cdot)$ such that for all $\lam\in\N$, we have
            \[
            \abs{\Pr[\hyb_{i-1}(\A,\B)=1]-\Pr[\hyb'_{i-1}(\A,\B)=1]}=\negl(\lam).
            \]
        \end{claim}
        \begin{proof}
            Suppose $(\A,\B)$ distinguish the experiments with non-negligible probability $\delta'$. 
            We will use $(\A,\B)$ to construct an adversary $\B_\SKE$ that breaks one-way CPA security. 
            
            First, consider an adversary $\Ahat$ which is given an oracle $\cO$ that is either $H$ or $H_{r_i\to H'}$ and an input $(r_i,H)$ where $r_i\getsr\zo{\lam}$ and $H$ represents the entire truth table of $H$.
            Adversary $\Ahat$ runs $\hyb_{i-1}(\A,\B)$ except that it uses $\cO$ to simulate $\A$'s random oracle queries and the truth table of $H$ to simulate $\B$'s random oracle queries. We assume without loss of generality that $\A$ makes exactly $Q_{\A}$ queries. 
            Thus, $\Pr[\hyb_{i-1}(\A,\B)=1]=\Pr[\Ahat^H(r_i,H)=1]$ and 
            $\Pr[\hyb'_{i-1}(\A,\B)]=\Pr[\Ahat^{H_{r_i\to H'}}(r_i,H)=1]$.
            Setting $\B_{\mathsf{O2H}}$ to be an algorithm that measures a uniformly random query of $\Ahat$, by \cref{lem:O2H}, we have
            \[
            \delta'\leq 
            \abs{\Pr[\Ahat^H(r_i,H)=1]-\Pr[\Ahat^{H_{r_i\to H'}}(r_i,H)=1]}
            \leq 2Q_{\A}\cdot\sqrt{\Pr[\B_{\mathsf{O2H}}^{H_{r_i\to H'}}(r_i,H)=r_i]}.
            \]
            Let $\Bhat_{\mathsf{O2H}}$ be the same as $\B_{\mathsf{O2H}}$ except it does not take the truth table of $H$ as input and samples $h_i\getsr\zo{\ell}$. 
            Then, \[\Pr[\B_{\mathsf{O2H}}^{H_{r_i\to H'}}(r_i,H)=r_i]=\Pr[\Bhat_{\mathsf{O2H}}^{H_{r_i\to H'}}(r_i)=r_i]\] since the truth table for $H$ is only needed to generate $h_i=H(r_i)\xor \sk^{(i)}_\OneUTE$ (it halts before $\B$ is run) and $H_{r_i\to H'}$ reveals nothing about $H(r_i)$. 
            Furthermore, for any $r_i$, the function $H_{r_i\to H'}$ is random when $H,H'$ are random, meaning \[\Pr[\Bhat_{\mathsf{O2H}}^{H_{r_i\to H'}}(r_i)=r_i]=\Pr[\Bhat_{\mathsf{O2H}}^H(r_i)=r_i]\geq \parens{\frac{\delta'}{2Q_{\A}}}^2.\]
            Note that $\Bhat_{\mathsf{O2H}}$ ignores its input and simulates $\hyb_{i-1}(\A,\B)$ by sampling $h_i\getsr\zo{\ell}$, measuring a random query $j\getsr[Q_\A]$ of adversary $\A$, and outputting the measurement.
            With this in mind, we can construct $\B_\SKE$ as follows:
            \begin{enumerate}
                \item At the beginning of the game, $\B_\SKE$ is given $\ct^{(i)}_\SKE$ from its challenger, samples $h_i\getsr\zo{\ell}$, and samples $i^\ast\getsr[Q_\A]$. 
                
                \item When adversary $\A$ makes the $\ord{j}$ query $(m^{(j)}_0,m^{(j)}_1)$, adversary $\B_\SKE$ does the following:
                \begin{itemize}
                    \item For each $j\in[Q]$, sample $\sk^{(j)}_\OneUTE\gets\OneUTE.\Gen(1^\lam)$ and compute 
                    \[
                    \ket{\ct^{(j)}_\OneUTE}\gets \OneUTE.\Enc(\sk^{(j)}_\OneUTE,m^{(j)}_0).
                    \] 
                    \item For $j\in[Q]\setminus\setbk{i}$, sample $r_j\getsr\zo{\lam}$, set $h_j=H(r_j)\oplus\sk^{(j)}_\OneUTE$, and
                    submit $r_j$ to the CPA challenger to get $\ct^{(j)}_\SKE$. 
                \end{itemize}
                Adversary $\B_\SKE$ replies with $\ket{\ct^{(j)}}=
                \parens{\ket{\ct^{(j)}_\OneUTE},\ct^{(j)}_\SKE,h_j}$.
                
                \item For random oracle queries, $\B_\SKE$ simulates them efficiently (see~\cref{sec:qrom}) and measures query $i^\ast$ that $\A$ makes. Adversary $\B_\SKE$ outputs the measurement outcome. 
            \end{enumerate}
            By definition $\B_\SKE$ outputs $r_i$ (which we denote as the random message used to construct $\ct^{(i)}_\SKE$) with probability $\Pr[\Bhat_{\mathsf{O2H}}^H(r_i)=r_i]\geq \parens{\frac{\delta'}{2Q_{\A}}}^2$, which is non-negligible. Thus, $\B_\SKE$ breaks one-way CPA security, a contradiction.
        \end{proof}

        \begin{claim}\label{claim:eute-reprogram2}
            For all $i\in[Q]$ and $\lam\in\N$, $\Pr[\hyb'_{i-1}(\A,\B)=1]=\Pr[\hyb''_{i-1}(\A,\B)=1]$.
        \end{claim}
        \begin{proof}
            First, $h_i=H(r_i)\xor \sk^{(i)}_\OneUTE$ in $\hyb'_{i-1}$ and $h_i$ is uniform in $\hyb''_{i-1}$, so the marginal of $h_i$ is identical in both distributions. 
            Next, $H_{r_i\to H'}$ is independent of $h_i$ in $\hyb'_{i-1}$ and is a uniformly random function, which is also the case in $\hyb''_{i-1}$ with $H'$, so the hybrids have the same joint distribution for the first stage oracle and $h_i$. 
            Moreover, in $\hyb'_{i-1}$, we have $H(r_i)=h_i\xor \sk^{(i)}_\OneUTE$ and in $\hyb''_{i-1}$ we have $H'_{r_i\to t}(r_i)=t=h_i\xor \sk^{(i)}_\OneUTE$.
            Lastly, the first and second stage oracles only differ at the point $r_i$ in both hybrids, meaning the joint distributions for both oracles and $h_i$ are identical.
            Since the oracles and $h_i$ are the only modified components, this completes the proof. 
        \end{proof}

        \begin{claim}\label{claim:eute-1ute}
            Suppose $\Pi_\OneUTE$ satisfies statistical indistinguishability security.
            Then, for all $i\in[Q]$, there exists a negligible function $\negl(\cdot)$ such that for all $\lam\in\N$, we have
            \[
            \abs{\Pr[\hyb''_{i-1}(\A,\B)=1]-\Pr[\hyb'''_{i-1}(\A,\B)=1]}=\negl(\lam).
            \]
        \end{claim}
        \begin{proof}
            Suppose $(\A,\B)$ distinguish the experiments with non-negligible probability $\delta'$. 
            We will use $(\A,\B)$ to construct an adversary $(\A',\B')$ that breaks indistinguishability security as follows:
            \begin{enumerate}
                \item Adversary $\A'$ first samples $r_j\getsr\zo{\lam},s_j\getsr\zo{\ell}$ for each $j\in[Q]$ and sets $H'(r_j)=s_j$ for $j\neq i$. 
                It also samples $\sk=k\gets\SKE.\Gen(1^\lam)$ and for all $j\in[Q]$ samples $\ct^{(j)}_\SKE\gets\SKE.\Enc(k,r_j)$. 
                
                \item Adversary $\A'$ answers random oracle queries at random or according to the values already set for $H'$ if they are queried. 

                \item When adversary $\A$ makes the $\ord{j}$ query $(m^{(j)}_0,m^{(j)}_1)$, adversary $\A'$ does the following:
                \begin{itemize}
                    \item For $j\neq i$, sample $\sk^{(j)}_\OneUTE\gets\OneUTE.\Gen(1^\lam)$ and compute $\ket{\ct^{(j)}_\OneUTE}\gets
                    \OneUTE.\Enc(\sk^{(j)}_\OneUTE,m^{(j)}_{b_j})$,
                    where $b_j=0$ for $j> i$ and $b_j=1$ for $j<i$. Set $h_j=s_j\xor\sk^{(j)}_\OneUTE$.
                    \item For $j=i$, sample $h_i\getsr\zo{\ell}$ and send $m^{(i)}_0,m^{(i)}_1$ to the UTE challenger to get $\ket{\ct^{(i)}_\OneUTE}$.
                \end{itemize}
                Adversary $\A'$ replies with $\ket{\ct^{(j)}}=\parens{\ket{\ct^{(j)}_\OneUTE},\ct^{(j)}_\SKE,h_j}$.
                
                \item When adversary $\A$ outputs $\st$, adversary $\A'$ passes the state $(\st,h_i,H')$ to $\B'$, where $H'$ is the current oracle information from $\A$'s queries and the values set for $H'$. 

                \item On input $(\st,h_i,H')$ and $\sk^{(i)}_\OneUTE$, adversary $\B'$ sets $t=h_i\xor\sk^{(i)}_\OneUTE$ and sets $H'(r_i)=t$. 
                It runs $\B$ on input $(\st,\sk)$ and answers random oracle queries at random or according to the values already set for $H'$ if they are queried. Adversary $\B'$ outputs whatever $\B$ outputs. 
            \end{enumerate}
            If the UTE challenger encrypts $m^{(i)}_0$, adversary $(\A',\B')$ perfectly simulates $\hyb''_{i-1}(\A,\B)$. 
            If the UTE challenger encrypts $m^{(i)}_1$, adversary $(\A',\B')$ perfectly simulates $\hyb'''_{i-1}(\A,\B)$.
            Thus, $(\A',\B')$ has advantage $\delta'$ in the indistinguishability security game, a contradiction. 
        \end{proof}

        \begin{claim}\label{claim:eute-unwind}
            Suppose $\Pi_\SKE$ satisfies one-way CPA security for QPT adversaries. 
            Then, for all $i\in[Q]$, there exists a negligible function $\negl(\cdot)$ such that for all $\lam\in\N$, we have
            \[
            \abs{\Pr[\hyb'''_{i-1}(\A,\B)=1]-\Pr[\hyb_{i}(\A,\B)=1]}=\negl(\lam).
            \]
        \end{claim}
        \begin{proof}
            Analogous to the proofs of \cref{claim:eute-reprogram1,claim:eute-reprogram2}.
        \end{proof}

        \noindent Combining \cref{claim:eute-1ute,claim:eute-reprogram1,claim:eute-reprogram2,claim:eute-unwind} proves the lemma by a hybrid argument.
        
    \end{proof}

    \noindent The theorem follows by a hybrid argument via \cref{lem:cr-everlasting}.
    
\end{proof}

\begin{remark}[Plain Model Insecurity]\label{rem:everlasting-ute-qfhe}
    We note that for any concrete hash function $H$, if $\Pi_\SKE$ is FHE for quantum circuits (QFHE) with classical ciphertexts for classical messages, then even one-way non-everlasting UTE security is broken. Namely, QFHE can be used to get a classical ciphertext that contains the message given $\ketct$ by computing the homomorphic evaluation of $\Dec$. Thus, the second-stage adversary can efficiently recover $m$ given the SKE key $k$. 
\end{remark}

\subsection{Impossibility of Weakly-Efficient Shadow Tomography}
\label{sec:west}
In this section, we show an efficient attack on any collusion-resistant UTE with everlasting security from weakly-efficient shadow tomography (WEST). In particular, this implies the impossibility of WEST for general mixed states in the QROM (since random oracles imply one-way functions and thus CPA-secure SKE). 

\paragraph{Weakly-efficient shadow tomography.} We first define the notion of weakly-efficient shadow tomography (WEST). This notion captures shadow tomography where the post-processing can run in unbounded time, but the initial stage must be efficient. In particular, this generalizes the notion of classical shadows~\cite{NatPhys:HKP20}, which has been shown to compute useful information about unknown quantum states when the family of measurements are projections to pure states. 

\begin{definition}[Weakly-Efficient Shadow Tomography]
\label{def:west} 
Let $E$ denote a uniform quantum circuit family with classical binary output that takes as input $i\in[M]$ and an $n$-qubit quantum state $\rho$. Let a shadow tomography procedure be slightly redefined to be a pair of algorithms $(\cT_1,\cT_2)$ where $\cT_1$ takes as input $E$ and $k$ copies of $\rho$, and outputs a classical string $\st$ such that $\Pr[\abs{\cT_2(i,\st)-\Pr[E(i,\rho)=1]}\le\eps]\ge 1-\delta$. We call said procedure \textit{weakly-efficient} if the number of copies $k$ and the runtime of $\cT_1$ are both $\poly(n, \log M, 1/\eps,\log 1/\delta)$. Note that $\cT_2$ can run in \textit{unbounded} time.
\end{definition}

\begin{theorem}[WEST Breaks $t$-Copy Everlasting Secure UTE]\label{thm:west-attack}
    Let $\lam$ be a security parameter.
    Suppose there exists an algorithm $\cT=(\cT_1,\cT_2)$ that satisfies \cref{def:west} for all mixed quantum states with number of copies $k=k(n,\log M,1/\eps,\log(1/\delta))$ for polynomial $k$. 
    Then, there does not exist $k$-copy everlasting secure untelegraphable encryption that also satisfies correctness. 
\end{theorem}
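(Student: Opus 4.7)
The plan is to mirror the proof of \cref{thm:hest-attack} almost verbatim, exploiting the fact that the two-stage structure of WEST lines up perfectly with the two-stage structure of everlasting security: the efficient first stage $\cT_1$ of WEST matches the efficient first-stage UTE adversary $\A$, and the unbounded-time second stage $\cT_2$ matches the unbounded second-stage UTE adversary $\B$.

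Concretely, I would let $\Pi_\UTE=(\Gen,\Enc,\Dec)$ be any candidate correct $k$-copy everlasting secure UTE scheme with (WLOG) message space $\zo{}$ and decryption key space $\cK_\lam$. Let $n=n(\lam)$ be the maximum ciphertext size, set $M=\abs{\cK_\lam}$, choose a constant $\eps<1/2$, and take $\delta=2^{-\lam}$. Define the uniform quantum circuit family $E_\lam(i,\rho)$ which parses $i=\dk$ and $\rho=\ketct$ and outputs $\Dec(\dk,\ketct)\in\zo{}$. Because $\Dec$ is QPT and $n,\log M=\poly(\lam)$, $E_\lam$ is a polynomial-size uniform quantum circuit family, so WEST applies with parameters $n,\log M,1/\eps,\log(1/\delta)$ all polynomial in $\lam$, making $k$ polynomial as well.

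Next, I would construct the UTE attacker $(\A,\B)$: on input $k$ challenge ciphertexts $\setbk{\ket{\ct_i}}_{i\in[k]}$, adversary $\A$ runs $\st\gets\cT_1(E_\lam,\setbk{\ket{\ct_i}}_{i\in[k]})$ and outputs the classical string $\st$; on input $\dk$ and $\st$, adversary $\B$ runs $v\gets\cT_2(\dk,\st)$ in unbounded time and outputs $1$ if $v>1/2$ and $0$ otherwise. Then $\A$ is QPT because $\cT_1$ is, and $\B$ is a valid unbounded-time second-stage adversary by the definition of everlasting security. By WEST, with probability at least $1-2^{-\lam}$ we have $\abs{\cT_2(\dk,\st)-\Pr[E_\lam(\dk,\ket{\ct})=1]}\le\eps<1/2$; combined with UTE correctness, which forces $\Pr[E_\lam(\dk,\ket{\ct})=1]$ to be $\negl(\lam)$ for encryptions of $0$ and $1-\negl(\lam)$ for encryptions of $1$, this implies $\B$ guesses the challenge bit correctly with overwhelming probability, contradicting $k$-copy everlasting security.

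The proof is essentially routine once the correspondence between the two stages is set up, so I do not anticipate a real obstacle. The only subtlety to flag is that $E_\lam$ must have a fixed, uniform description independent of the random coins of the UTE scheme: this is fine because $E_\lam$ just runs $\Dec$, whose circuit description is determined by the scheme and depends only on $\lam$ (the secret key is part of the input $i$). If one insists that $\cT$ takes $E_\lam$ expressed in some fixed gate set, this is automatically available since $\Dec$ is QPT; no oracle access is required here, in contrast with the QROM variant sketched in the technical overview.
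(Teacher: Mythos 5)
Your proposal is correct and matches the paper's own proof essentially step for step: the same circuit family $E_\lam$ built from $\Dec$, the same parameter choices ($\eps<1/2$ constant, $\delta=2^{-\lam}$), and the same identification of $\cT_1$ with the efficient first-stage adversary and $\cT_2$ with the unbounded second stage. The remark about $E_\lam$ being a fixed uniform circuit (with the QROM subtlety deferred) is also consistent with how the paper handles the instantiation in its corollary.
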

\begin{proof}
    Let $\Pi_\UTE=(\Gen,\Enc,\Dec)$ with $\poly(\lam)$ be a candidate correct UTE scheme with message space $\zo{}$ (without loss of generality) and decryption key space $\cK=\setbk{\cK_\lam}_{\lam\in\N}$. 
    Let $n=n(\lam)$ be the maximum size of a ciphertext, $M=\abs{\cK_\lam}$, $\eps<1/2$ be a constant, and $\delta=2^{-\lam}$.
    We define the quantum circuit $E_\lam(i,\rho)$ to output $\Dec(\dk,\ketct)\in\zo{}$ where the input is parsed as $i=\dk$ and $\rho=\ketct$.
    We now construct an adversary $(\A,\B)$ that breaks $k$-copy everlasting security of $\Pi_\UTE$:
    \begin{enumerate}
        \item On input $\setbk{\ket{\ct_i}}_{i\in[k]}$, adversary $\A$ runs $\cT_1(E_\lam,\setbk{\ket{\ct_i}}_{i\in[k]})$ to get a classical state $\st$, which $\A$ outputs as its classical state.
        \item On input $\dk$ and $\st$, adversary $\B$ outputs 1 if $\cT_2(\dk,\st)>1/2$ and 0 otherwise.
    \end{enumerate}
    By definition of UTE, $n=\poly(\lam)$ and $\log M=\poly(\lam)$. By the choice of $\delta$, $\log(1/\delta)=\lam$ and $1/\eps=3$. Thus, $\cT_1$ runs in QPT and $(\A,\B)$ meets the efficiency condition. Since $\Pr[\abs{\cT_2(i,\st)-\Pr[E(i,\rho)=1]}\le\eps]\ge 1-\delta$, by the setting of $\eps,\delta$, we have \[\Pr[\abs{\cT_2(i,\st)-\Pr[E(i,\rho)=1]}<1/2]\ge 1-2^{-\lam}.\]
    By correctness of UTE, we have $\Pr[E(i,\rho)=1]=\negl(\lam)$ when $\setbk{\ket{\ct_i}}_{i\in[k]}$ are encryptions of 0 and $\Pr[E(i,\rho)=1]\ge1-\negl(\lam)$ when $\setbk{\ket{\ct_i}}_{i\in[k]}$ are encryptions of 1. This implies $\cT_2(\dk,\st)>1/2$ for encryptions of 1 and $\cT_2(\dk,\st)<1/2$ for encryptions of 0 with probability $1-2^{-\lam}$. Thus, $(\A,\B)$ wins the $k$-copy security game with overwhelming probability. 
\end{proof}

\begin{corollary}
    Weakly-efficient shadow tomography for all mixed quantum states does not exist in the quantum random oracle model. 
\end{corollary}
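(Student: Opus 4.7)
The plan is to combine the attack of \cref{thm:west-attack} with the existence of an everlasting collusion-resistant UTE scheme in the QROM. By \cref{thm:west-attack}, a WEST procedure with copy complexity $k(n,\log M,1/\eps,\log 1/\delta)$ rules out any $k$-copy everlasting secure UTE scheme (which satisfies correctness). Since collusion-resistant security trivially implies $t$-copy security for any polynomial $t$ (see \cref{def:cr-sec}), it suffices to exhibit an everlasting collusion-resistant UTE scheme in the QROM.

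For this, I would invoke \cref{cons:everlasting-ute} instantiated with: (i) the statistically indistinguishability-secure one-time UTE from \cref{sec:ind-ute} (which in turn is built from the statistically one-way secure UTE of \cref{cor:ow-utske}), and (ii) a CPA-secure SKE scheme. Both building blocks exist unconditionally in the QROM, since a quantum-accessible random oracle implies one-way functions and hence CPA-secure SKE, while the one-time UTE scheme is information-theoretic. Correctness of the resulting scheme follows from \cref{thm:everlasting-ute-correct}, and everlasting collusion-resistant security in the QROM follows from \cref{thm:everlasting-ute-secure}.

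Putting the pieces together: suppose for contradiction that a WEST procedure $\cT=(\cT_1,\cT_2)$ existed in the QROM for all mixed states, with copy complexity $k$ as in \cref{def:west}. Applying \cref{thm:west-attack} to the QROM scheme above yields an efficient first-stage adversary $\A$ and an unbounded-time second-stage adversary $\B$ that together break $k$-copy everlasting security with overwhelming probability, contradicting \cref{thm:everlasting-ute-secure}.

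The only subtle point, and the main thing to verify carefully, is that \cref{thm:west-attack} can be applied in the QROM setting: the circuit family $E_\lam$ used by the attack must be a uniform quantum circuit family, but the decryption algorithm in \cref{cons:everlasting-ute} makes oracle calls to $H$. As noted in the informal discussion of \cref{sec:west}, this is handled by moving the random oracle call outside of $E_\lam$: have $\A$ (which is efficient and quantumly-accesses $H$) compute $H(r_i)$ for each ciphertext $\ketct^{(i)}=(\ket{\ct^{(i)}_\OneUTE},\ct^{(i)}_\SKE,h_i)$ and strip off the SKE layer, so that $E_\lam$ only needs to run the one-time UTE decryption on the stripped state, which is a plain polynomial-size quantum circuit. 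The index $i\in[M]$ passed to $E_\lam$ is just the one-time UTE key $\sk_\OneUTE=H(\SKE.\Dec(k,\ct_\SKE))\xor h$, which $\B$ recovers in unbounded time via $\cT_2$ since $\B$ gets the full secret key $(k,H)$ in the second stage. With this minor repackaging the attack of \cref{thm:west-attack} goes through verbatim, completing the argument.
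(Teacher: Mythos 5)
Your overall structure matches the paper's: combine \cref{thm:west-attack} with the everlasting collusion-resistant scheme of \cref{cons:everlasting-ute} (instantiable in the QROM since the random oracle gives one-way functions and the one-time UTE is statistical), using that collusion resistance implies $t$-copy security. However, your handling of the one genuinely delicate step --- removing the random-oracle gates from the circuit family $E_\lam$ --- contains a real error. You have the \emph{first-stage} adversary $\A$ ``compute $H(r_i)$ for each ciphertext and strip off the SKE layer.'' But $\A$ does not hold the SKE key $k$, and $r_i$ is only available inside $\ct^{(i)}_\SKE=\SKE.\Enc(k,r_i)$; since $r_i$ is uniform and hidden by CPA security, $\A$ cannot evaluate $H$ at $r_i$ (a random unknown point) except with negligible probability. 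Indeed, if $\A$ could recover $H(r_i)$ it could unmask $\sk^{(i)}_\OneUTE=H(r_i)\xor h_i$ and decrypt outright, contradicting \cref{thm:everlasting-ute-secure} without any shadow tomography --- a sign the step cannot be performed. So as written, the attack you describe is not one the efficient first stage can mount, and the reduction to WEST does not go through.

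The paper's adaptation avoids this by never asking $\A$ to evaluate $H$ on a secret point: the circuit $E$ is defined as the decryption of \cref{cons:everlasting-ute} \emph{starting after} $H(r)$ is known, i.e., the value $H(r)$ (equivalently, the derived one-time key) is moved into the \emph{index} input of $E$, the classical ciphertext components $(\ct_\SKE,h)$ are simply appended to the telegraphed string $\st$, and it is the unbounded second-stage adversary $\B$ --- who receives $(k,H)$ --- that decrypts $\ct_\SKE$, computes $H(r)$, forms the appropriate index, and only then queries $\cT_2(\cdot,\st)$. In other words, the oracle evaluation is deferred to the second stage via the index and the appended classical data, rather than performed in the first stage. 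Your second paragraph's remark that ``the index passed to $E_\lam$ is just $\sk_\OneUTE$, which $\B$ recovers'' is in the right spirit, but it is inconsistent with the stripping step you assign to $\A$; to repair the proof you should drop that step entirely and restructure the attack along the lines just described (as in \cref{sec:west}).
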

\begin{proof}
    Follows by \cref{thm:everlasting-ute-correct,thm:everlasting-ute-secure,thm:west-attack} and the fact that collusion-resistant security implies $t$-copy security.
    To adapt the above attack to \cref{cons:everlasting-ute} with only query access to the hash $H$, we can define the circuit $E$ to run the decryption algorithm of \cref{cons:everlasting-ute} modified to take the value $H(r)$ as input.
    The classical part of the ciphertext can be appended to $\st$, which will allow $H(r)$ to be computed after the key is revealed. Thus, even if the WEST procedure requires the circuit $E$ to be composed of atomic gates only (no hash gate), the attack still works. 
\end{proof}

\subsection{Impossibility of Public-Key Untelegraphable Encryption with Everlasting Security}
\label{sec:everlasting-impossible}
In this section, we give evidence that a non-falsifiable assumption is necessary to construct everlasting secure UTE (and thus UE as well). 
We do so by showing that any black-box reduction that proves everlasting one-way security of a public-key UTE (UTPKE) scheme with respect to an assumption $A$ can be used to construct an efficient adversary that breaks $A$, assuming one-way functions exist. 
This complements the observation that \cref{cons:everlasting-ute} is broken for any concrete choice of hash function (see \cref{rem:everlasting-ute-qfhe}) if the classical encryption scheme is quantum fully homomorphic.

\paragraph{Falsifiable cryptographic assumptions.} We start by recalling the notion of a falsifiable cryptographic assumption. Intuitively, this notion refers to any cryptographic assumption which can be broken in a way that can efficiently checked, such as one-way functions. We give the formal definition from~\cite{GW11}.

\begin{definition}[{Falsifiable Cryptographic Assumption~\cite{GW11}}]
\label{def:falsifiable}
    A falsifiable cryptographic assumption consists of an interactive challenger $\C$ and a constant $c\in[0,1)$. 
    On security parameter $\lam$, the challenger $\C(1^\lam)$ interacts with an algorithm $\A(1^\lam)$ and may output a special symbol $\win$. If this occurs, we say that $\A(1^\lam)$ wins $\C(1^\lam)$.
    The assumption associated with the tuple $(\C,c)$ says that for all efficient (possibly non-uniform) algorithms $\A$, there exists a negligible function $\negl$ such that for all $\lam\in\N$, $\Pr[\A(1^\lam) \text{ wins } \C(1^\lam)]\leq c+\negl(\lam)$, where the probability is over the random coins of $\C$ and $\A$.
    We say that a (possibly inefficient) algorithm $\A$ breaks such an assumption if its probability of winning exceeds that of the assumption. When considering quantum assumptions, we refer to the purified version of algorithm $\A$ as a unitary $\mat{U}_\A$.
\end{definition}

\noindent We also define what it means for a black-box reduction to show everlasting security of a UTE scheme:

\begin{definition}[Black-Box Reduction for Everlasting Secure UTPKE]
\label{def:bb-red}
    A black-box reduction showing everlasting security of a UTE scheme based on a falsifiable assumption $(\C,c)$ is an efficient oracle access algorithm $\cR^{(\cdot)}$ such that, for every (possibly inefficient) adversary $(\A,\B)$ that wins the everlasting UTPKE security game with non-negligible advantage, algorithm $\cR^{\mat{U}_\A}$ breaks the assumption.
    We model oracle query to $\mat{U}_\A$ as follows: given an input state $\ket{\psi}=(\pk,\ketct)$, the response is the output register $\qreg{R}_\outsf$ of $\mat{U}_\A\ket{\psi}$. 
    Furthermore, we do not give oracle access to the inverse unitary $\mat{U}_\A^\dagger$.
    Note that since $\B$ runs in unbounded time in the everlasting security game, so oracle access to $\B$ is not given. 
\end{definition}

\noindent We now state and prove the main theorem of this section:

\begin{theorem}[Impossibility of Everlasting One-Way Secure UTPKE]
\label{thm:everlasting-impossible}
    Let $\lam\in\N$ be a security parameter and $\delta\in(0,1)$ be a constant.
    Let $\Pi_\UTE=(\Gen,\Enc,\Dec)$ be any candidate UTPKE scheme that satisfies correctness. 
    Suppose post-quantum CPA-secure encryption exists for size $2^{\lam^\delta}$ adversaries such that the ciphertexts of different messages are statistically far apart.
    Then, for any falsifiable assumption $(\C,c)$, one of the following holds:
    \begin{itemize}
        \item The assumption $(\C,c)$ is false.
        \item There is no black-box reduction (\cref{def:bb-red}) showing everlasting one-way security of $\Pi_\UTE$ based on the assumption $(\C,c)$.
    \end{itemize}
\end{theorem}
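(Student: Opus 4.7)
My plan is to follow a Gentry--Wichs-style meta-reduction adapted to quantum oracle access. The strategy has three parts: exhibit an inefficient winning adversary using the given encoding scheme, define an efficient simulator indistinguishable to the reduction, and conclude that the composed reduction with the simulator is an efficient attacker against $(\C, c)$.

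First I would construct an inefficient adversary $(\A^\ast, \B^\ast)$ that wins the everlasting one-way UTPKE game with overwhelming probability. Let $\Pi' = (\SubGen, \SubEnc, \SubDec)$ denote the assumed post-quantum CPA-secure encryption (viewed as public-key for convenience), instantiated at a larger security parameter $\lam' = \lam^{O(1/\delta)}$ so that its $2^{(\lam')^\delta}$-security bound dominates any $2^{\poly(\lam)}$ cost. Adversary $\A^\ast$, on input $(\pk, \ketct)$, rejection-samples key pairs from $\Gen(1^\lam)$ until it obtains the given $\pk$, recovers the matching $\sk$, decrypts $m \gets \Dec(\sk, \ketct)$, freshly samples $(\pk', \sk') \gets \SubGen(1^{\lam'})$, and outputs the classical string $\st := (\pk', \SubEnc(\pk', m))$. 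Adversary $\B^\ast$ ignores the decryption key it receives, parses $\st = (\pk', \ct')$, and in unbounded time returns the message $m^\ast$ maximizing $\Pr[\SubEnc(\pk', m^\ast) = \ct']$. By statistical farness of the encoding, $\B^\ast$ recovers $m$ with probability $1 - \negl(\lam)$, so $(\A^\ast, \B^\ast)$ wins overwhelmingly. By the black-box reduction property, $\cR^{\mat{U}_{\A^\ast}}$ breaks $(\C, c)$ with non-negligible probability.

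Next I would define the efficient simulator $\cS$ whose unitary $\mat{U}_\cS$ on any input state $\ket{\pk}\ket{\ct}$ freshly samples $(\pk', \sk') \gets \SubGen(1^{\lam'})$ and writes $(\pk', \SubEnc(\pk', 0))$ into the output register, ignoring the input message. The goal is to show that $\cR^{\mat{U}_\cS}$ also breaks $(\C, c)$ with non-negligible probability via a hybrid argument over the $q = \poly(\lam)$ queries the reduction makes. Hybrid $H_i$ answers the first $i$ queries using $\mat{U}_\cS$ and the remaining using $\mat{U}_{\A^\ast}$; distinguishing $H_{i-1}$ from $H_i$ reduces to CPA security of $\Pi'$, where the CPA distinguisher internally simulates all queries except the $i$-th (the later queries each incurring the brute-force cost $2^{|\sk|} \leq 2^{\poly(\lam)}$) and delegates only the $i$-th encoding to its CPA challenger. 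Since $\cR^{\mat{U}_\cS}$ is itself efficient, either it breaks $(\C, c)$, contradicting the falsifiability of the assumption, or the indistinguishability step fails, meaning no such reduction exists.

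The main obstacle is the quantum nature of oracle access: the reduction may query $\mat{U}_{\A^\ast}$ on a superposition $\sum_\psi \alpha_\psi \ket{\pk_\psi}\ket{\ct_\psi}$, producing $\sum_\psi \alpha_\psi \ket{\pk_\psi}\ket{\ct_\psi}\ket{\SubEnc(\pk'_\psi, m_\psi)}$, whereas the classical CPA challenger delivers only a single ciphertext per query. Bridging this gap requires performing the hybrid step at the channel level: one must show that the channel implementing the $i$-th query under $\mat{U}_{\A^\ast}$ is indistinguishable from the channel under $\mat{U}_\cS$ when composed with the rest of the efficient computation. Since the UTPKE message space is polynomially bounded and fresh randomness and encoding keys are sampled per query, this can be massaged into a standard CPA indistinguishability via coherent sampling of the encoding; formalizing this rigorously (while respecting the fact that $\mat{U}_\A^\dagger$ is not given as an oracle) is the key technical step to execute carefully.
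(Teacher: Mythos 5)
Your overall strategy matches the paper's: an inefficient adversary that rejection-samples the matching secret key, decrypts, and outputs a CPA encryption of the message (with an unbounded second stage exploiting statistical farness of ciphertexts), an efficient simulator that encrypts $0$, a query-by-query hybrid reducing to CPA security, and the conclusion that either $(\C,c)$ is false or no reduction exists. However, there is a genuine gap: you implicitly assume the reduction only invokes the adversary at its own security parameter $\lam$. In the paper's model the adversary is a family taking input $(1^\kappa,\pk,\ketct)$, and $\cR$ may query it at any $\kappa$ of its choosing, in particular at very small $\kappa$ (constant, or around $\log\lam$). For such queries your simulator, which blindly outputs an encryption of $0$ at parameter $\lam'(\kappa)=\kappa^{O(1/\delta)}$, is distinguishable from the real adversary in $\poly(\lam)$ time, since the encoding scheme at that tiny parameter offers no security against $\poly(\lam)$-size distinguishers; hence the step ``$\cR^{\mat{U}_\cS}$ still breaks the assumption'' fails. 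The paper's proof is built precisely around this: the simulator is \emph{non-uniform}, carrying as advice a table $T_\lam$ of key pairs for all $\kappa\le\kappa^\ast(\lam)=\lfloor\log^{1/(c+1)}\lam\rfloor$ so that small-parameter queries are answered exactly as the real adversary would, an averaging argument fixes the table without decreasing the reduction's success, and the calibration $n(\kappa)=\kappa^{(c+2)/\delta}$ ensures that just above the threshold the CPA security is still superpolynomial in $\lam$. None of this appears in your proposal.

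Two further problems sit in your hybrid step. First, your CPA distinguisher performs the $2^{\abs{\sk}}$ rejection sampling itself for the non-challenge queries and you absorb this by blowing up the encoding parameter to $\lam'$; in the multi-parameter setting this accounting breaks, because the $i$-th (challenge) query may be at a small $\kappa_i$ whose CPA instance is only secure against roughly $2^{\log^{c'}\lam}$-size adversaries, while brute-forcing other queries made at parameter $\lam$ costs $2^{\lam^c}$. The paper avoids any brute force inside the CPA reduction by handing it, as non-uniform advice, the partial transcript, the reduction's randomness, and the message the real adversary would derive at the $i$-th query. Second, you explicitly defer the superposition-query issue (``coherent sampling \dots is the key technical step to execute carefully''), i.e., the step reconciling a single classical CPA challenge with a quantum query is acknowledged but not carried out, whereas the paper's mechanism (measuring the $\pk$ register and fixing the derived message via advice) is what makes this step go through. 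A smaller point: the hypothesis provides CPA-secure \emph{secret-key} encryption (so the corollary can instantiate it from one-way functions); ``viewing it as public-key for convenience'' is an unjustified strengthening --- as in the paper, you should simply sample a fresh secret key at each invocation.
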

\begin{proof}
    Let $\lam\in\N$ be the security parameter, and suppose the length of the keys output from $\Gen(1^\lam)$ is $\abs{\pk}+\abs{\sk}\leq\lam^c$ for constant $c\in\N$.
    Let $n=n(\lam)=\lam^{(c+2)/\delta}$ be a CPA-secure encryption security parameter and let $\Pi_\SKE=\SKE.(\Gen,\Enc,\Dec)$ be an SKE scheme that is CPA-secure for size $2^{n^\delta}=2^{\lam^{c+2}}$ adversaries and has ciphertexts that are statistically far apart.
    Now, consider an inefficient adversary $\A$ that on input $(1^\kappa,\pk,\ketct)$ does the following:
    \begin{enumerate}
        \item Measures the $\pk$ register (since it is supposed to be classical).
        \item Rejection samples $(\pk',\sk')\gets\Gen(1^\kappa)$ until $\pk'=\pk$ and computes $m\gets\Dec(\sk',\ketct)$.
        \item Samples $\sk_\SKE\gets\SKE.\Gen(1^{n(\kappa)})$ and outputs $\st=\SKE.\Enc(\sk_\SKE,m)$. 
    \end{enumerate}
    Define the second stage adversary $\B$ to find $m$ by brute-force, which is possible since ciphertexts for $\Pi_\SKE$ must be statistically far from each other.
    Clearly $(\A,\B)$ breaks everlasting one-way security with non-negligible advantage.
    
    We now show that for every polynomial-size distinguisher $\cD(1^\lam)$, there exists a simulator $\cS(1^\lam)$ and negligible function $\negl$ such that for all $\lam\in\N$,
    \[\Pr[\cD^{\mat{U}_\cA}(1^\lam)=1]-\Pr[\cD^{\mat{U}_\cS(1^\lam)}(1^\lam)=1]=\negl(\lam).\]
    We define an efficient non-uniform simulator $\cS(1^\lam)$ for $\A$ that on input a query $(1^\kappa,\pk,\ketct)$ does the following:
    \begin{enumerate}
        \item If $\kappa\leq\kappa^\ast(\lam)=\floor{\log^{1/(c+1)}\lam}$, the simulator measures the $\pk$ register, uses its advice to get a valid key $\sk$ for $\pk$, and runs the steps of $\A$ from that point. The advice consists of a table $T_\lam$ that maps public keys $\pk\in\zo{\kappa^c}$ to a corresponding secret key $\sk$ for all $\kappa\leq\kappa^\ast(\lam)$.
        \item If $\kappa>\kappa^\ast(\lam)$, the simulator measures the $\pk$ register, samples $\sk_\SKE\gets\SKE.\Gen(1^n)$ and outputs $\st=\SKE.\Enc(\sk_\SKE,0)$.
    \end{enumerate}
    Note that $T_\lam$ has size at most $\kappa^\ast(\lam)2^{\kappa^\ast(\lam)}=\poly(\lam)$ by the bound on the key size, so $\cS$ is efficient and has polynomial-sized advice.
    Moreover, by an averaging argument, there exists a choice of $T_\lam$ such that for the adversary $\cA(T_\lam)$ that answers queries where $\kappa\leq\kappa^\ast(\lam)$ using the table, we have
    \[\Pr[\cD^{\mat{U}_\cA(T_\lam)}(1^\lam)=1]\geq \Pr[\cD^{\mat{U}_\cA}(1^\lam)=1],\] so we pick this table as advice. We now appeal to CPA-security of $\Pi_\SKE$ to show
    \[\Pr[\cD^{\mat{U}_\cA(T_\lam)}(1^\lam)=1]-\Pr[\cD^{\mat{U}_\cS(1^\lam)}(1^\lam)=1]=\negl(\lam).\]
    Suppose $\cD(1^\lam)$ makes at most $q(\lam)$ queries. For $i\in[q(\lam)]$, define $\cO_i$ to be the oracle that answers the first $i$ queries with $\cA(T_\lam)$ and the last $q(\lam)-i$ queries with $\cS(\lam)$. 
    Then, for $i\in[q(\lam)]$,
    \[
    \abs{\Pr[\cD^{\cO_{i-1}}(1^\lam)=1]-\Pr[\cD^{\cO_{i}}(1^\lam)=1]}=\negl(\lam).
    \]
    Suppose $\cD$ can distinguish $\cO_{i-1}$ and $\cO_{i}$. 
    We construct a non-uniform adversary $\B$ that breaks CPA-security as follows:
    \begin{enumerate}
        \item We first define the advice distribution. 
        For a security parameter $\lam\in\N$, adversary $\B$ is given $T_\lam$ along with a transcript of $\cD^{\cA(T_\lam)}(1^\lam;r)$ for the first $i$ queries and $i-1$ responses along with the randomness string $r$ and the message $m$ that $\cA(T_\lam)$ would derive (see step 2 of $\A$) in the $\ord{i}$ query. 
        
        \item Adversary $\B$ starts running $\cD(1^\lam;r)$ and answers the first $i-1$ queries as in the transcript. 
        
        \item If the $\ord{i}$ query satisfies $\kappa\leq\kappa^\ast(\lam)$, respond with $T_\lam$. Otherwise, submit the message pair $(0,m)$ to the CPA challenger for security parameter $1^{n(\kappa)}$ to get back $\st$ and respond with it.

        \item For the remaining queries, answer how $\cS(1^\lam)$ would. 
        Output whatever $\cD$ outputs. 
    \end{enumerate}
    Adversary $\B$ is efficient so long as $\cD$ is. If $\st$ is an encryption of $m$, adversary $\B$ simulates $\cO_i$ and if $\st$ is an encryption of 0, adversary $\B$ simulates $\cO_{i-1}$.   
    By assumption on the security of $\Pi_\SKE$, for $\kappa>\kappa^\ast(\lam)$, no adversaries of size 
    \[s(\lam)=2^{n(\kappa)^\delta}> 2^{n(\kappa^\ast(\lam))^\delta}=2^{\log^{c'}\lam},\]
    can break CPA security with non-negligible probability, where $c'=(c+2)/(c+1)>1$. 
    Since $s(\lam)$ is super polynomial in $\lam$, adversary $\B$ contradicts CPA security of $\Pi_\SKE$. 
    Thus, 
    \[\Pr[\cD^{\mat{U}_\cA(T_\lam)}(1^\lam)=1]-\Pr[\cD^{\mat{U}_\cS(1^\lam)}(1^\lam)=1]=\negl(\lam)\]
    follows by a hyrbid argument.
    To conclude the proof, consider a candidate reduction $\cR$. If we have
    \[
    \Pr[\cR^{\mat{U}_\A}(1^\lam) \text{ wins } \C(1^\lam)]> c+\eps(\lam)
    \]
    for non-negligible $\eps$, then it must also be the case that 
    \[
    \Pr[\cR^{\mat{U}_\cS(1^\lam)}(1^\lam) \text{ wins } \C(1^\lam)]> c+\eps(\lam)-\negl(\lam),
    \]
    which implies $(\C,c)$ is false.
\end{proof}

\begin{corollary}[Impossibility of Everlasting One-Way Secure UTPKE from OWFs]
\label{cor:everlasting-impossible}
    Let $\lam\in\N$ be a security parameter, $\delta\in(0,1)$ be a constant, and suppose post-quantum one-way functions exist for size $2^{\lam^\delta}$ adversaries.
    Then, any candidate UTPKE cannot have a black-box proof of everlasting one-way security based on a falsifiable assumption $(\C,c)$, unless the assumption is false.
\end{corollary}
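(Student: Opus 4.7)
The plan is to reduce the corollary directly to \cref{thm:everlasting-impossible}. That theorem already does the hard work, asserting that if there is a post-quantum CPA-secure SKE with $2^{\lam^\delta}$-security whose ciphertexts of distinct messages are statistically far apart, then no black-box reduction to a falsifiable assumption can prove everlasting one-way security of any UTPKE scheme (unless the assumption is false). So the only task is to exhibit such an SKE from the assumed sub-exponentially secure one-way functions.

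First, I would apply the standard post-quantum amplification chain OWFs $\Rightarrow$ PRGs (via HILL) $\Rightarrow$ PRFs (via GGM). Each step incurs only polynomial loss in adversary size and input length, so $2^{\lam^\delta}$-security survives with the same exponent $\delta$ after absorbing the losses into a constant-factor rescaling of the security parameter. This produces a post-quantum PRF $F_k : \zo{n} \to \zo{2n}$ that is secure against size-$2^{n^\delta}$ quantum adversaries.

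Next, I construct the required SKE by padded PRF encryption: on key $k$ and message $m \in \zo{n}$, sample $r \getsr \zo{n}$ and output $\ct = (r,\, F_k(r) \oplus (m \concat 0^n))$; decryption computes $F_k(r)$, XORs it off, verifies that the trailing $n$ bits are $0$, and returns the first $n$ bits. Correctness is clear and CPA security against size-$2^{n^\delta}$ adversaries is immediate from PRF security. For the statistical separation property, fix any $r$ and observe that the conditional support of $\Enc(\cdot, m)$ sits inside the set $\{F_k(r) \oplus (m \concat 0^n) : k \in \zo{n}\}$, which has at most $2^n$ elements in a space of size $2^{2n}$. For distinct $m_0 \neq m_1$, an element lies in both supports only if there exist $k, k'$ with $F_k(r) \oplus F_{k'}(r) = (m_0 \oplus m_1) \concat 0^n$; by pseudorandomness of $F$, this occurs with probability $2^{-\Omega(n)}$, so the overlap carries negligible mass and the two distributions are statistically $1 - 2^{-\Omega(n)}$ apart.

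Plugging this SKE into \cref{thm:everlasting-impossible} yields the corollary. The main step requiring care is tracking the $2^{\lam^\delta}$ security parameter end-to-end through the OWF-to-PRG-to-PRF-to-SKE chain; this is routine since every intermediate reduction introduces only polynomial overhead, but it should be bookkept explicitly to confirm that the exponent $\delta$ is preserved rather than eroded.
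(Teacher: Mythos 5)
Your overall route is the same as the paper's: instantiate, from the sub\-exponentially secure OWF, a CPA-secure SKE (against size-$2^{\lam^\delta}$ adversaries) whose ciphertexts of distinct messages are statistically far, and then invoke \cref{thm:everlasting-impossible}. The bookkeeping of the security exponent through OWF $\Rightarrow$ PRG $\Rightarrow$ PRF is fine. The gap is in the statistical-farness step for your concrete scheme. Pseudorandomness of $F$ is a computational statement about oracle access to a \emph{single} keyed function; it says essentially nothing statistical about the key-averaged distribution $D_r = \{F_k(r)\}_{k}$, which is what your farness claim is about. Concretely, take any PRF $H\colon\zo{\lam}\times\zo{n}\to\zo{2n}$ and define $F_{(k_1,v)}(r) := H_{k_1}(r)\oplus(v\Vert 0^n)$ with $v\in\zo{n}$ part of the key. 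This $F$ is a PRF with essentially the same security (XOR-ing a fixed key-dependent constant onto a pseudorandom function preserves pseudorandomness), yet in your scheme the ciphertext becomes $(r,\,H_{k_1}(r)\oplus((v\oplus m)\Vert 0^n))$, whose distribution over a fresh random key is \emph{identical for every message} $m$. So the ciphertexts of different messages are not statistically far at all, the unbounded second-stage adversary in the proof of \cref{thm:everlasting-impossible} cannot recover $m$, and the reduction collapses. (Separately, the event you bound --- existence of $k,k'$ with $F_k(r)\oplus F_{k'}(r)=(m_0\oplus m_1)\Vert 0^n$ --- has constant probability even for a truly random family with $2^n$ keys, since the expected number of such pairs is about $1/2$; the relevant quantity is the overlap \emph{mass}, not existence, and even that cannot be bounded by pseudorandomness, as the counterexample shows.)

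The fix is to make the statistical binding hold by construction rather than hoping it follows from pseudorandomness: for instance, encrypt $m$ with the PRF-based CPA scheme and additionally append, for each bit of $m$, a Naor-style statistically binding commitment built from a PRG (the encryptor samples the first-message string itself; over its random choice the commitment statistically determines the bit except with probability $2^{-n}$). Hiding of the commitments and PRF security give CPA security against size-$2^{\lam^\delta}$ adversaries with only polynomial loss, and honestly generated ciphertexts of distinct messages are statistically far by binding. With that replacement your argument matches the fact the paper invokes, and the corollary follows from \cref{thm:everlasting-impossible} exactly as you describe.
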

\begin{proof}
    Follows by \cref{thm:everlasting-impossible} and the fact that CPA-secure encryption with statistically far ciphertexts for different messages can be constructed from one-way functions. 
\end{proof}
% !TEX root = main.tex
\section{Secret Sharing Resilient to Joint and Unbounded Classical Leakage}
In this section, we define and construct unbounded joint-leakage-resilient secret sharing (UJLRSS) for all policies from collusion-resistant UTE and classical secret sharing for all policies. 
Our definition extends the notion of unbounded leakage-resilient secret sharing defined in~\cite{CGLR24} to the setting of joint leakage.
%While the notion of unclonable secret sharing (USS) has already been introduced in~\cite{AGLL24}, we define the notion differently for UTSS to be more compatible with the standard notion of secret sharing. 
%We construct UJLRSS for general policies from collusion-resistant UTE. 
\subsection{Secret Sharing Definitions}
\label{sec:utss-def}
In this section, we define UJLRSS and compare our definition to the recently introduced definitions of ULRSS in~\cite{CGLR24} and USS in~\cite{AGLL24}. 
Intuitively, the security notion we consider involves groups of non-qualifying parties as first stage adversaries, where each group classically leaks their shares, except for one group which gets its honest shares. 
Given the honest set of shares and all the classical leakage, it should be hard to distinguish any two secrets from each other.
In addition, we recall the definition of classical secret sharing for policies.
We give the formal definitions below, starting with the definition of an access policy. 
Much of the prose for this is taken from~\cite{C:JKKKPW17}.

\begin{definition}[Access Policy]
    An access policy $P$ on parties $[n]$ is a monotone set of subsets of $[n]$. 
    In particular, $P\subseteq2^{[n]}$ and for all $S\in P$ and $S\subseteq S'$ it holds that $S'\in P$.  
\end{definition}
\noindent In this work, we will be treating policies as functions $P:2^{[n]}\to\zo{}$ that output 1 on input $X$ if $X\in P$. 
Furthermore, we will assume $P$ can be described by a \textit{monotone Boolean circuit}. These are directed acyclic graphs in which leaves are labeled by input variables and every internal node is labeled by an AND or OR operation. 
We assume the gates in the circuit have fan-in $\kin$ and fan-out at most $\kout$, so for a given gate $g$ we represent the number of output wires as $\kout(g)$.
The computation is done in the natural way from the leaves to the root which corresponds to the output of the computation. 

\paragraph{Classical secret sharing.} We now recall the notion of classical secret sharing for access policies. This notion can be constructed for general policies from any classical CPA-secure encryption scheme, which itself can be constructed from one-way functions~\cite{VNSRK03}. 

\begin{definition}[Secret Sharing]\label{def:SS}
    Let $\lam$ be a security parameter and $n$ be the number of parties. 
    An secret sharing scheme for access policy $P:2^{[n]}\to\zo{}$ with message space $\cM=\setbk{\cM_\lam}_{\lam\in\N}$ is a tuple of PPT algorithms $\SSscheme=(\share,\reconst)$ with the following syntax:
    \begin{itemize}
        \item $\share(1^\lam,1^n,m)\to(s_1,\dots,s_n)$: On input the security parameter $\lam$, the number of parties $n$, and a message $m\in\cM_\lam$, the sharing algorithm outputs a tuple of shares $(s_1,\dots,s_n)$ for each party. 
        \item $\reconst(X,s_X)\to m$: On input the shares of a subset of parties $(X\subseteq[n],s_X=\setbk{s_i}_{i\in X})$, the reconstruction algorithm outputs a message $m\in\cM_\lam$.
    \end{itemize}
    We require that $\SSscheme$ satisfy the following properties:
    \begin{itemize}
        \item \textbf{Correctness:} For all $\lam,n\in\N$, $m\in\cM_\lam$, and $X$ such that $P(X)=1$, we have 
        \[
        \Pr[m\gets\reconst(X,s_X):(s_1,\dots,s_n)\gets\share(1^\lam,1^n,m)]\geq 1-\negl(\lam).
        \]
        \item \textbf{Security:} For a security parameter $\lam$, number of parties $n$, a bit $b\in\zo{}$, and an adversary $\A$, we define the security game as follows:
        \begin{enumerate}
            \item At the beginning of the game, adversary $\A$ outputs messages $m_0,m_1\in\cM_\lam$, along with a set $X$ such that $P(X)=0$. 
            
            \item The challenger samples $(s_1,\dots,s_n)\gets\share(1^\lam,1^n,m_b)$, and gives $s_{X}$ to $\A$. Adversary $\A$ outputs a bit $b'$, which is the output of the experiment. 
        \end{enumerate}
        We say $\SSscheme$ is secure if for all PPT adversaries $\A$ there exists a negligible function $\negl(\cdot)$ such that for all $\lam\in\N$, 
        \[\abs{\Pr[b'=1|b=0]-\Pr[b'=1|b=1]}= \negl(\lam)\] in the above security game.
    \end{itemize}
\end{definition}

\noindent We now define UJLRSS for access policies. 

\begin{definition}[Unbounded Joint-Leakage-Resilient Secret Sharing]
\label{def:UTSS}
    Let $\lam$ be a security parameter and $n$ be the number of parties. 
    An unbounded joint-leakage-resilient secret sharing scheme for access policy $P:2^{[n]}\to\zo{}$ with message space $\cM=\setbk{\cM_\lam}_{\lam\in\N}$ is a tuple of QPT algorithms $\Pi_\UTSS=(\share,\reconst)$ with the following syntax:
    \begin{itemize}
        \item $\share(1^\lam,1^n,m)\to(\ket{s_1},\dots,\ket{s_n})$: On input the security parameter $\lam$, the number of parties $n$, and a message $m\in\cM_\lam$, the sharing algorithm outputs a tuple of quantum shares $(\ket{s_1},\dots,\ket{s_n})$ for each party. 
        \item $\reconst(X,\ket{s_X})\to m$: On input the shares of a subset of parties $(X\subseteq[n],\ket{s_X}=\setbk{\ket{s_i}}_{i\in X})$, the reconstruction algorithm outputs a message $m\in\cM_\lam$.
    \end{itemize}
    We require that $\Pi_\UTSS$ satisfy the following properties:
    \begin{itemize}
        \item \textbf{Correctness:} For all $\lam,n\in\N$, $m\in\cM_\lam$, and $X$ such that $P(X)=1$, we have 
        \[
        \Pr[m\gets\reconst(X,\ket{s_X}):(\ket{s_1},\dots,\ket{s_n})\gets\share(1^\lam,1^n,m)]\geq 1-\negl(\lam).
        \]
        \item \textbf{UJLR Security:} For a security parameter $\lam$, number of parties $n$, a bit $b\in\zo{}$, and a two-stage adversary $(\A_0,\A_1,\dots,\A_n,\B)$, we define the secret sharing security game with joint and unbounded classical leakage as follows:
        \begin{enumerate}
            \item At the beginning of the game, $\A_0$ outputs messages $m_0,m_1\in\cM_\lam$, along with a partition of $[n]$ denoted as $(V^\ast,V_1,\dots,V_\ell)$ for $\ell\leq n-1$, where $P(V_i)=P(V^\ast)=0$ for $i\in[\ell]$. 
            
            \item The challenger samples $(\ket{s_1},\dots,\ket{s_n})\gets\share(1^\lam,1^n,m_b)$, and gives $\ket{s_{V_i}}$ to $\A_i$ for $i\in[\ell]$. 
            
            \item For $i\in[\ell]$, adversary $\A_i$ outputs a classical state $\st_i$, which is given to $\B$. Additionally, the challenger gives $\ket{s_{V^\ast}}$ to $\B$.

            \item Adversary $\B$ outputs a bit $b_\B$, which is the output of the experiment. 
        \end{enumerate}
        We say $\Pi_\UTSS$ satisfies UJLR security if for all QPT adversaries $(\A_0,\A_1,\dots,\A_n,\B)$ there exists a negligible function $\negl(\cdot)$ such that for all $\lam\in\N$, 
        \[\abs{\Pr[b_\B=1|b=0]-\Pr[b_\B=1|b=1]}= \negl(\lam)\] in the above security game.
    \end{itemize}
\end{definition}

\paragraph{Comparison with~\cite{CGLR24}.}
The definition of unbounded leakage-resilient secret sharing (ULRSS) in~\cite{CGLR24} has the same adversary structure as \cref{def:UTSS}, except each first-stage adversary gets a single share and tries to classically leak it.
Our notion of UJLRSS allows any set of unqualified shares to be an input to the leakage function. 
While ULRSS does guarantee individual shares cannot be classically leaked, a first-stage adversary can easily break the construction of~\cite{CGLR24} if it is given two shares. 

\paragraph{Comparison with~\cite{AGLL24}.} The definition of USS in~\cite{AGLL24} has the same adversary structure as \cref{def:UTSS}, except each first-stage adversary gets a single share and tries to clone it, and the second stage adversaries do not get honest shares unless the first-stage adversary sends them. 
Collusion in the USS definition is captured via an entanglement graph, where an edge corresponds to two first stage adversaries sharing entanglement. 
The USS security notions are then compared by the number of connected components in the entanglement graph, with 1 component being the strongest. 
However, this is counter-intuitive for defining collusion, since if all parties collude there should never be security. 

Furthermore, the USS definition does not guarantee that individual shares are unclonable. For instance, consider a 2-out-of-2 USS scheme where one share is a UE ciphertext encrypting the secret and the other is the (classical) UE secret key. The classical share can easily be cloned or classically leaked for a second stage adversary with the honest ciphertext share, which is also quite counter-intuitive.

However, some of this seems inherent, since having enough groups of parties colluding to cover the set $[n]$ would allow a cloning adversary to possibly output a different set of honest shares to each second-stage adversary, which would immediately break security.
Thus, unbounded classical leakage-resilience yields both a more natural and satisfiable definition for the task of secret sharing.

\subsection{Constructing UJLRSS for General Policies}\label{sec:utss}
In this section, we describe our construction of UJLRSS for all policies from collusion-resistant UTE and classical secret sharing for policies. The construction follows the structure of Yao's scheme (see~\cite{VNSRK03,C:JKKKPW17}). 

\begin{construction}[UJLRSS for Policies]\label{cons:utss}
    Let $\lam$ be a security parameter, $n$ be the number of parties, $P:2^{[n]}\to\zo{}$ be a policy with corresponding circuit $\phat$ of depth $d$ and size $t$, and $\cM=\setbk{\cM_\lam}_{\lam\in\N}$ be a message space. 
    Our construction relies on the following additional primitives:
    \begin{itemize}
        \item Let $\SSscheme=(\SSshare,\SSreconst)$ be a classical secret sharing scheme for $P$ with message space $\cM$ and share space $\cS=\setbk{\cS_\lam}_{\lam\in\N}$.
        \item Let $\Pi_\UTE=(\UTE.\Gen,\UTE.\Enc,\UTE.\Dec)$ be a collusion-resistant UTE scheme with message space $\cS\cup\zo{\ell(\lam)}$ and key space $\zo{\ell}$.
    \end{itemize}
    We construct our unbounded joint-leakage-resilient secret sharing scheme $\Pi_\UTSS=(\share,\reconst)$ as follows:
    \begin{itemize}
        \item $\share(1^\lam,1^n,m)$: On input the security parameter $\lam$, the number of parties $n$, and a message $m\in\cM_\lam$, the sharing algorithm does the following:
        \begin{enumerate}
            \item Sample $(s'_1,\dots,s'_n)\gets\SSshare(1^\lam,1^n,m)$.
            
            \item For the root (output) gate $\gstar\in\phat$, sample $(\ek_{\gstar},\dk_{\gstar})\gets\UTE.\Gen(1^\lam)$ and for $i\in[n]$ compute $\ket{\ct^{(\gstar)}_{i,1}}\gets\UTE.\Enc(\ek_{\gstar},s'_i)$. 
            
            \item For each gate $g$ in the policy circuit $\phat$ (starting from the root), let $w'_1,\dots,w'_{\kout(g)}$ represent the output wires with values $\val(w'_1),\dots,\val(w'_{\kout})$ (unless $g=\gstar$) and $w_1,\dots,w_{\kin}$ represent the input wires. 
            If $g\neq\gstar$, sample $(\ek_g,\dk_g)\gets\UTE.\Gen(1^\lam)$ and do the following:
            \begin{itemize}
                \item If $g$ is an AND gate, sample the values $\val(w_1),\dots,\val(w_{\kin})$ uniformly from $\zo{\ell}$ such that $\val(w_1)\oplus\cdots\oplus \val(w_{\kin})=\dk_g$. 
                \item If $g$ is an OR gate, set the values $\val(w_1),\dots,\val(w_{\kin})$ all to $\dk_g$.
            \end{itemize}
            For $g\neq \gstar$, $i\in[n]$, and $j\in[\kout(g)]$, compute $\ket{\ct^{(g)}_{i,j}}\gets\UTE.\Enc(\ek_g,\val(w'_j))$.
            
            \item Let $W_i$ be the set of input wires associated with the $\ord{i}$ input variable $x_i$. Then, for each $i\in[n]$ output the share
            \begin{equation}\label{eq:utss-share}
            \ket{s_i}=\parens{
            \setbk{w,\val(w)}_{w\in W_i}, 
            \setbig{g,\setbig{\ket{\ct^{(g)}_{i,j}}}_{j\in[\kout(g)]}}_{g\in\phat}
            }.
            \end{equation}
        \end{enumerate}
        
        \item $\reconst(X,\ket{s_X})$: On input the shares of a subset of parties $(X\subseteq[n],\ket{s_X}=\setbk{\ket{s_i}}_{i\in X})$, the reconstruction algorithm does the following:
        \begin{enumerate}
            \item If $P(X)=0$, output $\bot$. Otherwise, fix some $i^\ast\in X$ and for $i\in X$ parse $\ket{s_i}$ as in \cref{eq:utss-share}.
            For $i\in X$ and $w\in W_i$ add $(w,\val(w))$ to a table $T$ mapping wires to values.

            \item For $i=1$ to $d-1$, do the following:
            \begin{enumerate}
                \item For each gate $g$ of depth $i$ that outputs 1 on $X$ (denoted as $\phat_g(X)=1$ going forward), compute $\dk_g$ using $T$ and compute $\val(w'_j)\gets\UTE.\Dec\parens{\dk_g,\ket{\ct^{(g)}_{i^\ast,j}}}$ for $j\in[\kout]$.
                \item For $j\in[\kout]$, add $(w'_j,\val(w'_j))$ to $T$. 
            \end{enumerate}

            \item For the output gate $g^\ast$, use $T$ to compute $\dk_{g^\ast}$ and compute $s'_i\gets\UTE.\Dec\parens{\dk_{g^\ast},\ket{\ct^{(g^\ast)}_{i,1}}}$ for $i\in X$. Output $\SSreconst(X,s'_X)$.
        \end{enumerate}
    \end{itemize}
\end{construction}

\begin{theorem}[Correctness]
    Suppose $\Pi_\UTE$ and $\SSscheme$ are correct. 
    Then, \cref{cons:utss} is correct. 
\end{theorem}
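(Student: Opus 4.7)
The plan is to establish correctness by a straightforward two-stage argument: first show via induction on circuit depth that reconstruction correctly populates the table $T$ with wire values for every subgate that evaluates to $1$ on $X$, and then invoke correctness of the root-gate decryption together with correctness of $\SSscheme$ to conclude. Fix $\lam, n \in \N$, a message $m \in \cM_\lam$, and a set $X \subseteq [n]$ with $P(X)=1$. Sample honest shares $(\ket{s_1},\dots,\ket{s_n}) \gets \share(1^\lam,1^n,m)$ and let $(s'_1,\dots,s'_n)$ be the underlying classical secret-sharing shares from step~1 of $\share$.

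The core technical claim is the following inductive invariant: \emph{for every gate $g \in \phat$ such that $\phat_g(X)=1$, after the reconstruction algorithm has processed all gates of depth at most $\mathrm{depth}(g)$, the table $T$ contains $(w'_j,\val(w'_j))$ for every output wire $w'_j$ of $g$, with overwhelming probability.} The base case consists of the input wires $W_i$ for $i \in X$, whose values are placed in $T$ in step~1 of $\reconst$. For the inductive step, consider a gate $g$ with $\phat_g(X)=1$ and suppose the invariant holds for all strictly deeper gates. If $g$ is an AND gate, then every input to $g$ evaluates to $1$ under $X$: input wires coming from an input variable $x_j$ satisfy $j \in X$ (so their values are in $T$ by the base case), and input wires coming from a deeper subgate $g'$ satisfy $\phat_{g'}(X)=1$ (so their values are in $T$ by the inductive hypothesis). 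Hence $\reconst$ can compute $\dk_g = \val(w_1) \oplus \cdots \oplus \val(w_{\kin})$ by construction of $\share$. If $g$ is an OR gate, then at least one input to $g$ evaluates to $1$, so by the same case analysis at least one input wire value is in $T$, and that value equals $\dk_g$ by construction.

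Given $\dk_g$, the algorithm computes $\val(w'_j) \gets \UTE.\Dec(\dk_g, \ket{\ct^{(g)}_{i^\ast,j}})$ for each $j \in [\kout(g)]$. By correctness of $\Pi_\UTE$ (applied to the honest encryption $\ket{\ct^{(g)}_{i^\ast,j}} \gets \UTE.\Enc(\ek_g, \val(w'_j))$), each decryption outputs the correct wire value with overwhelming probability. A union bound over the at most $t$ gates and $\kout(g) \le t$ output wires per gate preserves overwhelming probability, completing the induction.

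Applying the invariant at the root gate $g^\ast$ (where $\phat_{g^\ast}(X) = P(X) = 1$), the reconstruction algorithm correctly obtains $\dk_{g^\ast}$ from $T$, and then by $\Pi_\UTE$ correctness recovers $s'_i = \UTE.\Dec(\dk_{g^\ast}, \ket{\ct^{(g^\ast)}_{i,1}})$ for every $i \in X$, again with overwhelming probability via a union bound over $|X| \le n$ decryptions. Finally, correctness of $\SSscheme$ guarantees that $\SSreconst(X, s'_X) = m$ with overwhelming probability. A final union bound across all three failure events yields the stated $1 - \negl(\lam)$ correctness. The only mildly subtle step is carefully matching the indexing of wire values to gate types in the inductive step (AND requires \emph{all} inputs, OR requires \emph{any} input), but since qualifying sets make the relevant inputs available precisely when monotonicity demands it, this is mechanical rather than an obstacle.
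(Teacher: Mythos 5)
Your proof is correct and follows essentially the same route as the paper's: recover each gate key $\dk_g$ by walking up the circuit along gates that evaluate to $1$ on $X$, invoke $\Pi_\UTE$ correctness at each decryption of an honestly generated ciphertext, and finish with $\SSscheme$ correctness and a union bound over the polynomially many decryptions. Your explicit depth induction with the AND/OR case split is simply a more careful rendering of the paper's brief ``path of gates that output $1$'' argument (which, taken literally, would not account for AND gates needing \emph{all} of their inputs), so there is no gap.
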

\begin{proof}
    Take any $\lam,n\in\N$, $m\in\cM_\lam$, and any set $X\subseteq[n]$ such that $P(X)=1$.
    Let $(\ket{s_1},\dots,\ket{s_n})\gets\share(1^\lam,1^n,m)$ and consider the output of $\reconst(X,\ket{s_X})$.
    Since $P(X)=P_{\gstar}(X)=1$, $\reconst$ does not output $\bot$, and furthermore there exists a path of gates that output 1 from the leaves of $\phat$ to the root (output) gate $\gstar$.
    Thus, for all such gates $g$ (of increasing depth) and a fixed $\istar\in X$, we have 
    \[
    \val(w'_j)\gets\UTE.\Dec\parens{\dk_g,\ket{\ct^{(g)}_{i^\ast,j}}}
    \text{ for } j\in[\kout(g)]
    \] 
    with overwhelming probability by correctness of $\Pi_\UTE$. 
    We also have $s'_i\gets\UTE.\Dec\parens{\dk_{g^\ast},\ket{\ct^{(g^\ast)}_{i,1}}}$ for $i\in X$ again by $\Pi_\UTE$ correctness, where $s'_i$ are honest shares generated by $\SSshare$ by definition of $\share$. 
    Therefore, by $\SSscheme$ correctness, we have $m\gets\SSreconst(X,s'_X)$ with overwhelming probability, since $\SSscheme$ is a secret sharing scheme for $P$. 
    The theorem follows by a union bound. 
\end{proof}

\begin{theorem}[UJLR Security]
    Suppose $\SSscheme$ is secure for QPT adversaries and $\Pi_\UTE$ satisfies collusion-resistant security (\cref{def:cr-sec}) with second stage encryption queries (\cref{rem:second-stage-query}).
    Then, \cref{cons:utss} satisfies UJLR security. 
\end{theorem}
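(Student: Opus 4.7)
The plan is to prove UJLR security by a sequence of hybrid experiments, reducing first to collusion-resistant UTE security (once per adversarial group) and then to the security of the classical sharing scheme $\SSscheme$. The key structural observation is that the classical shares $s'_i$ appear \emph{only} inside the root-gate ciphertexts $\ket{\ct^{(\gstar)}_{i,1}}$; every other ciphertext at a non-root gate encrypts a wire value whose joint distribution is independent of $m_b$, since wire values depend only on the freshly sampled UTE keys $\dk_g$ and the AND/OR gate semantics. Hence the only avenue by which the challenge bit enters the adversary's view is through these root-gate ciphertexts.

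The first phase consists of $\ell$ sub-hybrids, one per adversarial group. For each $V_k$, I would invoke collusion-resistant UTE security for the single key $\dk_{\gstar}$ to switch the ciphertexts $\ket{\ct^{(\gstar)}_{i,1}}$ for $i \in V_k$ from encryptions of $s'_i$ to encryptions of $0$. The reduction uses the adaptive capabilities of CR-UTE: submitting challenge queries $(s'_i,0)$ for $i \in V_k$ and trivial encryption queries $(s'_i,s'_i)$ for remaining parties, so that their $\gstar$-ciphertexts are unchanged across the hybrid. A subtle point is that $V^*$'s share contains quantum $\gstar$-ciphertexts which cannot be transmitted through the classical first-stage output of the UTE game; the reduction therefore regenerates $V^*$'s $\gstar$-ciphertexts in the UTE second stage using the encryption oracle, which the NCE-based CR-UTE construction of \cref{cons:cr-ute} supports as noted in \cref{rem:second-stage-query}. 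After this phase, each leakage $\st_k$ is independent of $\setbk{s'_i}_{i \in V_k}$ by UTE security.

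The second phase invokes security of $\SSscheme$ against QPT adversaries on the non-qualifying set $V^*$. After the first phase the only residual dependence of $\B$'s view on $m_b$ is through $V^*$'s $\gstar$-ciphertexts, which $\B$ can potentially decrypt by reconstructing $\dk_{\gstar}$ from its own $V^*$ wires combined with wire information leaked via the $\st_k$'s. The shares $\B$ recovers are at most $\setbk{s'_i}_{i \in V^*}$, and by classical SS security these are computationally indistinguishable between $\SSshare(1^\lam,1^n,m_0)$ and $\SSshare(1^\lam,1^n,m_1)$ because $V^*$ is non-qualifying. Composing the hybrids yields $\abs{\Pr[b_\B=1 \mid b=0] - \Pr[b_\B=1 \mid b=1]} = \negl(\lam)$.

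The main technical obstacle lies in the first-phase reduction: it must set the input-wire values to $\gstar$ (which are output wires of $\gstar$'s children, encrypted in the children's ciphertexts) consistently with the UTE challenger's $\dk_{\gstar}$, yet does not know this key during the UTE first stage. I expect to resolve this by sampling a placeholder $\widetilde{\dk}_{\gstar}$ and using it to generate the children's ciphertexts; the resulting inconsistency with UTE's actual $\dk_{\gstar}$ is observable only when $\B$ successfully reconstructs $\dk_{\gstar}$ and decrypts $V^*$'s root-gate ciphertexts, and in that case the observed values are precisely $\setbk{s'_i}_{i \in V^*}$ that are already hidden by the second-phase classical SS argument. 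Carefully interleaving the UTE hybrids with the classical SS step---so that any distinguishing event detectable from the wire-value inconsistency is subsumed by the final classical SS security reduction---is thus the technical crux of the proof.
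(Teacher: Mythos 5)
Your high-level decomposition (UTE hybrids per group, then classical SS on $V^\ast$) matches the paper's, and your instinct to defer generating the other groups' leakage to the UTE second stage via encryption queries is exactly right. But the first phase as you describe it has a genuine gap: you only switch the root-gate ciphertexts, keeping all internal-gate ciphertexts real, and then try to run a reduction whose UTE challenge key is $\dk_{\gstar}$. The obstruction you flag at the end is fatal rather than a technicality. In \cref{cons:utss}, $\dk_{\gstar}$ is \emph{encoded in the wire values} feeding the root: the input wires of $\gstar$ XOR to $\dk_{\gstar}$ (AND) or all equal $\dk_{\gstar}$ (OR), and those values are the plaintexts of the challenged group's own ciphertexts at the children of $\gstar$ (and possibly cleartext leaf values). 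Since your hybrids never touch those ciphertexts, the reduction must produce them in its first stage without knowing $\dk_{\gstar}$. The placeholder $\widetilde{\dk}_{\gstar}$ does not give a faithful simulation of either adjacent hybrid: $\B$ holds $V^\ast$'s \emph{genuine} quantum shares and can combine them with classically leaked wire values and gate keys from the groups (e.g., a group that satisfies a child gate leaks that child's output-wire value, or simply leaks its cleartext leaf values) to reconstruct the root key along a qualified union and test whether it actually decrypts $V^\ast$'s root-gate ciphertexts. In the real hybrid this succeeds; in your simulation the circuit yields $\widetilde{\dk}_{\gstar}\neq\dk_{\gstar}$ and the decryption is inconsistent. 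This distinguishing event concerns key consistency, not the values of the classical shares $s'_{V^\ast}$, so it is \emph{not} absorbed by the later $\SSscheme$ reduction; a hybrid argument requires each step's reduction to simulate its two neighboring experiments faithfully, and "interleaving" the steps does not remove an unfaithful simulation inside one of them.

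The paper avoids this precisely by hybridizing over \emph{every} gate $g$ with $\phat_g(V_i)=0$, per group, processed bottom-up (children before parents). When gate $\hat g$ is the UTE challenge: the shares of $\dk_{\hat g}$ on input wires that are satisfied by $V_i$ are marginally uniform (for an unsatisfied AND at least one share is missing from $V_i$'s view; for an unsatisfied OR there are none), so they can be sampled without $\dk_{\hat g}$; the completing shares live on wires whose encrypting gates are also unsatisfied by $V_i$ and appear \emph{earlier} in the bottom-up order, so in the current hybrid the group's ciphertexts there already encrypt $0$; and everything involving other parties—including ciphertexts under $\ek_{\hat g}$ for $V^\ast$ and the other groups, and running those groups' first-stage adversaries—is generated in the UTE second stage after $\dk_{\hat g}$ is revealed, using the second-stage encryption oracle. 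This is exactly why the theorem assumes collusion resistance with second-stage encryption queries and why the construction uses a fresh key per gate. To repair your argument you would have to switch the internal-gate ciphertexts that carry shares of $\dk_{\gstar}$ (and, recursively, of the keys below), which collapses back to the paper's gate-by-gate hybrid.
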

\begin{proof}
    Let $n=n(\lam),d=d(\lam),t=(\lam)$ be arbitrary polynomials. 
    Suppose there exists a QPT adversary $(\A=(\A_0,\A_1,\dots,\A_n),\B)$ that has non-negligible advantage $\delta$ in the UJLR security game (\cref{def:UTSS}). 
    For a set $X\subseteq[n]$, such that $P(X)=0$, let $L_X$ be the list of gates $g$ ordered by increasing depth such the $\phat_g(X)=0$. 
    Let $L_X[j]$ denote the $\ord{j}$ gate in $L_X$. 
    We now define a sequence of hybrid experiments using indices $i\in[\ell]$ and $j\in[t]$:
    \begin{itemize}
        \item $\hybb{1,0}$: This is the UJLR security game with challenge bit $b\in\zo{}$. In particular:
        \begin{itemize}
            \item At the beginning of the game, $\A_0$ outputs messages $m_0,m_1\in\cM_\lam$, along with a partition $(V^\ast,V_1,\dots,V_\ell)$ of $[n]$ for $\ell\leq n-1$, where $P(V_i)=P(V^\ast)=0$ for $i\in[\ell]$. 
            
            \item The challenger samples $(s'_1,\dots,s'_n)\gets\SSshare(1^\lam,1^n,m_b)$, $(\ek_{\gstar},\dk_{\gstar})\gets\UTE.\Gen(1^\lam)$, and for $i\in[n]$ computes $\ket{\ct^{(\gstar)}_{i,1}}\gets\UTE.\Enc(\ek_{\gstar},s'_i)$.
            For each $g\in\phat$ (from the root), the challenger does the following:
            \begin{itemize}
                \item If $g\neq\gstar$, sample $(\ek_g,\dk_g)\gets\UTE.\Gen(1^\lam)$.
                \item If $g$ is an AND gate, sample the values $\val(w_1),\dots,\val(w_{\kin})$ uniformly from $\zo{\ell}$ such that $\val(w_1)\oplus\cdots\oplus \val(w_{\kin})=\dk_g$. 
                \item If $g$ is an OR gate, set the values $\val(w_1),\dots,\val(w_{\kin})$ all to $\dk_g$.
                \item If $g\neq \gstar$, for $i\in[n]$, and $j\in[\kout(g)]$ compute $\ket{\ct^{(g)}_{i,j}}\gets\UTE.\Enc(\ek_g,\val(w'_j))$.
            \end{itemize}
            For each $i\in[n]$, the challenger sets 
            \[
            \ket{s_i}=\parens{
            \setbk{w,\val(w)}_{w\in W_i}, 
            \setbig{g,\setbig{\ket{\ct^{(g)}_{i,j}}}_{j\in[\kout(g)]}}_{g\in\phat}
            }.
            \]
            The challenger gives $\ket{s_{V_i}}$ to $\A_i$ for $i\in[\ell]$. 
            
            \item For $i\in[\ell]$, adversary $\A_i$ outputs a classical state $\st_i$, which is given to $\B$. Additionally, the challenger gives $\ket{s_{V^\ast}}$ to $\B$. Adversary $\B$ outputs a bit $b_\B$, which is the output of the experiment. 
        \end{itemize}
        
        \item $\hybb{i,j}$: Same as $\hybb{i,j-1}$ except for gate $g=L_X[j]$, the challenger samples 
        \[\diff{\ket{\ct^{(g)}_{i',j'}}\gets\UTE.\Enc(\ek_g,0)}\] for $i'\in V_i$ and $j'\in[\kout(g)]$. Here we overload 0 to mean the all 0s string or its equivalent in $\cS$.
    \end{itemize}
    Note that $\hybb{i,t}$ is the same as $\hybb{i+1,0}$ for $i\in[\ell-1]$. 
    We write $\hybb{i,j}(\A,\B)$ to denote the output distribution of
    an execution of $\hybb{i,j}$ with adversary $(\A,\B)$.
    We show that $\hybb{i,j-1}(\A,\B)$ and $\hybb{i,j}(\A,\B)$ are indistinguishable for any $i\in[\ell]$ and $j\in[t]$, and then show $\hyb^{(0)}_{\ell,t}(\A,\B)$ and $\hyb^{(1)}_{\ell,t}(\A,\B)$ are indistinguishable. 

    \begin{lemma}\label{lem:utss-cr}
        Suppose $\Pi_\UTE$ satisfies collusion-resistant security with second stage encryption queries.
        Then, there exists a negligible function $\negl(\cdot)$ such that for all $\lam\in\N$, we have
        \[
        \abs{\Pr[\hybb{i,j-1}(\A,\B)=1]-\Pr[\hybb{i,j}(\A,\B)=1]}=\negl(\lam).
        \]
    \end{lemma}
    \begin{proof}
        Suppose $(\A,\B)$ distinguish the experiments with non-negligible probability $\delta'$. 
        We use $(\A,\B)$ to construct an adversary $(\A',\B')$ that breaks collusion-resistant security as follows:
        \begin{enumerate}
            \item Adversary $\A'$ starts by running $\A_0$ to get $m_0,m_1\in\cM_\lam$ and $(V^\ast,V_1,\dots,V_\ell)$. 
            Adversary $\A'$ samples $(s'_1,\dots,s'_n)\gets\SSshare(1^\lam,1^n,m_b)$.

            \item For each $g\in\phat$ (from the root), adversary $\A'$ does the following:
            \begin{enumerate}
                \item If $g\neq L_{V_i}[j]$, sample $(\ek_g,\dk_g)\gets\UTE.\Gen(1^\lam)$.
                \item If $g\neq L_{V_i}[j]$ is an AND gate, sample values $\val(w_1),\dots,\val(w_{\kin})$ uniformly from $\zo{\ell}$ such that $\val(w_1)\oplus\cdots\oplus \val(w_{\kin})=\dk_g$. 
                If $g\neq L_{V_i}[j]$ is an OR gate, set all values $\val(w_1),\dots,\val(w_{\kin})$ to $\dk_g$. 
                If $g= L_{V_i}[j]$ is an AND gate, sample $\val(w)\getsr\zo{\ell}$ for all input wires such that $\phat_w(V_i)=1$ (the wire is 1 on input $V_i$).
                \item If 
                $g=L_{V_i}[j']$ such that $j'<j$, compute $\ket{\ct^{(g)}_{i',j''}}\gets\UTE.\Enc(\ek_g,0)$ for $j''\in[\kout(g)]$ and $i'\in V_{i}$.
                \item If $g=L_{V_i}[j]$, submit $\abs{V_i}$ copies of the messages $\setbk{0,\val(w'_{j'})}_{j'\in[\kout(g)]}$ to the UTE challenger to get $\ket{\ct^{(g)}_{i',j'}}$ for $i'\in V_i$ and $j'\in[\kout(g)]$.
                \item If $g=L_{V_i}[j']$ such that $j'>j$, compute $\ket{\ct^{(g)}_{i',j''}}\gets\UTE.\Enc(\ek_g,\val(w'_{j''}))$ for $j''\in[\kout(g)]$ and $i'\in V_{i}$.
            \end{enumerate}

            \item For $i'\in V_i$, adversary $\A'$ constructs $\ket{s_{i'}}$ as in \cref{eq:utss-share} and gives $\ket{s_{V_i}}$ to $\A_i$ to get $\st_i$.
            Adversary $\A'$ outputs $\st=(\st_i,\setbk{\ek_g,\dk_g}_{g\neq L_{V_i}[j]},s'_{[n]},T)$, where $T$ is the current wire to value mapping.

            \item On input $\dk_{\hat{g}}$ for $\hat{g}=L_{V_i}[j]$ and the state $\st=(\st_i,\setbk{\ek_g,\dk_g}_{g\neq \hat{g}},s'_{[n]},T)$, adversary $\B'$ first samples the missing values for the input wires of $\hat{g}$. 

            \item For index gate pairs $(i'',g)$ such that $g\in L_{V_{i'}}$ for some $i'\in[i-1]$ and additionally $i''\in V_{i'}$, adversary $\B'$ computes $\ket{\ct^{(g)}_{i'',j'}}\gets\UTE.\Enc(\ek_g,0)$ for $j'\in[\kout(g)]$.
            Otherwise, if $g\neq\gstar$, adversary $\B'$ computes $\ket{\ct^{(g)}_{i'',j}}\gets\UTE.\Enc(\ek_g,\val(w'_{j'}))$ for $j'\in[\kout(g)]$ and if $g=\gstar$, it computes $\ket{\ct^{(\gstar)}_{i'',1}}\gets\UTE.\Enc(\ek_g,s'_{i''})$. 
            For generating ciphertexts without $\ek_{\hat{g}}$, adversary $\B'$ makes an encryption query to its challenger. 

            \item For $i'\in [n]\setminus V_i$, adversary $\B'$ constructs $\ket{s_{i'}}$ as in \cref{eq:utss-share} and gives $\ket{s_{V_{i''}}}$ to $\A_{i''}$ to get $\st_{i''}$ for $i''\neq i$. It gives $(\st_1,\dots,\st_\ell,\ket{s_{V^\ast}})$ to $\B$ and outputs whatever $\B$ outputs.
            
        \end{enumerate}
        Clearly $(\A',\B')$ is QPT if $(\A,\B)$ is.
        If the UTE challenger encrypts honest values, $(\A',\B')$ simulates $\hybb{i,j-1}(\A,\B)$ perfectly.
        If the UTE challenger encrypts 0s, $(\A',\B')$ simulates $\hybb{i,j}(\A,\B)$ perfectly.
        Thus, $(\A',\B')$ has advantage $\delta'$ in the collusion-resistant security game, a contradiction. 
    \end{proof}

    \begin{lemma}\label{lem:utss-ss}
        Suppose $\SSscheme$ is secure for QPT adversaries. 
        Then, there exists a negligible function $\negl(\cdot)$ such that for all $\lam\in\N$, we have
        \[
        \abs{\Pr[\hyb^{(0)}_{\ell,t}(\A,\B)=1]-\Pr[\hyb^{(1)}_{\ell,t}(\A,\B)=1]}=\negl(\lam).
        \]
    \end{lemma}
    \begin{proof}
        Suppose $(\A,\B)$ distinguish the experiments with non-negligible probability $\delta'$. 
        We use $(\A,\B)$ to construct an adversary $\A'$ that breaks security of $\SSscheme$:
        \begin{enumerate}
            \item Adversary $\A'$ starts by running $\A_0$ to get $m_0,m_1\in\cM_\lam$ and $(V^\ast,V_1,\dots,V_\ell)$. Adversary $\A'$ sends $(m_0,m_1,V^\ast)$ to its challenger and gets back the shares $s'_{V^\ast}$.
            
            \item Adversary $\A'$ samples $(\ek_{\gstar},\dk_{\gstar})\gets\UTE.\Gen(1^\lam)$, computes 
            \[
            \ket{\ct^{(\gstar)}_{i,1}}\gets\UTE.\Enc(\ek_{\gstar},s'_i) \text{ for }  i\in V^\ast,
            \] 
            and computes 
            \[
            \ket{\ct^{(\gstar)}_{i,1}}\gets\UTE.\Enc(\ek_{\gstar},0) \text{ for }  i\in[n]\setminus V^\ast.
            \] 
            
            \item For each $g\in\phat$ (from the root), adversary $\A'$ does the following:
            \begin{enumerate}
                \item If $g\neq\gstar$, sample $(\ek_g,\dk_g)\gets\UTE.\Gen(1^\lam)$.
                \item If $g$ is an AND gate, sample values $\val(w_1),\dots,\val(w_{\kin})$ uniformly from $\zo{\ell}$ such that $\val(w_1)\oplus\cdots\oplus \val(w_{\kin})=\dk_g$. 
                If $g$ is an OR gate, set all values $\val(w_1),\dots,\val(w_{\kin})$ to $\dk_g$.
                \item For gates $g\neq \gstar$: if $(i',g)$ is such that $g\in L_{V_i}$ for some $i\in[\ell]$ and additionally $i'\in V_i$, compute $\ket{\ct^{(g)}_{i',j}}\gets\UTE.\Enc(\ek_g,0)$ for $j\in[\kout(g)]$. Otherwise, compute $\ket{\ct^{(g)}_{i',j}}\gets\UTE.\Enc(\ek_g,\val(w'_j))$ for $j\in[\kout(g)]$.
            \end{enumerate}

            \item Adversary $\A'$ constructs $\ket{s_i}$ as in \cref{eq:utss-share} and gives $\ket{s_{V_i}}$ to $\A_i$ for $i\in[\ell]$ to get $\st_i$.
            Adversary $\A'$ gives $(\st_1,\dots,\st_\ell,\ket{s_{V^\ast}})$ to $\B$ and outputs whatever $\B$ outputs. 
            
        \end{enumerate}
        Clearly $\A'$ is QPT if $(\A,\B)$ is. 
        If the secret sharing challenger shares $m_0$, $\A'$ perfectly simulates $\hyb^{(0)}_{\ell,t}(\A,\B)$.
        If the secret sharing challenger shares $m_1$, $\A'$ perfectly simulates $\hyb^{(1)}_{\ell,t}(\A,\B)$. 
        Thus, $\A'$ has advantage $\delta'$ in the secret sharing security game, a contradiction. 
    \end{proof}
    \noindent Combining \cref{lem:utss-cr,lem:utss-ss} proves the theorem by a hybrid argument. 
\end{proof}

\begin{corollary}[UJLR Secret Sharing]\label{cor:utss}
    Assuming the existence of post-quantum one-way functions, there exists a secure UJLR secret sharing scheme for all polynomial-size monotone Boolean circuits.
\end{corollary}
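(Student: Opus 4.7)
The plan is to chain together the components established earlier in the paper and then apply \cref{cons:utss} together with its security theorem. Specifically, I will argue that post-quantum one-way functions suffice to instantiate both building blocks: the collusion-resistant untelegraphable encryption scheme $\Pi_\UTE$ (with second-stage encryption queries) and the classical secret sharing scheme $\SSscheme$ for polynomial-size monotone Boolean circuits.

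First I would construct the collusion-resistant UTE piece. Post-quantum OWFs imply post-quantum CPA-secure SKE via standard constructions, which by \cref{thm:cpa-to-nce} yields SK-NCER secure against QPT adversaries. On the UTE side, \cref{cor:ow-utske} gives statistically one-way secure UTSKE unconditionally; combining this with a universal hash family via \cref{cons:OT-UTE} and invoking \cref{thm:ot-ute-secure} yields one-time indistinguishability-secure UTE. Feeding this one-time UTE and the SK-NCER scheme into \cref{cons:cr-ute} and applying \cref{thm:cr-ute-secure} produces a collusion-resistant UTE scheme. Moreover, by \cref{rem:second-stage-query}, this construction inherits security under second-stage encryption queries, which is exactly the variant required by the hypothesis of \cref{thm:utss-secure} (i.e., the security theorem accompanying \cref{cons:utss}).

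Next I would handle the classical ingredient: post-quantum OWFs imply classical secret sharing for all polynomial-size monotone Boolean circuits secure against QPT adversaries, by the construction of Vinod et al.~\cite{VNSRK03} instantiated with a post-quantum CPA-secure SKE scheme (which exists from post-quantum OWFs). This classical scheme satisfies \cref{def:SS} against QPT adversaries, matching the hypothesis of the security theorem for \cref{cons:utss}.

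Finally, I would instantiate \cref{cons:utss} with the collusion-resistant UTE scheme and the classical secret sharing scheme just described, choosing the UTE message space to include the share space $\cS$ of the classical scheme together with $\zo{\ell}$ for the UTE decryption keys. Correctness follows from the correctness of both building blocks, and UJLR security follows directly from the security theorem of \cref{cons:utss}. There is no real obstacle here; the only minor subtlety is matching the required security variants (QPT-secure classical SS and second-stage-encryption-query-secure collusion-resistant UTE), both of which were explicitly addressed earlier in the paper. The corollary then follows.
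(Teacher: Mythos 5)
Your proposal is correct and matches the paper's intended argument: the corollary is obtained exactly by instantiating \cref{cons:utss} with collusion-resistant UTE built via \cref{cor:ow-utske}, \cref{cons:OT-UTE}, \cref{thm:cpa-to-nce}, \cref{cons:cr-ute}, and \cref{thm:cr-ute-secure} (with second-stage encryption queries as in \cref{rem:second-stage-query}), together with classical secret sharing for monotone circuits from OWF-based CPA-secure SKE~\cite{VNSRK03}. The only detail worth flagging is that extending the one-time UTE to messages long enough to carry shares and keys uses the longer-message remark (scaling the one-way scheme's security parameter and key encapsulation from OWFs), which your OWF assumption already covers.
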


\begin{remark}[UJLRSS for Boolean Formulas]
    One can simplify \cref{cons:utss} to get UJLRSS for Boolean formulas from only query-bounded collusion-resistant UTE, which can be constructed from pseudorandom states. 
    As sketched in~\cref{sec:tech-overview}, each share is a classical share of $\dk$ for the policy $P$ along with a UTE ciphertext encrypting a classical share of $m$ for $P$. 
    Since secret sharing for Boolean formulas can be constructed with perfect information theoretic security, this allows any unqualified set of shares of $\dk$ to be perfectly simulated when appealing to UTE security for each group in the partition. 
\end{remark}

% !TEX root = main.tex
\section{Untelegraphable Functional Encryption}
In this section, we define and construct untelegraphable functional encryption (UTFE). 
Particularly, we construct both public-key and secret-key functional encryption with untelegraphable ciphertexts.
We then sketch a simple way to leverage UTFE to construct untelegraphable differing inputs obfuscation by additionally working in the classical oracle model. 

\subsection{Building Blocks and Definitions}\label{sec:utfe-defs}
In this section, we define the notions that will be useful for our constructions of public-key and secret-key UTFE. 
\paragraph{Pseudorandom function.} We start by recalling the notion of a pseudorandom function (PRF).

\begin{definition}[Pseudorandom Function]\label{def:PRF}
    Let $\cK=\{\cK_\lam\}_{\lam\in\N}$, $\cX=\{\cX_\lam\}_{\lam\in\N}$, and $\cY=\{\cY_\lam\}_{\lam\in\N}$ be ensembles of finite sets indexed by a security parameter $\lam$. 
    Let $\prf = \set{\prf_\lam}_{\lam \in \N}$ be an efficiently-computable collection of functions
    $\prf_\lam \colon \cK_\lam\times\cX_\lam\to\cY_\lam$.
    We say that $\prf$ is secure if for all efficient adversaries $\cA$ there exists a negligible function
    $\negl(\cdot)$ such that for all $\lam\in\N$:
    \[
    \abs{\Pr[\cA^{\prf_\lam(k,\cdot)}(1^\lam)=1 : k \getsr \cK_\lam] -
    \Pr[\cA^{f_\lam(\cdot)}(1^\lam)=1 : f_\lam\getsr\mathsf{Funs}[\cX_\lam,\cY_\lam]]}=\negl(\lam),
    \]
    where $\mathsf{Funs}[\cX_\lam,\cY_\lam]$ is the set of all functions from $\cX_\lam$ to $\cY_\lam$.
\end{definition}

\paragraph{Single-decryptor functional encryption.} We now recall the definition of single-decryptor functional encryption (SDFE)~\cite{AC:KitNis22,CG24}. 
This is the natural generalization of single-decryptor encryption to the functional encryption setting. 
We define the challenge-only definition from~\cite{AC:KitNis22}, since it is sufficient for the constructions in this work. 
We omit the definition of testing a $\gamma$-good FE distinguisher since we only use a basic property of it in~\cref{cor:ap-utk} and nowhere else. 
We refer to~\cite[Definition~7.29]{AC:KitNis22} for the full definition along with~\cite[Section~7.1]{AC:KitNis22} for the corresponding preliminaries.

\begin{definition}[Single-Decryptor Functional Encryption]\label{def:SDFE}
    A single-decryptor functional encryption scheme with message space $\cM=\setbk{\cM_\lam}_{\lam\in\N}$ and function space $\cF=\setbk{\cF_\lam}_{\lam\in\N}$ is a tuple of QPT algorithms $\Pi_\SDFE=(\Setup,\KeyGen,\Enc,\Dec)$ with the following syntax:
    \begin{itemize}
        \item $\Setup(1^\lam)\to(\pk,\msk)$: On input the security parameter $\lam$, the setup algorithm outputs a classical public key $\pk$ and master secret key $\msk$.
        \item $\KeyGen(\msk,f)\to\ket{\sk_f}$: On input the master key $\msk$ and a function $f\in\cF_\lam$, the key generation algorithm outputs a quantum function secret key $\ket{\sk_f}$.
        \item $\Enc(\pk,m)\to\ct$: On input the public key $\pk$ and a message $m\in\cM_\lam$, the encryption algorithm outputs a classical ciphertext $\ct$.
        \item $\Dec(\ket{\sk_f},\ct)\to m$: On input a quantum function secret key $\ket{\sk_f}$ and a ciphertext $\ct$, the decryption algorithm outputs a message $m\in\cM_\lam$. 
    \end{itemize}
    We require that $\Pi_\SDFE$ satisfy the following properties:
    \begin{itemize}
        \item \textbf{Correctness:} For all $\lam\in\N$, $m\in\cM_\lam$, and $f\in\cF_\lam$,
        \[
        \Pr\left[f(m)=y:
        \begin{array}{c}
        (\pk,\msk)\gets\Setup(1^\lam)\\
        \ct\gets\Enc(\pk,m)\\
        \ket{\sk_f}\gets\KeyGen(\msk,f)\\
        y\gets\Dec(\ket{\sk_f},\ct)
        \end{array}\right]\geq 1-\negl(\lam).
        \]
        \item \textbf{$\gamma$-Anti-Piracy Security:} For a security parameter $\lam$, an advantage parameter $\gamma\in[0,1]$, and an adversary $\A$, we define the anti-piracy security game as follows:
        \begin{enumerate}
            \item The challenger starts by sampling $(\pk,\msk)\gets\Setup(1^\lam)$ and giving $\pk$ to adversary $\A$.
            
            \item Adversary $\A$ outputs a function $f\in\cF_\lam$.
            The challenger samples $\ket{\sk_f}\gets\KeyGen(\msk,f)$ and gives $\ket{\sk_f}$ to $\A$.
    
            \item Adversary $\A$ outputs two messages $m_0,m_1\in\cM_\lam$ along with two second stage adversaries $\B=(\rho[\qreg{R}_\B],\mat{U}_\B)$ and $\C=(\rho[\qreg{R}_\C],\mat{U}_\C)$, where $\rho$ is a quantum state over registers $\qreg{R}_\B$ and $\qreg{R}_\C$, and $\mat{U}_\B$ and $\mat{U}_\C$ are general quantum circuits. 
            
            \item The challenger runs the test for a $\gamma$-good FE distinguisher with respect to $(\pk,m_0,m_1,f)$ on $\B$ and $\C$. The challenger outputs $b=1$ if both tests pass and $b=0$ otherwise.
        \end{enumerate}
        We say an SDFE scheme satisfies $\gamma$-anti-piracy security if for all QPT adversaries $\A$ there exists a negligible function $\negl(\cdot)$ such that for all $\lam\in\N$, $\Pr[b=1]\leq \negl(\lam)$ in the above security game. 
    \end{itemize}
\end{definition}

\begin{theorem}[{SDFE~\cite{AC:KitNis22}}]\label{thm:sdfe}
    Assuming sub-exponentially secure indistinguishability obfuscation and one-way functions, and the quantum hardness of learning with errors (LWE), there exists an SDFE scheme for $\Ppoly$ that satisfies $\gamma$-anti-piracy security for any inverse polynomial $\gamma=\gamma(\lam)$.
\end{theorem}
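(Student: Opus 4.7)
The theorem is essentially the main result of Kitagawa and Nishimaki~\cite{AC:KitNis22}, so my plan is to reproduce their construction and proof strategy. The natural starting point is single-decryptor public-key encryption (SDE), which is known to exist from subexponentially secure iO, one-way functions, and LWE through the coset-state framework of Coladangelo, Liu, Liu, and Zhandry. From there I would lift SDE to SDFE by combining it with the standard FE-from-iO paradigm, using puncturable PRFs and indistinguishability obfuscation as the scaffolding that lets me carry out a hybrid argument over message pairs and function keys.

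Concretely, I would build $\Pi_\SDFE=(\Setup,\KeyGen,\Enc,\Dec)$ as follows. The master secret key contains a PRF key $k$ together with setup material for an underlying classical public-key functional encryption scheme. To produce a function key for $f$, I would derive the description of a coset $(A,s,s')$ by applying the PRF to $f$, prepare the coset state $\ket{A_{s,s'}}$, and output it together with an obfuscated program $P_f$ that on input a ciphertext $\ct$ and a vector lying in a canonical coset representation executes FE decryption with $f$ hard-wired. The ciphertext for $m$ would be an obfuscation of a small program that, on a valid coset-membership hint derived from $\ket{A_{s,s'}}$, reveals an underlying FE ciphertext of $m$; puncturability lets one hybridize cleanly between $m_0$ and $m_1$ at a fixed challenge key.

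The main technical obstacle is proving $\gamma$-anti-piracy. The reduction must argue that the monogamy-of-entanglement property of coset states survives the FE decryption wrapper: a splitting adversary that produces two quantum states each simultaneously $\gamma$-good at distinguishing $m_0$ from $m_1$ given oracle access to the obfuscated decryption program must, through hybrids that puncture the PRF at the challenge function $f$ and simulate the obfuscations via iO, be converted into a successful attacker on the underlying coset-state monogamy game. Driving $\gamma$ down to an arbitrary inverse polynomial additionally requires the strong (search-based) monogamy theorem together with a threshold implementation and amplification argument, which is the delicate component of~\cite{AC:KitNis22} I would follow step by step rather than re-derive from scratch.
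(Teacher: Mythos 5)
This statement is imported by the paper directly from Kitagawa--Nishimaki~\cite{AC:KitNis22} without any proof of its own, so your plan of reproducing that paper's construction is by definition the same route the paper takes. Your outline (coset-state single-decryptor encryption lifted to FE via iO and puncturable PRFs, with anti-piracy reduced to monogamy-of-entanglement and the threshold/amplification machinery for inverse-polynomial $\gamma$) is a faithful summary of the cited work, so there is nothing further to compare.
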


\paragraph{Functional encryption with untelegraphable keys.} We also define the untelegraphable analog of SDFE, which we call functional encryption with untelegraphable keys (FE-UTK) and show that SDFE implies FE-UTK.

\begin{definition}[FE with Untelegraphable Keys]\label{def:FE-UTK}
    An FE-UTK scheme with message space $\cM=\setbk{\cM_\lam}_{\lam\in\N}$ and function space $\cF=\setbk{\cF_\lam}_{\lam\in\N}$ is a tuple of QPT algorithms $\Pi_\UTK=(\Setup,\KeyGen,\Enc,\Dec)$ with the same syntax and correctness requirement as \cref{def:SDFE}. We require that $\Pi_\UTK$ satisfy the following security property:
    \begin{itemize}
        \item \textbf{UTK Security:} For a security parameter $\lam$, a bit $b\in\zo{}$, and a two-stage adversary $(\A,\B=(\B_0,\B_1))$, we define the UTK security game as follows:
        \begin{enumerate}
            \item The challenger starts by sampling $(\pk,\msk)\gets\Setup(1^\lam)$ and giving $\pk$ to adversary $\A$.
            
            \item Adversary $\A$ outputs a function $f\in\cF_\lam$. The challenger samples $\ket{\sk_f}\gets\KeyGen(\msk,f)$ and gives $\ket{\sk_f}$ to $\A$. 
            Adversary $\A$ outputs a classical string $\st$, which is then given to $\B$.
            
            \item On input $\st$, adversary $\B_0$ outputs two messages $m_0,m_1\in\cM_\lam$ along with a state $\st_\B$, which is given to $\B_1$.

            \item The challenger computes $\ct\gets\Enc(\pk,m_{b})$ and gives $\ct$ to $\B_1$. Adversary $\B_1$ outputs a bit $b'$, which is the output of the experiment.
        \end{enumerate}
        We say a FE-UTK scheme satisfies UTK security if for all QPT adversaries $(\A,\B)$ there exists a negligible function $\negl(\cdot)$ such that for all $\lam\in\N$, \[\abs{\Pr[b'=1|b=0]-\Pr[b'=1|b=1]}= \negl(\lam)\] in the above security game. 
    \end{itemize}
\end{definition}

\begin{corollary}[Anti-Piracy Security Implies UTK Security]
\label{cor:ap-utk}
    Suppose a scheme $\Pi_\SDFE$ satisfies anti-piracy security. Then, $\Pi_\SDFE$ satisfies UTK security.
\end{corollary}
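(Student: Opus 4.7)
The plan is a direct reduction: given a UTK adversary $(\A, \B=(\B_0,\B_1))$ with non-negligible distinguishing advantage $\delta$, I would construct an SDFE anti-piracy adversary $\A'$ that produces two $\gamma$-good FE distinguishers for some inverse polynomial $\gamma$. The key enabling observation is that the intermediate string $\st$ output by $\A$ in the UTK game is classical and hence freely copyable---this is precisely the feature that allows anti-piracy security (which rules out two simultaneous pirates sharing a quantum state) to imply UTK security (which rules out a single pirate passing classical information to a follow-up adversary).

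The reduction $\A'$ mirrors the UTK execution: it forwards $\pk$ to $\A$, relays $\A$'s function query $f$ to the anti-piracy challenger, forwards the received key $\ket{\sk_f}$ back to $\A$, and obtains $\A$'s classical output $\st$. Then $\A'$ runs $\B_0(\st)$ once internally to sample a tuple $(m_0, m_1, \st_\B)$ and outputs $(m_0, m_1)$ to the anti-piracy challenger as its message pair. This exploits the fact that $\A'$ can compute $(m_0, m_1)$ before producing the two SDFE-stage distinguishers, which is required by the anti-piracy game.

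For the two second-stage SDFE distinguishers (which I denote $\Bhat, \hat{\C}$ to avoid clashing with the UTK $\B$), $\A'$ sets $\Bhat$ to hold $\st_\B$ in its register and apply the circuit $\B_1$ to the challenge ciphertext together with $\st_\B$. The other distinguisher $\hat{\C}$ is given a fresh copy of the classical $\st$ together with the hardwired target $(m_0, m_1)$; on input a ciphertext $\ct$, $\hat{\C}$ re-runs $\B_0(\st)$ to obtain a fresh $(m_0', m_1', \st_\B')$, and if $(m_0', m_1') = (m_0, m_1)$ it runs $\B_1$ on $(\ct, \st_\B')$, otherwise outputting a uniform random bit. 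By construction, $\Bhat$'s distinguishing advantage matches the conditional $\B_1$ advantage given the sampled $(m_0, m_1, \st_\B)$, while $\hat{\C}$'s advantage equals this conditional advantage scaled by the probability that a fresh execution of $\B_0(\st)$ lands on the target $(m_0, m_1)$.

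The main obstacle is ensuring both advantages remain inverse polynomial simultaneously. Under the natural interpretation that $\st_\B$ is itself classical, the argument is immediate: $\A'$ hands $\Bhat$ and $\hat{\C}$ independent copies of $\st_\B$ and each inherits the full advantage $\delta$. When $\st_\B$ is genuinely quantum, the proof requires a more careful heavy-atom argument, restricting by averaging to a polynomially-dense subset of $\B_0$'s output distribution on which both the conditional $\B_1$ advantage is at least $\delta/2$ and the probability weight of $(m_0, m_1)$ is at least $1/\poly(\lam)$, guaranteeing $\hat{\C}$'s projection succeeds with inverse-polynomial probability. In either regime, for a suitable inverse polynomial $\gamma$, both $\Bhat$ and $\hat{\C}$ become $\gamma$-good distinguishers for $(m_0, m_1)$ with non-negligible probability, contradicting $\gamma$-anti-piracy of $\Pi_\SDFE$ for that choice of $\gamma$.
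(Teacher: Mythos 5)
Your core reduction is the paper's: run $\A$, feed the classical $\st$ to $\B_0$ once to fix $(m_0,m_1,\st_\B)$, output $(m_0,m_1)$, and use $(\st_\B,\mat{U}_{\B_1})$ as a $\gamma$-good distinguisher for an inverse-polynomial $\gamma$ tied to the UTK advantage. Indeed the paper does exactly this and simply sets \emph{both} pirates equal to $(\st_\B,\mat{U}_{\B_1})$ --- i.e., it treats $\st_\B$ as copyable (and in the one place the corollary is used, Lemma~\ref{lem:utskfe0-1}, $\st_\B$ is explicitly a classical tuple containing the randomness needed to rerun $\B$), so your ``classical $\st_\B$'' branch coincides with the paper's argument.

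The part where you deviate --- handling a genuinely quantum $\st_\B$ by giving the second pirate a fresh copy of the classical $\st$, re-running $\B_0$, and conditioning on reproducing the target pair $(m_0,m_1)$ --- has a real gap. The anti-piracy test is run with respect to the specific pair $(m_0,m_1)$ fixed by your single initial run of $\B_0$, and nothing forces an independent rerun of $\B_0(\st)$ to hit that pair with noticeable probability: $\B_0$ may fold fresh randomness (or a measurement outcome from an exponentially large set) into the messages, in which case \emph{every} atom of its output distribution has negligible weight even though the conditional advantage of $\B_1$ is $\delta$ for each of them. Then your $\hat{\C}$ outputs a random bit except with negligible probability, its advantage is negligible, and no averaging/heavy-atom selection can rescue it, because the quantity you need to be inverse-polynomial (the weight of the single fixed pair) simply need not be. So the proposed patch does not establish the quantum-$\st_\B$ case; the honest way to match the paper is to make the copyability of the intermediate state explicit (as the paper implicitly does by writing $\B'=\C'=(\st_\B,\mat{U}_{\B_1})$), rather than to try to regenerate it from $\st$.
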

\begin{proof}
    Given an adversary $(\A,\B=(\B_0,\B_1))$ that wins the UTK security game with non-negligible advantage $\delta>0$, we construct $\A'$ that wins the $\gamma$-anti-piracy game for inverse polynomial $\gamma(\lam)$:
    \begin{enumerate}
        \item Adversary $\A'$ runs $\A$ and gives $\st$ to $\B_0$ to get $\st_\B$ and messages $m_0,m_1$. 
        \item $\A'$ defines both $\B'=\C'=(\st_\B,\mat{U}_{\B_1})$ and gives them to the challenger. 
    \end{enumerate}
    By assumption, the probability of $(\A,\B)$ guessing a random challenge bit is $1/2+\delta$, where for infinitely many $\lam\in\N$, the advantage $\delta(\lam)\geq1/p(\lam)$ for some polynomial $p$.
    Thus, on such values of $\lam$, the challenger's distinguisher test will pass with inverse polynomial probability by setting $\gamma(\lam)=1/p'(\lam)$ for $p'(\lam)=100p(\lam)$, say. Thus, for infinitely many $\lam\in\N$, the $\gamma$-anti-piracy game will output 1 with some inverse polynomial probability, which is non-negligible.
\end{proof}

\paragraph{Untelegraphable functional encryption.} We now formally define the notions of untelegraphable secret-key and public-key functional encryption (abbreviated UTSKFE and UTPKFE, respectively). 
These definitions extend plain public-key and secret-key functional encryption to have quantum ciphertexts which satisfy untelegraphability. 
In this work, we consider a relaxation of the standard untelegraphable security definition that would give out the master key to the second-stage adversary.
Specifically, the alternate security game allows the second-stage adversary to make queries to \textit{any} function secret key, even one that distinguishes the challenge messages. 
This relaxation is similar to the notion of SDFE, which has no master key it can give out since the ciphertext is the additional information given to the second-stage adversaries.

\begin{definition}[Untelegraphable Secret-Key Functional Encryption]
\label{def:UTSKFE}
    An untelegraphable secret-key functional encryption scheme with message space $\cM=\setbk{\cM_\lam}_{\lam\in\N}$ and function space $\cF=\setbk{\cF_\lam}_{\lam\in\N}$ is a tuple of QPT algorithms $\Pi_\UTSKFE=(\Setup,\KeyGen,\Enc,\Dec)$ with the following syntax:
    \begin{itemize}
        \item $\Setup(1^\lam)\to\msk$: On input the security parameter $\lam$, the setup algorithm outputs a classical master secret key $\msk$.
        \item $\KeyGen(\msk,f)\to\sk_f$: On input the master key $\msk$ and a function $f\in\cF_\lam$, the key generation algorithm outputs a function secret key $\sk_f$.
        \item $\Enc(\msk,m)\to\ket{\ct}$: On input the master key $\msk$ and a message $m\in\cM_\lam$, the encryption algorithm outputs a quantum ciphertext $\ketct$.
        \item $\Dec(\sk_f,\ketct)\to m$: On input a function secret key $\sk_f$ and a quantum ciphertext $\ketct$, the decryption algorithm outputs a message $m\in\cM_\lam$. 
    \end{itemize}
    We require that $\Pi_\UTSKFE$ satisfy the following properties:
    \begin{itemize}
        \item \textbf{Correctness:} For all $\lam\in\N$, $m\in\cM_\lam$, and $f\in\cF_\lam$,
        \[
        \Pr\left[f(m)=y:
        \begin{array}{c}
        \msk\gets\Setup(1^\lam)\\
        \ketct\gets\Enc(\msk,m)\\
        \sk_f\gets\KeyGen(\msk,f)\\
        y\gets\Dec(\sk_f,\ketct)
        \end{array}\right]\geq 1-\negl(\lam).
        \]
        \item \textbf{Untelegraphable Security with Unrestricted Key Queries:} For a security parameter $\lam$, a bit $b\in\zo{}$, and a two-stage adversary $(\A,\B)$, we define the untelegraphable security game as follows:
        \begin{enumerate}
            \item At the beginning of the game, the challenger samples $\msk\gets\Setup(1^\lam)$.
            \item Adversary $\A$ can now make key-generation queries by specifying a function $f\in\cF_\lam$. The challenger replies with $\sk_f\gets\KeyGen(\msk,f)$.
            \item Adversary $\A$ can also make encryption queries by specifying a message $m\in\cM_\lam$. The challenger replies with $\ketct\gets\Enc(\msk,m)$.
            \item Adversary $\A$ outputs two messages $m_0,m_1\in\cM_\lam$. If $f(m_0)=f(m_1)$ holds for all functions $f$ queried by $\A$, the challenger generates a ciphertext $\ket{\ctstar}\gets\Enc(\msk,m_b)$ and sends $\ket{\ctstar}$ to $\A$.
            \item Adversary $\A$ is again allowed to make key-generation queries, so long as $f(m_0)=f(m_1)$ holds and can still make encryption queries as before. 
            \item Adversary $\A$ outputs a classical string $\st$ which is given to adversary $\B$. 
            \item Adversary $\B$ can now make encryption queries and key-generation queries where the function $f\in\cF_\lam$ is not restricted. Adversary $\B$ outputs a bit $b_\B$, which is the output of the experiment. 
        \end{enumerate}
        We say a UTSKFE scheme $\Pi_\UTSKFE$ satisfies untelegraphable security if for all QPT adversaries $(\A,\B)$ there exists a negligible function $\negl(\cdot)$ such that for all $\lam\in\N$, 
        \[\abs{\Pr[b_\B=1|b=0]-\Pr[b_\B=1|b=1]}= \negl(\lam)\] in the above security game. We say a scheme $\Pi_\UTSKFE$ satisfies untelegraphable security \textit{with master key revealing} if security still holds when $\B$ is given $\msk$ instead of key-generation access.  We say a scheme $\Pi_\UTSKFE$ is \textit{single-ciphertext secure} if the adversary only gets to make the challenge ciphertext query and does not otherwise have access to an encryption oracle. 
    \end{itemize}
\end{definition}

\begin{definition}[Untelegraphable Public-Key Functional Encryption]\label{def:UTPKFE}
    We define untelegraphable PKFE the same as untelegraphable SKFE (\cref{def:UTSKFE}), with the following basic modifications:
    \begin{itemize}
        \item The setup algorithm $\Setup$ now outputs a public key and master key pair $(\pk,\msk)$.
        \item The encryption algorithm $\Enc$ now takes as input the public key $\pk$ instead of the master key $\msk$.
        \item Instead of making encryption queries, the adversary gets the public key $\pk$ from the challenger at the start of the security game. 
    \end{itemize}
\end{definition}

\paragraph{Plain PKFE and SKFE.} We also use plain public-key and secret-key FE in our constructions. The adaptive security definitions are the same as \cref{def:UTSKFE} except the first stage adversary outputs a bit $b'$. The syntax is also the same except ciphertexts are classical. We omit the definitions due to redundancy. 

\subsection{Constructing Untelegraphable Secret-Key Functional Encryption}
\label{sec:utskfe}
In this section, we construct single-ciphertext secure untelegraphable SKFE from FE-UTK (\cref{def:FE-UTK}) and standard SKFE. 
\begin{construction}[Untelegraphable SKFE]\label{cons:utskfe}
    Let $\lam$ be a security parameter. 
    Our construction relies on the following additional primitives:
    \begin{itemize}
        \item Let $\Pi_\UTK=\UTK.(\Setup,\KeyGen,\Enc,\Dec)$ be an FE scheme with untelegraphable keys with function space $\Ppoly$. 
        \item Let $\Pi_\SKFE=\SKFE.(\Setup,\KeyGen,\Enc,\Dec)$ be a secret-key FE scheme  with message space $\cM=\setbk{\cM_\lam}_{\lam\in\N}$ and function space $\cF=\setbk{\cF_\lam}_{\lam\in\N}$.
    \end{itemize}
    We construct our UTSKFE scheme $\Pi_\UTSKFE=(\Setup,\KeyGen,\Enc,\Dec)$ as follows:
    \begin{itemize}
        \item $\Setup(1^\lam)$: On input the security parameter $\lam$, the setup algorithm samples \[(\pk_\UTK,\msk_\UTK)\gets\UTK.\Setup(1^\lam) \eqand \msk_\SKFE\gets\SKFE.\Setup(1^\lam),\] and outputs $\msk=(\pk_\UTK,\msk_\UTK,\msk_\SKFE)$. 

        \item $\KeyGen(\msk,f)$: On input the master key $\msk=(\pk_\UTK,\msk_\UTK,\msk_\SKFE)$ and a function $f\in\cF_\lam$, the key generation algorithm samples \ifelsesubmiteq{\sk'_f\gets\SKFE.\KeyGen(\msk_\SKFE,f)} and outputs $\sk_f\gets\UTK.\Enc(\pk_\UTK,\sk'_f)$.

        \item $\Enc(\msk, m)$: On input the master key $\msk=(\pk_\UTK,\msk_\UTK,\msk_\SKFE)$ and a message $m\in\cM_\lam$, the encryption algorithm computes $\ct_\SKFE\gets\SKFE.\Enc(\msk_\SKFE,m)$ and outputs \[\ketct\gets\UTK.\KeyGen(\msk_\UTK,G[\ct_\SKFE]),\] where $G[\ct_\SKFE]$ is the circuit which takes an SKFE function key $\sk'_f$ as input and outputs $\SKFE.\Dec(\ct_\SKFE,\sk'_f)$.

        \item $\Dec(\sk_f,\ketct)$: On input a function secret key $\sk_f$ and a quantum ciphertext $\ketct$, the decryption algorithm outputs $\UTK.\Dec(\ketct,\sk_f)$. 
    \end{itemize}
\end{construction}

\begin{theorem}[Correctness]\label{thm:utskfe-correct}
    If $\Pi_\UTK$ and $\Pi_\SKFE$ are correct, then \cref{cons:utskfe} is correct. 
\end{theorem}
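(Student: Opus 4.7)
The plan is to chain the two correctness guarantees in the natural order: first apply correctness of $\Pi_\UTK$ to unwrap the outer layer, and then apply correctness of $\Pi_\SKFE$ to recover $f(m)$. Concretely, I would fix $\lam \in \N$, $m \in \cM_\lam$, $f \in \cF_\lam$ and trace through the distribution generated by $\Setup$, $\KeyGen$, $\Enc$, and $\Dec$ in sequence. Note that in \cref{cons:utskfe} the roles of ciphertexts and function keys for $\Pi_\UTK$ are swapped relative to the UTSKFE roles: the UTSKFE ciphertext $\ketct$ is a $\Pi_\UTK$ function key for the circuit $G[\ct_\SKFE]$, while the UTSKFE function key $\sk_f$ is a $\Pi_\UTK$ ciphertext of the plain SKFE key $\sk'_f$.

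With that in mind, I would first invoke correctness of $\Pi_\UTK$ on the pair $(\ketct, \sk_f) = \bigl(\UTK.\KeyGen(\msk_\UTK, G[\ct_\SKFE]),\, \UTK.\Enc(\pk_\UTK, \sk'_f)\bigr)$ to conclude that
\[
\UTK.\Dec(\ketct, \sk_f) = G[\ct_\SKFE](\sk'_f)
\]
with overwhelming probability. By definition of $G[\ct_\SKFE]$ this equals $\SKFE.\Dec(\ct_\SKFE, \sk'_f)$. Then I would invoke correctness of $\Pi_\SKFE$ on the pair $(\ct_\SKFE, \sk'_f) = \bigl(\SKFE.\Enc(\msk_\SKFE, m),\, \SKFE.\KeyGen(\msk_\SKFE, f)\bigr)$ to conclude that this in turn equals $f(m)$ with overwhelming probability. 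A union bound over the two $\negl(\lam)$ failure events yields the theorem.

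There is no real obstacle here; this is a routine two-step unwinding. The only thing to be mildly careful about is confirming that the message space of $\Pi_\UTK$ contains the SKFE function keys $\sk'_f$ and that the function space of $\Pi_\UTK$ (which we have assumed to be $\Ppoly$) contains the circuits $G[\ct_\SKFE]$ for every ciphertext $\ct_\SKFE$ in the support of $\SKFE.\Enc$, both of which are immediate from the stated parameters of $\Pi_\UTK$ and the polynomial-time computability of $\SKFE.\Dec$.
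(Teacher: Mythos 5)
Your proposal is correct and matches the paper's own proof: both apply correctness of $\Pi_\UTK$ to conclude $\UTK.\Dec(\ketct,\sk_f)=G[\ct_\SKFE](\sk'_f)$ and then correctness of $\Pi_\SKFE$ (via the definition of $G[\ct_\SKFE]$) to obtain $f(m)$, with a union bound over the negligible failure events. The extra sanity check on the message and function spaces of $\Pi_\UTK$ is a harmless addition.
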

\begin{proof}
    Take any $\lam\in\N$, $m\in\cM_\lam$, and $f\in\cF_\lam$. 
    Let $\msk=(\pk_\UTK,\msk_\UTK,\msk_\SKFE)\gets\Setup(1^\lam)$. 
    As in $\Enc$, compute $\ct_\SKFE\gets\SKFE.\Enc(\msk_\SKFE,m)$ and sample $\ketct\gets\UTK.\KeyGen(\msk_\UTK,G[\ct_\SKFE])$.
    As in $\KeyGen$, sample \ifelsesubmiteq{\sk'_f\gets\SKFE.\KeyGen(\msk_\SKFE,f)} and set $\sk_f\gets\UTK.\Enc(\pk_\UTK,\sk'_f)$. 
    We consider the output of $\Dec(\sk_f,\ketct)$.
    By correctness of $\Pi_\UTK$, we have $G[\ct_\SKFE](\sk'_f)\gets\UTK.\Dec(\ketct,\sk_f)$.
    By correctness of $\Pi_\SKFE$, we have $f(m)\gets G[\ct_\SKFE](\sk'_f)$, as desired. 
\end{proof}

\begin{theorem}[Untelegraphable Security]\label{thm:utskfe-secure}
    Suppose $\Pi_\SKFE$ satisfies adaptive security and $\Pi_\UTK$ satisfies UTK security (\cref{def:FE-UTK}).
    Then, \cref{cons:utskfe} satisfies single-ciphertext untelegraphable security with unrestricted key queries (\cref{def:UTSKFE}). 
\end{theorem}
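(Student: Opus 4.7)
The plan is to prove the theorem by a hybrid argument that alternates between invocations of UTK security (used to neutralize $\B$'s unrestricted UTSKFE function key queries) and plain SKFE security (used to switch the challenge message). The key observation is that the UTSKFE ciphertext encrypting $m_b$ is exactly a UTK function key for the circuit $G[\ct_\SKFE^{(b)}]$, where $\ct_\SKFE^{(b)} \gets \SKFE.\Enc(\msk_\SKFE, m_b)$, so switching the UTSKFE challenge bit reduces to switching the inner SKFE ciphertext. However, SKFE security only applies if every SKFE function key in the adversary's view corresponds to an $f$ with $f(m_0) = f(m_1)$, and $\B$'s unrestricted queries would reveal SKFE keys for arbitrary $f$. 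I will therefore first replace those queries with dummy UTK encryptions of $0$, then apply SKFE security, then restore the queries.

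Let $k = \poly(\lam)$ bound the number of UTSKFE function key queries made by $\B$. For $b \in \zo{}$ and $0 \le i \le k$, define hybrid $H_i^b$ to be the UTSKFE experiment with challenge bit $b$ in which the first $i$ of $\B$'s function key queries $f$ are answered by $\UTK.\Enc(\pk_\UTK, 0)$ rather than by the honest $\UTK.\Enc(\pk_\UTK, \sk'_f)$; the real games are $H_0^0$ and $H_0^1$. I will show $H_{i-1}^0$ and $H_i^0$ are indistinguishable for each $i$ by UTK security (Step 1), $H_k^0$ and $H_k^1$ are indistinguishable by SKFE security (Step 2), and $H_i^1$ and $H_{i-1}^1$ are indistinguishable for each $i$ by UTK security (Step 3, symmetric to Step 1).

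For Step 1, the reduction $(\A', \B' = (\B'_0, \B'_1))$ to UTK security proceeds as follows. On input $\pk_\UTK$, adversary $\A'$ samples $\msk_\SKFE$ itself, simulates all of $\A$'s restricted UTSKFE function key queries using $\pk_\UTK$ and $\msk_\SKFE$, and when $\A$ outputs $(m_0, m_1)$, submits the single UTK function key query $f^\star = G[\SKFE.\Enc(\msk_\SKFE, m_0)]$ to obtain the UTSKFE challenge ciphertext $\ket{\sk_{f^\star}}$. After $\A$ terminates with state $\st$, the reduction forwards $(\st, \msk_\SKFE, \pk_\UTK)$ to $\B'_0$, which runs $\B$, answering the first $i-1$ queries with dummies; upon the $i$-th query $f_i$, $\B'_0$ derives $\sk'_{f_i} \gets \SKFE.\KeyGen(\msk_\SKFE, f_i)$ and outputs UTK challenge messages $(\sk'_{f_i}, 0)$. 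Then $\B'_1$ plants the UTK challenge ciphertext as $\B$'s $i$-th response and completes the simulation with honest encryptions, so that the UTK challenge bit exactly controls whether $\B$'s view is $H_{i-1}^0$ or $H_i^0$. For Step 2, since $\B$'s responses no longer depend on any SKFE key, the reduction to SKFE security samples $(\pk_\UTK, \msk_\UTK)$ itself, forwards $\A$'s restricted UTSKFE key queries to its SKFE key oracle to recover $\sk'_f$, submits $(m_0, m_1)$ as the SKFE challenge to obtain $\ct_\SKFE^\star$, constructs the UTSKFE challenge ciphertext $\UTK.\KeyGen(\msk_\UTK, G[\ct_\SKFE^\star])$, and answers all of $\B$'s queries by $\UTK.\Enc(\pk_\UTK, 0)$; the restriction $f(m_0)=f(m_1)$ on $\A$'s queries exactly matches the SKFE security requirement.

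The main obstacle I anticipate is reconciling the rigid UTK interface---a single function key handed to $\A$ and a single ciphertext handed to $\B_1$---with a UTSKFE game that has polynomially many function key queries on both sides. The resolution is to spend one UTK instance per query we wish to switch, to have $\A'$ generate every non-target UTK ciphertext itself using $\pk_\UTK$ and $\msk_\SKFE$, and to smuggle $\msk_\SKFE$ and $\pk_\UTK$ through $\st$ so that $\B'_0$ and $\B'_1$ can continue simulating $\B$'s key queries across the UTK split. Combining Steps 1--3 via the triangle inequality then establishes indistinguishability of $H_0^0$ and $H_0^1$, completing the proof.
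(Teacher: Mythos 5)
Your proposal is correct and follows essentially the same route as the paper's proof: a hybrid argument that replaces $\B$'s UTSKFE key responses one at a time with $\UTK.\Enc(\pk_\UTK,0)$ via UTK security (embedding the challenge ciphertext as the UTK function key for $G[\ct_\SKFE]$ and the $i$-th query as the UTK challenge message pair), and then switches the inner SKFE challenge from $m_0$ to $m_1$ once $\B$'s view no longer contains any SKFE keys. The only cosmetic difference is that the paper's $\B'_1$ reruns $\B$ from scratch with the recorded randomness, whereas you carry the simulation state (including $\msk_\SKFE$ and $\pk_\UTK$) across the $\B'_0/\B'_1$ split, which is an equally valid bookkeeping choice.
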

\begin{proof}
    Suppose there exists a QPT two-stage adversary $\AB$ that has non-negligible advantage $\delta$ in the UTSKFE security game and adversary $\B$ makes at most $q$ queries. We define a sequence of hybrid experiments indexed by a bit $i\in[q]$:
    \begin{itemize}
        \item $\hybb{0}$: This is the UTSKFE security game with challenge bit $b\in\zo{}$. In particular:
        \begin{itemize}
            \item The challenger starts by sampling \[(\pk_\UTK,\msk_\UTK)\gets\UTK.\Setup(1^\lam) \eqand \msk_\SKFE\gets\SKFE.\Setup(1^\lam).\]
            It sets $\msk=(\pk_\UTK,\msk_\UTK,\msk_\SKFE)$.
            
            \item When adversary $\A$ makes a key query on function $f\in\cF_\lam$, the challenger samples \[\sk'_f\gets\SKFE.\KeyGen(\msk_\SKFE,f)\] and replies with $\sk_f\gets\UTK.\Enc(\pk_\UTK,\sk'_f)$. 
            
            \item When adversary $\A$ outputs two messages $m_0,m_1\in\cM_\lam$, if $f(m_0)=f(m_1)$ holds for all functions $f$ queried by $\A$, the challenger computes $\ct_\SKFE\gets\SKFE.\Enc(\msk_\SKFE,m_b)$ and replies with \[\ketct\gets\UTK.\KeyGen(\msk_\UTK,G[\ct_\SKFE]).\]

            \item Adversary $\A$ outputs a classical state $\st$ after it is finished making key queries such that $f(m_0)=f(m_1)$. The challenger answers the queries as before and gives $\st$ to $\B$.

            \item When adversary $\B$ makes a key query, the challenger answers as it did for $\A$, without checking $f(m_0)=f(m_1)$. Adversary $\B$ outputs a bit $b'$, which is the output of the experiment. 
        \end{itemize}

        \item $\hybb{i}$: Same as $\hybb{i-1}$ except when answering adversary $\B$'s $\ord{i}$ key query on function $f_i$, the adversary replies with $\diff{\sk_{f_i}\gets\UTK.\Enc(\pk_\UTK,0)}$, where we overload 0 to mean the all 0s string. 
    \end{itemize}
    We write $\hybb{i}\AB$ to denote the output distribution of
    an execution of $\hybb{i}$ with adversary $\AB$. 
    We now argue that each adjacent pair of distributions are indistinguishable.
    \begin{lemma}\label{lem:utskfe0-1}
        Suppose $\Pi_\UTK$ is UTK secure. 
        Then, there exists a negligible function $\negl(\cdot)$ such that for all $\lam\in\N$, we have
        \[
        \abs{\Pr[\hybb{i-1}(\A,\B)=1]-\Pr[\hybb{i}(\A,\B)=1]}=\negl(\lam).
        \]
    \end{lemma}
    \begin{proof}
        Suppose $\AB$ distinguishes the experiments with non-negligible probability $\delta'$. We use $\AB$ to construct an adversary $(\A',\B'=(\B'_0,\B'_1))$ that breaks UTK security as follows:
        \begin{enumerate}
            \item At the start of the game, adversary $\A'$ gets $\pk_\UTK$ from its challenger. Adversary $\A'$ samples $\msk_\SKFE\gets\SKFE.\Setup(1^\lam)$.

            \item When adversary $\A$ makes a key query on $f\in\cF_\lam$, adversary $\A'$ samples $\sk'_f\gets\SKFE.\KeyGen(\msk_\SKFE,f)$ and replies with $\sk_f\gets\UTK.\Enc(\pk_\UTK,\sk'_f)$. 

            \item When adversary $\A$ outputs two messages $m_0,m_1\in\cM_\lam$, if $f(m_0)=f(m_1)$ holds for all functions $f$ queried by $\A$, adversary $\A'$ computes $\ct_\SKFE\gets\SKFE.\Enc(\msk_\SKFE,m_b)$ and queries the function $G[\ct_\SKFE]$ to the UTK challenger to get back a quantum state that it labels $\ketct$ and gives to $\A$.

            \item When adversary $\A$ outputs a classical state $\st$, adversary $\A'$ gives $\st$ to adversary $\B$ and $(\st,\pk_\UTK)$ to $\B'_0$. 

            \item Adversary $\B'_0$ starts running adversary $\B$ with randomness $r$. For $\B$'s $\ord{j}$ key queries where $j<i$ on $f_j$, adversary $\B'_0$ replies with $\sk_{f_j}\gets\UTK.\Enc(\pk_\UTK,0)$.
            For the $\ord{i}$ key query $f_i$, adversary $\B'_0$ samples $\sk'_{f_i}\gets\SKFE.\KeyGen(\msk_\SKFE,f_i)$ and outputs the message pair $(0,\sk'_{f_i})$ along with $\st_\B=(\st,\pk_\UTK,r,\setbk{\sk_{f_j}}_{j\in[i-1]})$. 
            Adversary $\B'_1$ gets $\sk_{f_i}$ from the UTK challenger along with $\st_\B$, reruns $\B$ with randomness $r$, answers the first $i-1$ key queries as $\B'_0$ did, and answers the $\ord{i}$ query with $\sk_{f_i}$.
            For the remaining key queries on $f_j$, adversary $\B'_1$ samples $\sk'_{f_j}\gets\SKFE.\KeyGen(\msk_\SKFE,f_j)$ and replies with $\sk_{f_j}\gets\UTK.\Enc(\pk_\UTK,\sk'_{f_j})$.

            \item Adversary $\B'_1$ outputs whatever $\B$ outputs.
        \end{enumerate}
        Clearly $(\A',\B'=(\B'_0,\B'_1))$ is efficient if $\AB$ is. 
        If $\sk_{f_i}$ is an encryption of $\sk'_{f_i}$, adversary $(\A',\B')$ simulates $\hybb{i-1}(\A,\B)$.
        If $\sk_{f_i}$ is an encryption of 0, adversary $(\A',\B')$ simulates $\hybb{i}(\A,\B)$.
        Thus, adversary $(\A',\B')$ breaks UTK security with advantage $\delta'$, a contradiction. 
    \end{proof}

    \begin{lemma}\label{lem:utskfe-final}
        Suppose $\Pi_\SKFE$ is adaptively secure for QPT adversaries. 
        Then, there exists a negligible function $\negl(\cdot)$ such that for all $\lam\in\N$, we have
        \[
        \abs{\Pr[\hyb^{(0)}_{q}(\A,\B)=1]-\Pr[\hyb^{(1)}_{q}(\A,\B)=1]}=\negl(\lam).
        \]
    \end{lemma}
    \begin{proof}
        Suppose $\AB$ distinguishes the experiments with non-negligible probability $\delta'$. 
        We use $\AB$ to construct an adversary $\A'$ that breaks SKFE security as follows:
        \begin{enumerate}
            \item Adversary $\A'$ starts by sampling $(\pk_\UTK,\msk_\UTK)\gets\UTK.\Setup(1^\lam)$. 

            \item When adversary $\A$ makes a key query on $f$, adversary $\A'$ forwards it to the SKFE challenger to get $\sk'_f$ and replies with $\sk_f\gets\UTK.\Enc(\pk_\UTK,\sk'_f)$.

            \item When adversary $\A$ outputs two messages $m_0,m_1\in\cM_\lam$, if $f(m_0)=f(m_1)$ holds for all functions $f$ queried by $\A$, adversary $\A'$ sends $m_0,m_1$ to the SKFE challenger to get back $\ct_\SKFE$, and replies with \ifelsesubmiteq{\ketct\gets\UTK.\KeyGen(\msk_\UTK,G[\ct_\SKFE]).}

            \item When adversary $\A$ outputs a classical state $\st$, adversary $\A'$ gives $\st$ to adversary $\B$.

            \item Adversary $\A'$ replies to all key queries by $\B$ with $\sk_{f}\gets\UTK.\Enc(\pk_\UTK,0)$, and outputs whatever $\B$ outputs. 
        \end{enumerate}
        Clearly $\A'$ is efficient if $\AB$ is.
        If $\ct_\SKFE$ is an encryption of $m_0$, adversary $\A'$ simulates $\hyb^{(0)}_{q}(\A,\B)$.
        If $\ct_\SKFE$ is an encryption of $m_1$, adversary $\A'$ simulates $\hyb^{(1)}_{q}(\A,\B)$.
        Thus, adversary $\A'$ breaks SKFE security with advantage $\delta'$, a contradiction. 
    \end{proof}
    \noindent Combining \cref{lem:utskfe0-1,lem:utskfe-final}, the theorem follows by a standard hybrid argument. 
\end{proof}

\begin{corollary}[Untelegraphable SKFE]\label{cor:utskfe}
    Assuming sub-exponentially secure indistinguishability obfuscation and one-way functions, and the quantum hardness of learning with errors (LWE), there exists a PKFE scheme for $\Ppoly$ that satisfies untelegraphable security with unrestricted key queries.
\end{corollary}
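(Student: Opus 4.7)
The plan is to observe that the corollary is an immediate instantiation of the generic transformation in Construction~\ref{cons:utskfe}, whose correctness and security are established by Theorems~\ref{thm:utskfe-correct} and~\ref{thm:utskfe-secure}. I should note that the corollary as stated says ``PKFE'' but clearly refers to the UTSKFE construction just analyzed in this section, so the proof amounts to instantiating the two building blocks of Construction~\ref{cons:utskfe} under the stated assumptions.

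First, I would invoke Theorem~\ref{thm:sdfe} to obtain an SDFE scheme for $\Ppoly$ satisfying $\gamma$-anti-piracy security for any inverse polynomial $\gamma$, which requires exactly the triple of assumptions listed (sub-exponential iO, one-way functions, and quantum LWE). Next, I would apply Corollary~\ref{cor:ap-utk} to lift this SDFE into an FE-UTK scheme $\Pi_\UTK$ for $\Ppoly$ satisfying UTK security in the sense of Definition~\ref{def:FE-UTK}. In parallel, I would instantiate the plain secret-key FE scheme $\Pi_\SKFE$ used in the construction; adaptive SKFE for $\Ppoly$ is known to follow from one-way functions alone (and is in any case implied by iO plus OWFs), so this requires no additional assumption beyond those already in hand.

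With these two components available, I would then plug them into Construction~\ref{cons:utskfe} and apply Theorem~\ref{thm:utskfe-correct} to conclude correctness and Theorem~\ref{thm:utskfe-secure} to conclude single-ciphertext untelegraphable security with unrestricted key queries. Since every hypothesis of these two theorems is discharged by the previously invoked results, the combination directly yields a UTSKFE scheme for $\Ppoly$ under the stated assumptions.

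There is essentially no main obstacle here beyond careful bookkeeping: all the technical work has already been done in Theorems~\ref{thm:sdfe},~\ref{thm:utskfe-correct},~\ref{thm:utskfe-secure} and Corollary~\ref{cor:ap-utk}. The only subtlety worth flagging is the mismatch between $\gamma$-anti-piracy (for arbitrary inverse-polynomial $\gamma$) and UTK security; Corollary~\ref{cor:ap-utk} already shows that a non-negligible UTK distinguishing advantage $\delta$ yields a $\gamma$-good FE distinguisher pair for $\gamma = \delta/O(1)$, so choosing $\gamma$ appropriately for each candidate UTK adversary is the only place where the inverse-polynomial parameter plays a role, and this matches the strength provided by Theorem~\ref{thm:sdfe}.
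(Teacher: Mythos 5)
Your proposal is correct and matches the paper's own proof, which likewise combines Theorems~\ref{thm:utskfe-correct},~\ref{thm:utskfe-secure},~\ref{thm:sdfe} and Corollary~\ref{cor:ap-utk}, instantiating the plain SKFE from iO plus one-way functions. One small caveat: your aside that adaptive SKFE for $\Ppoly$ (with the unbounded key queries needed here) follows from one-way functions alone is not known to be true---only bounded-collusion FE is---but your parenthetical fallback to iO plus OWFs, which is exactly what the paper uses, keeps the argument sound.
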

\begin{proof}
    Follows immediately by combining \cref{thm:utskfe-correct,thm:utskfe-secure,cor:ap-utk,thm:sdfe}, and noting that indistinguishability obfuscation and one-way functions imply SKFE. 
\end{proof}

\subsection{Constructing Untelegraphable Public-Key Functional Encryption}
\label{sec:utpkfe}
In this section, we construct untelegraphable PKFE from untelegraphable SKFE, standard PKFE, and CPA-secure SKE. The construction and security proof follow the blueprint in~\cite{C:ABSV15}.

\begin{construction}[Untelegraphable PKFE]\label{cons:utpkfe}
    Let $\lam$ be a security parameter. 
    Our construction relies on the following additional primitives:
    \begin{itemize}
        \item Let $\Pi_\PKFE=\PKFE.(\Setup,\KeyGen,\Enc,\Dec)$ be a public-key FE scheme with function space $\Ppoly$. 
        
        \item Let $\Pi_\SKFE=\SKFE.(\Setup,\KeyGen,\Enc,\Dec)$ be an untelegraphable SKFE scheme with message space $\cM=\setbk{\cM_\lam}_{\lam\in\N}$ and function space $\cF=\setbk{\cF_\lam}_{\lam\in\N}$. Let $\SKFE.\KeyGen$ have randomness space $\zo{\rho(\lam)}$.
        
        \item Let $\Pi_\SKE=\SKE.(\Gen,\Enc,\Dec)$ be a secret-key encryption scheme.
        
        \item Let $\prf = \set{\prf_\lam}_{\lam \in \N}$, where $\prf_\lam \colon \cK_\lam\times\cF_\lam\to\zo{\rho(\lam)}$ be a pseudorandom function with key space $\cK=\{\cK_\lam\}_{\lam\in\N}$.
    \end{itemize}
    We construct our UTPKFE scheme $\Pi_\UTPKFE=(\Setup,\KeyGen,\Enc,\Dec)$ as follows:
    \begin{itemize}
        \item $\Setup(1^\lam)$: On input the security parameter $\lam$, the setup algorithm samples \[(\pk_\PKFE,\msk_\PKFE)\gets\PKFE.\Setup(1^\lam) \eqand \sk_\SKE\gets\SKE.\Gen(1^\lam),\] and outputs $(\pk=\pk_\PKFE,\msk=(\msk_\PKFE,\sk_\SKE))$.
        
        \item $\KeyGen(\msk,f)$: On input the master key $\msk=(\msk_\PKFE,\sk_\SKE)$ and a function $f\in\cF_\lam$, the key generation algorithm computes $\ct_\SKE\gets\SKE.\Enc(\sk_\SKE,0)$ and outputs \[\sk_f\gets\PKFE.\KeyGen(\msk_\PKFE, G[\ct_\SKE,f]),\] where $G[\ct_\SKE,f]$ is a circuit that takes as input a bit $b\in\zo{}$, three keys $\msk_\SKFE$, $k$, and $\sk_\SKE$, and outputs $\SKFE.\KeyGen(\msk_\SKFE,f;\prf_\lam(k,f))$ if $b=0$ or $\SKE.\Dec(\sk_\SKE,\ct_\SKE)$ if $b=1$.
        
        \item $\Enc(\pk,m)$: On input the public key $\pk=\pk_\PKFE$ and a message $m\in\cM_\lam$, the encryption algorithm samples $\msk_\SKFE\gets\SKFE.\Setup(1^\lam)$ and $k\getsr\cK_\lam$, computes $\ket{\ct_\SKFE}\gets\SKFE.\Enc(\msk_\SKFE,m)$ and 
        \[\ct_\PKFE\gets\PKFE.\Enc(\pk_\PKFE,(0,\msk_\SKFE,k,\bot)),\] 
        and outputs $\ketct=(\ket{\ct_\SKFE},\ct_\PKFE)$.
        
        \item $\Dec(\sk_f,\ketct)$: On input a function secret key $\sk_f$ and a quantum ciphertext $\ketct=(\ket{\ct_\SKFE},\ct_\PKFE)$, the decryption algorithm computes $\sk'_f\gets\PKFE.\Dec(\sk_f,\ct_\PKFE)$ and outputs $\SKFE.\Dec(\sk'_f,\ket{\ct_\SKFE})$.
    \end{itemize}
\end{construction}

\begin{theorem}[Correctness]\label{thm:utpkfe-correct}
    If $\Pi_\SKFE$ and $\Pi_\PKFE$ are correct and $\prf$ is secure, then \cref{cons:utpkfe} is correct. 
\end{theorem}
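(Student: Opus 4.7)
The plan is a straightforward two-step unwinding of the nested scheme, combined with a union bound over the two correctness failure events. Fix any $\lam \in \N$, any message $m\in\cM_\lam$, and any function $f\in\cF_\lam$. Let $(\pk,\msk)\gets\Setup(1^\lam)$ with $\pk=\pk_\PKFE$ and $\msk=(\msk_\PKFE,\sk_\SKE)$. Let $\sk_f\gets\KeyGen(\msk,f)$, where internally the challenger samples $\ct_\SKE\gets\SKE.\Enc(\sk_\SKE,0)$ and sets $\sk_f\gets\PKFE.\KeyGen(\msk_\PKFE,G[\ct_\SKE,f])$. Let $\ketct=(\ket{\ct_\SKFE},\ct_\PKFE)\gets\Enc(\pk,m)$, where internally $\msk_\SKFE\gets\SKFE.\Setup(1^\lam)$, $k\getsr\cK_\lam$, $\ket{\ct_\SKFE}\gets\SKFE.\Enc(\msk_\SKFE,m)$, and $\ct_\PKFE\gets\PKFE.\Enc(\pk_\PKFE,(0,\msk_\SKFE,k,\bot))$. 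Decryption first computes $\sk'_f\gets\PKFE.\Dec(\sk_f,\ct_\PKFE)$ and then outputs $\SKFE.\Dec(\sk'_f,\ket{\ct_\SKFE})$.

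The first step is to invoke correctness of $\Pi_\PKFE$ on the pair $(\sk_f,\ct_\PKFE)$: with overwhelming probability,
\[
\sk'_f \;=\; G[\ct_\SKE,f](0,\msk_\SKFE,k,\bot).
\]
Because the selector bit is $0$, the circuit $G[\ct_\SKE,f]$ ignores its last input and its hard-wired $\ct_\SKE$, and returns $\SKFE.\KeyGen(\msk_\SKFE,f;\prf_\lam(k,f))$. Thus $\sk'_f$ is an honestly generated $\Pi_\SKFE$ function secret key for $f$ under master key $\msk_\SKFE$, where the randomness is $\prf_\lam(k,f)$. Since correctness of $\Pi_\SKFE$ is required to hold for every fixed coin string (or, at minimum, with overwhelming probability over uniform randomness, which is mirrored here since $k$ is uniform and independent of $m$), this randomness substitution is benign for the purposes of this correctness argument.

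The second step is to invoke correctness of $\Pi_\SKFE$ on $(\sk'_f, \ket{\ct_\SKFE})$, where $\ket{\ct_\SKFE}$ was produced by $\SKFE.\Enc(\msk_\SKFE,m)$ under the same $\msk_\SKFE$: with overwhelming probability,
\[
\SKFE.\Dec(\sk'_f,\ket{\ct_\SKFE}) \;=\; f(m).
\]
A final union bound over the two failure events (the PKFE decryption failure in step one and the SKFE decryption failure in step two) yields the desired $1-\negl(\lam)$ correctness bound.

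There is no real obstacle here; the only subtlety worth spelling out is that $\Pi_\SKFE$ correctness must be invoked on a key generated with pseudorandom, rather than uniform, coins. As long as one reads $\Pi_\SKFE$ correctness in the standard ``for every fixed coin string'' sense, or alternatively appeals to PRF security (which is justified later during the security proof anyway) to reduce to uniform coins, this step is immediate. I would state the lemma explicitly for the fixed-coins reading to keep the correctness proof unconditional.
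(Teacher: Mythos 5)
Your proposal is correct and follows essentially the same route as the paper's proof: invoke $\Pi_\PKFE$ correctness to obtain $\sk'_f=\SKFE.\KeyGen(\msk_\SKFE,f;\prf_\lam(k,f))$ from $G[\ct_\SKE,f]$ evaluated on the bit-$0$ branch, then invoke $\Pi_\SKFE$ correctness and take a union bound. The only point of divergence is the pseudorandom-coins subtlety, which the paper resolves exactly via the second of your two options, namely appealing to PRF security to reduce to uniform key-generation coins (which is why PRF security appears in the theorem hypothesis), whereas your preferred fixed-coins reading would make the statement unconditional.
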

\begin{proof}
    Take any $\lam\in\N$, $m\in\cM_\lam$, and $f\in\cF_\lam$. 
    Let $(\pk=\pk_\PKFE,\msk=(\msk_\PKFE,\sk_\SKE))\gets\Setup(1^\lam)$. 
    As in $\Enc$, let $\msk_\SKFE\gets\SKFE.\Setup(1^\lam)$, $k\getsr\cK_\lam$, $\ket{\ct_\SKFE}\gets\SKFE.\Enc(\msk_\SKFE,m)$,
    \[\ct_\PKFE\gets\PKFE.\Enc(\pk_\PKFE,(0,\msk_\SKFE,k,\bot)),\] and set $\ketct=(\ket{\ct_\SKFE},\ct_\PKFE)$.
    As in $\KeyGen$, let $\ct_\SKE\gets\SKE.\Enc(\sk_\SKE,0)$ and \[\sk_f\gets\PKFE.\KeyGen(\msk_\PKFE, G[\ct_\SKE,f]).\]
    We consider the output of $\Dec(\sk_f,\ketct)$. 
    By definition of $G[\ct_\SKE,f]$ and $\Pi_\PKFE$ correctness, we have $\sk'_f\gets\PKFE.\Dec(\sk_f,\ct_\PKFE)$, where \ifelsesubmiteq{\sk'_f=\SKFE.\KeyGen(\msk_\SKFE,f;\prf_\lam(k,f)).}
    By security of $\prf$ and $\Pi_\SKFE$ correctness, we have 
    $m\gets\SKFE.\Dec(\sk'_f,\ket{\ct_\SKFE})$ with overwhelming probability, as desired. 
\end{proof}

\begin{theorem}[Untelegraphable Security]\label{thm:utpkfe-secure}
    Suppose $\Pi_\PKFE$ satisfies adaptive security for QPT adversaries, $\prf$ is a secure PRF for QPT adversaries, $\Pi_\SKE$ satisfies CPA-security for QPT adversaries and correctness, and $\Pi_\SKFE$ satisfies single-ciphertext untelegraphable security with unrestricted key queries (\cref{def:UTSKFE}).
    Then, \cref{cons:utpkfe} satisfies untelegraphable security with unrestricted key queries (\cref{def:UTPKFE}). 
\end{theorem}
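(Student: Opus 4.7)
The plan is to follow the hybrid-encryption blueprint of~\cite{C:ABSV15}, adapted to the untelegraphable setting, by gradually rerouting the information flow in the ciphertext through the alternative (``trapdoor'') branch of $G[\ct_\SKE,f]$ so that we can ultimately reduce to UTSKFE security of the inner scheme. Starting from the real game with challenge bit $b$, I would first switch each SKE ciphertext $\ct_\SKE$ embedded in a function key $\sk_f$ (whether issued to $\A$ or to $\B$) from an encryption of $0$ to an encryption of the true inner key $\sk'_f = \SKFE.\KeyGen(\msk_\SKFE, f; \prf_\lam(k,f))$; indistinguishability is immediate from CPA security of $\Pi_\SKE$ via a subhybrid over the total number of key queries.

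Next, I would switch the PKFE ciphertext $\ct_\PKFE$ inside $\ketct$ from an encryption of $(0, \msk_\SKFE, k, \bot)$ to an encryption of $(1, \bot, \bot, \sk_\SKE)$. The crucial observation is that, after the previous hybrid, for every queried $f$ we have
\[
G[\ct_\SKE, f](0, \msk_\SKFE, k, \bot) \,=\, \sk'_f \,=\, \SKE.\Dec(\sk_\SKE, \ct_\SKE) \,=\, G[\ct_\SKE, f](1, \bot, \bot, \sk_\SKE),
\]
so the two PKFE plaintexts produce identical outputs under \emph{all} PKFE function keys issued, including unrestricted second-stage ones; hence this hop reduces to adaptive security of $\Pi_\PKFE$. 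After this step, the PRF key $k$ influences the game only through the values $\sk'_f$ used inside $\ct_\SKE$, so I can replace $\prf_\lam(k,\cdot)$ with a truly random function via PRF security, making each $\sk'_f$ distributed as $\SKFE.\KeyGen(\msk_\SKFE, f; r_f)$ with fresh randomness $r_f$.

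At this point $\msk_\SKFE$ is used only to generate the inner state $\ket{\ct_\SKFE}$ and the keys $\sk'_f$ (with independent fresh randomness), so the residual game is exactly the single-ciphertext UTSKFE game with unrestricted second-stage queries. I would therefore reduce to UTSKFE security of $\Pi_\SKFE$ to swap the encrypted message from $m_0$ to $m_1$: the reduction forwards every first-stage key query $f$ from $\A$ (which satisfies $f(m_0)=f(m_1)$) and every unrestricted query from $\B$ to its UTSKFE challenger, receives $\sk'_f$, wraps it inside an SKE encryption $\ct_\SKE$ under its own $\sk_\SKE$ and a PKFE key for $G[\ct_\SKE,f]$, and embeds the UTSKFE challenge $\ket{\ct_\SKFE}$ into $\ketct$ alongside a PKFE encryption of $(1, \bot, \bot, \sk_\SKE)$. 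Finally, I would undo the three structural hybrids in reverse to arrive at the real game with bit $1-b$.

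The hard part will be the PKFE hybrid: a second-stage $\B$ is allowed to request keys for functions $f$ with $f(m_0)\neq f(m_1)$, which would ordinarily forbid invoking plain PKFE security on the underlying UTPKFE messages. The resolution, and the main design choice of the construction, is that we are switching the \emph{PKFE} plaintexts (the two modes of $G$) rather than the UTPKFE plaintexts, and by construction the two PKFE plaintexts yield identical $G[\ct_\SKE,f]$ outputs for every $f$ ever queried, so the PKFE functional-equality constraint is satisfied even when the underlying $f$ distinguishes $m_0$ and $m_1$. A minor but necessary bookkeeping point is to ensure SKE correctness holds pointwise for the plaintexts $\sk'_f$ handed out, either by assuming perfect correctness of $\Pi_\SKE$ or by absorbing the negligible correctness error into the hybrid gap, since the pointwise equality displayed above is invoked across all issued function keys simultaneously.
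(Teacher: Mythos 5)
Your proposal is correct and follows essentially the same hybrid sequence as the paper's proof: switch the SKE ciphertexts in the function keys to encrypt $\sk'_f$ (CPA security), switch the PKFE plaintext to $(1,\bot,\bot,\sk_\SKE)$ using the pointwise equality of $G[\ct_\SKE,f]$ on the two inputs (adaptive PKFE security plus SKE correctness, exactly as the paper invokes it), replace the PRF by fresh randomness, and finish with a reduction to single-ciphertext UTSKFE security with unrestricted key queries. The only differences are presentational (your ``undo in reverse'' framing versus the paper's bit-indexed hybrids, and your explicit subhybrid over key queries inside the CPA step), so nothing further is needed.
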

\begin{proof}
    Suppose there exists a QPT two-stage adversary $\AB$ that has non-negligible advantage $\delta$ in the UTPKFE security game. 
    We define a sequence of hybrid experiments:
    \begin{itemize}
        \item $\hybb{0}$: This is the UTPKFE security game with challenge bit $b\in\zo{}$. In particular:
        \begin{itemize}
            \item The challenger starts by sampling \[(\pk_\PKFE,\msk_\PKFE)\gets\PKFE.\Setup(1^\lam) \eqand \sk_\SKE\gets\SKE.\Gen(1^\lam),\] sets $(\pk=\pk_\PKFE,\msk=(\msk_\PKFE,\sk_\SKE))$, and gives $\pk$ to adversary $\A$.
            
            \item When adversary $\A$ makes a key query on function $f\in\cF_\lam$, the challenger computes $\ct_\SKE\gets\SKE.\Enc(\sk_\SKE,0)$ and replies with
            \[\sk_f\gets\PKFE.\KeyGen(\msk_\PKFE, G[\ct_\SKE,f]).\]
            
            \item When adversary $\A$ outputs two messages $m_0,m_1\in\cM_\lam$, if $f(m_0)=f(m_1)$ holds for all functions $f$ queried by $\A$, the challenger samples $\msk_\SKFE\gets\SKFE.\Setup(1^\lam)$ and $k\getsr\cK_\lam$, computes $\ket{\ct_\SKFE}\gets\SKFE.\Enc(\msk_\SKFE,m_b)$ and
            \[\ct_\PKFE\gets\PKFE.\Enc(\pk_\PKFE,(0,\msk_\SKFE,k,\bot)),\] 
            and replies with $\ketct=(\ket{\ct_\SKFE},\ct_\PKFE)$.

            \item Adversary $\A$ outputs a classical state $\st$ after it is finished making key queries such that $f(m_0)=f(m_1)$. The challenger answers the queries as before and gives $\st$ to $\B$.

            \item When adversary $\B$ makes a key query, the challenger answers as it did for $\A$, without checking $f(m_0)=f(m_1)$. Adversary $\B$ outputs a bit $b'$, which is the output of the experiment. 
        \end{itemize}

        \item $\hybb{1}$: Same as $\hybb{0}$ except the challenger now samples \diff{$\msk_\SKFE\gets\SKFE.\Setup(1^\lam)$ and $k\getsr\cK_\lam$ at the beginning of the game} and for each key query on $f\in\cF_\lam$ it now computes $\diff{\sk'_f\gets\SKFE.\KeyGen(\msk_\SKFE,f;\prf_\lam(k,f))}$ and $\ct_\SKE\gets\SKE.\Enc(\sk_\SKE,\diff{\sk'_f})$. 

        \item $\hybb{2}$: Same as $\hybb{1}$ except the challenger now computes 
        \[\ct_\PKFE\gets\PKFE.\Enc(\pk_\PKFE,\diff{(1,\bot,\bot,\sk_\SKE)})\]
        when constructing the challenge ciphertext.

        \item $\hybb{3}$: Same as $\hybb{2}$ except for each key query on $f\in\cF_\lam$ the challenger now samples $\diff{r_f\getsr\zo{\rho(\lam)}}$ and computes $\sk'_f\gets\SKFE.\KeyGen(\msk_\SKFE,f;\diff{r_f})$. Note that $r_f$ is reused if $f$ is queried more than once.
    \end{itemize}
    We write $\hybb{i}\AB$ to denote the output distribution of
    an execution of $\hybb{i}$ with adversary $\AB$. 
    We now argue that each adjacent pair of distributions are indistinguishable.
    
    \begin{lemma}\label{lem:utpkfe0-1}
        Suppose $\Pi_\SKE$ is CPA-secure for QPT adversaries. 
        Then, there exists a negligible function $\negl(\cdot)$ such that for all $\lam\in\N$, we have
        \[
        \abs{\Pr[\hybb{0}(\A,\B)=1]-\Pr[\hybb{1}(\A,\B)=1]}=\negl(\lam).
        \]
    \end{lemma}
    \begin{proof}
        Suppose $\AB$ distinguishes the experiments with non-negligible probability $\delta'$. We use $\AB$ to construct an adversary $\A'$ that breaks CPA-security as follows: 
        \begin{enumerate}
            \item Adversary $\A'$ starts by sampling $(\pk_\PKFE,\msk_\PKFE)\gets\PKFE.\Setup(1^\lam)$, $\msk_\SKFE\gets\SKFE.\Setup(1^\lam)$, and $k\getsr\cK_\lam$. It gives $\pk=\pk_\PKFE$ to $\A$.

            \item When adversary $\A$ makes a key query on $f$, adversary $\A'$ computes \[\sk'_f\gets\SKFE.\KeyGen(\msk_\SKFE,f;\prf_\lam(k,f))\] and submits $(0,\sk'_f)$ to the CPA challenger to get back $\ct_\SKE$. Adversary $\A'$ replies to $\A$ with 
            \[\sk_f\gets\PKFE.\KeyGen(\msk_\PKFE, G[\ct_\SKE,f]).\]

            \item When adversary $\A$ outputs two messages $m_0,m_1\in\cM_\lam$, if $f(m_0)=f(m_1)$ holds for all functions $f$ queried by $\A$, adversary $\A'$ computes $\ket{\ct_\SKFE}\gets\SKFE.\Enc(\msk_\SKFE,m_b)$, computes
            \[\ct_\PKFE\gets\PKFE.\Enc(\pk_\PKFE,(0,\msk_\SKFE,k,\bot)),\] 
            and replies with $\ketct=(\ket{\ct_\SKFE},\ct_\PKFE)$.

            \item When adversary $\A$ outputs a classical state $\st$, adversary $\A'$ gives $\st$ to adversary $\B$. When adversary $\B$ makes a key query, adversary $\A'$ answers as it did for $\A$, without checking $f(m_0)=f(m_1)$. Adversary $\A'$ outputs whatever $\B$ outputs.
        \end{enumerate}
        Clearly $\A'$ is efficient if $\AB$ is. 
        If each $\ct_\SKE$ is an encryption of 0, adversary $\A'$ simulates $\hybb{0}\AB$.
        If each $\ct_\SKE$ is an encryption of $\sk'_f$, adversary $\A'$ simulates $\hybb{1}\AB$.
        Thus, adversary $\A'$ breaks CPA-security with advantage $\delta'$, as desired.
    \end{proof}

    \begin{lemma}\label{lem:utpkfe1-2}
        Suppose $\Pi_\PKFE$ is adaptively secure for QPT adversaries and $\Pi_\SKE$ is correct. 
        Then, there exists a negligible function $\negl(\cdot)$ such that for all $\lam\in\N$, we have
        \[
        \abs{\Pr[\hybb{1}(\A,\B)=1]-\Pr[\hybb{2}(\A,\B)=1]}=\negl(\lam).
        \]
    \end{lemma}
    \begin{proof}
        Suppose $\AB$ distinguishes the experiments with non-negligible probability $\delta'$. We use $\AB$ to construct an adversary $\A'$ that breaks PKFE security as follows:
        \begin{enumerate}
            \item At the start of the game, adversary $\A'$ gets $\pk_\PKFE$ from its challenger and gives it to $\A$. 
            Adversary $\A'$ also samples $\sk_\SKE\gets\SKE.\Gen(1^\lam)$, $\msk_\SKFE\gets\SKFE.\Setup(1^\lam)$, and $k\getsr\cK_\lam$.

            \item When adversary $\A$ makes a key query on $f$, adversary $\A'$ computes 
            \[\sk'_f\gets\SKFE.\KeyGen(\msk_\SKFE,f;\prf_\lam(k,f))\] and $\ct_\SKE\gets\SKE.\Enc(\sk_\SKE,\sk'_f)$. Adversary $\A'$ queries its challenger for $G[\ct_\SKE,f]$ to get back $\sk_f$, which it given to $\A$.

            \item When adversary $\A$ outputs two messages $m_0,m_1\in\cM_\lam$, if $f(m_0)=f(m_1)$ holds for all functions $f$ queried by $\A$, adversary $\A'$ computes $\ket{\ct_\SKFE}\gets\SKFE.\Enc(\msk_\SKFE,m_b)$, submits the pair of messages \[(0,\msk_\SKFE,k,\bot)\eqand(1,\bot,\bot,\sk_\SKE)\] to its challenger to get $\ct_\PKFE$, and replies with $\ketct=(\ket{\ct_\SKFE},\ct_\PKFE)$.

            \item When adversary $\A$ outputs a classical state $\st$, adversary $\A'$ gives $\st$ to adversary $\B$. When adversary $\B$ makes a key query, adversary $\A'$ answers as it did for $\A$, without checking $f(m_0)=f(m_1)$. Adversary $\A'$ outputs whatever $\B$ outputs.
        \end{enumerate}
        Clearly $\A'$ is efficient if $\AB$ is.
        If $\ct_\PKFE$ is an encryption of $(0,\msk_\SKFE,k,\bot)$, adversary $\A'$ simulates $\hybb{1}\AB$.
        If $\ct_\PKFE$ is an encryption of $(1,\bot,\bot,\sk_\SKE)$, adversary $\A'$ simulates $\hybb{2}\AB$.
        Adversary $\A'$ is admissible, since for all functions $f$, we have \[G[\ct_\SKE,f](0,\msk_\SKFE,k,\bot)=\sk'_f=G[\ct_\SKE,f](1,\bot,\bot,\sk_\SKE),\]
        by correctness of$\Pi_\SKE$. 
        Thus, adversary $\A'$ breaks PKFE security with advantage $\delta'$, as desired.
    \end{proof}

    \begin{lemma}\label{lem:utpkfe2-3}
        Suppose $\prf$ is a secure PRF for QPT adversaries. 
        Then, there exists a negligible function $\negl(\cdot)$ such that for all $\lam\in\N$, we have
        \[
        \abs{\Pr[\hybb{2}(\A,\B)=1]-\Pr[\hybb{3}(\A,\B)=1]}=\negl(\lam).
        \]
    \end{lemma}
    \begin{proof}
        Suppose $\AB$ distinguishes the experiments with non-negligible probability $\delta'$. We use $\AB$ to construct an adversary $\A'$ that breaks PRF security as follows:
        \begin{enumerate}
            \item Adversary $\A'$ starts by sampling $\msk_\SKFE\gets\SKFE.\Setup(1^\lam)$,
            \[(\pk_\PKFE,\msk_\PKFE)\gets\PKFE.\Setup(1^\lam) \eqand \sk_\SKE\gets\SKE.\Gen(1^\lam),\] 
            sets $(\pk=\pk_\PKFE,\msk=(\msk_\PKFE,\sk_\SKE))$, and gives $\pk$ to adversary $\A$.

            \item When adversary $\A$ makes a key query on function $f\in\cF_\lam$, adversary $\A'$ queries its challenger on $f$ to get $r_f$, computes 
            $\sk'_f\gets\SKFE.\KeyGen(\msk_\SKFE,f;r_f)$ and $\ct_\SKE\gets\SKE.\Enc(\sk_\SKE,\sk'_f)$, and replies with
            \[\sk_f\gets\PKFE.\KeyGen(\msk_\PKFE, G[\ct_\SKE,f]).\]

            \item When adversary $\A$ outputs two messages $m_0,m_1\in\cM_\lam$, if $f(m_0)=f(m_1)$ holds for all functions $f$ queried by $\A$, adversary $\A'$ computes $\ket{\ct_\SKFE}\gets\SKFE.\Enc(\msk_\SKFE,m_b)$, computes
            \[\ct_\PKFE\gets\PKFE.\Enc(\pk_\PKFE,(1,\bot,\bot,\sk_\SKE)),\] 
            and replies with $\ketct=(\ket{\ct_\SKFE},\ct_\PKFE)$.

            \item When adversary $\A$ outputs a classical state $\st$, adversary $\A'$ gives $\st$ to adversary $\B$. When adversary $\B$ makes a key query, adversary $\A'$ answers as it did for $\A$, without checking $f(m_0)=f(m_1)$. Adversary $\A'$ outputs whatever $\B$ outputs.
        \end{enumerate}
        Clearly $\A'$ is efficient if $\AB$ is.
        If each $r_f$ is from $\prf_\lam$, adversary $\A'$ simulates $\hybb{2}\AB$.
        If each $r_f$ is from a random function, adversary $\A'$ simulates $\hybb{3}\AB$.
        Thus, adversary $\A'$ breaks PRF security with advantage $\delta'$, as desired.
    \end{proof}

    \begin{lemma}\label{lem:utpkfe-final}
        Suppose $\Pi_\SKFE$ satisfies untelegraphable security. 
        Then, there exists a negligible function $\negl(\cdot)$ such that for all $\lam\in\N$, we have
        \[
        \abs{\Pr[\hyb^{(0)}_{3}(\A,\B)=1]-\Pr[\hyb^{(1)}_{3}(\A,\B)=1]}=\negl(\lam).
        \]
    \end{lemma}
    \begin{proof}
        Suppose $\AB$ distinguishes the experiments with non-negligible probability $\delta'$. We use $\AB$ to construct an adversary $(\A',\B')$ that breaks untelegraphable SKFE security as follows:
        \begin{enumerate}
            \item Adversary $\A'$ starts by sampling 
            \[(\pk_\PKFE,\msk_\PKFE)\gets\PKFE.\Setup(1^\lam) \eqand \sk_\SKE\gets\SKE.\Gen(1^\lam),\] 
            sets $(\pk=\pk_\PKFE,\msk=(\msk_\PKFE,\sk_\SKE))$, and gives $\pk$ to adversary $\A$.

            \item When adversary $\A$ makes a key query on function $f\in\cF_\lam$, adversary $\A'$ queries its challenger on $f$ to get $\sk'_f$, computes $\ct_\SKE\gets\SKE.\Enc(\sk_\SKE,\sk'_f)$ and replies with
            \[\sk_f\gets\PKFE.\KeyGen(\msk_\PKFE, G[\ct_\SKE,f]).\]
            If $f$ is queried multiple times, adversary $\A'$ reuses $\sk'_f$. 

            \item When adversary $\A$ outputs two messages $m_0,m_1\in\cM_\lam$, if $f(m_0)=f(m_1)$ holds for all functions $f$ queried by $\A$, adversary $\A'$ queries its challenger on $(m_0,m_1)$ to get $\ket{\ct_\SKFE}$, computes
            \[\ct_\PKFE\gets\PKFE.\Enc(\pk_\PKFE,(1,\bot,\bot,\sk_\SKE)),\] 
            and replies with $\ketct=(\ket{\ct_\SKFE},\ct_\PKFE)$.

            \item When adversary $\A$ outputs a classical state $\st$, adversary $\A'$ gives $\st$ to adversaries $\B,\B'$. When adversary $\B$ makes a key query on $f$, adversary $\B'$ queries its challenger on $f$ to get $\sk'_f$ and replies with $\sk_f$ as in step 2. Adversary $\B'$ outputs whatever $\B$ outputs.
        \end{enumerate}
        Clearly $(\A',\B')$ is efficient and admissible if $\AB$ is.
        If $\ket{\ct_\SKFE}$ is an encryption of $m_0$, adversary $(\A',\B')$ simulates $\hyb^{(0)}_{2}(\A,\B)$.
        If $\ket{\ct_\SKFE}$ is an encryption of $m_1$, adversary $(\A',\B')$ simulates $\hyb^{(1)}_{2}(\A,\B)$.
        Thus, adversary $(\A',\B')$ breaks untelegraphable SKFE security with advantage $\delta'$, a contradiction. 
    \end{proof}
    \noindent Combining \cref{lem:utpkfe0-1,lem:utpkfe1-2,lem:utpkfe2-3,lem:utpkfe-final}, the theorem follows by a standard hybrid argument. 
\end{proof}

\begin{corollary}[Untelegraphable PKFE]\label{cor:utpkfe}
    Assuming sub-exponentially secure indistinguishability obfuscation and one-way functions, and the quantum hardness of learning with errors (LWE), there exists a PKFE scheme for $\Ppoly$ that satisfies untelegraphable security with unrestricted key queries.
\end{corollary}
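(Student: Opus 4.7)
The plan is to simply instantiate \cref{cons:utpkfe} with primitives available under the stated assumptions and then invoke \cref{thm:utpkfe-correct,thm:utpkfe-secure} as black boxes. Concretely, I would supply the four building blocks of \cref{cons:utpkfe} as follows: take $\Pi_\SKFE$ to be the single-ciphertext untelegraphable SKFE for $\Ppoly$ from \cref{cor:utskfe} (which itself consumes sub-exponentially secure $i\cO$, one-way functions, and LWE); take $\Pi_\PKFE$ to be an adaptively secure public-key FE scheme for $\Ppoly$, obtainable from sub-exponentially secure $i\cO$ and one-way functions via standard constructions; take $\Pi_\SKE$ to be any CPA-secure secret-key encryption scheme, obtainable from one-way functions; and take $\prf$ to be a secure pseudorandom function from one-way functions (with appropriate input length $\cF_\lam$ and output length $\rho(\lam)$ matching the randomness length of $\SKFE.\KeyGen$).

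Next, I would verify that the hypotheses of \cref{thm:utpkfe-correct,thm:utpkfe-secure} are met by these instantiations. Correctness of each component is standard, so \cref{thm:utpkfe-correct} immediately yields correctness of the resulting UTPKFE. For security, \cref{thm:utpkfe-secure} requires adaptive security of $\Pi_\PKFE$ against QPT adversaries, post-quantum PRF security, post-quantum CPA security and correctness of $\Pi_\SKE$, and single-ciphertext untelegraphable security of $\Pi_\SKFE$ with unrestricted key queries. Each of these properties is guaranteed by the chosen instantiations under the stated assumptions.

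The only non-routine step is bookkeeping on the required quantum hardness: the plain PKFE and SKE schemes must remain secure against QPT adversaries, so I would cite post-quantum variants of the underlying building blocks (noting that LWE is post-quantum by assumption, and that sub-exponentially secure post-quantum $i\cO$ and one-way functions suffice to give post-quantum adaptively secure PKFE for $\Ppoly$, post-quantum CPA-secure SKE, and post-quantum PRFs). Once this is verified, the corollary follows directly by combining \cref{thm:utpkfe-correct,thm:utpkfe-secure} with \cref{cor:utskfe}.

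Since this is entirely a modular composition, there is no genuine obstacle; the mild care required is just in confirming that the assumption set needed for \cref{cor:utskfe}---sub-exponentially secure $i\cO$, one-way functions, and LWE---already subsumes the assumptions needed for the plain PKFE, SKE, and PRF primitives, so no additional hypotheses are introduced.
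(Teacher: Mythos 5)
Your proposal is correct and matches the paper's proof: the paper likewise instantiates \cref{cons:utpkfe} with the UTSKFE of \cref{cor:utskfe} and plain PKFE, SKE (and implicitly PRF) from $i\cO$ and one-way functions, then invokes \cref{thm:utpkfe-correct,thm:utpkfe-secure}. Your extra bookkeeping on post-quantum security of the classical building blocks is just a more explicit statement of the same modular composition.
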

\begin{proof}
    Follows immediately by \cref{thm:utpkfe-correct,thm:utpkfe-secure,cor:utskfe}, and noting that indistinguishability obfuscation and one-way functions imply PKFE and CPA-secure SKE.
\end{proof}

\begin{remark}[Untelegraphable diO]
    As an application of UTPKFE, we can construct untelegraphable differing inputs obfuscation in the classical oracle model.
    Security says given many copies of an obfuscation of $C_0$ or $C_1$, where finding $x$ such that $C_0(x)\neq C_1(x)$ is computationally hard, it is hard to distinguish $C_0$ from $C_1$ even given a differing input $x$ after outputting a classical telegraph of the obfuscations.
    An obfuscation of a circuit $C$ is an untelegraphable encryption of message $C$ along with a classical oracle of the key generation circuit, which on input $x$ outputs $\sk_{f_x}$, where $f_x$ takes a circuit $C$ as input and outputs $C(x)$. 
    Security follows by switching the first stage classical oracle to one that outputs $\bot$ when queried on a differing input, and then appealing to UTPKFE security directly. 
\end{remark}

\section*{Acknowledgments}
We thank Alper \c{C}akan and Vipul Goyal for providing helpful comments on related work. Jeffrey Champion is supported by NSF CNS-2140975 and CNS-2318701.
}

\bibliographystyle{alpha}
\bibliography{reference,abbrev3,crypto}

\appendix
\ifsubmit{
  \newpage
  \begin{center}
    {\Large \underline{Supplementary Material}}
  \end{center}
\input{omitted}

}
\fi

\end{document}